\documentclass[reqno,12pt]{article}
\usepackage{amsfonts, amsmath, amssymb, amsthm}
\usepackage{braket}
\usepackage[T1]{fontenc}
\usepackage{authblk}
\usepackage{xcolor,eucal,mathrsfs}
\usepackage[latin1]{inputenc}
\usepackage[hidelinks]{hyperref}
\usepackage{needspace}

\usepackage{mathtools}

\newcommand{\fc}{f_{\mathrm{cov}}}
\newcommand{\gc}{g_{\mathrm{cov}}}
\newcommand{\ccov}{\mathcal{C}_{\mathrm{cov}}}
\newcommand{\lcc}{\mathcal{L}_{\mathrm{cov}}} 
\newif\ifshowrevisions
\showrevisionsfalse
\ifshowrevisions
  \newcommand{\rev}[1]{{\color{blue}#1}}
  \newenvironment{revision}{\begingroup\color{blue}}{\endgroup}
\else
  \newcommand{\rev}[1]{#1}
  \newenvironment{revision}{}{}
\fi

\newcommand{\proj}[1]{\ket{#1}\!\bra{#1}}

\DeclareMathOperator{\Sp}{Sp}

\DeclareMathOperator{\Dom}{Dom}
\DeclareMathOperator{\Ran}{Ran}

\DeclareMathOperator{\I}{\mathbb{I}}

\DeclareMathOperator{\Tr}{\mathrm{Tr}}

\DeclareMathOperator{\QW}{\mathcal{QW}}
\DeclareMathOperator{\C}{\mathbb{C}}
\DeclareMathOperator{\rank}{rank}

\newcommand{\cL}{\mathcal{L}}

\newcommand{\cB}{\mathcal{B}}

\newcommand{\cH}{\mathcal{H}}

\newcommand{\cA}{\mathcal{A}}

\newcommand{\cU}{\mathcal{U}}
\newcommand{\cM}{\mathcal{M}}

\newcommand{\cS}{\mathcal{S}}
\newcommand{\cT}{\mathcal{T}}

\newcommand{\cK}{\mathcal{K}}

\newcommand{\smat}[1]{\left(\begin{smallmatrix}#1\end{smallmatrix}\right)}

\def\R{\mathbb R}

\usepackage{enumitem}
\newcommand{\restr}[1]{\lower3pt\hbox{$|_{#1}$}}

\newcommand{\N}{\mathbb{N}}
\newcommand{\Real}{\operatorname{Re}}
\newcommand{\Imag}{\operatorname{Im}}
\theoremstyle{plain}
\newtheorem{theorem}{Theorem}[section]
\newtheorem{theo}[theorem]{Theorem}

\newtheorem{proposition}[theorem]{Proposition}
\newtheorem{prop}[theorem]{Proposition}

\newtheorem{deff}[theorem]{Definition}

\newtheorem{remark}[theorem]{Remark}
\newtheorem{lemma}[theorem]{Lemma}
\newtheorem{coro}[theorem]{Corollary}
\newtheorem{corollary}[theorem]{Corollary}
\newtheorem{cor}[theorem]{Corollary}

\numberwithin{equation}{section}
\theoremstyle{remark}

\newtheorem{example}[theorem]{Example}
\DeclareFontFamily{U}{mathx}{\hyphenchar\font45}
\DeclareFontShape{U}{mathx}{m}{n}{
	<5> <6> <7> <8> <9> <10>
	<10.95> <12> <14.4> <17.28> <20.74> <24.88>
	mathx10
}{}
\DeclareSymbolFont{mathx}{U}{mathx}{m}{n}
\DeclareMathSymbol{\bigtimes}{1}{mathx}{"91}

\title{Quantum Optimal Transport Barycenters: \\ Existence, Duality, and Gaussian Rigidity}
\author[1]{Augusto Gerolin}
\author[2]{Zhiyi Lin}

\affil[1]{Instituto de Matem\'atica Pura e Aplicada, Brazil}
\affil[2]{Department of Mathematics and Statistics, University of Ottawa}
\date{}
\hypersetup{
pdftitle={Quantum Optimal Transport Barycenters: Existence, Duality, and Gaussian Rigidity},
pdfauthor={Augusto Gerolin and Zhiyi Lin},
pdfsubject={Quantum optimal transport barycenters},
pdfkeywords={quantum optimal transport, Wasserstein barycenters, Gaussian states, quantum Wasserstein distance}
}

\begin{document}
\maketitle
\begin{abstract}
We develop a Quantum Optimal Transport (QOT) barycenter framework for quantum states, as an analog of Wasserstein barycenters~\cite{AguCar}, and establish existence and duality results for a broad class of possibly unbounded transport costs on separable Hilbert spaces. Our framework provides, in particular, a unified treatment of $2$-quantum Wasserstein (QW) barycenters in both the quantum-state and quantum-channel formulations by specializing to the canonical quadratic cost operators associated with the $2$-quantum Wasserstein distances of Caglioti--Golse--Mouhot--Paul~\cite{caglioti2021towards,Golse-Mouhot-Paul:2016} and De Palma--Trevisan~\cite{DPaTre19}.

The central results concern Gaussian input states and, in particular, Gaussian rigidity: whether Gaussian input states force the 2-QW barycenter itself to be Gaussian and uniquely determined when the minimization is taken over all quantum states. We first show that the barycenter problem admits a Gaussian minimizer and reduces to a finite-dimensional convex optimization problem over covariance matrices. 

The main difficulty is that uniqueness of the optimal covariance does not, in general, imply uniqueness of the underlying quantum state. \rev{We bridge this gap through a state-reconstruction principle under covariance complementary slackness that upgrades uniqueness of the optimal covariance to uniqueness of the full quantum state, and thereby prove a global rigidity theorem:} if at least one Gaussian input is faithful, then the barycenter is unique among all quantum states and is necessarily Gaussian. \rev{Faithfulness is sufficient but not necessary: families of pure inputs still determine a unique barycenter, whereas partially pure nonfaithful Gaussian inputs may admit multiple barycenters.}

Finally, within the same Gaussian framework, we also prove semiclassical convergence to the classical Gaussian $2$-Wasserstein barycenter. As an explicit class of examples, we obtain characterizations of the 2-QW barycenters for displaced isotropic thermal inputs.
\end{abstract}

\begingroup
\footnotesize
\linespread{0.95}\selectfont
\tableofcontents
\endgroup

\newpage

\section{Introduction}
In this paper, we introduce the quantum analog of the celebrated Wasserstein barycenter for probability measures~\cite{AguCar}: let $[N]$ be an index set of $N\in\mathbb{N}$ elements, $\cH_0$ and $(\cH_s)_{s\in[N]}$ be separable Hilbert spaces,
$\boldsymbol{\sigma}:=(\sigma_s)_{s\in[N]}$ be quantum states $\sigma_s\in\cS(\cH_s)$, $\mathbf{C}:=(C_s)_{s\in[N]}$ be a family of nonnegative self-adjoint cost operators $C_s \in \cL_{\geq}\bigl(\cH_0 \otimes \cH_s\bigr)$,
$\boldsymbol{\alpha}:=(\alpha_s)_{s\in[N]}$ be positive real numbers satisfying
$\sum_{s=1}^N \alpha_s = 1$.

The Quantum Optimal Transport barycenter problem is then defined by
\begin{equation}\label{intro:QOT-barycenter}
\cB^{\mathbf C,\boldsymbol{\alpha}}_{\mathbf A,\mathbf E}(\boldsymbol{\sigma}) 
:=
\inf_{\rho\in\cS_{\mathbf A,\mathbf E}(\cH_0)}
\sum_{s=1}^N
\alpha_s\,
\operatorname{QOT}\!\big(\rho,\sigma_s\mid C_s\big),
\end{equation}
where $\operatorname{QOT}(\rho,\sigma_s\mid C_s)$ is a Quantum Optimal Transport value between quantum states $\rho$ and $\sigma_s$ with cost operator $C_s$; see \eqref{eq:QOT-def-setting}. The constrained state space $\cS_{\mathbf A,\mathbf E}(\cH_0)$ is defined as
\begin{equation}\label{eq:constrain}
    \cS_{\mathbf A,\mathbf E}(\cH_0)
    :=
    \left\{
        \rho\in\cS(\cH_0):
        \Tr(A_\ell\rho)\le E_\ell,
        \ \ell\in[k]\right\}, \, 
\end{equation}
where $\mathbf E=(E_1,\ldots,E_k)\in[0,+\infty)^k$ and $\{A_\ell\}_{\ell\in[k]}\subset\cL_\geq(\cH_0)$ is a finite family of nonnegative self-adjoint
operators on $\cH_0$. The constraints \( \Tr(A_\ell \rho) \leq E_\ell \) allow us to impose finite-energy or moment-type bounds on the barycentric state; the unconstrained problem is recovered by omitting these conditions.

An element
\begin{equation}\label{intro:rhobar}
\overline{\rho}\in \mathrm{Bar}^{\mathbf C,\boldsymbol{\alpha}}_{\mathbf A,\mathbf E}(\boldsymbol{\sigma})
:=\underset{\rho\in\cS_{\mathbf A,\mathbf E}(\cH_0)}{\operatorname{argmin}}\ \sum_{s=1}^N \alpha_s\,
\operatorname{QOT}\!\big(\rho,\sigma_s\mid C_s\big)
\end{equation}
is called a QOT barycenter of $(\sigma_s)_{s\in[N]}$ associated with the costs $(C_s)_{s\in[N]}$.

This abstract framework already includes, as important examples, the two
quadratic transport models considered in this paper: the
$2$-quantum Wasserstein (QW) transport problem for quantum states introduced by
Caglioti--Golse--Mouhot--Paul~\cite{caglioti2021towards,Golse-Mouhot-Paul:2016}, and its
counterpart for quantum channels introduced by De Palma and
Trevisan~\cite{DPaTre19}. 

To treat these two models simultaneously, we use the language of Weyl systems as a common framework for the canonical observables and their
associated quadratic costs. Accordingly, we write both barycenter problems in the
unified form
\begin{equation}\label{eq:tau-barycenter-intro}
\cB_\tau(\hat{\boldsymbol\sigma}\mid\boldsymbol\alpha)
:=
\inf_{\hat\rho\in\cS(L_2(\mathbb R^m))}
\sum_{s=1}^N \alpha_s\,\QW_{2,\tau}^2(\hat\rho,\hat\sigma_s)
=
\inf_{\hat\rho\in\cS(L_2(\mathbb R^m))}
\inf_{\substack{\forall s\in[N]\\ \hat\pi_s\in\Gamma_\tau(\hat\rho,\hat\sigma_s)}}
\sum_{s=1}^N \alpha_s\,\Tr\!\big(\hat C^{(\tau)}\hat\pi_s\big).
\end{equation}
Here $\QW_{2,\tau}$ denotes the corresponding $2$-quantum Wasserstein
transport cost, $\hat C^{(\tau)}$ is the quadratic transport cost generated by the canonical
observables, and \rev{the associated coupling sets are $\Gamma_{+1}(\hat\rho,\hat\sigma_s):=\Gamma(\hat\rho,\hat\sigma_s)$ and $\Gamma_{-1}(\hat\rho,\hat\sigma_s):=\Gamma(\hat\rho,\hat\sigma_s^\top)$}; see the precise definitions in equations~\eqref{def:quantumcoupling}, \eqref{eq:tau-cost}, and~\eqref{eq:qw}.

The parameter \rev{\(\tau\in\{\pm1\}\)} distinguishes the two transport models considered in this paper. 
The choice \(\tau=+1\) gives the quantum-state formulation, while the choice 
\(\tau=-1\) gives the quantum-channel formulation. In particular, any element
\[
\hat{\overline\rho}_\tau \in \mathrm{Bar}_\tau(\hat{\boldsymbol\sigma}\mid\boldsymbol\alpha):=
\operatorname{argmin}_{\hat\rho\in\cS(L_2(\mathbb R^m))}
\sum_{s=1}^N \alpha_s\,\QW_{2,\tau}^2(\hat\rho,\hat\sigma_s)
\]
is called a $2$-QW barycenter in the quantum-state formulation $(\tau=+1)$ or the quantum-channel formulation $(\tau=-1)$.

\noindent
\textbf{\rev{Gaussian quantum states:}} \rev{In optimal transport theory}, for nondegenerate Gaussian measures, the
quadratic Wasserstein barycenter~\cite{AguCar} is unique and Gaussian, and its
computation reduces to a finite-dimensional problem involving means
and covariance matrices -- see also \cite{MalGerMin20}. In the quantum setting, the analogous
reduction raises a further question: does an optimal covariance
matrix determine the minimizing quantum state?

We study this question for barycenters associated with two static
quadratic quantum optimal transport models. The expectation of a
quadratic coupling cost depends only on the first and second
canonical moments of the joint state. For Gaussian input states,
Gaussian replacement therefore gives an exact reduction of the
minimum to a covariance optimization and guarantees a Gaussian
minimizer. This does not, however, classify all minimizers: distinct
quantum states may have the same first and second moments. In
particular, uniqueness within the Gaussian family does not by
itself exclude additional non-Gaussian barycenters. The central
question of this paper is therefore:
\begin{quote}
\itshape
Under what conditions do Gaussian input states force Gaussianity
and uniqueness of the barycenter among all quantum states?
\end{quote}
We prove that a single faithful input within a family of Gaussian
states suffices, in both transport formulations considered here.
The key step is a reconstruction principle that determines an
optimal coupling from its prescribed Gaussian marginal and a
covariance slackness relation. This turns a finite-dimensional
optimality certificate into a rigidity statement for the entire
density operator.

The terminology ``quantum optimal transport'' encompasses several
distinct noncommutative extensions of classical transport.
These include the dynamical approach of
Carlen--Maas~\cite{CarMas14}, which constructs a quantum analogue
of the Benamou--Brenier formulation through a noncommutative
continuity equation; and the transport of positive semidefinite
matrix-valued measures developed in
\cite{PeyreChizatVialardSolomon2019}. In the latter framework,
marginal discrepancies (probability densities) are penalized using quantum relative
entropy, and quantum-entropic regularization leads to an
iterative scaling algorithm for transport and tensor-field
barycenter problems.

In contrast, the present work follows a static, Kantorovich-type
approach to transport between density operators on Hilbert
spaces; see, for example,
\cite{CapGerMonPel26,CapGerMonPor24,Golse-Mouhot-Paul:2016,
FelGerPor23,GerMon2023,GerPel25}.
Transport costs are minimized over bipartite states with
prescribed partial traces, rather than over paths satisfying
a continuity equation. In the quantum-channel formulation,
for a fixed source state, channels defined on the trace-class
operators over its support correspond bijectively to couplings
on $\cH\otimes\cH^*$ whose marginals are the output state and
the transpose of the source state
\cite[Def.~3 and Prop.~2]{DPaTre19}.

\subsection*{Main contributions and \rev{organization} of the paper}

\noindent
\textbf{Existence and duality.} We develop a variational theory of Quantum Optimal Transport barycenters on separable
Hilbert spaces, covering finite-dimensional and continuous-variable
systems and providing a common framework for the quantum-state~\cite{Golse-Mouhot-Paul:2016} and
quantum-channel~\cite{DPaTre19,DPaTre2023} formulations. Under confinement and finite-cost
feasibility assumptions, we prove existence and Kantorovich duality,
with or without barycentric energy constraints
(Theorems~\ref{thm:existenceQOTbary_intro}--\ref{thm:energy-constrained-barycenter-duality} and Corollary~\ref{cor:unconstrained-barycenter-duality}). For quadrature-generated
costs, we also establish an \emph{a posteriori} energy estimate
under the hypotheses of \rev{Proposition~\ref{prop:finitesecondemomentrho}}.

\noindent
\textbf{Gaussian reduction and global uniqueness.} For the canonical quadratic costs, Gaussian inputs admit a Gaussian
barycenter, and the minimum is characterized by a finite-dimensional
convex covariance optimization (Theorem~\ref{thm:gaussian-barycenter-intro}). Our central result (Gaussian rigidity)
upgrades this reduction to uniqueness in the full state space:
if all inputs are Gaussian and at least one is faithful, the
barycenter is unique among all quantum states and is Gaussian,
in both formulations (Theorem~\ref{thm:faithful-global-uniqueness-intro}). The underlying reconstruction
principle is established in Propositions~\ref{prop:dual-support-gaussian-source}--\ref{prop:saturated-face-reconstruction}.
Finally, faithfulness is sufficient but not necessary: families of pure
states with finite canonical second moments have a unique
coherent-state barycenter (Lemma~\ref{lem:pure-input-coherent-barycenter}), whereas partially pure
Gaussian inputs can admit Gaussian and non-Gaussian barycenters
with identical first and second moments (Example~\ref{ex:nonfaithful-boundary}).

\noindent
\textbf{Semiclassical limit and thermal inputs.}
For Gaussian means and positive-definite covariances \rev{fixed independently of $\hbar$}, Theorem~\ref{thm:gaussian-semiclassical-intro} proves convergence of
the quantum barycenter covariances to the covariance of the
classical Gaussian $2$-Wasserstein barycenter as
$\hbar\to0^+$. The associated Wigner distributions converge
weakly to that classical Gaussian measure. The proof also
gives an $O(\hbar)$ bound for the difference between the
quantum and classical minimum values. These conclusions
hold for both transport formulations. For displaced
isotropic thermal inputs, Corollary~\ref{cor:thermal-barycenter-intro} gives a unique
thermal barycenter, with an explicit covariance formula
in the quantum-state formulation and a one-dimensional
convex characterization in the quantum-channel formulation.

\noindent
\textbf{Organization of the paper.}
Section~\ref{intro:statementofresults} states the main results and
introduces the two classes of cost operators considered throughout the
paper. Section~\ref{sec:preliminary} collects the operator-theoretic,
Weyl-system, and continuous-variable preliminaries.
Section~\ref{sec:qot-bary} proves existence, duality, and the
finite-energy estimate for general QOT barycenters.
Section~\ref{sec:gaussian-states} develops the Gaussian reduction, covariance duality and faithful-input uniqueness, state reconstruction under a covariance slackness relation, the semiclassical limit, the thermal case, and the nonuniqueness example on the partially pure boundary. The appendices collect the analytic tools used in the proofs,
including compactness for quantum states, essential self-adjointness of quadratic Hamiltonians, and
Fenchel--Rockafellar duality. We also develop the symplectic normal forms and Gaussian-channel representations underlying the rigidity argument, together with isotropic symmetrization
and the explicit calculations for thermal transport and the
nonuniqueness example.

 \section{Statement of the main results}\label{intro:statementofresults}
We introduce two classes of cost operators that cover all examples considered in this work. Let $\cH_1,\cH_2$ be separable Hilbert spaces and $C\in\cL_{\geq}(\cH_1\otimes\cH_2)$ be a nonnegative self-adjoint operator. 

\noindent
\textbf{Compact-resolvent cost operator.}
We say that \(C\) is a compact-resolvent cost operator if \(C\) has a compact resolvent in the sense recalled in Definition~\ref{def:compact-resolvent}. This assumption covers many discrete-spectrum energy operators, including confining
Hamiltonians, and is automatically satisfied in finite dimensions: every positive semidefinite cost operator on a finite-dimensional Hilbert space is a compact-resolvent cost operator.

\noindent
\textbf{Quadrature-generated costs.}
The second class consists of noncommutative analogues of the quadratic Euclidean cost $|\mathbf x-\mathbf y|^2=\sum_j(x_j-y_j)^2$
from classical optimal transport~\cite{FigalliGlaudobook,Santabook,VilON}.
The guiding idea is to replace the classical coordinate functions by two
families of quantum observables and to measure their quadratic discrepancy. Let $R_1,\ldots,R_{2m}\in\cL_S(\cH_1)$ and $S_1,\ldots,S_{2m}\in\cL_S(\cH_2)$ be self-adjoint operators, called quadratures. Formally, the
corresponding quadrature-generated cost $C$ on \(\cH_1\otimes \cH_2\) is
\[
  C
  =
  \sum_{j=1}^{2m}
  \left(
    R_j\otimes \I_{\cH_2}
    -
    \I_{\cH_1}\otimes S_j
  \right)^2, 
\]
see
also \cite[Def.~5]{DPaTre19} for a related quadratic cost. Since the quadratures are typically unbounded, this expression requires a self-adjoint
realization. In this paper, the quadratures $\{R_j\}_{j\in[2m]}$ and $\{S_j\}_{j\in[2m]}$ are taken to be the canonical generators of
regular Weyl systems
\[
  W_1:\mathbb R^{2m}\to\mathcal U(\cH_1),
  \quad
  W_2:\mathbb R^{2m}\to\mathcal U(\cH_2),
\]
\rev{Here \(\mathcal U(\cH_i)\) denotes the unitary group on \(\cH_i\); see Subsection~\ref{subsec:op-space} for our operator-theoretic notation. The systems \(W_1\) and \(W_2\) have nondegenerate skew-symmetric commutation matrices \(\Theta_1\) and \(\Theta_2\), respectively. Regular Weyl systems, their commutation matrices, and their canonical generators are defined in Subsection~\ref{pre:gauss}.} The cost \(C\) is understood as the unique self-adjoint closure specified in
Remark~\ref{rem:quadratic-hamiltonian-closure-convention}.

For compactness arguments, we assume that the Weyl system $W_1$ on $\cH_1$ is irreducible. Then the associated one-body square-sums are
\begin{equation}
\label{intro:eq:one-body-H}
\mathsf H_{\mathbf R}:=\sum_{j=1}^{2m}R_j^2,
\qquad
\mathsf H_{\mathbf S}:=\sum_{j=1}^{2m}S_j^2.
\end{equation}
The operator $\mathsf H_{\mathbf R}$ has compact resolvent; see
Corollary~\ref{coro:nelson-square-compact-resolvent}.

\subsection{QOT barycenters: Existence and duality}
The existence of QOT barycenters is established by the direct method of \rev{the calculus of variations}. In infinite dimensions, prescribing only one marginal does not in general make the coupling set compact in trace norm. We therefore require one cost \(C_{s_0}\) to belong to one of the two classes of costs described above to obtain the compactness of the minimizing sequences.
\begin{theorem}[Existence of QOT barycenters]
\label{thm:existenceQOTbary_intro} Let $[N]$ be an index set of $N\in\mathbb N$ elements and
$\boldsymbol{\alpha}:=(\alpha_s)_{s\in[N]}$ be positive real numbers satisfying
$\sum_{s=1}^N \alpha_s = 1$. Let 
$\cH_0$ and $(\cH_s)_{s\in[N]}$ be separable Hilbert spaces, and
$\boldsymbol{\sigma}:=(\sigma_s)_{s\in[N]}$ with $\sigma_s\in\cS(\cH_s)$ be a family of quantum states. 

Let $\mathbf{C}:=(C_s)_{s\in[N]}$ be a family of nonnegative self-adjoint cost operators $C_s\in\cL_{\geq}(\cH_0\otimes\cH_s)$. Assume that there exists $s_0\in[N]$ such that one of the following conditions holds:
\begin{itemize}
\item[(i)] $C_{s_0}$ has compact resolvent on
$\cH_0\otimes\cH_{s_0}$;

\item[(ii)] $C_{s_0}$ is a quadrature-generated cost, the associated one-body square-sum $H_0^{(s_0)}$ on $\cH_0$ has compact resolvent, and the one-body square-sum $H^{(s_0)}$ on
$\cH_{s_0}$ satisfies $\Tr\bigl(H^{(s_0)}\sigma_{s_0}\bigr)<+\infty$.
\end{itemize}

Let $k\in\N$, $\mathbf E=(E_1,\ldots,E_k)\in[0,+\infty)^k$ and $\mathbf A:=(A_\ell)_{\ell\in[k]}$, where
$A_\ell\in\cL_{\geq}(\cH_0)$ for every $\ell\in[k]$. Assume that the constrained barycentric state set $\cS_{\mathbf A,\mathbf E}(\cH_0)$, defined in \eqref{eq:constrain}, is nonempty and that there exists $\rho_0\in\cS_{\mathbf A,\mathbf E}(\cH_0)$ such that
\begin{equation}\label{eq:exas}
\sum_{s=1}^{N}\alpha_s\,\operatorname{QOT}\!\big(\rho_0,\sigma_s\mid C_s\big)<+\infty.
\end{equation}
Then the QOT barycenter problem~\eqref{intro:QOT-barycenter} admits a barycenter. 
\end{theorem}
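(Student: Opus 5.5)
We use the direct method of the calculus of variations, working jointly in the barycenter and the couplings. Set $J(\rho):=\sum_s\alpha_s\operatorname{QOT}(\rho,\sigma_s\mid C_s)$ and $m:=\inf_{\cS_{\mathbf A,\mathbf E}(\cH_0)}J$. By \eqref{eq:exas}, $m<+\infty$. Take a minimizing sequence $\rho_n\in\cS_{\mathbf A,\mathbf E}(\cH_0)$ with $J(\rho_n)\le m+1/n$. For each $s$ choose near-optimal couplings $\pi_s^n\in\Gamma(\rho_n,\sigma_s)$ with
\[
\sum_s\alpha_s\Tr(C_s\pi^n_s)\le m+2/n ;
\]
here $\Tr(C\pi)$ is understood as $\Tr(C^{1/2}\pi C^{1/2})\in[0,+\infty]$. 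In particular, $\sup_n\Tr(C_{s_0}\pi^n_{s_0})\le (m+2)/\alpha_{s_0}<\infty$.

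\textbf{Step 1: compactness of $(\rho_n)$.} In case (i), the sublevel set $\{\pi\in\cS(\cH_0\otimes\cH_{s_0}):\Tr(C_{s_0}\pi)\le M\}$ is trace-norm compact. This is the standard compactness lemma for states with bounded energy under a compact-resolvent operator (appendix). So along a subsequence $\pi^n_{s_0}\to\pi_{s_0}$ in trace norm. Partial trace is trace-norm continuous, hence $\rho_n\to\rho:=\Tr_{\cH_{s_0}}\pi_{s_0}$ in trace norm.

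In case (ii), we convert the cost bound into a one-body bound. For each $j$ and $\varepsilon>0$ we have the operator inequality
\[
R_j^2\otimes\I\le(1+\varepsilon)\bigl(R_j\otimes\I-\I\otimes S_j\bigr)^2+(1+\varepsilon^{-1})\,\I\otimes S_j^2,
\]
valid on a common core. Summing over $j$ and taking expectation in $\pi^n_{s_0}$ gives
\[
\Tr\bigl(H_0^{(s_0)}\rho_n\bigr)\le 2\Tr(C_{s_0}\pi^n_{s_0})+2\Tr\bigl(H^{(s_0)}\sigma_{s_0}\bigr),
\]
which is bounded uniformly in $n$. Since $H_0^{(s_0)}$ has compact resolvent, the same compactness lemma yields $\rho_n\to\rho$ in trace norm along a subsequence.

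\textbf{Step 2: compactness of couplings.} For every $s$, $\pi^n_s$ has marginals $\rho_n\to\rho$ and $\sigma_s$ fixed. A family of couplings whose marginals form trace-norm compact sets is itself trace-norm compact; this is the quantum analogue of Prokhorov tightness. To see it, given $\varepsilon$, pick finite-rank projections $P$ and $Q$ with $\Tr(P^\perp\rho_n)<\varepsilon$ for all $n$ and $\Tr(Q^\perp\sigma_s)<\varepsilon$. Then $\Tr\bigl((P\otimes Q)^\perp\pi^n_s\bigr)<2\varepsilon$, and a gentle-measurement estimate combined with finite-dimensional compactness gives a trace-norm convergent subsequence. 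Passing to a diagonal subsequence, we get $\pi^n_s\to\pi_s$ for all $s$, with $\pi_s\in\Gamma(\rho,\sigma_s)$ by continuity of partial traces.

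\textbf{Step 3: lower semicontinuity and constraints.} Every nonnegative self-adjoint $B$ can be written as $\Tr(B\pi)=\sup_k\Tr(B_k\pi)$ with bounded truncations $B_k=B(1+B/k)^{-1}$ increasing to $B$. Hence $\pi\mapsto\Tr(B\pi)$ is trace-norm lower semicontinuous. Applying this to $B=C_s$ gives
\[
\sum_s\alpha_s\Tr(C_s\pi_s)\le\liminf_n\sum_s\alpha_s\Tr(C_s\pi^n_s)=m,
\]
so $J(\rho)\le m$. Applying it to $B=A_\ell$ gives $\Tr(A_\ell\rho)\le\liminf_n\Tr(A_\ell\rho_n)\le E_\ell$, so $\rho\in\cS_{\mathbf A,\mathbf E}(\cH_0)$. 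Therefore $\rho$ is a barycenter.

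\textbf{Main obstacle.} The delicate step is case (ii) of Step 1: justifying the operator inequality rigorously for unbounded quadratures. I would prove it on the joint core (for instance, finite-particle or Schwartz vectors of the Weyl systems), using the closure convention of Remark~\ref{rem:quadratic-hamiltonian-closure-convention}. The expectation inequality would then be transferred to general $\pi$ with finite cost via spectral truncation and monotone convergence.
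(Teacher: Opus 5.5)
Your proposal is correct and follows essentially the same route as the paper. Both arguments run the direct method with the same four steps:
\begin{itemize}
\item compactness of the $s_0$-th coupling in case (i), or of the barycentric marginals in case (ii) via a one-body lower bound on the quadrature cost;
\item trace-norm compactness of couplings with converging marginals;
\item lower semicontinuity of $\pi\mapsto\Tr(C_s\pi)$;
\item closedness of the energy constraints.
\end{itemize}
The differences are only cosmetic: you use the bound $H_0\otimes\I\le 2C+2\,\I\otimes H$ instead of the paper's $C+3\,\I\otimes H\ge\frac34 H_0\otimes\I$, and you prove the coupling-compactness step with a gentle-measurement argument where the paper cites De Palma--Trevisan (Theorem~\ref{theo:compact}).
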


For duality, we adapt the Fenchel--Rockafellar approach for the two-marginal problem of
Caglioti--Golse--Paul~\rev{\cite[Theorem~2.2]{caglioti2021towards}} to the shared-marginal barycenter problem,
including barycentric energy constraints. The resulting dual
formula uses bounded self-adjoint operator potentials. We assert
equality of the primal and dual values, without asserting
attainment in this class of potentials. 

We next state the corresponding strong duality result, which is a consequence of the Fenchel-Rockafellar theorem~\ref{app:FRTheorem}.
\begin{theorem}[Duality with barycentric energy constraints]
\label{thm:energy-constrained-barycenter-duality}
Under the same hypotheses as in Theorem~\ref{thm:existenceQOTbary_intro}, the following duality formula holds for the QOT barycenter problem~\eqref{intro:QOT-barycenter}\rev{:}
\begin{equation}
\label{eq:energy-constrained-duality-barycenter}
\cB_{\mathbf A,\mathbf E}^{\mathbf C,\boldsymbol\alpha}(\boldsymbol{\sigma})
=
\sup
\left\{
\sum_{s=1}^N \alpha_s\Tr(V_s\sigma_s)
-
\sum_{\ell\in[k]}\eta_\ell E_\ell
:
((U_s,V_s)_{s\in[N]},(\eta_\ell)_{\ell\in[k]})
\in
\mathsf{Pot}^{\mathbf C,\boldsymbol\alpha}_{\mathbf A,\mathbf E}
(\boldsymbol\sigma)
\right\}, 
\end{equation}
where 
$\mathsf{Pot}^{\mathbf C,\boldsymbol\alpha}_{\mathbf A,\mathbf E}
(\boldsymbol\sigma)$
is the set of admissible energy-constrained potentials defined by
\begin{equation}\label{eq:pot}
\begin{aligned}
\mathsf{Pot}^{\mathbf C,\boldsymbol\alpha}_{\mathbf A,\mathbf E}
(\boldsymbol\sigma)
:=
\Bigl\{ &((U_s,V_s)_{s\in[N]},(\eta_\ell)_{\ell\in[k]})
:
\, U_s\in\cB_S(\cH_0),\,
V_s\in\cB_S(\cH_s),\, U_s\oplus V_s\le C_s,
\\
&
\text{there exist }
U_s^{(\ell)}\in\cB_S(\cH_0) 
\text{ such that }
U_s=\sum_{\ell=1}^k U_s^{(\ell)}\, \text{for all }
s\in[N], \,
\\
&
-\sum_{s=1}^N\alpha_s U_s^{(\ell)}
\le
\eta_\ell A_\ell \text{ and }\eta_\ell\ge0\, \text{for all } \ell\in[k]  \, 
\Bigr\}.
\end{aligned}
\end{equation}
\end{theorem}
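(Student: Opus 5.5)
We will derive the duality formula from the Fenchel--Rockafellar theorem (Theorem~\ref{app:FRTheorem}) applied on the Banach space of bounded self-adjoint operators, and then identify the dual problem with the barycenter problem. The setting is the pair $E=\cB_S(\cH_0)^{N}\times\prod_s\cB_S(\cH_s)\times\R^k$ with dual pairing against trace-class operators (and reals), and the primal problem is recovered as the conjugate side.

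\textbf{Step 1: weak duality.} Take $((U_s,V_s),\eta)\in\mathsf{Pot}$, any feasible $\rho\in\cS_{\mathbf A,\mathbf E}(\cH_0)$, and couplings $\pi_s\in\Gamma(\rho,\sigma_s)$ with finite cost. From $U_s\oplus V_s\le C_s$ we get $\Tr(U_s\rho)+\Tr(V_s\sigma_s)\le\Tr(C_s\pi_s)$, with the right-hand side understood via the quadratic form of the nonnegative operator $C_s$. Summing with weights and using $\sum_s\alpha_sU_s=\sum_\ell\sum_s\alpha_sU_s^{(\ell)}\ge-\sum_\ell\eta_\ell A_\ell$ gives
\[
\sum_s\alpha_s\Tr(V_s\sigma_s)-\sum_\ell\eta_\ell E_\ell\le\sum_s\alpha_s\Tr(V_s\sigma_s)+\sum_s\alpha_s\Tr(U_s\rho).
\]
The $A_\ell$ terms are handled as $\Tr(A_\ell\rho)\le E_\ell$, interpreted in the form sense. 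Hence the supremum is at most $\cB$.

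\textbf{Step 2: reverse inequality.} We first truncate. Replace $C_s$ by bounded costs $C_s\wedge n$ (or spectral truncations $C_s^{(n)}=C_s\mathbf 1_{C_s\le n}+n\mathbf 1_{C_s>n}$) and $A_\ell$ by $A_\ell\wedge n$; this is necessary because Fenchel--Rockafellar needs continuity of one functional. For the bounded problem, define on $E$ the convex functionals
\[
\Phi\bigl((U_s^{(\ell)}),(V_s),\eta\bigr)=\begin{cases}-\sum_s\alpha_s\Tr(V_s\sigma_s)+\sum_\ell\eta_\ell E_\ell & \text{if } \eta\ge0,\ -\sum_s\alpha_sU_s^{(\ell)}\le\eta_\ell A_\ell^{(n)},\\ +\infty & \text{otherwise,}\end{cases}
\]
and $\Psi=0$ if $U_s\oplus V_s\le C_s^{(n)}$ and $+\infty$ otherwise. $\Psi$ is finite and continuous at a point in the interior, e.g.\ $U_s=-\I$, $V_s=-\I$, where $-2\I<C_s^{(n)}$ strictly. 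Applying the theorem, the conjugates are computed as follows. $\Psi^*$ restricted to positive trace-class $\pi_s$ on $\cH_0\otimes\cH_s$ gives $\Tr(C^{(n)}_s\pi_s)$, and $\Phi^*$ forces the partial traces to equal $\alpha_s\rho$-type common marginals: the freedom in $U_s^{(\ell)}$ forces $\Tr_{\cH_s}\pi_s=\alpha_s\mu_\ell$ independent of $s$, the $V_s$ component forces the second marginal to be $\alpha_s\sigma_s$, and $\eta$ yields $\Tr(A_\ell^{(n)}\rho)\le E_\ell$. Normalizing the traces shows the marginal is a state. This produces equality of the truncated dual with the truncated barycenter value $\cB^{(n)}$.

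\textbf{Step 3: removing the truncation.} Potentials admissible for $C_s^{(n)}$ are admissible for $C_s$, since $C_s^{(n)}\le C_s$ and $A^{(n)}_\ell\le A_\ell$ makes the constraint stronger. So the dual for $\mathbf C$ is at least $\sup_n\cB^{(n)}$. It remains to show $\cB^{(n)}\uparrow\cB$. Take minimizers $\rho_n,\pi_s^n$ of the truncated problems; these exist by the compactness argument of Theorem~\ref{thm:existenceQOTbary_intro}. The main obstacle is obtaining tightness here, since the truncated costs no longer confine. I would first try the following device. Keep the compact-resolvent (or the quadrature) cost $C_{s_0}$ only lightly truncated, noting that $\cB^{(n)}\le\cB<\infty$ bounds $\Tr(C_{s_0}^{(n)}\pi_{s_0}^n)$. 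Since for fixed $M\le n$ the operator $C_{s_0}^{(M)}$ still controls the mass outside its spectral projection on $[0,M]$, which is finite rank in case (i), or one-body energy is controlled in case (ii), this yields tightness of $\pi_{s_0}^n$ and of $\rho_n$. Then Lemma-type compactness for states applies to the other couplings, since their marginals $\rho_n$ and $\sigma_s$ are tight. Extract trace-norm limits $\rho,\pi_s$ and use lower semicontinuity, $\Tr(C_s^{(M)}\pi_s)\le\liminf_n\Tr(C^{(n)}_s\pi^n_s)$, then let $M\to\infty$ by monotone convergence. Closedness of the constraint set $\cS_{\mathbf A,\mathbf E}$ under lower semicontinuity gives feasibility, hence $\cB\le\liminf\cB^{(n)}$, concluding the proof.
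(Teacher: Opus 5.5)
Your weak-duality Step~1 is fine. The reverse inequality has a genuine gap, because Step~2 never deals with the non-normal part of the dual space.

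Theorem~\ref{app:FRTheorem} is a statement about a normed space and its norm dual. Even once the variables are taken to be $B_s\in\cB_S(\cH_0\otimes\cH_s)$ (see below), that dual is $\prod_s\cB_S(\cH_0\otimes\cH_s)^*$, not a product of trace classes. It contains positive functionals with a singular part vanishing on compact operators. Pairing with $\cT^{(1)}_S$ instead does not help: the set $\{B_s\le C_s^{(n)}\}$ has empty weak$^*$-interior, so its indicator is nowhere weak$^*$-continuous. Hence the conjugate side is a supremum over all positive $\Lambda_s$ with the marginal constraints, and ``$\Psi^*$ restricted to positive trace-class $\pi_s$'' assumes what has to be proved.

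For your capped costs nothing excludes singular parts, since $\Lambda_s(C_s^{(n)})\le n\Lambda_s(\I)$ is always finite. The truncated Fenchel--Rockafellar identity therefore equates the truncated dual not with $\cB^{(n)}$ but with a relaxed value $\cB^{(n)}_{\mathrm{gen}}\le\cB^{(n)}$. This is an infimum over generalized couplings whose common barycentric marginal may be a non-normal state. Step~3 treats only normal (near-)minimizers, so $\liminf_n\cB^{(n)}_{\mathrm{gen}}\ge\cB$ is never shown.

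To close this you would need two further steps. First, show that the singular mass of generalized near-minimizers is $O(1/n)$, because $C_{s_0}^{(n)}$ still confines at level $n$. In case (i) this follows from $C_{s_0}^{(n)}\ge n(\I-E_{[0,n)})$ with a finite-rank spectral projection. In case (ii) it needs a quantitative truncated version of Proposition~\ref{prop:type2costoperator}. Second, repair the normal parts, whose second marginals fall short of $\alpha_s\sigma_s$.

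The paper avoids truncation altogether. The indicator of $\{B_s\le C_s\}$ is norm-continuous at $B_s=-\delta\I$ even for unbounded $C_s\ge0$, and its conjugate is the extended evaluation $\Lambda_s(C_s)$. Propositions~\ref{prop:B*confine} and~\ref{prop:type2costoperator} show that $\Lambda_{s_0}(C_{s_0})<\infty$ forces $\Lambda_{s_0}$, hence the barycentric marginal, to be normal. The marginal-normality observation in the proof of Proposition~\ref{prop:FR-right-identification-constrained} then makes every $\Lambda_s$ normal. This confinement-implies-normality step is the idea missing from your argument.

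Several further points also need repair.
\begin{itemize}
\item Your $E$ is the space of potentials, so $\Psi^*$ is a function of functionals on $\cB_S(\cH_0)$ and $\cB_S(\cH_s)$. Evaluating it is the two-marginal Kantorovich dual $\sup\{\mu(U)+\nu(V):U\oplus V\le C_s^{(n)}\}$ itself. To get $\Tr(C_s\pi_s)$ on the conjugate side, the variables must be $B_s\in\cB_S(\cH_0\otimes\cH_s)$, with the second functional an infimal projection over representations $B_s=(\sum_\ell U_s^{(\ell)})\oplus V_s$.
\item Your qualification point $U_s=V_s=-\I$ generally lies outside $\Dom\Phi$: summing the energy constraints forces $\I\le\sum_\ell\eta_\ell A_\ell^{(n)}$. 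Take $U_s=0$, $V_s=-\I$ instead.
\item In Step~3, minimizers of the truncated problems need not exist, since bounded $C_s^{(n)}$ satisfy neither (i) nor (ii). You must work with near-minimizers.
\item In case (ii), a bound on $\Tr(C_{s_0}^{(n)}\pi)$ does not control $\Tr(H_0^{(s_0)}\Tr_{\cH_{s_0}}\pi)$ at all: a small-mass, infinite-energy component costs at most $n$ times its mass. Tightness there needs a separate argument.
\end{itemize}
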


A particular case of interest is the unconstrained QOT barycenter problem,
obtained from problem~\eqref{intro:QOT-barycenter} by removing the energy
constraint. It can be written as
\begin{equation}
\label{eq:unconstrained-barycenter-primal}
\begin{aligned}
\cB^{\mathbf C,\boldsymbol\alpha}_0(\boldsymbol{\sigma})
&:=
\inf_{\rho\in\cS(\cH_0)}
\sum_{s=1}^N
\alpha_s\,
\operatorname{QOT}\!\big(\rho,\sigma_s\mid C_s\big),\\
\mathrm{Bar}^{\mathbf C,\boldsymbol\alpha}_0(\boldsymbol{\sigma})
&:=
\arg\min_{\rho\in\cS(\cH_0)}
\sum_{s=1}^N
\alpha_s\,
\operatorname{QOT}\!\big(\rho,\sigma_s\mid C_s\big).
\end{aligned}
\end{equation}

As a consequence, we derive the unconstrained duality
formula.

\begin{coro}[Unconstrained barycenter duality]
\label{cor:unconstrained-barycenter-duality}
Assume the cost assumptions of Theorem~\ref{thm:existenceQOTbary_intro} and assume that
there exists \(\rho_0\in\cS(\cH_0)\) such that $\sum_{s=1}^N\alpha_s \operatorname{QOT}(\rho_0,\sigma_s\mid C_s)<+\infty$. Then we have the following dual formula for \eqref{eq:unconstrained-barycenter-primal}:
\begin{equation}
\label{eq:unconstrained-barycenter-duality}
\cB^{\mathbf C,\boldsymbol\alpha}_0(\boldsymbol{\sigma})
=
\sup
\left\{
\sum_{s=1}^N \alpha_s \Tr(V_s\sigma_s)
:
(U_s,V_s)_{s\in[N]}
\in
\mathsf{Pot}_0^{\mathbf C,\boldsymbol\alpha}(\boldsymbol\sigma)
\right\},
\end{equation}
where the admissible potential set is
\[
\mathsf{Pot}_0^{\mathbf C,\boldsymbol\alpha}(\boldsymbol\sigma)
:=
\left\{
(U_s,V_s)_{s\in[N]}:
\begin{array}{l}
U_s\in\cB_S(\cH_0),\quad V_s\in\cB_S(\cH_s),
\\[1mm]
\displaystyle \sum_{s=1}^N \alpha_s U_s=0, \quad U_s\oplus V_s\le C_s,\quad s\in[N]
\end{array}
\right\}.
\]
\end{coro}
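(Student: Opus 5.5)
The plan is to derive the corollary from Theorem~\ref{thm:energy-constrained-barycenter-duality} by choosing a trivial energy constraint, and then to show that the constrained potential set collapses to $\mathsf{Pot}_0^{\mathbf C,\boldsymbol\alpha}(\boldsymbol\sigma)$. Take $k=1$, $A_1=0$ and $E_1=0$. Then $\cS_{\mathbf A,\mathbf E}(\cH_0)=\cS(\cH_0)$, because $\Tr(0\cdot\rho)=0\le 0$ holds for every state. This set contains $\rho_0$, so the hypotheses of Theorem~\ref{thm:existenceQOTbary_intro} are met, and the primal values satisfy $\cB^{\mathbf C,\boldsymbol\alpha}_{\mathbf A,\mathbf E}(\boldsymbol\sigma)=\cB^{\mathbf C,\boldsymbol\alpha}_0(\boldsymbol\sigma)$. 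Theorem~\ref{thm:energy-constrained-barycenter-duality} therefore gives
\[
\cB^{\mathbf C,\boldsymbol\alpha}_0(\boldsymbol\sigma)=\sup\Bigl\{\textstyle\sum_s\alpha_s\Tr(V_s\sigma_s)-\eta_1\cdot 0\Bigr\},
\]
where the supremum runs over $\mathsf{Pot}^{\mathbf C,\boldsymbol\alpha}_{\mathbf A,\mathbf E}(\boldsymbol\sigma)$.

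Next I identify this potential set. With $k=1$, the decomposition $U_s=U_s^{(1)}$ is forced. The constraint $-\sum_s\alpha_sU_s\le\eta_1 A_1=0$ then reads $\sum_s\alpha_sU_s\ge 0$, and $\eta_1\ge0$ is otherwise free and does not enter the objective. So the constrained dual is the supremum of $\sum_s\alpha_s\Tr(V_s\sigma_s)$ over families with $U_s\oplus V_s\le C_s$ and $\sum_s\alpha_sU_s\ge0$. This is not literally $\mathsf{Pot}_0$, which requires $\sum_s\alpha_sU_s=0$. It is larger, so its supremum is at least the $\mathsf{Pot}_0$ supremum.

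To get the reverse inequality, I will project a feasible family onto the equality constraint without lowering its value. Given a feasible $(U_s,V_s)$, set $P:=\sum_s\alpha_sU_s\ge0$, which is bounded and self-adjoint. Define
\[
\tilde U_s:=U_s-P,\qquad \tilde V_s:=V_s .
\]
Then $\sum_s\alpha_s\tilde U_s=P-\bigl(\sum_s\alpha_s\bigr)P=0$, since the weights sum to one. Also
\[
\tilde U_s\oplus\tilde V_s=U_s\oplus V_s-P\otimes\I\le U_s\oplus V_s\le C_s,
\]
using $P\ge0$, and the objective is unchanged. Hence the two suprema coincide, which yields \eqref{eq:unconstrained-barycenter-duality}.

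The only delicate point is the meaning of $U_s\oplus V_s\le C_s$ when $C_s$ is unbounded. It has to be read in the quadratic-form sense on the form domain of $C_s$, as in the paper's conventions. Subtracting the bounded nonnegative operator $P\otimes\I$ preserves that inequality, because the form domain is unchanged and bounded perturbations behave well in this sense. I expect this to be routine.
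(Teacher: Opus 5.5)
Your proposal is correct and follows the paper's proof essentially step for step. You specialize Theorem~\ref{thm:energy-constrained-barycenter-duality} to $k=1$, $A_1=0$, $E_1=0$, which turns the dual constraint into $\sum_{s}\alpha_sU_s\ge0$, and then shift $U_s\mapsto U_s-\sum_{s'}\alpha_{s'}U_{s'}$ to reach the equality constraint without changing the objective; the trivial inclusion of $\mathsf{Pot}_0^{\mathbf C,\boldsymbol\alpha}(\boldsymbol\sigma)$ gives the reverse inequality.
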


Finally, although the unconstrained QOT barycenter problem imposes no a priori energy constraint on the barycenter, quadrature-generated costs still provide a natural a posteriori one-body energy estimate.

\begin{prop}[Finite energy of the barycenter for quadrature-generated costs]\label{prop:finitesecondemomentrho}
Assume the cost assumptions of Theorem~\ref{thm:existenceQOTbary_intro} in the
unconstrained case, and assume in addition that every \(C_s\) is quadrature-generated. Let \(H^{(s)}_0\) and \(H^{(s)}\) be the one-body square-sums associated with $C_s$ as in~\eqref{intro:eq:one-body-H}. Assume, in addition, that $\operatorname{Tr}\bigl(H^{(s)}\sigma_s\bigr)<+\infty$ for all $s\in[N]$. Let $\overline{\rho}\in
\mathrm{Bar}^{\mathbf C,\boldsymbol\alpha}_0(\boldsymbol\sigma)$ be a QOT barycenter. If 
\[
M:=\inf_{\rho\in\cS(\cH_0)}\sum_{s=1}^N \alpha_s\Tr\!\Big(\rho H_0^{(s)}\Big) < \infty,
\]
then
\begin{equation}
\sum_{s=1}^N \alpha_s\Tr\!\Big(\overline{\rho} H_0^{(s)}\Big)
 \le \frac83M+\frac{20}{3}\sum_{s=1}^N\alpha_s\Tr(\sigma_s H^{(s)})<+\infty.
\end{equation}

Consequently, the minimum in~\eqref{eq:unconstrained-barycenter-primal}
is unchanged if the barycentric search space is restricted to $
\bigcap_{s=1}^N
\left\{
\rho\in\cS(\cH_0):
\Tr\!\bigl(H_0^{(s)}\rho\bigr)<+\infty
\right\}$.
\end{prop}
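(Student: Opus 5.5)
The approach is to compare the optimal value with the energy of a near-optimal reference state, using operator inequalities that relate the quadrature-generated cost to the one-body square-sums. The basic tool is the operator form of the parallelogram inequality $(a-b)^2\ge \tfrac12 a^2 - b^2$, together with $(a-b)^2\le 2a^2+2b^2$. Applied to the commuting pairs $R_j\otimes\I$ and $\I\otimes S_j$ on a suitable common core, and summed over $j$, these give, in the sense of quadratic forms,
\[
\tfrac12\, H_0^{(s)}\otimes \I - \I\otimes H^{(s)} \;\le\; C_s \;\le\; 2\,H_0^{(s)}\otimes\I + 2\,\I\otimes H^{(s)} .
\]
Rigorously, one works with the closure convention of Remark~\ref{rem:quadratic-hamiltonian-closure-convention}, and traces of nonnegative forms are understood as possibly infinite. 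For any coupling $\pi_s\in\Gamma(\rho,\sigma_s)$, taking traces yields
\[
\tfrac12\Tr(\rho H_0^{(s)})-\Tr(\sigma_sH^{(s)})\le \Tr(C_s\pi_s).
\]
On the other side, using the product coupling $\rho\otimes\sigma_s$,
\[
\operatorname{QOT}(\rho,\sigma_s\mid C_s)\le 2\Tr(\rho H_0^{(s)})+2\Tr(\sigma_sH^{(s)}).
\]

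With these bounds, one proceeds as follows.

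\emph{Lower bound at the barycenter.} The barycenter $\overline\rho$ attains the minimum, which exists by Theorem~\ref{thm:existenceQOTbary_intro}. Its optimal couplings $\overline\pi_s$ have finite cost, so
\[
\sum_s\alpha_s\bigl(\tfrac12\Tr(\overline\rho H_0^{(s)})-\Tr(\sigma_sH^{(s)})\bigr)\le \cB_0 .
\]

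\emph{Upper bound from a reference state.} For any $\rho$ with $\sum_s\alpha_s\Tr(\rho H_0^{(s)})\le M+\varepsilon$, the product couplings give
\[
\cB_0\le 2(M+\varepsilon)+2\sum_s\alpha_s\Tr(\sigma_sH^{(s)}).
\]

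Combining the two bounds and letting $\varepsilon\to0$ gives
\[
\sum_s\alpha_s\Tr(\overline\rho H_0^{(s)})\le 4M+6\sum_s\alpha_s\Tr(\sigma_sH^{(s)}).
\]
This is a finite bound of the same type as the one claimed, but with worse constants. To reach the stated constants $\tfrac83$ and $\tfrac{20}{3}$, I would replace the fixed splitting by a weighted Young inequality
\[
(a-b)^2\ge(1-\delta)a^2+(1-\delta^{-1})b^2 ,
\]
and similarly sharpen the upper bound to $(1+\delta')a^2+(1+\delta'^{-1})b^2$. I would then optimize over $\delta,\delta'$, and possibly test against a shifted or rescaled reference state rather than the raw product. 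The exact constants in the statement presumably come from one particular such choice, for instance $\delta=\tfrac14$ combined with $\delta'=1$; I would tune the parameters until they match. The finiteness claim does not depend on the constants in any case.

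The last assertion of the proposition follows directly. Every barycenter lies in the intersection of the finite-energy sets, since finiteness of the weighted sum and $\alpha_s>0$ force each $\Tr(\overline\rho H_0^{(s)})<\infty$. Hence restricting the search space to that intersection does not change the minimum.

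I expect the main obstacle to be justifying the operator inequalities, and the trace identities $\Tr((H_0\otimes\I)\pi)=\Tr(H_0\rho)$, for unbounded operators and states with possibly infinite energy. I would handle this with spectral truncations: replace $\pi$ by compressions with spectral projections of $H_0^{(s)}\otimes\I+\I\otimes H^{(s)}$, whose range lies in the form domain of $C_s$. One proves the inequalities there and passes to the limit by monotone convergence and lower semicontinuity of $\Tr(A\,\cdot)$ for $A\ge0$. Finiteness of $\Tr(\sigma_sH^{(s)})$ makes the subtracted term harmless.
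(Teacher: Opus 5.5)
Your argument is the paper's own: two-sided operator bounds between $C_s$ and the one-body square-sums, the product coupling to get $\cB_0^{\mathbf C,\boldsymbol\alpha}(\boldsymbol\sigma)\le 2\bigl(M+\sum_s\alpha_s\Tr(\sigma_sH^{(s)})\bigr)$, and the lower bound evaluated on optimal couplings with marginal $\overline\rho$. Your guessed tuning $\delta=\tfrac14$, $\delta'=1$ is exactly the paper's Proposition~\ref{prop:quadrature_bounds}, i.e.\ $C_s+3\,\I_{\cH_0}\otimes H^{(s)}\ge\tfrac34\,H_0^{(s)}\otimes\I_{\cH_s}$; combining the two bounds gives $\tfrac34\sum_s\alpha_s\Tr(\overline\rho H_0^{(s)})\le 2M+5\sum_s\alpha_s\Tr(\sigma_sH^{(s)})$, which yields exactly the stated constants $\tfrac83$ and $\tfrac{20}{3}$, so no shifted or rescaled reference state is needed.
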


\subsection{\texorpdfstring{$2$}{2}-QW barycenters for Gaussian states: reduction, uniqueness and semiclassical limits}

\rev{Our general framework provides a unified treatment of $2$-quantum Wasserstein barycenters in both the quantum-state formulation
introduced by Golse--Mouhot--Paul~\cite{Golse-Mouhot-Paul:2016}
and developed further by Caglioti--Golse--Paul
\cite{caglioti2021towards}, and the quantum-channel
formulation of De Palma--Trevisan~\cite{DPaTre19}.}

We now specialize to the $2$-quantum Wasserstein barycenter problems
\eqref{eq:tau-barycenter-intro} with Gaussian states; see Definition~\ref{def:weyl-characteristic-gaussian-state}. Throughout
this subsection, $m,N\in\N$, $\alpha_s>0$ with
$\sum_{s=1}^N\alpha_s=1$, and
$\hat{\boldsymbol\sigma}^G
=(\hat\sigma_s^G)_{s\in[N]}$ is a family of Gaussian states on
$L_2(\R^m)$ with mean vectors
$\mathbf m_s\in\R^{2m}$ and covariance matrices $\Sigma_s$.

Let
\begin{equation}\label{eq:covqm}
\Omega_m:=\bigoplus_{j=1}^m{}_D
\begin{pmatrix}0&1\\-1&0\end{pmatrix},
\qquad
\mathrm{Cov}_m^{\mathrm Q}
:=
\left\{
\Sigma\in\cM_{2m,\mathrm{sym}}(\R):
\Sigma+\frac{i\hbar}{2}\Omega_m\succeq0
\right\}.
\end{equation}
For $\tau\in\{\pm1\}$ and
$\Sigma_1,\Sigma_2\in\mathrm{Cov}_m^{\mathrm Q}$, set
\begin{equation}\label{eq:corr}
\mathrm{Corr}_{\tau}(\Sigma_1,\Sigma_2)
:=
\left\{
A\in\cM_{2m}(\R):
\begin{pmatrix}
\Sigma_1&A\\
A^\top&\Sigma_2
\end{pmatrix}
+
\frac{i\hbar}{2}
\bigl(\Omega_m\oplus_D(\tau\Omega_m)\bigr)
\succeq0
\right\}.
\end{equation}

The preceding general theory guarantees the existence of QOT barycenters and provides a dual variational formulation, but it does not in itself characterize the barycenters or ensure their uniqueness. These questions become particularly natural for the 2-QW costs and Gaussian input states, where one may ask whether the infinite-dimensional barycenter problem inherits the finite-dimensional structure of Gaussian quantum mechanics.

For nondegenerate classical Gaussian measures, the quadratic
Wasserstein barycenter is unique and Gaussian, and is
characterized by its mean and covariance~\cite{AguCar}.
The quantum problem admits a related finite-dimensional
reduction.

\begin{theorem}[Gaussian reduction for $2$-QW barycenters]
\label{thm:gaussian-barycenter-intro}
For each $\tau\in\{\pm1\}$, consider the $2$-QW barycenter problem 
\eqref{eq:tau-barycenter-intro} with Gaussian inputs
$\hat{\boldsymbol\sigma}^G$:
\begin{equation}
\label{eq:gaussianqw}
\min_{\hat\rho\in\cS(L_2(\R^m))}
\sum_{s=1}^N
\alpha_s\,\QW_{2,\tau}^2(\hat\rho,\hat\sigma_s^G).
\end{equation}
Then \eqref{eq:gaussianqw} admits a Gaussian barycenter. Every Gaussian
barycenter has mean
\[
\overline{\mathbf m}
=
\sum_{s=1}^N\alpha_s\mathbf m_s.
\]
Define the Gaussian barycenter covariance problem by
\begin{equation}
\mathsf P_{\tau,\mathrm{cov}}
:=
\min_{\substack{
\Sigma\in\mathrm{Cov}_m^{\mathrm Q},\\
A_s\in\mathrm{Corr}_{\tau}(\Sigma,\Sigma_s),\ s\in[N]}}
\left\{
\Tr(\Sigma)
+
\sum_{s=1}^N\alpha_s
\bigl(\Tr(\Sigma_s)-2\Tr(A_s)\bigr)
\right\}.
\label{eq:qw2cov}
\end{equation}
The covariance matrix of every Gaussian barycenter is the first
component of an optimizer of \eqref{eq:qw2cov}. Conversely, every
optimizer of \eqref{eq:qw2cov}, together with the mean
$\overline{\mathbf m}$, determines a Gaussian barycenter. Here
$\tau=+1$ gives the quantum-state formulation and $\tau=-1$ the
quantum-channel formulation.
\end{theorem}

\paragraph{Idea of the Gaussian reduction.}
The expectation of a quadratic transport cost depends only on
the first and second canonical moments of the joint state.
Gaussian projection preserves these moments and is compatible
with partial traces. Applied to an optimal family of couplings,
whose canonical second moments are finite by the preceding
energy estimates, it therefore leaves the costs and prescribed
Gaussian marginals unchanged, while replacing the common
barycentric marginal by its Gaussian projection. This produces
a Gaussian barycenter and shows that restricting the search to
Gaussian states does not change the minimum. For the channel
formulation, the underlying replacement argument for two
prescribed Gaussian marginals is due to
De Palma--Trevisan~\cite[Prop.~7]{DPaTre19}.

Within the Gaussian class, the contribution of the means can
be minimized independently, giving $\overline{\mathbf m}= \sum_{s=1}^N\alpha_s\mathbf m_s$.
The remaining optimization is exactly~\eqref{eq:qw2cov}:
its objective is linear in the covariance and cross-covariance
matrices, and its constraints express quantum realizability
of the corresponding joint Gaussian states.

\paragraph{From uncertainty saturation to state reconstruction.}

Theorem~\ref{thm:gaussian-barycenter-intro} guarantees a Gaussian
minimizer, but does not exclude additional non-Gaussian ones.
Even uniqueness of the optimal covariance would not resolve
this issue, because first and second moments do not, in general,
determine a quantum state. The next theorem rules out this
ambiguity when at least one prescribed Gaussian state is faithful.

\begin{theorem}[Global uniqueness with a faithful Gaussian input]
\label{thm:faithful-global-uniqueness-intro}
Under the assumptions of
Theorem~\ref{thm:gaussian-barycenter-intro}, suppose that at least one
input Gaussian state is faithful, that is, for some $s_0\in[N]$,
\[
\Sigma_{s_0}
+
\frac{i\hbar}{2}\Omega_m
\succ0.
\]
Then, for each $\tau\in\{\pm1\}$, the $2$-QW barycenter problem
\eqref{eq:gaussianqw} for the Gaussian inputs
$\hat{\boldsymbol\sigma}^G$ admits a unique barycenter among all
quantum states. This barycenter is Gaussian; its mean and covariance
are characterized by
Theorem~\ref{thm:gaussian-barycenter-intro}.
\end{theorem}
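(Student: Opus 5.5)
The strategy has two parts: first show uniqueness of the optimal covariance, then upgrade it to uniqueness of the whole density operator.

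\textbf{Step 1: uniqueness of the Gaussian data.} Let $\hat{\overline\rho}$ be an arbitrary barycenter, not assumed Gaussian. By Proposition~\ref{prop:finitesecondemomentrho}, $\hat{\overline\rho}$ has finite canonical second moments. So do optimal couplings $\hat\pi_s\in\Gamma_\tau(\hat{\overline\rho},\hat\sigma^G_s)$, since their cost is finite and their marginals have finite second moments. Gaussian projection of $(\hat\pi_s)_s$ preserves the costs and the marginals $\hat\sigma_s^G$, and replaces the common marginal by its Gaussian projection. Hence the mean $\mathbf m$ and covariance $\Sigma$ of $\hat{\overline\rho}$, together with the cross-covariances $A_s$ of $\hat\pi_s$, are optimal for the mean problem and for \eqref{eq:qw2cov}. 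This forces $\mathbf m=\overline{\mathbf m}$.

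Next I show that the optimal $\Sigma$ is unique. The objective of \eqref{eq:qw2cov} is linear and the feasible set is convex, so strict convexity is not automatic. Partial minimization over $A_s$ gives the value function $\Sigma\mapsto \Tr\Sigma+\sum_s\alpha_s\bigl(\Tr\Sigma_s-2\max_{A_s}\Tr A_s\bigr)$. For the faithful index $s_0$, the map $\Sigma\mapsto \max\{\Tr A:A\in\mathrm{Corr}_\tau(\Sigma,\Sigma_{s_0})\}$ should be strictly concave. To see this, use a symplectic normal form to bring $\Sigma_{s_0}$ to Williamson form $\nu\oplus\nu$ with all $\nu_j>\hbar/2$. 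A Schur-complement description of $\mathrm{Corr}_\tau$ then gives an explicit fidelity-type expression, a quantum analogue of $\Tr(\Sigma_{s_0}^{1/2}\Sigma\Sigma_{s_0}^{1/2})^{1/2}$. Its strict concavity follows, as in the classical Agueh--Carlier argument, from the invertibility of $\Sigma_{s_0}+\tfrac{i\hbar}{2}\Omega_m$. Alternatively, I would obtain uniqueness through covariance duality: a dual certificate $(\Lambda_s)$ attached to $s_0$ is positive definite and pins down $\Sigma$ via complementary slackness. This step also makes the optimal $A_{s_0}$ unique.

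\textbf{Step 2: state reconstruction.} Uniqueness of $\Sigma$ does not yet identify $\hat{\overline\rho}$. I consider the optimal coupling $\hat\pi_{s_0}$ of $\hat{\overline\rho}$ with the faithful Gaussian $\hat\sigma_{s_0}^G$. Its moments $(\overline{\mathbf m},\Sigma,A_{s_0})$ satisfy covariance complementary slackness with the dual matrix from Step 1: $\Tr\bigl(\Lambda(\Gamma_{\pi}+\tfrac{i\hbar}{2}\Omega_m\oplus\tau\Omega_m)\bigr)=0$, where $\Gamma_\pi$ is the joint covariance and $\Lambda\succeq0$. Quantize $\Lambda$ as a quadratic form $\hat Q_\Lambda$ in the joint canonical operators. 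Its expectation equals $\Tr(\Lambda\Gamma_\pi)$ up to the commutator term, so slackness says $\Tr(\hat Q_\Lambda\hat\pi_{s_0})$ equals the ground energy of $\hat Q_\Lambda$. Hence $\hat\pi_{s_0}$ is supported in the ground space of $\hat Q_\Lambda$ (Proposition~\ref{prop:dual-support-gaussian-source}). Normal forms show that this ground space factorizes as a fixed Gaussian mode structure: certain linear combinations $x-Ty$ are annihilated. Consequently $\hat\pi_{s_0}=(\mathcal I\otimes\Phi)(\cdot)$, or its transpose version, is the image of $\hat\sigma^G_{s_0}$ under a fixed Gaussian channel determined only by $\Lambda$ (Proposition~\ref{prop:saturated-face-reconstruction}). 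Taking the partial trace yields $\hat{\overline\rho}=\Phi(\hat\sigma_{s_0}^G)$ for a fixed Gaussian channel $\Phi$. The result is Gaussian and independent of the chosen barycenter, which proves uniqueness.

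\textbf{Main obstacle.} The hard step is the reconstruction in Step 2: showing that slackness forces the support condition, and that the ground space combined with the prescribed faithful marginal determines the coupling uniquely. Faithfulness matters here because a full-rank marginal leaves no freedom in the complementary modes. With non-faithful inputs, pure modes permit non-Gaussian completions, as Example~\ref{ex:nonfaithful-boundary} shows. Domain issues for the unbounded $\hat Q_\Lambda$ will be handled by the essential self-adjointness results in the appendix.
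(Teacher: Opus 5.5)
The genuine gap is the last move of Step~2. The vacuum/ground-space part is fine: it matches the paper's factorization of the pure Williamson modes. You then assert that $\hat\pi_{s_0}$ is the image of $\hat\sigma^G_{s_0}$ under a fixed Gaussian channel determined by $\Lambda$, and read off uniqueness. If that channel is meant to recover every state of the slackness class from its second marginal, it does not exist in general. Take one mode, $\tau=+1$, and thermal inputs $w_1,w_2>\hbar/2$. The optimal coupling is a beam splitter applied to $|0\rangle\langle0|\otimes\hat\eta$, with $\hat\eta$ thermal of covariance $(w_1+w_2-\hbar/2)\I_2$. The map $\hat\eta\mapsto\Tr_1\Phi_G(\hat\eta)$ on the slackness class is then a pure-loss channel with transmissivity in $(0,1)$. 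That channel is injective on states but admits no recovery channel: it strictly contracts the trace distance between $|0\rangle\langle0|$ and $|1\rangle\langle1|$. If instead the channel only has to send the fixed $\hat\sigma^G_{s_0}$ to $\hat\pi_{s_0}$, then its independence of the chosen barycenter is exactly the uniqueness you are trying to prove. Either way, Step~2 gives no argument for its decisive claim. Your account of faithfulness (a full-rank marginal leaving no freedom in the complementary modes) is also not the operative mechanism.

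What is missing is injectivity without physical inversion.
\begin{itemize}
\item Write the slackness class as $\{\Phi_G(\hat\eta)\}$, where $\hat\eta$ is a free zero-mean state on $2m-q$ modes and $\Ran(\mathbb X_\Phi)=\ker(\Real Z)$.
\item Transport the $\tau=-1$ case to the standard CCR. The dual constraint then gives the invertible off-diagonal block $(\Real\widetilde Z^\star_{s_0})_{12}=-\alpha_{s_0}F_{\tau,m}$.
\item This yields the transversality condition $\ker(\Real\widetilde Z^\star_{s_0})\cap(\R^{2m}\oplus\{0\}^{2m})=\{0\}$. Hence $\Tr_1\circ\Phi_G$ has a full-column-rank phase-space matrix $\mathbb X$.
\item Choose $\mathbb J$ with $\mathbb X^\top\mathbb J=\I$. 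Then $\chi_{\hat\eta}(\mathbf u)$ equals the characteristic function of the Gaussian second marginal at $\mathbb J\mathbf u$, times an explicit Gaussian factor.
\item This determines $\hat\eta$ and forces it to be Gaussian. So $\hat\pi_{s_0}$ and its first marginal are Gaussian, and covariance uniqueness identifies the barycenter. No formula $\hat{\overline\rho}=\Phi(\hat\sigma^G_{s_0})$ is needed.
\end{itemize}
Faithfulness enters only through dual attainment (a Slater point), which is what supplies this certificate.

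The same normalization should also replace the mechanism you give in Step~1. The certificate $Z^\star_{s_0}$ cannot be positive definite, since it annihilates a nonzero positive semidefinite block matrix. Covariance uniqueness comes instead from $\Real(Z^\star_{s_0})_{12}=-\alpha_{s_0}\I_{2m}$. Subtract the slackness relations of two optimizers; the $(2,2)$ block then gives $\Delta A_{s_0}=0$, and the $(1,2)$ block gives $\Delta\Sigma=0$.
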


\paragraph{Proof strategy: covariance uniqueness and state reconstruction.}
The proof, developed in
Section~\ref{subsec:faithful-gaussian-global-uniqueness},
has two distinct steps.

\emph{Covariance uniqueness.}
Under the faithful-input hypothesis,
Proposition~\ref{prop:covdual} establishes strong duality and
attainment for the covariance problem~\eqref{eq:qw2cov}.
A fixed dual optimizer supplies a common complementary-slackness
relation for every primal optimizer. The dual constraints,
together with this relation, imply uniqueness of the barycenter
covariance and of the cross-covariances associated with each
faithful input. Attainment here concerns the finite-dimensional
covariance dual; it is distinct from the equality of optimal
values in the operator-potential duality of
Theorem~\ref{thm:energy-constrained-barycenter-duality}.
At this stage, however, only the optimal moments have been
determined.

\emph{State reconstruction.}
Fix a faithful input. After identifying the corresponding
bipartite system with the standard CCR representation, the
optimal couplings with this input satisfy
\[
\left(
\Sigma_{\hat\pi}+\frac{i\hbar}{2}\Omega_{2m}
\right)Z=0,
\qquad Z\succeq0,
\]
for a matrix $Z$ obtained from the fixed covariance-dual
optimizer. We prove that, at fixed first moments, all
finite-second-moment states satisfying this relation can be
represented as
\[
\hat\pi
=
\rev{W_{\mathrm{CCR}}\!\left(
\frac{\Omega_{2m}\mathbf m}{\hbar}
\right)\hat U_S\bigl(
|0\rangle\langle0|^{\otimes q}\otimes\hat\eta
\bigr)\hat U_S^\dagger
W_{\mathrm{CCR}}\!\left(
\frac{\Omega_{2m}\mathbf m}{\hbar}
\right)^\dagger}
=:\Phi_G(\hat\eta),
\]
where \rev{$\mathbf m\in\R^{4m}$ is the fixed mean vector,
$W_{\mathrm{CCR}}(\Omega_{2m}\mathbf m/\hbar)$ is the corresponding
Weyl displacement, and $\hat U_S$ is a fixed Gaussian unitary for the
entire class}, and $\hat\eta$ has zero mean and finite
canonical second moments. The slackness relation forces
certain modes, in these coordinates, to be in the vacuum;
their purity makes them factorize from the residual system.
The phase-space transformation matrix of $\Phi_G$ satisfies
\[
\rev{\Ran(\mathbb X_\Phi)}=\ker(\Real Z).
\]

A Gaussian channel $\Psi_G$ whose phase-space transformation
matrix satisfies
\[
\ker(\mathbb X_\Psi)\cap\ker(\Real Z)=\{0\}
\]
is injective on this fixed-mean slackness class. In addition,
a Gaussian output forces the original state to be Gaussian.
For the certificate obtained from a faithful input, the
partial trace onto the prescribed input satisfies this
injectivity condition.

Consequently, an optimal coupling and its Gaussian projection
have the same first moments, satisfy the same slackness
relation, and have the same prescribed Gaussian marginal.
They must therefore coincide. The optimal coupling, and hence
its barycentric marginal, is Gaussian. Covariance uniqueness
then identifies the barycenter uniquely. The argument uses
the \emph{full prescribed marginal}, not merely its moments.
The representation and reconstruction results apply beyond
optimal transport, to arbitrary finite-second-moment states
satisfying the stated conditions. Here reconstruction means
unique determination within the slackness class, not the
existence of a physical inverse channel.

\noindent
\textit{Pure inputs and nonuniqueness at the boundary.}
Faithfulness is sufficient but not necessary for uniqueness.
If all inputs are pure and have finite canonical second
moments, the unique barycenter is the coherent state with
mean $\overline{\mathbf m}=\sum_{s=1}^N\alpha_s\mathbf m_s$, even when the inputs are not Gaussian
(Lemma~\ref{lem:pure-input-coherent-barycenter}). In contrast, partially pure Gaussian inputs,
with both pure and mixed modes in their Williamson decomposition,
may admit a continuum of Gaussian barycenters and additional
non-Gaussian ones. Example~\ref{ex:nonfaithful-boundary} exhibits distinct Gaussian
and non-Gaussian barycenters with identical first and second
moments. Consequently, even a specified optimal covariance
need not determine the minimizing state when the
faithful-input hypothesis is absent.

\paragraph{\rev{Semiclassical limit.}} We next describe the semiclassical limit for general Gaussian inputs. Fix $\mathbf m_s\in\R^{2m}$ and
$\Sigma_s\in\cM_{2m,>}(\R)$, $s\in[N]$, independently of $\hbar$.
The classical quadratic Wasserstein barycenter of the Gaussian measures $\{\mathcal N(\mathbf m_s,\Sigma_s)\}_{s\in[N]}$ is again Gaussian
\cite{AguCar}, with mean $\overline{\mathbf m}
=
\sum_{s=1}^N\alpha_s\mathbf m_s$ and covariance $\overline\Sigma\in\cM_{2m,>}(\R)$, uniquely
characterized by $\overline\Sigma=\sum_{s=1}^N
\alpha_s
\left(
\overline\Sigma^{1/2}
\Sigma_s
\overline\Sigma^{1/2}
\right)^{1/2}$.
\begin{theorem}[Semiclassical limit of Gaussian $2$-QW barycenters]
\label{thm:gaussian-semiclassical-intro}
For sufficiently small $\hbar>0$, let
$\hat\sigma_s^\hbar$ be the Gaussian state with mean
$\mathbf m_s$ and covariance $\Sigma_s$, $s\in[N]$.
For each $\tau\in\{\pm1\}$, consider the corresponding $2$-QW
barycenter problem
$$
\min_{\hat\rho\in\cS(L_2(\R^m))}
\sum_{s=1}^N
\alpha_s\,
\QW_{2,\tau}^2(\hat\rho,\hat\sigma_s^\hbar).
$$
By Theorem~\ref{thm:faithful-global-uniqueness-intro}, its barycenter
$\hat{\overline\rho}_\tau^\hbar$ is uniquely defined and Gaussian,
with mean $\overline{\mathbf m}$. Writing its covariance as
$\overline\Sigma_\tau(\hbar)$, one has 
$\overline\Sigma_\tau(\hbar)
\rightarrow
\overline\Sigma$ as $\hbar\to0^+$.

Consequently, for both the quantum-state and quantum-channel formulations, the Wigner distributions
$\mathbb W_{\hat{\overline\rho}_\tau^\hbar}(\mathbf z)\,d\mathbf z$ converge weakly to
$\mathcal N(\overline{\mathbf m},\overline\Sigma)$, the classical
$2$-Wasserstein barycenter of $\{\mathcal N(\mathbf m_s,\Sigma_s)\}_{s\in[N]}$ with weights $\boldsymbol{\alpha}$.
\end{theorem}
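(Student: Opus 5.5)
The plan is to apply Theorem~\ref{thm:faithful-global-uniqueness-intro} for each fixed $\hbar$ and then treat the covariance problem \eqref{eq:qw2cov} as a parametric finite-dimensional convex program in $\hbar$.

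\emph{Well-posedness for small $\hbar$.} Since $\Sigma_s\succ0$ is fixed, we have $\Sigma_s+\frac{i\hbar}{2}\Omega_m\succ0$ as soon as $\hbar<2\lambda_{\min}(\Sigma_s)$, because $\|\frac{i}{2}\Omega_m\|=\frac12$. So every input is a faithful Gaussian state. By Theorem~\ref{thm:faithful-global-uniqueness-intro}, the barycenter $\hat{\overline\rho}_\tau^\hbar$ is then unique and Gaussian. By Theorem~\ref{thm:gaussian-barycenter-intro}, its mean is $\overline{\mathbf m}$ and its covariance $\overline\Sigma_\tau(\hbar)$ is the first component of the optimizer of $\mathsf P_{\tau,\mathrm{cov}}(\hbar)$; this component is unique.

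\emph{Limit problem.} At $\hbar=0$ the constraints in \eqref{eq:corr} reduce to $\smat{\Sigma&A_s\\A_s^\top&\Sigma_s}\succeq0$ and $\Sigma\succeq0$, independently of $\tau$. The resulting problem is the classical Gaussian barycenter problem
\[
\min_\Sigma\sum_s\alpha_s W_2^2\bigl(\mathcal N(0,\Sigma),\mathcal N(0,\Sigma_s)\bigr),
\]
using $W_2^2=\Tr\Sigma+\Tr\Sigma_s-2\max_{A_s}\Tr A_s$. By \cite{AguCar} its $\Sigma$-component is uniquely $\overline\Sigma$.

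\emph{Convergence of covariances.} I will argue by $\Gamma$-convergence and compactness.

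\emph{Upper bound.} Take the classical optimizer $(\overline\Sigma,\overline A_s)$ and perturb it to $\Sigma'=\overline\Sigma+c\hbar I$, keeping $A_s'=\overline A_s$. The block matrix then becomes $\smat{\overline\Sigma&\overline A_s\\ \overline A_s^\top&\Sigma_s}+c\hbar(I\oplus0)$. It is not strictly positive in the second block, so instead I use the shrunk cross-covariance $A_s'=(1-\delta)\overline A_s$. With $\Sigma_s\succ0$ and $\overline\Sigma\succ0$, the matrix $\smat{\overline\Sigma&(1-\delta)\overline A_s\\ \cdot&\Sigma_s}$ has smallest eigenvalue at least $c'\delta$. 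Choosing $\delta=K\hbar$ dominates the perturbation $\frac{\hbar}{2}\|\Omega_m\oplus\tau\Omega_m\|$. This gives a feasible point with value at most $\mathsf P_{\mathrm{cl}}+O(\hbar)$.

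\emph{Lower bound.} Any quantum-feasible point is classically feasible, since $M+\frac{i\hbar}{2}J\succeq0$ with $M$ real implies $M\succeq0$ by taking the real part of $v^*(\cdot)v$ over $v$ and $\bar v$. Hence $\mathsf P_{\tau,\mathrm{cov}}(\hbar)\ge\mathsf P_{\mathrm{cl}}$, and together the two bounds give $|\mathsf P_{\tau,\mathrm{cov}}(\hbar)-\mathsf P_{\mathrm{cl}}|=O(\hbar)$.

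\emph{Compactness and identification.} The optimizers are bounded, since $|\Tr A_s|\le(\Tr\Sigma\Tr\Sigma_s)^{1/2}$ and the objective is then coercive in $\Tr\Sigma$. Take any limit point of $(\overline\Sigma_\tau(\hbar),A_s(\hbar))$. It is classically feasible by closedness of the constraint sets, and its value equals $\mathsf P_{\mathrm{cl}}$ by continuity of the linear objective. So it is a classical optimizer, and uniqueness forces $\overline\Sigma_\tau(\hbar)\to\overline\Sigma$.

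\emph{Wigner convergence.} The Wigner function of a Gaussian state with mean $\overline{\mathbf m}$ and covariance $\Sigma$ is the Gaussian density $\mathcal N(\overline{\mathbf m},\Sigma)$. This holds with the paper's normalization, which I take as given from Section~\ref{pre:gauss}. Its characteristic function $e^{i\xi\cdot\overline{\mathbf m}-\frac12\xi^\top\Sigma\xi}$ converges pointwise as $\Sigma\to\overline\Sigma\succ0$, so Lévy's continuity theorem gives weak convergence.

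\emph{Main obstacle.} The main obstacle is the uniform-in-$\hbar$ feasibility construction in the upper bound. One must control the smallest eigenvalue of the shrunk classical block matrix. This uses $\overline\Sigma\succ0$, since the classical barycenter is nondegenerate, and $\Sigma_s\succ0$; the first step shows the Schur complement $\Sigma_s-(1-\delta)^2\overline A_s^\top\overline\Sigma^{-1}\overline A_s\succeq(2\delta-\delta^2)\Sigma_s$, which is of order $\delta$.
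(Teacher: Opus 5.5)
Your proposal is correct and follows the paper's proof step for step. The lower bound restricts the quantum constraints to real vectors, and an explicit feasible perturbation of the classical optimizer gives $0\le\mathsf P^\hbar_{\tau,\mathrm{cov}}-\mathsf P_{0,\mathrm{cov}}=O(\hbar)$. Optimizers are bounded via $\Tr A_s\le\sqrt{\Tr\Sigma\,\Tr\Sigma_s}$; you should also state the block inequality that bounds $A_s$ itself, which you need to extract convergent subsequences. Limit points are identified through uniqueness of $\overline\Sigma$, and the Wigner limit follows from pointwise convergence of characteristic functions.

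The only real difference is the feasible point in the upper bound. Your worry about the second block is unfounded, because $\Sigma_s\succ0$ already supplies the needed margin. The paper keeps $A_s^0$ and uses $\overline\Sigma+c\hbar\I_{2m}$, checking positivity by a Schur complement with respect to the invertible block $\Sigma_s+\frac{i\tau\hbar}{2}\Omega_m$. Your alternative, shrinking $A_s\mapsto(1-K\hbar)\overline A_s$ and taking the Schur complement with respect to $\overline\Sigma$, also works, but it additionally relies on $\overline\Sigma\succ0$.
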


\medskip
\noindent
\textbf{Isotropic thermal states.}
We next specialize the input to displaced thermal Gaussian states $\hat{\boldsymbol\sigma}^{\mathrm{th}}
=(\hat\sigma_s^{\mathrm{th}})_{s\in[N]}$, with mean vectors
$\mathbf m_s$ and covariance matrices
\[
\Sigma_s=w_s\I_{2m},
\qquad
w_s\geq\frac{\hbar}{2}.
\]
See Section~\ref{subsec:thermal}.

\begin{corollary}[Unique $2$-QW barycenters for isotropic thermal states]
\label{cor:thermal-barycenter-intro}
For each $\tau\in\{\pm1\}$, the \rev{$2$}-QW barycenter problem
\begin{equation}\label{eq:thermalqw}
\min_{\hat\rho\in\cS(L_2(\mathbb R^m))}
\sum_{s=1}^N \alpha_s\,\QW_{2,\tau}^2(\hat\rho,\hat\sigma_s^{\mathrm{th}})
\end{equation}
admits a unique barycenter
$\hat{\overline\rho}_\tau$. It is a displaced isotropic thermal
state with mean $\overline{\mathbf m}=\sum_{s=1}^N\alpha_s\mathbf m_s$ and covariance
$\overline\Sigma_\tau=\overline w_\tau\I_{2m}$.

For the quantum-state formulation,
\[
\overline w_{+1}
=
\left(
\sum_{s=1}^N
\alpha_s\sqrt{w_s-\frac{\hbar}{2}}
\right)^2
+
\frac{\hbar}{2}.
\]
For the quantum-channel formulation,
\[
\{\overline w_{-1}\}
=
\operatorname{argmin}_{w\geq\hbar/2}
\left\{
\begin{aligned}
&\sum_{w_s<w}
\alpha_s
\left[
\sqrt{w+\frac{\hbar}{2}}
-
\sqrt{w_s-\frac{\hbar}{2}}
\right]^2\\
&\quad+
\sum_{w_s\geq w}
\alpha_s
\left[
\sqrt{w_s+\frac{\hbar}{2}}
-
\sqrt{w-\frac{\hbar}{2}}
\right]^2
\end{aligned}
\right\}.
\]
\end{corollary}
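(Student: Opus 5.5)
The plan is to derive the corollary from Theorems~\ref{thm:gaussian-barycenter-intro} and~\ref{thm:faithful-global-uniqueness-intro}, together with a symmetry reduction of the covariance problem~\eqref{eq:qw2cov}. The first step settles existence, uniqueness and Gaussianity. If some $w_{s_0}>\hbar/2$, the input $\hat\sigma^{\mathrm{th}}_{s_0}$ is faithful, so Theorem~\ref{thm:faithful-global-uniqueness-intro} gives a unique barycenter, which is Gaussian with mean $\overline{\mathbf m}$. If instead all $w_s=\hbar/2$, every input is a coherent state. Lemma~\ref{lem:pure-input-coherent-barycenter} then gives the unique barycenter as the coherent state with mean $\overline{\mathbf m}$, which is the claimed thermal state with $\overline w_\tau=\hbar/2$. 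This value matches both formulas: all square roots $\sqrt{w_s-\hbar/2}$ vanish, and the channel objective is minimized at $w=\hbar/2$.

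The second step shows the barycenter is isotropic thermal. The objective and constraints of~\eqref{eq:qw2cov} are invariant under the compact group $K=\mathrm{Sp}(2m)\cap O(2m)\cong U(m)$, acting by $\Sigma\mapsto O\Sigma O^\top$ and $A_s\mapsto OA_sO^\top$. This holds because $\Sigma_s=w_s\I$ is fixed by $K$, $O$ commutes with $\Omega_m$, and $\Tr$ is invariant. By uniqueness of the optimal covariance, $\overline\Sigma$ is $K$-invariant. Any symmetric matrix commuting with $U(m)$ in this real representation is a scalar multiple of $\I_{2m}$, so $\overline\Sigma=\overline w\,\I$. Alternatively, one can average an optimizer over $K$ (the isotropic symmetrization of the appendix) using convexity. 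Averaging also lets us take each $A_s$ in the commutant, that is, $A_s=a_s\I+b_s\Omega_m$.

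The third step computes $\overline w$ by reducing to a scalar problem: for each $s$, maximize $\Tr(A_s)=2m\,a_s$ subject to the realizability constraint. Since $\Omega_m$ is normal with eigenvalues $\pm i$, the $4m\times4m$ constraint block-diagonalizes into $2\times2$ Hermitian matrices, one for each eigenvalue $\lambda=\pm1$ of $i\Omega_m$. For $\tau=+1$ the blocks are
\[
\begin{pmatrix} w-\lambda\hbar/2 & a+i b\lambda\\ a-ib\lambda & w_s-\lambda\hbar/2\end{pmatrix}\succeq0 .
\]
The largest admissible $a$ is $\sqrt{(w-\hbar/2)(w_s-\hbar/2)}$, attained at $b=0$. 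The objective then becomes $2m\bigl[w+\sum_s\alpha_s(w_s-2\sqrt{(w-\hbar/2)(w_s-\hbar/2)})\bigr]$. This is minimized at $\sqrt{w-\hbar/2}=\sum_s\alpha_s\sqrt{w_s-\hbar/2}$, which gives $\overline w_{+1}$.

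For $\tau=-1$ the sign of the $\hbar$ term flips in the second diagonal entry. The diagonal entries become $w-\lambda\hbar/2$ and $w_s+\lambda\hbar/2$, so the best value is $a=\min\{\sqrt{(w+\hbar/2)(w_s-\hbar/2)},\sqrt{(w-\hbar/2)(w_s+\hbar/2)}\}$. Comparing the two products, the first is smaller exactly when $w_s<w$. Substituting gives per-input costs $[\sqrt{w+\hbar/2}-\sqrt{w_s-\hbar/2}]^2$ or $[\sqrt{w_s+\hbar/2}-\sqrt{w-\hbar/2}]^2$, up to the factor $2m$. That is the stated one-dimensional objective. Its argmin is a singleton because $\overline w_{-1}$ is unique by the first step; convexity also follows from each cost being a max of convex functions.

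The main obstacle is the third step in the channel case. There one must check that restricting to $A_s=a_s\I+b_s\Omega_m$ loses nothing, which follows by averaging over $K$ since the constraint set is convex and $K$-invariant. One must also correctly handle the optimization over $b$ for the channel block structure. I would verify it by checking that the determinant conditions at $b=0$ are the binding ones, since $|a+ib|^2\ge a^2$.
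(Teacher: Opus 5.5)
Your proof is correct, and it differs from the paper's in the two substantive steps. The existence/uniqueness step is the same as Proposition~\ref{prop:thermal-barycenter-uniqueness}.

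For isotropy, the paper uses the Pauli-twirling reduction of Corollary~\ref{coro:multimode-isotropic-reduction}. It replaces any optimal tuple by $(\frac{\Tr\Sigma}{2m}\I_{2m},\frac{\Tr A_s}{2m}\I_{2m})$, which stays feasible with the same objective. So a displaced thermal barycenter exists, and uniqueness identifies it with the barycenter. You instead combine three ingredients: invariance of \eqref{eq:qw2cov} under $\Sp(2m,\R)\cap O(2m)\cong U(m)$, uniqueness of the optimal covariance, and the commutant $\{a\I_{2m}+b\Omega_m\}$. Your group contains only symplectic elements, whereas the twirl also uses the reflections $P_1,P_3$, compensated by transposition. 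So this is not quite ``the isotropic symmetrization of the appendix.'' As a consequence, the $b_s\Omega_m$ component of $A_s$ survives, and you remove it by hand in the $2\times2$ block computation.

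For the value of $\overline w_\tau$, the paper relies on Proposition~\ref{prop:qotcovid1}. It proves the $\tau=+1$ formula from the spectrum of a $4\times4$ matrix and imports the $\tau=-1$ formula from \cite[Thm.~3]{DPaTre19}. Your diagonalization in the eigenbasis of $i\Omega_m$ treats both formulations uniformly and derives the channel formula directly. It also shows that the case split $w_s<w$ versus $w_s\geq w$ just records which of the two conditions $(w\pm\hbar/2)(w_s\mp\hbar/2)\geq a^2+b^2$ binds.

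The paper's route is more elementary, using a finite average with no Haar measure or Schur-type argument, but it depends on an external result for $\tau=-1$. Yours is self-contained.

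The final step, identifying the argmin as a singleton via barycenter uniqueness, is the same in both. You should state it explicitly: since the barycenter is isotropic, the minimum of your one-dimensional objective equals the barycenter value. Hence every minimizer $w$ yields a barycenter with covariance $w\I_{2m}$, and uniqueness forces $w=\overline w_{-1}$.
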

In particular, Theorem~\ref{thm:gaussian-semiclassical-intro}
recovers the classical isotropic Gaussian Wasserstein barycenter in the limit $\hbar\to0^+$.
 \section{Preliminaries}\label{sec:preliminary}
\subsection{Operator-theoretic notation}\label{subsec:op-space}
Let $\mathbb{R}$ and $\mathbb{C}$ denote the fields of real numbers and complex numbers, respectively. Let $\mathbb{N}$ denote the set of natural numbers. For $N\in\N$, we set $[N]:=\{1,\ldots,N\}$. For $m\in\mathbb N$ and $\mathbb F\in\{\mathbb R,\mathbb C\}$, let $\cM_m(\mathbb F)$ denote the space of $m\times m$ matrices over $\mathbb F$. We denote by $\cM_{m,\mathrm{sym}}(\mathbb R)$ the subspace of real symmetric matrices and by $\cM_{m,\mathrm{sa}}(\mathbb C)$ the subspace of Hermitian matrices. Their positive semidefinite and positive definite cones are denoted by $\cM_{m,\geq}(\mathbb F)$, $\cM_{m,>}(\mathbb F)$. We write $\I_m$ for the $m\times m$ identity matrix. Matrix spaces are regarded as real inner-product spaces equipped with the real Frobenius pairing $\langle A,B\rangle_F:=\Real\Tr(A^\dagger B)$, and finite product spaces are equipped with the corresponding sum pairing.

Let $\cH$ be a separable Hilbert space. We denote by $\I_{\cH}$ the identity operator on $\cH$, by $\cU(\cH)$ the unitary operator group on $\cH$, by $\cB(\cH)$ the $C^*$-algebra
of bounded linear operators on $\cH$, by $\cK(\cH)$ the compact operators, and by
$\cT^{(1)}(\cH)$ the trace-class operators. Their \rev{self-adjoint parts} and positive
semidefinite cones are denoted by the subscripts $S$ and $\geq$, respectively. The operator norm and the trace norm are denoted by $\|\cdot\|_{\mathrm{op}}$ and $\|\cdot\|_1$, respectively.

There are canonical isometric isomorphisms $(\cT^{(1)}(\cH))^* \cong \cB(\cH)$ with pairing $\langle T,B\rangle := \Tr(TB)$ for
$T \in \cT^{(1)}(\cH)$ and $B \in \cB(\cH)$;
see, e.g., \cite[Thm.~2.1.7]{peterson2013notes}. Restricting the trace pairing to the self-adjoint parts and
viewing $\cT^{(1)}_S(\cH)$ and $\cB_S(\cH)$ as real Banach spaces
yields the canonical isometric identification $(\cT^{(1)}_S(\cH))^* \cong \cB_S(\cH)$ as an immediate corollary.

We will also use the real dual space $\cB_S(\cH)^*$. For $\Lambda\in\cB_S(\cH)^*$,
\[
\|\Lambda\|_* = \sup_{\|B\|_{\mathrm{op}}=1} |\Lambda(B)|,
\qquad
\Lambda\ge 0 \ \Longleftrightarrow\ \Lambda(B)\ge 0 \text{ for all } B\in\cB_S(\cH),\ B\ge 0.
\]

The set of density operators, or quantum states, is
\[
\cS(\cH):=\bigl\{\rho\in\cT^{(1)}(\cH):\ \rho\ge 0,\ \Tr\rho=1\bigr\}.
\]
A state $\rho\in\cS(\cH)$ is called pure if it has rank one, and
faithful if $\ker\rho=\{0\}$.

A densely defined operator \(A\) on \(\cH\) is symmetric if $\langle A\psi,\varphi\rangle=\langle \psi,A\varphi\rangle$ for all $\psi,\varphi\in\Dom(A)$ and is self-adjoint if $A=A^*$. It is essentially self-adjoint if its closure is self-adjoint. In this case, the closure is the
unique self-adjoint extension of \(A\); see, e.g., \cite[Sec.~3]{Schmudgen2012}.

Let $\cL_S(\cH)$ denote the class of densely defined self-adjoint operators
on $\cH$. We say that $A\in\cL_S(\cH)$ is \emph{lower semibounded} if there
exists $m\in\R$ such that
\[
    \langle x,Ax\rangle\ge m\|x\|^2,
    \qquad x\in\Dom(A).
\]
We denote by $\cL_\geq(\cH)$ the subclass of $\cL_S(\cH)$ consisting of
positive semidefinite operators, i.e. those $A\in\cL_S(\cH)$ satisfying
\[
    \langle x,Ax\rangle\ge 0,
    \qquad x\in\Dom(A).
\]

For lower semibounded $A,B\in\cL_S(\cH)$ and any common lower bound $c$ of $A$ and
$B$, set $\widetilde A:=A-c\I_\cH$ and $\widetilde B:=B-c\I_\cH$ so that
$\widetilde A,\widetilde B\ge 0$. We write $A\ge B$ if
\[
\Dom(\widetilde A^{1/2})\subset \Dom(\widetilde B^{1/2})
\quad\text{and}\quad
\|\widetilde A^{1/2}x\|^2\ge \|\widetilde B^{1/2}x\|^2
\quad\text{for all }x\in\Dom(\widetilde A^{1/2}).
\]
This definition is independent of the choice of $c$ and coincides with the
usual Loewner order when $A,B\in\cB_S(\cH)$.

Finally, let us recall the definition of compact resolvent; see, e.g., \cite{Schmudgen2012}.

\begin{deff}[\rev{Compact} resolvent]
\label{def:compact-resolvent}
Let $\cH$ be a separable Hilbert space, and let $H\in\cL_S(\cH)$ be a self-adjoint
operator on $\cH$ and $\mathrm{spec}(H)$ be the spectrum of $H$.
The resolvent set of $H$ is then $\mathrm{res}(H):=\mathbb C\setminus\mathrm{spec}(H)$. For $z\in\mathrm{res}(H)$, the resolvent of $H$ at $z$ is the bounded operator
\[
    R_z(H):=(H-z\I_{\cH})^{-1}.
\]
We say that $H$ has compact resolvent if $R_z(H)\in\cK(\cH)$ is compact for some,
equivalently for every, $z\in\mathrm{res}(H)$.
\end{deff}

\subsection{Spectral integral against a positive trace-class operator}
\label{subsec:spectral-trunc}

Let $A\in\cL_S(\cH)$ be a possibly unbounded self-adjoint operator on a separable Hilbert
space $\cH$, and let $E^A$ denote its spectral measure.
For $\rho\in\cT^{(1)}_{\geq}(\cH)$, the map $\mu_\rho^A(B)\ :=\ \Tr\!\big(E^A(B)\rho\big)$ for all Borel $B$ defines a finite positive Borel measure on $\R$ with total mass
$\mu_\rho^A(\R)=\Tr(\rho)$. We use this scalar spectral measure to define the extended expectation of \(A\) in $\rho$; see \cite[Sec.~4]{Schmudgen2012}.

\begin{deff}[Spectral integral against a \rev{positive trace-class operator}]
\label{def:spectral-integral-traceclass}
Let $A\in\cL_S(\cH)$ be self-adjoint and let $\rho\in\cT^{(1)}_{\geq}(\cH)$.
We define the spectral integral of $A$ against $\rho$ by
\[
\Tr(A\rho)
\ :=\
\int_{\R} \lambda\, d\mu_\rho^A(\lambda)
\ \in\ [-\infty,+\infty],
\]
whenever the integral is well-defined in the extended sense, i.e.\ whenever at
least one of the positive/negative parts is finite:
\[
\int_{\R} \lambda^+\, d\mu_\rho^A(\lambda)<+\infty
\quad \text{or}\quad
\int_{\R} \lambda^-\, d\mu_\rho^A(\lambda)<+\infty,
\]
where $\lambda^\pm:=\max\{\pm\lambda,0\}$.
\end{deff}

\begin{prop}[Truncation and monotone convergence for positive observables]
\label{prop:spectral-trunc-positive}
Let $A\in\cL_{\geq}(\cH)$ be self-adjoint and positive semidefinite, and let
$\rho\in\cS(\cH)$.
For $n\in\N$, define the bounded spectral truncations
\[
A^{(n)}
:=
\int_{[0,n]} \lambda\, dE^A(\lambda)
\ \in\ \cB_S(\cH).
\]
Then $0\le A^{(n)}\le A^{(n+1)}\le A$, and
\[
\Tr(A\rho)
=
\lim_{n\to\infty}\Tr\!\big(\,A^{(n)}\rho\big)
=
\sup_{n\in\N}\Tr\!\big(\,A^{(n)}\rho\big)
\ \in\ [0,+\infty],
\]
with the convention that the right-hand side may be $+\infty$.
Equivalently,
\[
\Tr(A\rho )
=
\sup\Bigl\{\Tr(B\rho ):\ B\in\cB_S(\cH),\ 0\le B\le A\Bigr\}.
\]
\end{prop}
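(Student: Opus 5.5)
The plan is to reduce everything to the scalar spectral measure $\mu_\rho^A$ and apply the classical monotone convergence theorem on $[0,\infty)$. Since $A\ge0$, its spectral measure is supported in $[0,\infty)$. The functions $f_n(\lambda):=\lambda\,\mathbf 1_{[0,n]}(\lambda)$ are then bounded Borel functions with $0\le f_n\le f_{n+1}\le \lambda$ on the spectrum, and $f_n\uparrow \lambda$ pointwise.

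\emph{Step 1: operator inequalities.} By the bounded functional calculus, $A^{(n)}=f_n(A)\in\cB_S(\cH)$, and $0\le f_n\le f_{n+1}$ gives $0\le A^{(n)}\le A^{(n+1)}$. For $A^{(n)}\le A$ in the form sense of Subsection~\ref{subsec:op-space}, I would take $c=0$ and verify two things. First, the domain inclusion $\Dom(A^{1/2})\subset\cH=\Dom((A^{(n)})^{1/2})$ is automatic. Second, for $x\in\Dom(A^{1/2})$,
\[
\|(A^{(n)})^{1/2}x\|^2=\int f_n\,d\langle x,E^A x\rangle\le\int\lambda\,d\langle x,E^Ax\rangle=\|A^{1/2}x\|^2 .
\]

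\emph{Step 2: trace identity.} For a bounded Borel $f$, one has $\Tr(f(A)\rho)=\int f\,d\mu_\rho^A$. I would check this by writing $\rho=\sum_k p_k\proj{e_k}$ and using $\langle e_k,f(A)e_k\rangle=\int f\,d\langle e_k,E^A e_k\rangle$, with $\mu_\rho^A=\sum_k p_k\langle e_k,E^A(\cdot)e_k\rangle$; countable additivity is justified by the summability of the $p_k$. Monotone convergence then gives
\[
\Tr(A^{(n)}\rho)=\int f_n\,d\mu_\rho^A\uparrow\int\lambda\,d\mu_\rho^A=\Tr(A\rho)\in[0,\infty].
\]
Since the sequence is nondecreasing, the limit equals the supremum. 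The integral in Definition~\ref{def:spectral-integral-traceclass} is well defined here because the negative part vanishes.

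\emph{Step 3: variational formula.} Step 1 shows that the $A^{(n)}$ are admissible in the set $\{B\in\cB_S(\cH):0\le B\le A\}$, so the supremum is at least $\Tr(A\rho)$. For the reverse inequality, take $B$ bounded with $0\le B\le A$ and write $\rho=\sum_k p_k\proj{e_k}$.

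The main obstacle is that the $e_k$ need not lie in $\Dom(A^{1/2})$, so the form inequality cannot be applied to them directly. I would handle this by case analysis. If $\Tr(A\rho)=\infty$ there is nothing to prove. Otherwise $\int\lambda\,d\mu_\rho^A<\infty$ forces $\int\lambda\,d\langle e_k,E^Ae_k\rangle<\infty$ for every $k$ with $p_k>0$. Hence $e_k\in\Dom(A^{1/2})$ and $\langle e_k,Be_k\rangle\le\|A^{1/2}e_k\|^2$. Summing with weights $p_k$ gives $\Tr(B\rho)\le\sum_kp_k\|A^{1/2}e_k\|^2=\Tr(A\rho)$, where the last equality is again the identity $\mu_\rho^A=\sum_kp_k\mu_{e_k}^A$ together with Tonelli.
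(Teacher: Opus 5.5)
Your proposal is correct and follows the same route as the paper: reduce to the scalar spectral measure $\mu_\rho^A$ and apply monotone convergence to $\lambda\mathbf 1_{[0,n]}(\lambda)\uparrow\lambda$. Your Step~3 also proves the inequality $\Tr(B\rho)\le\Tr(A\rho)$ for $0\le B\le A$, which the paper only asserts, and it correctly handles the case where eigenvectors of $\rho$ lie outside $\Dom(A^{1/2})$ through the finite/infinite case split.
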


\begin{proof}
By construction, $\mu_\rho^A$ is a finite positive measure on $\R$ and
\[
\Tr(A\rho )
=
\int_{\R}\lambda\, d\mu_\rho^A(\lambda)
=
\int_{[0,\infty)}\lambda\, d\mu_\rho^A(\lambda)\in[0,+\infty].
\]
Moreover, $\Tr\!\big(\,A^{(n)}\rho\big)
=
\int_{[0,n]} \lambda\, d\mu_\rho^A(\lambda)$, hence $\Tr( A^{(n)}\rho)\uparrow \Tr(A\rho)$ by monotone convergence for
nonnegative functions.
Finally, for any $B\in\cB_S(\cH)$ with $0\le B\le A$, one has
$\Tr(B\rho)\le \Tr(A\rho)$, and taking $B=A^{(n)}$ yields the supremum
representation.
\end{proof}
\subsection{Tensor products, partial traces, and quantum couplings}
\label{subsec:tensor-partial-trace-coupling}

Throughout the paper, \(\otimes\) denotes the Hilbert-space tensor product. If
\(\cH_1,\cH_2\) are separable Hilbert spaces and \(A\in \cB(\cH_1)\), \(B\in \cB(\cH_2)\), then
\(A\otimes B\in \cB(\cH_1\otimes \cH_2)\) is defined on simple tensors by
\[
(A\otimes B)(\psi\otimes\varphi)
=
A\psi\otimes B\varphi,
\qquad
\psi\in \cH_1,\quad \varphi\in \cH_2,
\]
and extended by linearity and continuity. Also, we write
\[
A\oplus B
:=
A\otimes \I_{\cH_2}+\I_{\cH_1}\otimes B
\in \cB(\cH_1\otimes \cH_2)
\]
for the Kronecker sum. This notation is used in the dual constraints
\(U\oplus V\le C\).

We reserve the symbol \(\oplus_D\) for block-diagonal direct sums of matrices. Thus, if
\(M_1\in\cM_{m_1}(\mathbb R)\) and \(M_2\in\cM_{m_2}(\mathbb R)\), then
\[
M_1\oplus_D M_2
:=
\begin{pmatrix}
M_1 & 0\\
0 & M_2
\end{pmatrix}
\in\cM_{m_1+m_2}(\mathbb R).
\]
In particular, for the \(m\)-mode symplectic matrix we write
\[
\Omega_m
=
\bigoplus_{k=1}^m{}_D\,\omega,
\]
where \(\omega=\smat{0&1\\-1&0}\) is the one-mode symplectic matrix.

\begin{deff}[Partial trace]\label{def:partialtrace}
Let $\cH_1,\cH_2$ be separable Hilbert spaces and $T\in\cT^{(1)}(\cH_1\otimes\cH_2)$.
The partial traces $\Tr_{\cH_2}(T)\in\cT^{(1)}(\cH_1)$ and
$\Tr_{\cH_1}(T)\in\cT^{(1)}(\cH_2)$ are the unique trace-class operators such that
\begin{equation}\label{eq:partialtrace}
\Tr\!\bigl(B_1\,\Tr_{\cH_2}(T)\bigr)=\Tr\!\bigl((B_1\otimes \I_{\cH_2})\,T\bigr),
\qquad
\Tr\!\bigl(B_2\,\Tr_{\cH_1}(T)\bigr)=\Tr\!\bigl((\I_{\cH_1}\otimes B_2)\,T\bigr),
\end{equation}
for all $B_1\in\cB(\cH_1)$ and $B_2\in\cB(\cH_2)$.
\end{deff}

\begin{lemma}[Trace-norm continuity of partial traces]\label{lem:pt-continuity}
For $T\in\cT^{(1)}(\cH_1\otimes\cH_2)$ one has
$\|\Tr_{\cH_2}(T)\|_1\le \|T\|_1$ and $\|\Tr_{\cH_1}(T)\|_1\le \|T\|_1$.
\end{lemma}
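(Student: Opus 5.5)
The plan is to reduce to finite-rank operators by duality, using the variational characterization of the trace norm together with the defining identity \eqref{eq:partialtrace}. I treat $\Tr_{\cH_2}$; the other partial trace is handled symmetrically.

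\textbf{Step 1: variational formula for the trace norm.} For any $S\in\cT^{(1)}(\cH_1)$ one has
\[
\|S\|_1=\sup\{|\Tr(BS)| : B\in\cB(\cH_1),\ \|B\|_{\mathrm{op}}\le 1\}.
\]
This is the isometric identification $(\cT^{(1)}(\cH_1))^*\cong\cB(\cH_1)$ recalled in Subsection~\ref{subsec:op-space}. Concretely, with polar decomposition $S=V|S|$, the choice $B=V^*$ attains the supremum.

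\textbf{Step 2: apply the defining identity.} Fix $B_1\in\cB(\cH_1)$ with $\|B_1\|_{\mathrm{op}}\le1$. By \eqref{eq:partialtrace},
\[
\bigl|\Tr\bigl(B_1\Tr_{\cH_2}(T)\bigr)\bigr|=\bigl|\Tr\bigl((B_1\otimes\I_{\cH_2})T\bigr)\bigr|.
\]
The right-hand side is at most $\|B_1\otimes\I_{\cH_2}\|_{\mathrm{op}}\,\|T\|_1$, by the standard Hölder-type bound $|\Tr(XT)|\le\|X\|_{\mathrm{op}}\|T\|_1$.

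\textbf{Step 3: norm of the tensor product.} I will use $\|B_1\otimes\I_{\cH_2}\|_{\mathrm{op}}=\|B_1\|_{\mathrm{op}}$. The inequality $\le$, which is all that is needed, can be checked on finite sums $\sum_i \psi_i\otimes e_i$ with $(e_i)$ an orthonormal basis of $\cH_2$. Every vector of $\cH_1\otimes\cH_2$ has such an expansion, and
\[
\Bigl\|\sum_i B_1\psi_i\otimes e_i\Bigr\|^2=\sum_i\|B_1\psi_i\|^2\le\|B_1\|_{\mathrm{op}}^2\sum_i\|\psi_i\|^2 .
\]
Taking the supremum over $B_1$ in Step 1 then gives $\|\Tr_{\cH_2}(T)\|_1\le\|T\|_1$.

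The main obstacle is essentially nonexistent. The only points that need care are that the trace-norm duality is taken with the complex pairing (not merely the self-adjoint one, since $T$ need not be self-adjoint), and that the Hölder bound holds for trace-class $T$ on the infinite-dimensional space $\cH_1\otimes\cH_2$. Both are standard facts I would simply cite.
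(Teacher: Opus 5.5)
Your proposal is correct and follows essentially the same duality argument as the paper: write $\|\Tr_{\cH_2}(T)\|_1$ as a supremum over the unit ball of $\cB(\cH_1)$, use the defining identity to pass to $\Tr\bigl((B_1\otimes\I_{\cH_2})T\bigr)$, and bound this by $\|T\|_1$. The only difference is that you spell out the Hölder bound and the estimate $\|B_1\otimes\I_{\cH_2}\|_{\mathrm{op}}\le\|B_1\|_{\mathrm{op}}$, whereas the paper uses both implicitly when it enlarges the supremum to all operators $B$ in the unit ball of $\cB(\cH_1\otimes\cH_2)$.
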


\begin{proof}
We prove the first inequality. By duality of $\cT^{(1)}(\cH_1)$ and $\cB(\cH_1)$,
\[
\begin{aligned}
\|\Tr_{\cH_2}(T)\|_1
&=\sup_{\|A\|_{\mathrm{op}}\le 1}
\big|\Tr\!\big(A\,\Tr_{\cH_2}(T)\big)\big|\\
&=\sup_{\|A\|_{\mathrm{op}}\le 1}
\big|\Tr\!\big((A\otimes\I_{\cH_2})T\big)\big|\\
&\le \sup_{\|B\|_{\mathrm{op}}\le 1}|\Tr(BT)|=\|T\|_1.
\end{aligned}
\]
The second inequality is analogous.
\end{proof}

The following compatibility follows directly from the defining property
of the partial trace and the spectral calculus.

\begin{prop}[Compatibility with unbounded observables]
\label{prop:partialtrace-unbounded}
Let $A\in\cL_S(\cH_1)$ and $\pi\in\cS(\cH_1\otimes\cH_2)$. Then $\Tr\!\left(A(\Tr_{\cH_2}\pi)\right)$ is well defined if and only if $\Tr\!\left((A\otimes\I_{\cH_2})\pi\right)$ is well defined, and in this case 
\[
\Tr\!\left(A(\Tr_{\cH_2}\pi)\right)=
\Tr\!\left((A\otimes\I_{\cH_2})\pi\right).
\]
\end{prop}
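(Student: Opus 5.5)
The plan is to reduce both sides to the scalar spectral measures of $A$ and of $A\otimes\I_{\cH_2}$, and show these two measures coincide. The key observation is that the spectral measure of $A\otimes\I_{\cH_2}$ is $E^{A\otimes\I}(B)=E^A(B)\otimes\I_{\cH_2}$ for every Borel set $B\subset\R$. This is the standard functional calculus for the self-adjoint operator $A\otimes\I_{\cH_2}$, understood as the closure of $A\otimes\I$ on the algebraic tensor product $\Dom(A)\odot\cH_2$. I would justify it as follows. First, $B\mapsto E^A(B)\otimes\I_{\cH_2}$ is a projection-valued measure. Second, the self-adjoint operator $\int\lambda\,d(E^A\otimes\I)(\lambda)$ agrees with $A\otimes\I$ on the core $\Dom(A)\odot\cH_2$. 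Uniqueness of the spectral measure then gives the identity.

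With this in hand, the defining property of the partial trace, Definition~\ref{def:partialtrace}, applies to the bounded operator $B_1=E^A(B)$:
\[
\mu^{A}_{\Tr_{\cH_2}\pi}(B)=\Tr\bigl(E^A(B)\,\Tr_{\cH_2}\pi\bigr)=\Tr\bigl((E^A(B)\otimes\I_{\cH_2})\pi\bigr)=\mu^{A\otimes\I}_{\pi}(B).
\]
Hence the two finite positive Borel measures on $\R$ coincide. Note that $\Tr_{\cH_2}\pi\ge0$ is a state by the same definition, so Definition~\ref{def:spectral-integral-traceclass} applies to it.

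Finally, by Definition~\ref{def:spectral-integral-traceclass}, each side is $\int\lambda\,d\mu(\lambda)$ for the same measure $\mu$. The extended integral is well defined for one side if and only if $\int\lambda^+d\mu<\infty$ or $\int\lambda^-d\mu<\infty$, and this condition is literally the same for both sides. When it holds, the values agree. If one prefers to avoid the measure identity, one can instead apply Proposition~\ref{prop:spectral-trunc-positive} to the truncations $\lambda^\pm\mathbf 1_{[0,n]}$ and use monotone convergence. Either way, the main point is the identity $E^{A\otimes\I}=E^A\otimes\I$.

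The only real obstacle is that identification of the spectral measure, which requires the closure convention for the unbounded operator $A\otimes\I_{\cH_2}$. Everything after it is bookkeeping.
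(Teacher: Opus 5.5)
Your proposal is correct and follows the paper's proof essentially verbatim. The paper also uses $E^{A\otimes\I_{\cH_2}}(B)=E^A(B)\otimes\I_{\cH_2}$ together with the defining property of the partial trace to identify the two scalar spectral measures, from which simultaneous well-definedness and equality follow. Your added justification of the spectral-measure identity, via the closure convention and uniqueness of the spectral measure, fills in a step the paper simply attributes to the spectral calculus.
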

\begin{proof}
By the spectral calculus, $E^{A\otimes\I_{\cH_2}}(B)=E^A(B)\otimes\I_{\cH_2}$ for every Borel set $B\subset\R$. Hence, by the defining property of
the partial trace,
\[
\begin{aligned}
\mu_{\pi}^{A\otimes\I_{\cH_2}}(B)=
\Tr\!\left[
E^{A\otimes\I_{\cH_2}}(B)\pi
\right]=
\Tr\!\left[
(E^A(B)\otimes\I_{\cH_2})\pi
\right]=
\Tr\!\left[E^A(B)
\Tr_{\cH_2}\pi
\right]
=
\mu_{\Tr_{\cH_2}\pi}^{A}(B).
\end{aligned}
\]
Thus the two scalar spectral measures coincide. Therefore their
positive and negative spectral integrals are finite simultaneously,
and, whenever the corresponding extended expectations are well defined,
\[
\Tr\!\left(A\Tr_{\cH_2}\pi\right)
=
\int_{\R}\lambda\,d\mu_{\Tr_{\cH_2}\pi}^{A}(\lambda)
=
\int_{\R}\lambda\,
d\mu_{\pi}^{A\otimes\I_{\cH_2}}(\lambda)
=
\Tr\!\left((A\otimes\I_{\cH_2})\pi\right).
\]
\end{proof}

\begin{deff}[Quantum couplings]\label{def:quantumcouplings}
Let $\rho\in\cS(\cH_1)$ and $\sigma\in\cS(\cH_2)$.
A \emph{coupling} of $(\rho,\sigma)$ is a state $\pi\in\cS(\cH_1\otimes\cH_2)$ such that $\Tr_{\cH_2}(\pi)=\rho$ and $\Tr_{\cH_1}(\pi)=\sigma$. We denote the set of all such couplings by
\begin{equation}\label{def:quantumcoupling}
\Gamma(\rho,\sigma):=\bigl\{\pi\in\cS(\cH_1\otimes\cH_2):\ \Tr_{\cH_2}\pi=\rho,\ \Tr_{\cH_1}\pi=\sigma\bigr\}.
\end{equation}
\end{deff}

\subsection{Regular Weyl systems and quadratic Hamiltonians}\label{pre:gauss}
Let \(m\in\N\) and let $\Theta\in\cM_{2m}(\R)$ satisfy
$\Theta^\top=-\Theta$ and $\det\Theta\neq0$. A
\emph{regular \(\Theta\)-Weyl system} on a Hilbert space
\(\mathcal H\) is a strongly continuous map $
  W:\mathbb R^{2m}\to \mathcal U(\mathcal H)$ satisfying the Weyl relations
\[
  W(\xi)W(\eta)
  =
  e^{-\frac{i}{2}\xi^\top\Theta\eta}W(\xi+\eta),
  \qquad
  \xi,\eta\in\mathbb R^{2m}.
\]
Here strong continuity means that
\(\xi\mapsto W(\xi)\psi\) is continuous for every \(\psi\in\mathcal H\).
The operators \(W(\xi)\) are called the Weyl operators of the system. See \cite[Sec.~3.1]{DerezinskiCCR} for the standard Weyl-form convention for
CCR representations, and \cite[Sec.~4]{AraiGCCR} for generalized CCR systems
associated with a general skew-symmetric commutation matrix \(\Theta\).

The Weyl relations imply \(W(0)=\I_{\mathcal H}\) and
\(W(\xi)^*=W(-\xi)\). For every
\(\xi\in\mathbb R^{2m}\), the map $t\longmapsto W(t\xi)$ is a strongly continuous one-parameter unitary group. Hence, there exists a unique self-adjoint operator \(R(\xi)\) such that $W(t\xi)=e^{itR(\xi)}$ for $t\in\mathbb R$. We call \(R(\xi)\) the field operator associated with \(\xi\); see \cite[Sec.~3.3]{DerezinskiCCR}.

Let \(\{e_j\}_{j\in[2m]}\) denote the standard basis of
\(\mathbb R^{2m}\), and set $R_j:=R(e_j)$ for $j\in[2m]$. We call $\mathbf{R}=(R_1,\ldots,R_{2m})$ the associated canonical generators. Thus \(\mathbf R\) is a Weyl representation of the \(\Theta\)-GCCR in the sense of
Arai~\cite[Def.~11]{AraiGCCR}.

Equivalently, \(W\) lifts to a strongly continuous unitary representation of the Heisenberg--Weyl group associated with \(\Theta\). We therefore use the standard smooth-vector theory for unitary Lie-group representations; see \cite[Props.~1.7.3, 1.7.6--1.7.7]{FischerRuzhansky2016}.
\[
  \mathcal H_R^\infty
  :=
  \left\{
    \psi\in\mathcal H:
    \mathbb R^{2m}\ni\xi\longmapsto W(\xi)\psi\in\mathcal H
    \text{ is } C^\infty
  \right\}.
\]
It is dense in \(\mathcal H\) and invariant under the Weyl operators and under
the derived operators \(R_j\). Moreover, for every
\(\psi\in\mathcal H_R^\infty\) and
\(\xi=(\xi_1,\ldots,\xi_{2m})\in\mathbb R^{2m}\), $R(\xi)\psi
  =
  \sum_{j=1}^{2m}\xi_jR_j\psi$. Thus every real linear combination of the canonical generators is well defined
on \(\mathcal H_R^\infty\).

On \(\mathcal H_R^\infty\), the canonical generators satisfy the Heisenberg
commutation relations
\[
  [R(\xi),R(\eta)]\psi
  =
  i\,\xi^\top\Theta\eta\,\psi,
  \qquad
  \xi,\eta\in\mathbb R^{2m},\quad
  \psi\in\mathcal H_R^\infty.
\]
Equivalently, $[R_i,R_j]\psi
  =
  i\Theta_{ij}\psi$ for all $i,j\in[2m]$ and $\psi\in\mathcal H_R^\infty$.
\begin{proposition}[Quadratic Hamiltonians generated by a regular Weyl system]
\label{prop:quadratic-hamiltonian-esa}
Let \(\Theta\in\cM_{2m}(\mathbb R)\) be a nondegenerate skew-symmetric matrix, $W:\mathbb R^{2m}\to\mathcal U(\mathcal H)$ be a regular \(\Theta\)-Weyl system on a Hilbert space \(\mathcal H\), and \(\mathbf{R}=(R_1,\ldots,R_{2m})\) be its canonical
generators.

Let \(G\in\cM_{2m,\mathrm{sym}}(\R)\) be real symmetric, \(\mathbf{a}\in\mathbb R^{2m}\), and \(c\in\mathbb R\).
Define, on the common smooth domain \(\mathcal H_R^\infty\),
\[
  H_{G,\mathbf{a},c}^{(0)}
  :=
  \frac12\sum_{j,k=1}^{2m}G_{jk}R_jR_k
  +\sum_{j=1}^{2m}a_jR_j
  +c\I_\cH=\frac12\mathbf{R}^\top G\mathbf{R}+\mathbf{a}^\top\mathbf{R}
  +c\I_\cH.
\]
Then \(H_{G,\mathbf{a},c}^{(0)}\) is a well-defined symmetric operator on
\(\mathcal H_R^\infty\) and is essentially self-adjoint on
\(\mathcal H_R^\infty\).
\end{proposition}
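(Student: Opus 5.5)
The plan is to reduce the statement to the essential self-adjointness of a positive quadratic form in the generators, via Nelson's analytic vector theorem or Nelson's commutator theorem, applied on the Gårding space $\mathcal H_R^\infty$.

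\emph{Well-definedness and symmetry.} Since $\mathcal H_R^\infty$ is invariant under each $R_j$, products $R_jR_k$ and the linear terms map $\mathcal H_R^\infty$ into itself, so $H^{(0)}:=H^{(0)}_{G,\mathbf a,c}$ is well defined there. The operator $\mathbf R^\top G\mathbf R$ involves $G_{jk}R_jR_k+G_{kj}R_kR_j$ with $G$ symmetric. Each $R_j$ is symmetric on the invariant domain, so $\langle R_jR_k\psi,\varphi\rangle=\langle\psi,R_kR_j\varphi\rangle$. Symmetry of $G$ then gives symmetry of $H^{(0)}$. The real coefficients $a_j,c$ are harmless.

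\emph{Essential self-adjointness.} I will use Nelson's commutator theorem with comparison operator
\[
N:=\sum_j R_j^2+\I_\cH .
\]
First, $N$ is essentially self-adjoint on $\mathcal H_R^\infty$: it is $\I$ minus the image $\mathrm d W(\Delta)$ of the Laplacian (a sublaplacian with respect to the basis $e_j$) of the Heisenberg--Weyl Lie algebra. By Nelson's theorem on Laplacians of Lie group representations, elliptic elements of the universal enveloping algebra are essentially self-adjoint on the Gårding/smooth vectors. The cited smooth-vector results \cite{FischerRuzhansky2016} cover this.

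Next, I verify two bounds on $\mathcal H_R^\infty$, writing $\|\cdot\|$ for the Hilbert norm:
\[
\|H^{(0)}\psi\|\le C\|N\psi\|,
\qquad
|\langle H^{(0)}\psi,N\psi\rangle-\langle N\psi,H^{(0)}\psi\rangle|\le C\|N^{1/2}\psi\|^2 .
\]
For the first bound, I estimate $\|R_jR_k\psi\|^2=\langle\psi,R_kR_j^2R_k\psi\rangle$. Commuting $R_k$ past $R_j^2$ with the canonical relations produces only lower-order terms, so this is bounded by $C\langle\psi,N^2\psi\rangle$. For the second bound, I use $[R_i,R_j]=i\Theta_{ij}$. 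This shows that $[H^{(0)},N]$ is a polynomial of degree at most two in the $R$'s. Its matrix elements are therefore controlled by $\langle\psi,N\psi\rangle=\sum\|R_j\psi\|^2+\|\psi\|^2$. Nelson's commutator theorem then gives essential self-adjointness of $H^{(0)}$ on any core of $N$, in particular on $\mathcal H_R^\infty$.

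\emph{Alternative and main obstacle.} An alternative route is to decompose $W$ into a direct sum of copies of an irreducible representation. By the Stone--von Neumann theorem (after a linear change of variables bringing $\Theta$ to $\hbar\Omega_m$ form), every regular $\Theta$-Weyl system is unitarily equivalent to $\bigoplus(\text{Schrödinger})\otimes\I_{\mathcal K}$, and the smooth vectors correspond accordingly. One then applies the known Schrödinger-case result: quadratic Hamiltonians are essentially self-adjoint on $\mathcal S(\R^m)$ via metaplectic or Hermite analytic vectors. The main obstacle in this route is matching the smooth-vector domain with $\mathcal S(\R^m)\otimes_{\mathrm{alg}}$-type cores in the infinite-multiplicity case. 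For that reason I prefer the commutator-theorem route. There, the hardest step is the careful bound $\|R_jR_k\psi\|\le C\|N\psi\|$, which needs the identity $N^2\ge c\sum_{j,k}R_kR_j^2R_k-C'N$, obtained by commuting terms and using the canonical relations.
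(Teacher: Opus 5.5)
\textbf{Verdict.} Your proposal is correct, but it takes a genuinely different route from the paper.

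\textbf{The paper's route.} The paper never works intrinsically on $\mathcal H_R^\infty$. It uses the Stone--von Neumann theorem with multiplicity to conjugate $H^{(0)}_{G,\mathbf a,c}$ into $\bigl(\frac12\hat{\mathbf R}^\top LGL^\top\hat{\mathbf R}+(L\mathbf a)^\top\hat{\mathbf R}+c\bigr)\otimes\I_\cK$. It then obtains essential self-adjointness on $\mathscr S(\R^m)$ from the Fulsche--van Luijk criterion for Weyl quantizations of real quadratic polynomials, and transfers this to the core $\mathscr S(\R^m)\otimes_{\mathrm{alg}}\cK$. Finally it passes to the larger domain $U^\dagger\mathcal H_R^\infty$, because a symmetric extension of an essentially self-adjoint operator is again essentially self-adjoint. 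That last step removes the ``main obstacle'' you anticipate in your alternative route: only the inclusion $\mathscr S(\R^m)\otimes_{\mathrm{alg}}\cK\subset U^\dagger\mathcal H_R^\infty$ is needed, not an identification of the two domains.

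\textbf{Your route.} Your commutator argument needs only the CCR on $\mathcal H_R^\infty$ plus essential self-adjointness of $N$, and the estimates do close. On $\mathcal H_R^\infty$ one has exactly
$\sum_{j,k}\|R_jR_k\psi\|^2=\|\sum_jR_j^2\psi\|^2+\|\Theta\|_F^2\|\psi\|^2\le(1+\|\Theta\|_F^2)\|N\psi\|^2$.
Moreover, $[H^{(0)}_{G,\mathbf a,c},N]$ is a polynomial of degree at most two in the $R_j$, so its expectation is bounded by a multiple of $\langle\psi,N\psi\rangle=\|\overline N^{1/2}\psi\|^2$.

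\textbf{One justification to state more precisely.} $\sum_jR_j^2$ is the image of a sublaplacian. This is not elliptic in the $(2m+1)$-dimensional Heisenberg algebra, so Nelson's Laplacian theorem does not apply to it verbatim. However, in the derived representation of the lifted Heisenberg--Weyl group, $X_j\mapsto iR_j$ and, for the suitably normalized central generator, $Z\mapsto i\I$. Hence $N=-\mathrm dW\bigl(\sum_jX_j^2+Z^2\bigr)$ is the image of the full Laplacian, and Nelson's theorem applies directly. Alternatively, you can invoke the known essential self-adjointness of images of positive Rockland operators on smooth vectors.

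\textbf{What each route buys.} Yours is intrinsic and avoids both Stone--von Neumann and the external pseudodifferential criterion. It also yields extra information: $\Dom(\overline N)\subset\Dom\bigl(\overline{H^{(0)}_{G,\mathbf a,c}}\bigr)$, and essential self-adjointness on every core of $\overline N$. The paper's route is shorter given the Schr\"odinger-case theory. It also matches the paper's later use of the same Stone--von Neumann reduction, e.g.\ for the compact resolvent of the square-sums.
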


See the proof in Appendix~\ref{app:quadratic-hamiltonians}.

\begin{remark}[Closure convention for quadratic Hamiltonians] \label{rem:quadratic-hamiltonian-closure-convention} In what follows, whenever a Hamiltonian can be written as \[ H_{G,\mathbf{a},c} = \frac12\sum_{j,k=1}^{2m}G_{jk}R_jR_k +\sum_{j=1}^{2m}a_jR_j +c\I_\cH, \] it is understood as the unique self-adjoint extension, equivalently the closure, of the operator \(H_{G,\mathbf{a},c}^{(0)}\) initially defined on \(\mathcal H_R^\infty\). 

In particular, all quadrature-generated costs and one-body square-sums defined in Section~\ref{intro:statementofresults} are understood in this closure sense. Thus the operators are well-defined self-adjoint operators.
\end{remark}

We also introduce tensor products of Weyl systems. Let \(W_a:\mathbb R^{2m}\to \cU(\cH_a)\), \(a=1,2\), be regular \(\Theta_a\)-Weyl systems with canonical generators \(\{R_{j}\}_{j\in[2m]}\) on \(\cH_1\) and \(S_1,\ldots,S_{2m}\) on \(\cH_2\), respectively. Then \[ W_1\otimes W_2: \mathbb R^{2m}\oplus\mathbb R^{2m} \to \cU(\cH_1\otimes \cH_2), \qquad (W_1\otimes W_2)(\xi,\eta) := W_1(\xi)\otimes W_2(\eta), \] is a regular \((\Theta_1\oplus_D\Theta_2)\)-Weyl system. Its canonical generators are $\{R_j\otimes \I_{\cH_2},\I_{\cH_1}\otimes S_j\}_{j\in[2m]}$.

\paragraph{The standard CCR representation and the Stone--von Neumann theorem.}
\label{subsec:standard-ccr-representation}

The model example of a regular Weyl system is the \(m\)-degree-of-freedom canonical commutation relation (CCR) system on \(L_2(\mathbb R^m)\), which we call the standard CCR representation. We use this model as the reference system in the Stone--von Neumann theorem. For this representation, see \cite[Sec.~12.1.2]{Holevo2019QSCI} and \cite[Sec.~5.2]{holevo2011probabilistic}.

Let $m\in\N$. The Hilbert space for \(m\) degrees of freedom is $L_2(\mathbb R^m)$. For \(\psi\in\mathscr S(\mathbb R^m)\subset L_2(\mathbb R^m)\), define
\[
(\hat{Q}_j\psi)(x)=x_j\psi(x),
\qquad
(\hat{P}_j\psi)(x)=-i\hbar\partial_{x_j}\psi(x),
\qquad j\in[m],
\]
where $\hbar$ is the reduced Planck constant. We collect the canonical variables as $\hat{\mathbf R}$ and use the hat symbol to distinguish them from the canonical generators for the general Weyl system:
\[
\hat{\mathbf R}=(\hat{R}_1,\dots,\hat{R}_{2m})
:=
(\hat{Q}_1,\hat{P}_1,\dots,\hat{Q}_m,\hat{P}_m).
\]
Then, on \(\mathscr S(\mathbb R^m)\), $[\hat R_a,\hat R_b]=i\hbar(\Omega_m)_{ab}$, where $\Omega_m=\bigoplus_{j\in[m]}{}_D\begin{pmatrix}
    0&1\\
    -1&0
\end{pmatrix}$ is called the symplectic matrix for $m$ degrees of freedom. Equivalently,
\[
[\hat Q_j,\hat Q_k]=0,
\qquad
[\hat P_j,\hat P_k]=0,
\qquad
[\hat Q_j,\hat P_k]=i\hbar\delta_{jk}\I.
\]

The associated standard Weyl system is obtained by exponentiating real linear combinations of the canonical variables: 
\begin{equation}\label{eq:weyl-ccr}
W_{\mathrm{CCR}}:\mathbb R^{2m}\to \cU(L_2(\mathbb R^m)), \qquad W_{\mathrm{CCR}}(\xi) := \exp\!\bigl(i\xi^\top\hat{\mathbf R}\bigr),
\end{equation}
which is a $\hbar\Omega_m$-Weyl system on $L_2(\R^m)$. The canonical variables are recovered by $W_{\mathrm{CCR}}(t e_a) = e^{it\hat R_a}$ for $a\in[2m]$. This is the standard regular Weyl system associated with the usual CCR; see \cite[Sec.~3.1]{DerezinskiCCR} for the Weyl-form convention.

The following form of the Stone--von Neumann theorem says that the
standard Weyl system is universal among regular Weyl systems over finite-dimensional phase spaces, up to multiplicity.
\begin{theorem}[Stone--von Neumann theorem with multiplicity]\label{theo:stone}
Let \(\Theta\in\cM_{2m}(\mathbb R)\) be a nondegenerate skew-symmetric matrix, $W:\mathbb R^{2m}\to\mathcal U(\mathcal H)$ be a regular \(\Theta\)-Weyl system on a Hilbert space \(\mathcal H\), and \(\mathbf{R}=(R_1,\ldots,R_{2m})\) be its canonical generators as defined above. Choose \(L\in \cM_{2m}(\mathbb R)\) such that
\[
L^\top(\hbar\Omega_m)L=\Theta.
\]
Then there exist a Hilbert space \(\mathcal K\) and a unitary $U:L_2(\mathbb R^m)\otimes\mathcal K\to\mathcal H$ such that
\[
U^\dagger W(\xi)U
=
W_{\mathrm{CCR}}(L\xi)\otimes \I_{\cK}=e^{i\xi^\top L^\top\hat{\mathbf R}}\otimes \I_{\cK},\quad
U^\dagger R(\xi)U
=(\xi^\top L^\top\hat{\mathbf R})\otimes \I_{\mathcal K}.
\]
The representation is irreducible if and only if \(\mathcal K=\mathbb C\).
\end{theorem}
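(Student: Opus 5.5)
The plan is to reduce the statement to the classical von Neumann uniqueness theorem for the standard $\hbar\Omega_m$-Weyl system by a linear change of phase-space variables, then read off the generators and the irreducibility criterion.

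\emph{Step 1 (change of variables).} Since $\Theta$ is nondegenerate and $L^\top(\hbar\Omega_m)L=\Theta$, we get $\det(L)^2\det(\hbar\Omega_m)=\det\Theta\neq0$, so $L$ is invertible. Define
\[
W'(\zeta):=W(L^{-1}\zeta),\qquad \zeta\in\R^{2m}.
\]
This map is strongly continuous. Moreover
\[
W'(\zeta)W'(\zeta')=e^{-\frac i2\zeta^\top L^{-\top}\Theta L^{-1}\zeta'}\,W'(\zeta+\zeta')=e^{-\frac i2\zeta^\top\hbar\Omega_m\zeta'}\,W'(\zeta+\zeta'),
\]
so $W'$ is a regular $\hbar\Omega_m$-Weyl system. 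By the Baker--Campbell--Hausdorff identity $e^Ae^B=e^{A+B}e^{[A,B]/2}$, applied with $[i\xi^\top\hat{\mathbf R},i\eta^\top\hat{\mathbf R}]=-i\hbar\,\xi^\top\Omega_m\eta$ on Schwartz vectors, $W_{\mathrm{CCR}}$ satisfies exactly the same relations. This confirms that the sign conventions in \eqref{eq:weyl-ccr} agree.

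\emph{Step 2 (uniqueness up to multiplicity).} Next we apply the standard Stone--von Neumann theorem with multiplicity to $W'$ (e.g.\ \cite[Sec.~12.1.2]{Holevo2019QSCI}, \cite[Sec.~3]{DerezinskiCCR}). This gives a Hilbert space $\cK$ and a unitary $U:L_2(\R^m)\otimes\cK\to\cH$ with $U^\dagger W'(\zeta)U=W_{\mathrm{CCR}}(\zeta)\otimes\I_\cK$. Substituting $\zeta=L\xi$ yields the first identity of the statement.

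If a self-contained argument is wanted here, which I expect to be the main obstacle, I would first form the Gaussian average
\[
P:=(2\pi\hbar)^{-m}\int e^{-|\zeta|^2/(4\hbar)}\,W'(\zeta)\,d\zeta .
\]
Using only the Weyl relations, one checks that $P$ is a nonzero orthogonal projection satisfying $PW'(\zeta)P=e^{-|\zeta|^2/(4\hbar)}P$. Then set $\cK:=\Ran P$. Define $U$ on vectors $W_{\mathrm{CCR}}(\zeta)|0\rangle\otimes k$ by $W'(\zeta)k$; inner products are preserved because both sides have the same vacuum matrix elements. The map $U$ is surjective because the closed span of $\{W'(\zeta)\Ran P\}$ reduces $W'$, and its orthocomplement would carry a Weyl system whose projection $P$ vanishes, which is impossible.

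\emph{Step 3 (generators and irreducibility).} For fixed $\xi$, the one-parameter groups $t\mapsto U^\dagger W(t\xi)U$ and $t\mapsto W_{\mathrm{CCR}}(tL\xi)\otimes\I_\cK=e^{it(L\xi)^\top\hat{\mathbf R}}\otimes\I_\cK$ coincide. By uniqueness in Stone's theorem their generators agree: $U^\dagger R(\xi)U$ equals the self-adjoint operator $(\xi^\top L^\top\hat{\mathbf R})\otimes\I_\cK$, understood in the closure sense of Remark~\ref{rem:quadratic-hamiltonian-closure-convention}, with essential self-adjointness supplied by Proposition~\ref{prop:quadratic-hamiltonian-esa}.

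Finally, $W_{\mathrm{CCR}}$ is irreducible on $L_2(\R^m)$. Hence the commutant of $\{W_{\mathrm{CCR}}(\zeta)\otimes\I_\cK\}$ is $\I\otimes\cB(\cK)$. This commutant is trivial iff $\dim\cK=1$, which proves the irreducibility criterion.
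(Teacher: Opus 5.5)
Your argument is correct and follows the paper's route: the paper gives no separate proof, only Remark~\ref{rem:stone-reference}, which makes exactly your change of variables $\zeta=L\xi$ to the standard $\hbar\Omega_m$-Weyl relations and then cites the Stone--von Neumann theorem with multiplicity \cite{AraiGCCR,DerezinskiCCR}; your Step~3 is the routine consequence of Stone's theorem. There are two small slips, neither affecting the main line: with the paper's convention $W_{\mathrm{CCR}}(\zeta)=e^{i\zeta^\top\hat{\mathbf R}}$ the vacuum has characteristic function $e^{-\hbar|\zeta|^2/4}$, so in your optional sketch the projection should be $P=(\hbar/2\pi)^m\int e^{-\hbar|\zeta|^2/4}W'(\zeta)\,d\zeta$ with $PW'(\zeta)P=e^{-\hbar|\zeta|^2/4}P$ (your constants are right only for $\hbar=1$), and in Step~3 you should not invoke Proposition~\ref{prop:quadratic-hamiltonian-esa}, whose proof in Appendix~\ref{app:quadratic-hamiltonians} itself uses Theorem~\ref{theo:stone}, because Stone's theorem applied to $t\mapsto W_{\mathrm{CCR}}(tL\xi)\otimes\I_{\cK}$ already identifies the generator (cite Lemma~\ref{lem:quadratic-schrodinger-esa} instead if you want a Schwartz-space core).
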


\begin{remark}[References for Theorem~\ref{theo:stone}]
\label{rem:stone-reference}
Theorem~\ref{theo:stone} is the finite-dimensional Stone--von Neumann theorem with
multiplicity, written in coordinates adapted to the skew-symmetric matrix \(\Theta\). The
linear change of variables reduces the \(\Theta\)-Weyl relations to the standard
\((\hbar\Omega_m)\)-Weyl relations. We refer to
\cite[Props.~15--16, Thm.~19, Cor.~20]{AraiGCCR} for generalized CCR systems and to
\cite[Thm.~4]{DerezinskiCCR} for the regular finite-dimensional CCR Stone--von Neumann
theorem with multiplicity.
\end{remark}
\begin{coro}[Compact resolvent of the square-sum]
\label{coro:nelson-square-compact-resolvent}
Let \(\Theta\in\cM_{2m}(\mathbb R)\) be a nondegenerate skew-symmetric matrix, $W:\mathbb R^{2m}\to\mathcal U(\mathcal H)$ be an irreducible regular \(\Theta\)-Weyl system on a Hilbert space \(\mathcal H\), and \(\mathbf{R}=(R_1,\ldots,R_{2m})\) be its canonical generators. Let $\mathsf H_{\mathbf R}
  :=\sum_{j=1}^{2m}R_j^2$ be the one-body square-sum on $\cH$. Then \(\mathsf H_{\mathbf R}\in\cL_S(\cH)\) is a nonnegative self-adjoint operator
with compact resolvent.
\end{coro}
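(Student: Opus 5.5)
The plan is to transfer the problem to the standard CCR representation with Theorem~\ref{theo:stone} and then use the explicit spectrum of the harmonic oscillator. Since $W$ is irreducible, $\mathcal K=\mathbb C$, and we obtain a unitary $U:L_2(\R^m)\to\cH$ and a matrix $L$ with $L^\top(\hbar\Omega_m)L=\Theta$. This $L$ is invertible because $\Theta$ is nondegenerate. Under this unitary, $U^\dagger R_jU=(L^\top\hat{\mathbf R})_j$. Hence on smooth vectors
\[
U^\dagger\mathsf H_{\mathbf R}U=\hat{\mathbf R}^\top LL^\top\hat{\mathbf R}=\hat{\mathbf R}^\top G\hat{\mathbf R},\qquad G:=LL^\top\succ0.
\]
The unitary maps $\cH_R^\infty$ onto the smooth vectors of the standard representation, namely $\mathscr S(\R^m)$. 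By Proposition~\ref{prop:quadratic-hamiltonian-esa} and Remark~\ref{rem:quadratic-hamiltonian-closure-convention}, the closures correspond. It therefore suffices to show that the closure of $\hat{\mathbf R}^\top G\hat{\mathbf R}$ on $\mathscr S(\R^m)$ is nonnegative and has compact resolvent.

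Nonnegativity is immediate. For $\psi\in\mathscr S$,
\[
\langle\psi,\hat{\mathbf R}^\top G\hat{\mathbf R}\psi\rangle=\sum_j\|(L^\top\hat{\mathbf R})_j\psi\|^2\ge0,
\]
and this bound passes to the closure.

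For compact resolvent, I would compare with the isotropic oscillator $N:=\sum_j\hat R_j^2=\sum_k(\hat Q_k^2+\hat P_k^2)$. Its closure on $\mathscr S$ has the Hermite functions as an orthonormal eigenbasis, with eigenvalues $\hbar(2|n|+m)\to\infty$, so it has compact resolvent. Let $\lambda_{\min}>0$ be the smallest eigenvalue of $G$. Then on $\mathscr S$ we have the quadratic-form bound $\hat{\mathbf R}^\top G\hat{\mathbf R}\ge\lambda_{\min}N$, and the two form domains agree up to equivalence of norms.

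The cleanest route from here is the min–max principle. A self-adjoint operator bounded below has compact resolvent if and only if its min–max values tend to infinity. Since $\mathscr S$ is a form core for both operators, form domination gives $\mu_n(\hat{\mathbf R}^\top G\hat{\mathbf R})\ge\lambda_{\min}\mu_n(N)\to\infty$.

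A more direct alternative is to diagonalize $G$ by Williamson's theorem, $G=S^\top DS$ with $S$ symplectic and $D=\mathrm{diag}(d_1,d_1,\ldots,d_m,d_m)$. The metaplectic (Gaussian) unitary $\hat U_S$ implements $S$, so $\hat{\mathbf R}^\top G\hat{\mathbf R}$ is unitarily equivalent to $\sum_k d_k(\hat Q_k^2+\hat P_k^2)$. This operator has explicit discrete spectrum $\hbar\sum_kd_k(2n_k+1)$ with finite multiplicities.

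I expect the main obstacle to be the domain bookkeeping rather than the spectral computation. One must verify three things: that $U$ carries $\cH_R^\infty$ onto $\mathscr S(\R^m)$, that the closure of $U^\dagger\mathsf H_{\mathbf R}U$ is the same self-adjoint operator as the closure of $\hat{\mathbf R}^\top G\hat{\mathbf R}$ on $\mathscr S$, and, in the min–max route, that $\mathscr S$ is a form core. Essential self-adjointness from Proposition~\ref{prop:quadratic-hamiltonian-esa} handles the closure issue. The smooth-vector identification follows because smooth vectors of the Heisenberg group representation in the Schrödinger model are exactly $\mathscr S(\R^m)$, and the unitary intertwines the two representations.
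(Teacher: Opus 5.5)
Your proposal is correct and follows the paper's proof. Irreducibility and the Stone--von Neumann theorem reduce $\mathsf H_{\mathbf R}$ to the positive-definite quadratic Hamiltonian $\hat{\mathbf R}^\top LL^\top\hat{\mathbf R}$; the paper then simply cites the spectral theory of positive quadratic Hamiltonians (Shubin), whereas you prove that step yourself, via min--max comparison with the isotropic oscillator or a Williamson/metaplectic diagonalization, and you also make the nonnegativity and the closure/domain bookkeeping explicit.
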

\begin{proof}
By Theorem~\ref{theo:stone} and irreducibility, the multiplicity space is
\(\mathbb C\). Hence \(\mathsf H_{\mathbf R}\) is unitarily equivalent to the operator
\[
  \sum_{j=1}^{2m}
  \bigl((L^\top\hat{\mathbf R})_j\bigr)^2
  =
  \hat{\mathbf R}^\top LL^\top\hat{\mathbf R},
\]
where \(L\in \cM_{2m}(\mathbb R)\) satisfies
\(L^\top(\hbar\Omega_m)L=\Theta\). Since \(LL^\top\) is positive
definite, this is an elliptic positive quadratic Hamiltonian in the standard canonical variables. By the standard spectral theory of positive quadratic Hamiltonians, it has
discrete spectrum with finite multiplicities, equivalently compact resolvent;
see \cite[Sec.~26]{Shubin2001}. Hence \(\mathsf H_{\mathbf R}\) has compact
resolvent.
\end{proof}
 \subsection{Dual Hilbert spaces and the transpose CCR representation}
\label{subsec:dual-transpose-CCR}
Let \(\cH\) be a separable Hilbert space, and \(\cH^*\) denote its continuous dual. This subsection fixes the transpose convention on dual Hilbert spaces. The motivation comes from the channel-oriented transport framework of
\cite{DPaTre19}, where quantum transport plans are quantum channels and are in one-to-one correspondence with quantum couplings on $\cH\otimes\cH^*$; see \cite[Prop.~2]{DPaTre19}.

We use Dirac notation for elements of \(\cH^*\), writing them as bras \(\langle\varphi|\). For
\(B\in \cB(\cH)\), define its transpose \(B^\top\in \cB(\cH^*)\) by 
\[
B^\top\langle\varphi|=\langle\varphi|B,
\]
following the convention of \cite[Def.~1]{DPaTre19}. This definition is canonical and does not require choosing a basis. It reverses products $(AB)^\top=B^\top A^\top$ for $A,B\in \cB(\cH)$. Consequently,
\[
[A^\top,B^\top]=-[A,B]^\top.
\]

Let \(\cH=L_2(\mathbb R^m)\) and $W_{\mathrm{CCR}}$ defined in \eqref{eq:weyl-ccr} be the associated regular \((\hbar\Omega_m)\)-Weyl system. We define the transpose Weyl system $W_{\mathrm{CCR}}^\top$ on \(\cH^*=L_2(\mathbb R^m)^*\) by
\[
W_{\mathrm{CCR}}^\top(\xi)
:=
W_{\mathrm{CCR}}(\xi)^\top,
\qquad
\xi\in\mathbb R^{2m}.
\]
Since transposition reverses products, \(W_{\mathrm{CCR}}^\top\) satisfies
\[
W_{\mathrm{CCR}}^\top(\xi)W_{\mathrm{CCR}}^\top(\eta)
=
\exp\!\left(
\frac{i\hbar}{2}\xi^\top\Omega_m\eta
\right)
W_{\mathrm{CCR}}^\top(\xi+\eta).
\]
Equivalently, \(W_{\mathrm{CCR}}^\top\) is a regular \((-\hbar\Omega_m)\)-Weyl system. We denote the canonical generators of $W_{\mathrm{CCR}}^\top$ by 
\begin{equation}
\hat{\mathbf{R}}^\top
=
(\hat R_1^\top,\ldots,\hat R_{2m}^\top)
=
(\hat Q_1^\top,\hat P_1^\top,\ldots,\hat Q_m^\top,\hat P_m^\top),
\end{equation}
so that $W_{\mathrm{CCR}}^\top(te_j)=e^{it\hat R_j^\top}=(e^{it\hat R_j})^\top$ for $j\in[2m]$. In particular, the superscript $\top$ on $\hat R_j^\top$ is understood through the transposed Weyl system. The corresponding commutation relations are
\[
[\hat R_j^\top,\hat R_k^\top]
=
-i\hbar(\Omega_m)_{jk},
\qquad
j,k\in[2m].
\]
This is the \(\hbar\)-scaled version of the transpose-quadrature commutation relation of \cite[Eq.~(72)]{DPaTre19}.

For notational convenience, we encode the two second-factor conventions by a parameter
\(\tau\in\{\pm1\}\), which is consistent with \eqref{eq:tau-barycenter-intro}. Set
\begin{equation}\label{eq:tau}
L_2^{(\tau)}(\mathbb R^m)
:=
\begin{cases}
L_2(\mathbb R^m), & \tau=+1,\\
L_2(\mathbb R^m)^*, & \rev{\tau=-1}
\end{cases},\quad \hat{\mathbf R}^{(\tau)}
:=
\begin{cases}
\hat{\mathbf R}, & \tau=+1,\\
\hat{\mathbf R}^\top , & \rev{\tau=-1}
\end{cases}\rev{.}
\end{equation}

On the bipartite space $L_2(\mathbb R^m)\otimes L_2^{(\tau)}(\mathbb R^m)$
we use the unified bipartite canonical vector with the components indexed by
\begin{equation}\label{eq:bitau}
\hat{\mathbf R}_\tau
:=
\left(
\hat{\mathbf R}\otimes \I_{L_2^{(\tau)}(\mathbb R^m)},\,
\I_{L_2(\mathbb R^m)}\otimes \hat{\mathbf R}^{(\tau)}
\right)^\top,\quad
\hat R_{\tau,j}
=
\begin{cases}
\hat R_j\otimes \I_{L_2^{(\tau)}(\mathbb R^m)}, & 1\le j\le 2m,\\[1mm]
\I_{L_2(\mathbb R^m)}\otimes \hat R^{(\tau)}_{j-2m}, & 2m+1\le j\le 4m.
\end{cases}
\end{equation}
Then
\[
[\hat R_{\tau,j},\hat R_{\tau,k}]
=
i\hbar
\bigl(
\Omega_m\oplus_D(\tau\Omega_m)
\bigr)_{jk},
\qquad
j,k\in[4m].
\]

We finally fix a unitary identification of the transpose CCR representation with the standard CCR representation. Let $\{|\mathbf n\rangle:\mathbf n\in\N_0^m\}$ be the standard
$m$-mode Fock basis of $L_2(\R^m)$. Using this fixed basis, define the unitary
\[
J_m:L_2(\R^m)^*\to L_2(\R^m),
\qquad
J_m\langle\mathbf n|:=|\mathbf n\rangle.
\]
On the finite-particle subspace, $\hat Q_j$ has a real symmetric matrix
and $\hat P_j$ has a purely imaginary skew-symmetric matrix with
respect to the Fock basis. Hence
\[
J_m\hat Q_j^\top J_m^\dagger=\hat Q_j,
\qquad
J_m\hat P_j^\top J_m^\dagger=-\hat P_j,
\qquad j\in[m].
\]

For $\tau\in\{\pm1\}$, define
\[
J_{\tau,m}
:=
\begin{cases}
\I_{L_2(\R^m)},&\tau=+1,\\
J_m,&\tau=-1
\end{cases},
\qquad
F_{\tau,m}
:=
\begin{cases}
\I_{2m},&\tau=+1,\\
\displaystyle\bigoplus_{j=1}^m{}_D
\smat{1&0\\0&-1},&\tau=-1
\end{cases}.
\]

\begin{prop}[Unitary equivalence of the $\tau$-CCR and standard CCR representations]
\label{prop:tau-standard-CCR-identification}
We have
\begin{equation}
\bigl(\I_{L_2(\R^m)}\otimes J_{\tau,m}\bigr)^\dagger
\hat{\mathbf R}_{+1}
\bigl(\I_{L_2(\R^m)}\otimes J_{\tau,m}\bigr)
=
(\I_{2m}\oplus_DF_{\tau,m})
\hat{\mathbf R}_\tau,
\label{eq:tau-standard-canonical-identification}
\end{equation}
where the left unitary conjugation is understood
componentwise, while the right real matrices act on its phase-space index. In particular,
\begin{equation}
(\I_{2m}\oplus_DF_{\tau,m})
\bigl(
\Omega_m\oplus_D(\tau\Omega_m)
\bigr)
(\I_{2m}\oplus_DF_{\tau,m})^\top
=
\Omega_{2m}.
\label{eq:tau-standard-symplectic-identification}
\end{equation}
\end{prop}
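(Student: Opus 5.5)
The plan is to split into the trivial case $\tau=+1$ and the case $\tau=-1$, which reduces to the action of $J_m$ on the transposed quadratures. For $\tau=+1$ we have $J_{+1,m}=\I$, $F_{+1,m}=\I_{2m}$ and $\hat{\mathbf R}_{+1}=\hat{\mathbf R}_\tau$, so \eqref{eq:tau-standard-canonical-identification} holds tautologically. Also $\Omega_m\oplus_D\Omega_m=\Omega_{2m}$ by the block-diagonal definition of $\Omega_{2m}$, which gives \eqref{eq:tau-standard-symplectic-identification}.

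For $\tau=-1$, the first $2m$ components are immediate. Conjugation by $\I\otimes J_m$ leaves $\hat R_j\otimes\I$ unchanged, and the first block of $\I_{2m}\oplus_D F_{-1,m}$ is the identity. For the last $2m$ components the claim reads
\[
(\I\otimes J_m)^\dagger(\I\otimes\hat Q_j)(\I\otimes J_m)=\I\otimes\hat Q_j^\top,
\qquad
(\I\otimes J_m)^\dagger(\I\otimes\hat P_j)(\I\otimes J_m)=-\I\otimes\hat P_j^\top .
\]
These are equivalent to $J_m\hat Q_j^\top J_m^\dagger=\hat Q_j$ and $J_m\hat P_j^\top J_m^\dagger=-\hat P_j$, which the paper records just before the statement.

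The only genuine step is to justify those two identities at the level of self-adjoint operators, and I expect this to be the main obstacle. The paper defines $\hat R_j^\top$ as the generator of $t\mapsto (e^{it\hat R_j})^\top$, not by matrix entries, so I would argue via the Weyl unitaries.

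1. Compute the matrix elements in the Fock basis: $\langle\mathbf n|\hat Q_j|\mathbf n'\rangle$ is real and symmetric, while $\langle\mathbf n|\hat P_j|\mathbf n'\rangle$ is purely imaginary and antisymmetric. This uses $\hat Q_j=\sqrt{\hbar/2}(a_j+a_j^\dagger)$ and $\hat P_j=i\sqrt{\hbar/2}(a_j^\dagger-a_j)$ with real ladder matrix elements.

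2. Observe that for bounded $B$ the definition $B^\top\langle\varphi|=\langle\varphi|B$ gives $J_mB^\top J_m^\dagger=\overline{B}^{\,\dagger}$. Here $\overline{B}=KBK$, with $K$ the complex conjugation in the Fock basis; equivalently $J_mB^\top J_m^\dagger$ has Fock matrix entries $\langle\mathbf n'|B|\mathbf n\rangle$.

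3. Apply this to $B=e^{it\hat R_j}$. Since $K\hat Q_jK=\hat Q_j$ and $K\hat P_jK=-\hat P_j$ on the finite-particle core, and since $K$ is antiunitary, we get $Ke^{it\hat Q_j}K=e^{-it\hat Q_j}$ and $Ke^{it\hat P_j}K=e^{it\hat P_j}$. Taking adjoints then yields $J_m(e^{it\hat Q_j})^\top J_m^\dagger=e^{it\hat Q_j}$ and $J_m(e^{it\hat P_j})^\top J_m^\dagger=e^{-it\hat P_j}$.

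4. Apply Stone's theorem: the generators satisfy $J_m\hat Q_j^\top J_m^\dagger=\hat Q_j$ and $J_m\hat P_j^\top J_m^\dagger=-\hat P_j$.

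5. The finite-particle space is a core for $\hat Q_j$ and $\hat P_j$ (analytic vectors for the harmonic oscillator), so the identities hold as equalities of self-adjoint operators, not merely on the core.

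Tensoring with $\I$ is compatible with all of this, because $e^{it(\I\otimes A)}=\I\otimes e^{itA}$. This gives \eqref{eq:tau-standard-canonical-identification}.

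Finally, \eqref{eq:tau-standard-symplectic-identification} for $\tau=-1$ is a direct computation. It reduces blockwise to $F_{-1,m}(-\Omega_m)F_{-1,m}^\top=\Omega_m$. Per mode this is $\smat{1&0\\0&-1}(-\omega)\smat{1&0\\0&-1}=\omega$, which holds because conjugation by $\mathrm{diag}(1,-1)$ flips the sign of the off-diagonal entries. As a consistency check, it also follows from \eqref{eq:tau-standard-canonical-identification} by comparing commutators: the left side has commutation matrix $\hbar\Omega_{2m}$, and the right side has $\hbar(\I\oplus_DF)(\Omega_m\oplus_D\tau\Omega_m)(\I\oplus_DF)^\top$.
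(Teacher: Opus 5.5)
Your proposal is correct and follows the paper's proof. The paper also obtains the canonical identification componentwise from $J_m\hat Q_j^\top J_m^\dagger=\hat Q_j$ and $J_m\hat P_j^\top J_m^\dagger=-\hat P_j$, and the symplectic identity from $F_{\tau,m}(\tau\Omega_m)F_{\tau,m}^\top=\Omega_m$. Your Weyl-unitary, antiunitary-conjugation and Stone's-theorem argument is a rigorous version of what the paper justifies only through Fock matrix elements on the finite-particle subspace; the core property is what licenses your step 3, since step 4 already gives equality of the self-adjoint generators.
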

\begin{proof}
Equation~\eqref{eq:tau-standard-canonical-identification} follows
componentwise from the definition of $J_{\tau,m}$ and the identities
for $\hat Q_j^\top$ and $\hat P_j^\top$ above. Moreover,
$F_{\tau,m}(\tau\Omega_m)F_{\tau,m}^\top=\Omega_m$, which gives
\eqref{eq:tau-standard-symplectic-identification}.
\end{proof}
 \section{Quantum Optimal Transport (QOT) barycenters}\label{sec:qot-bary}

Throughout this section, the barycenter $\rho\in\cS(\cH_0)$ is a state on $\cH_0$, while the given states $\sigma_s\in\cS(\cH_s)$ are states on $(\cH_s)_{s\in[N]}$. For every $s\in[N]$, let
$C_s\in \cL_{\geq}(\cH_0\otimes \cH_s)$ be a positive semidefinite (possibly unbounded)
self-adjoint operator.
The \emph{quantum
optimal transport (QOT)} cost associated with $C_s$ is
\begin{equation}\label{eq:QOT-def-setting}
\operatorname{QOT}(\rho,\sigma\mid C_s)
:= \inf_{\pi\in\Gamma(\rho,\sigma)} \Tr(C_s\pi).
\end{equation}
By Theorem~\ref{theo:compact} and
Proposition~\ref{prop:Hclosed}, the infimum in
\eqref{eq:QOT-def-setting} is attained whenever its value is finite.
When the cost operator $C_s$ is unbounded, $\Tr(C_s\pi)$ is understood in the sense of spectral truncations. The QOT barycenter problem \eqref{intro:QOT-barycenter} can be written as
\begin{equation}\label{eq:joint-primal-setting}
 \cB_{\mathbf A,\mathbf E}^{\mathbf C,\boldsymbol{\alpha}}(\boldsymbol{\sigma})
:=
\inf_{\rho\in\cS_{\mathbf A,\mathbf E}(\cH_0)}
\sum_{s=1}^N
\alpha_s\,
\operatorname{QOT}\!\big(\rho,\sigma_s\mid C_s\big)=\inf_{(\pi_s)_{s\in[N]}\in\Gamma_{\mathrm{joint}}^{\mathbf A,\mathbf E}(\boldsymbol{\sigma})}
\ \sum_{s=1}^N \alpha_s\,\Tr\!\big(C_s\pi_s\big),
\end{equation}

where $\Gamma_{\mathrm{joint}}^{\mathbf A,\mathbf E}(\boldsymbol{\sigma})$ denotes the joint feasible set encoding the
\rev{common barycentric-marginal constraint and the expectation constraints associated with $\mathbf{A}$ and $\mathbf{E}$:}
\begin{equation}\label{eq:joint-feasible-co}
\begin{aligned}
\Gamma_{\mathrm{joint}}^{\mathbf A,\mathbf E}(\boldsymbol{\sigma})
:={}&
\Bigl\{
(\pi_s)_{s\in[N]} \ \Big|\ &
\pi_s\in\cS(\cH_0\otimes\cH_s),\quad
\Tr_{\cH_0}(\pi_s)=\sigma_s,\\
&\Tr[A_\ell\Tr_{\cH_s}(\pi_s)]\leq E_\ell,
\quad \forall s\in[N],\ \ell\in[k],\\
&\Tr_{\cH_s}(\pi_s)=\Tr_{\cH_{s'}}(\pi_{s'}),
\quad \forall s,s'\in[N]
\Bigr\}.
\end{aligned}
\end{equation}

Any minimizer of the joint problem has a common marginal belonging to the barycenter set. Conversely, whenever the pairwise QOT minima are attained for a barycenter \(\rho\), a choice of pairwise optimal couplings gives a minimizer of the joint problem.
This formulation makes the shared-marginal constraint explicit and will serve as
the starting point for compactness and for the derivation of the dual
problem.

\paragraph{Compactness of constrained energy sublevels}

\begin{prop}[Compactness of constrained energy sublevels]
\label{prop:compact1}
Let $\cH_0,\cH$ be separable Hilbert spaces, let
$\sigma\in\cS(\cH)$, and let $c\ge0$. Then the following statements hold.
\begin{enumerate}[label=(\roman*)]
\item
Let $C\in\cL_{\geq}(\cH_0\otimes\cH)$ have compact resolvent. Then
\[
\cA(\sigma,c)
:=
\Bigl\{
\pi\in\cS(\cH_0\otimes\cH):
\Tr_{\cH_0}\pi=\sigma,\ 
\Tr(C\pi)\le c
\Bigr\}
\]
is compact in the trace-norm topology.

\item
Let $H\in\cL_{\geq}(\cH_0)$ have compact resolvent. Then
\[
\cA_H(\sigma,c)
:=
\Bigl\{
\pi\in\cS(\cH_0\otimes\cH):
\Tr_{\cH_0}\pi=\sigma,\ 
\Tr\!\bigl(H\,\Tr_{\cH}\pi\bigr)\le c
\Bigr\}
\]
is compact in the trace-norm topology.
\end{enumerate}
\end{prop}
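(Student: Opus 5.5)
The plan is to prove trace-norm compactness in both cases by combining a tightness statement with closedness. First, the coupling set with one fixed marginal is tight. Second, the energy bound yields tightness on the remaining factor. Third, a tight set of states that is closed in trace norm is compact. The closedness will come from lower semicontinuity of $\pi\mapsto\Tr(C\pi)$, using the truncations of Proposition~\ref{prop:spectral-trunc-positive}.

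For (i), the key quantitative step is the following. For a nonnegative $C$ with compact resolvent, the spectral projections $P_n:=E^C([0,n])$ are finite rank. By Markov's inequality on the scalar spectral measure,
\[
\Tr\bigl((\I-P_n)\pi\bigr)=\mu^C_\pi((n,\infty))\le \frac{\Tr(C\pi)}{n}\le \frac cn .
\]
So every $\pi\in\cA(\sigma,c)$ has all but $c/n$ of its mass on a fixed finite-dimensional subspace. Now take a sequence $(\pi_k)$ in $\cA(\sigma,c)$. Compressions $P_n\pi_kP_n$ live in a finite-dimensional space, so a diagonal argument yields a subsequence converging weakly-$*$ (against $\cB$, or against $\cK$) to a positive trace-class $\pi$. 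The uniform tail bound then gives $\Tr\pi=1$. For a weak-$*$ limit of states, no loss of mass is equivalent to trace-norm convergence; this is the standard Gr\"umm/Dell'Antonio type lemma, presumably Theorem~\ref{theo:compact} in the appendix. Finally, partial traces are trace-norm continuous by Lemma~\ref{lem:pt-continuity}, so $\Tr_{\cH_0}\pi=\sigma$. Also $\Tr(C^{(n)}\pi)=\lim_k\Tr(C^{(n)}\pi_k)\le c$ for each bounded truncation $C^{(n)}$, and taking the supremum over $n$ gives $\Tr(C\pi)\le c$, i.e. lower semicontinuity. Hence $\pi\in\cA(\sigma,c)$.

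For (ii), tightness is assembled factorwise. Let $F_n:=E^H([0,n])$, which is finite rank on $\cH_0$. By Proposition~\ref{prop:partialtrace-unbounded} and Markov,
\[
\Tr\bigl(((\I-F_n)\otimes\I)\pi\bigr)\le c/n .
\]
Since $\sigma$ is a single fixed trace-class operator, we can choose finite-rank projections $G_n$ on $\cH$ with $\Tr((\I-G_n)\sigma)\to0$. Then $\Tr((\I\otimes(\I-G_n))\pi)=\Tr((\I-G_n)\sigma)$ uniformly in $\pi$. With $Q_n:=F_n\otimes G_n$, which is finite rank, we get
\[
\Tr\bigl((\I-Q_n)\pi\bigr)\le c/n+\Tr((\I-G_n)\sigma),
\]
uniformly on $\cA_H(\sigma,c)$. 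The same extraction argument as in (i) produces a trace-norm convergent subsequence. The constraints pass to the limit in the same way. The marginal constraint follows from Lemma~\ref{lem:pt-continuity}. For the energy constraint, the reduced states converge in trace norm, so $\Tr(H^{(n)}\Tr_\cH\pi)\le c$ for each $n$, and sup over $n$ gives $\Tr(H\,\Tr_\cH\pi)\le c$.

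The main obstacle is the step from ``tight plus weak-$*$ limit'' to trace-norm convergence. It requires checking that the weak limit $\pi$ is itself a state, and then that $\|\pi_k-\pi\|_1\to0$. For the latter, I would estimate
\[
\|\pi_k-\pi\|_1\le\|Q(\pi_k-\pi)Q\|_1+2\bigl(\|(\I-Q)\pi_k\|_1+\|(\I-Q)\pi\|_1\bigr),
\]
with $Q$ a finite-rank projection from the tightness step. The tail terms should be controlled via Cauchy--Schwarz, using $\|(\I-Q)\pi\|_1\le\sqrt{\Tr((\I-Q)\pi)}$ for states. The finite-dimensional first term tends to zero by weak convergence. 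If Theorem~\ref{theo:compact} already packages this criterion, I would cite it directly.
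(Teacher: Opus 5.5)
Your argument is correct, but it is organized differently from the paper's proof.

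\textbf{The paper's route.} The paper's proof essentially consists of two citations.
\begin{itemize}
\item For (i), it applies Lemma~\ref{lemma:boundedHcompact} (Holevo's trace-norm compactness of energy sublevels of a compact-resolvent operator) directly on $\cH_0\otimes\cH$. It then passes the marginal constraint to the limit by continuity of the partial trace.
\item For (ii), it applies the same lemma on $\cH_0$ to the barycentric marginals $\Tr_{\cH}\pi_n$, obtaining a trace-norm convergent subsequence of marginals. It then lifts this to the couplings with Theorem~\ref{theo:compact}, and concludes with Proposition~\ref{prop:Hclosed}. Theorem~\ref{theo:compact} is the De Palma--Trevisan result that couplings with trace-norm convergent marginals have a trace-norm convergent subsequence.
\end{itemize}

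\textbf{Your route.} You reprove both of these inputs from a single mechanism:
\begin{itemize}
\item a uniform tail bound $\Tr((\I-Q)\pi)\le\varepsilon$ for a fixed finite-rank projection $Q$, obtained from Markov's inequality on the scalar spectral measure;
\item in (ii), the operator inequality $\I-F\otimes G\le(\I-F)\otimes\I+\I\otimes(\I-G)$, which you use implicitly and which is worth stating;
\item a weak-$*$ extraction;
\item the no-mass-loss criterion, which you justify with your H\"older estimate.
\end{itemize}

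Two small points of precision:
\begin{itemize}
\item Theorem~\ref{theo:compact} is not the Gr\"umm-type criterion you guessed; in the paper that role is played by Lemma~\ref{lemma:boundedHcompact}. Since you sketch the criterion yourself, nothing is lost.
\item Your extraction gives convergence only against compact (indeed finite-rank) test operators, not against all of $\mathcal B$. Convergence against $\mathcal B$ is a consequence of the tightness argument, not an input to it. Your proof only uses finite-rank tests, so this causes no gap.
\end{itemize}

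\textbf{Comparison.} The paper's proof is shorter and modular, relying on cited results. Yours is self-contained and shows that (i) and (ii) are the same tightness phenomenon. It also merges the paper's two-stage extraction in (ii) into a single step via the product projections $F_n\otimes G_n$, and it yields explicit tail bounds.
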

\begin{proof}
For~(i), let $(\pi_n)_n\subset\cA(\sigma,c)$. Since $C$ has compact
resolvent and $\sup_n\Tr(C\pi_n)\le c$, Lemma~\ref{lemma:boundedHcompact} yields a subsequence
$(\pi_{n_k})_k$ and a state
$\widetilde\pi\in\cS(\cH_0\otimes\cH)$ such that $\pi_{n_k}\rightarrow\widetilde\pi$ in trace norm and $\Tr(C\widetilde\pi)\le c$. By the trace-norm continuity of the partial trace,
\[
\Tr_{\cH_0}\widetilde\pi
=
\lim_{k\to\infty}\Tr_{\cH_0}\pi_{n_k}
=
\sigma.
\]
Hence $\widetilde\pi\in\cA(\sigma,c)$, proving that
$\cA(\sigma,c)$ is compact.

For~(ii), let $(\pi_n)_n\subset\cA_H(\sigma,c)$. The bound $\Tr(H\Tr_{\cH}\pi_n)\le c$ and Lemma~\ref{lemma:boundedHcompact} yield a subsequence
$(\Tr_{\cH}\pi_{n_k})_k$ and a state $\widetilde\rho\in\cS(\cH_0)$ such that $\Tr_{\cH}\pi_{n_k}\rightarrow\widetilde\rho$ in trace norm. Since $\Tr_{\cH_0}\pi_{n_k}=\sigma$ for every $k$, Theorem~\ref{theo:compact} yields a further subsequence,
which we do not relabel, and a state
$\widetilde\pi\in\cS(\cH_0\otimes\cH)$ such that $\pi_{n_k}\rightarrow\widetilde\pi$ in trace norm, with
\[
\Tr_{\cH_0}\widetilde\pi=\sigma,
\qquad
\Tr_{\cH}\widetilde\pi=\widetilde\rho.
\]
Finally, by the lower semicontinuity of
$\rho\mapsto\Tr(H\rho)$,
\[
\Tr(H\widetilde\rho)
\le
\liminf_{k\to\infty}\Tr(H\Tr_{\cH}\pi_{n_k})
\le c.
\]
Hence $\widetilde\pi\in\cA_H(\sigma,c)$, proving that
$\cA_H(\sigma,c)$ is compact.
\end{proof}
\begin{coro}\label{coro:compactc}
Let $S$ be a finite index set.
Assume that $S$ is partitioned as $S=S_{\mathrm{I}}\sqcup S_{\mathrm{F}}$. Let $\cH_0$ and $\cH_s$ for all $s\in S$ be separable Hilbert spaces. For each $s\in S_{\mathrm{I}}$, let $C_s\in\cL_\geq(\cH_0\otimes\cH_s)$ be a cost operator with compact resolvent.
For each $s\in S_{\mathrm{F}}$, let $H_s\in\cL_\geq(\cH_0)$ be a nonnegative operator with compact resolvent.

For each $s\in S$, let $\sigma_s\in \cS(\cH_s)$ and $c_s\in\R_+$.
Define the feasible set
\[
\cA_s :=
\begin{cases}
\bigl\{\pi\in\cS(\cH_0\otimes\cH_s):\ \Tr_{\cH_0}\pi=\sigma_s,\ \Tr(C_s\pi)\le c_s\bigr\},
& s\in S_{\mathrm{I}},\\[1mm]
\bigl\{\pi\in\cS(\cH_0\otimes\cH_s):\ \Tr_{\cH_0}\pi=\sigma_s,\ \Tr\!\bigl(H_s\,\Tr_{\cH_s}\pi\bigr)\le c_s\bigr\},
& s\in S_{\mathrm{F}}.
\end{cases}
\]
Finally, set
\[
\cA :=
\Bigl\{(\pi_s)_{s\in S}\in\prod_{s\in S}\cA_s:\ \Tr_{\cH_s}\pi_s=\Tr_{\cH_{s'}}\pi_{s'}\ \text{for all } s,s'\in S\Bigr\}.
\]
Then $\cA$ is compact in the norm $\|(\,T_s\,)_{s\in S}\|:=\sum_{s\in S}\|T_s\|_1$.
\end{coro}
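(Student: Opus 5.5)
We prove sequential compactness of $\cA$ in the norm $\sum_s\|T_s\|_1$. Fix a sequence $(\pi^{(n)}_s)_{s\in S}\subset\cA$. The plan is to obtain trace-norm convergent subsequences one index at a time, passing to further subsequences and not relabeling. We then check that every defining constraint survives in the limit. Since $S$ is finite, only finitely many extractions are needed.

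\emph{Indices in $S_{\mathrm I}$.} For each $s\in S_{\mathrm I}$, the sequence $(\pi^{(n)}_s)_n$ lies in $\cA_s=\cA(\sigma_s,c_s)$. This set is trace-norm compact by Proposition~\ref{prop:compact1}(i). Extracting successively, we get $\pi_s^{(n)}\to\widetilde\pi_s$ in trace norm with $\widetilde\pi_s\in\cA_s$.

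\emph{Indices in $S_{\mathrm F}$.} For each $s\in S_{\mathrm F}$, the sequence lies in $\cA_H(\sigma_s,c_s)$ with $H=H_s$. This set is compact by Proposition~\ref{prop:compact1}(ii), so we again get $\pi_s^{(n)}\to\widetilde\pi_s\in\cA_s$ in trace norm. After the finitely many extractions, we have convergence simultaneously for all $s\in S$, which is convergence in the product norm.

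\emph{Common marginal.} By Lemma~\ref{lem:pt-continuity}, $\Tr_{\cH_s}\pi_s^{(n)}\to\Tr_{\cH_s}\widetilde\pi_s$ in trace norm for every $s$. The approximants satisfy $\Tr_{\cH_s}\pi^{(n)}_s=\Tr_{\cH_{s'}}\pi^{(n)}_{s'}$ for all $n$. Uniqueness of trace-norm limits then gives $\Tr_{\cH_s}\widetilde\pi_s=\Tr_{\cH_{s'}}\widetilde\pi_{s'}$, so $(\widetilde\pi_s)_s\in\cA$.

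There is no serious obstacle. The analytic work, namely tightness from compact resolvent, lower semicontinuity of the cost and energy functionals, and preservation of the fixed marginal $\sigma_s$, is already contained in Proposition~\ref{prop:compact1}. The only points to verify are that finitely many diagonal extractions suffice and that the coupling constraint is closed, which follows from continuity of the partial trace.
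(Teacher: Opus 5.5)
Your proof is correct and essentially the same as the paper's: each $\cA_s$ is compact by Proposition~\ref{prop:compact1}(i) or (ii), and the common-marginal constraint is closed by trace-norm continuity of the partial trace. The only difference is presentation: you argue sequentially with finitely many successive extractions, while the paper invokes compactness of a finite product and then passes to a closed subset.
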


\begin{proof}
For each $s\in S_{\mathrm I}$, the set $\cA_s$ is compact by
Proposition~\ref{prop:compact1}(i), whereas for each
$s\in S_{\mathrm F}$ it is compact by
Proposition~\ref{prop:compact1}(ii). Since $S$ is finite, the product $\prod_{s\in S}\cA_s$ is compact for the product topology,
and the norm $\sum_{s\in S}\|\cdot\|_1$ induces an equivalent metrization. The constraint $\Tr_{\cH_s}\pi_s=\Tr_{\cH_{s'}}\pi_{s'}$ for all $s,s'\in S$ defines a closed subset of the product,
because of the continuity of the partial trace $\Tr_{\cH_s}$ in trace norm. Hence $\cA$ is compact as a closed subset of a compact set.
\end{proof}
\begin{coro}[Compactness under additional marginal energy constraints]
\label{coro:compact-additional-marginal-energy}
Assume the hypotheses of Corollary~\ref{coro:compactc}. Let
$k\in\mathbb N$. For each $\ell\in[k]$, let
$A_\ell\in\cL_\geq(\cH_0)$ be a self-adjoint operator on $\cH_0$, and let
$E_\ell\in[0,+\infty)$. Define
\[
    \cA_{\mathbf A,\mathbf E}
    :=
    \left\{
    (\pi_s)_{s\in S}\in\cA:
    \Tr\!\left(
        A_\ell\,\Tr_{\cH_s}\pi_s
    \right)
    \leq E_\ell
    \quad
    \text{for all }s\in S,\ \ell\in[k]
    \right\}.
\]
Then $\cA_{\mathbf A,\mathbf E}$ is compact in the norm $
    \|(\,T_s\,)_{s\in S}\|
    :=
    \sum_{s\in S}\|T_s\|_1 $.
\end{coro}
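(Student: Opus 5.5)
The plan is to reduce to Corollary~\ref{coro:compactc}. We show that $\cA_{\mathbf A,\mathbf E}$ is a trace-norm closed subset of the compact set $\cA$; compactness then follows immediately. Since $\cA$ is compact in the norm $\sum_{s\in S}\|\cdot\|_1$ by Corollary~\ref{coro:compactc}, and $\cA_{\mathbf A,\mathbf E}\subset\cA$ by definition, only closedness needs proof.

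To prove closedness, take a sequence $(\pi_s^{(n)})_{s\in S}\in\cA_{\mathbf A,\mathbf E}$ converging in the product norm to some $(\pi_s)_{s\in S}$. This limit lies in $\cA$ because $\cA$ is compact, hence closed. By Lemma~\ref{lem:pt-continuity}, $\rho_s^{(n)}:=\Tr_{\cH_s}\pi_s^{(n)}\to\rho_s:=\Tr_{\cH_s}\pi_s$ in trace norm for each $s$. It remains to show $\Tr(A_\ell\rho_s)\le E_\ell$. For this we use the lower semicontinuity of $\rho\mapsto\Tr(A\rho)$ for $A\in\cL_\geq(\cH_0)$ with respect to trace-norm convergence. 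This is the same lower semicontinuity invoked in the proof of Proposition~\ref{prop:compact1}(ii).

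We derive it directly from Proposition~\ref{prop:spectral-trunc-positive}. For each bounded truncation $A_\ell^{(j)}$, the map $\rho\mapsto\Tr(A_\ell^{(j)}\rho)$ is trace-norm continuous, since $|\Tr(B(\rho-\rho'))|\le\|B\|_{\mathrm{op}}\|\rho-\rho'\|_1$. Using $A_\ell^{(j)}\le A_\ell$, we get
\[
\Tr\bigl(A_\ell^{(j)}\rho_s\bigr)=\lim_{n\to\infty}\Tr\bigl(A_\ell^{(j)}\rho_s^{(n)}\bigr)\le\liminf_{n\to\infty}\Tr\bigl(A_\ell\rho_s^{(n)}\bigr)\le E_\ell .
\]
Taking the supremum over $j$ and applying Proposition~\ref{prop:spectral-trunc-positive} yields $\Tr(A_\ell\rho_s)\le E_\ell$ for all $s\in S$ and $\ell\in[k]$. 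Hence the limit belongs to $\cA_{\mathbf A,\mathbf E}$.

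There is no genuine obstacle. The only point requiring care is that the $A_\ell$ may be unbounded, so the energy constraint is not a continuous functional. Because the constraint is an upper bound, however, lower semicontinuity via monotone truncation suffices; note that we do not need the $A_\ell$ to have compact resolvent. Since $\cA_{\mathbf A,\mathbf E}$ is a closed subset of the compact metric space $\cA$, it is compact.
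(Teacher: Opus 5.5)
Your proposal is correct and follows essentially the paper's argument: $\cA_{\mathbf A,\mathbf E}$ is the intersection of the compact set $\cA$ with finitely many trace-norm closed sets. Closedness comes from continuity of the partial trace together with lower semicontinuity of $\rho\mapsto\Tr(A_\ell\rho)$. The only difference is that you prove this lower semicontinuity inline via the spectral truncations of Proposition~\ref{prop:spectral-trunc-positive}, whereas the paper packages it as Proposition~\ref{prop:marginal-energy-closed}, which relies on Proposition~\ref{prop:Hclosed}.
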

\begin{proof}
By Corollary~\ref{coro:compactc}, the set $\cA$ is compact in the
product trace norm. Fix $s\in S$ and $\ell\in[k]$. By Proposition~\ref{prop:marginal-energy-closed},
the set
\[
    \left\{
    \pi\in\cS(\cH_0\otimes\cH_s):
    \Tr\!\left(
        A_\ell\,\Tr_{\cH_s}\pi
    \right)
    \leq E_\ell
    \right\}
\]
is closed in the trace-norm topology. Hence
$\cA_{\mathbf A,\mathbf E}$ is the intersection of $\cA$ with
finitely many closed sets. Therefore, $\cA_{\mathbf A,\mathbf E}$ is a closed subset of the
compact set $\cA$, and hence is compact.
\end{proof}
\paragraph{Bounds for quadrature-generated cost operators.}

The next proposition provides upper and lower bounds on quadrature-generated cost operators. 

\begin{prop}[Two-sided bounds]
\label{prop:quadrature_bounds}
Let $m\in\N$, $\cH_0,\cH$ be separable Hilbert spaces and \(C\) be a quadrature-generated cost on
\(\cH_0\otimes\cH\), associated with quadrature generators
\(\{R_j\}_{j\in[2m]}\) on \(\cH_0\) and
\(\{S_j\}_{j\in[2m]}\) on \(\cH\). Let $H_0:=\sum_{j=1}^{2m} R_j^2$ and $H:=\sum_{j=1}^{2m} S_j^2$ be the corresponding one-body square-sums as in \eqref{intro:eq:one-body-H}. Then, 
\[
    C
    \le
    2\bigl(H_0\otimes \I_{\cH}
    +
    \I_{\cH_0}\otimes H\bigr),\text{\quad and}\quad
    C+3\,\I_{\cH_0}\otimes H
    \ge
    \frac34\,H_0\otimes \I_{\cH}.
\]
\end{prop}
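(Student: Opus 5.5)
The plan is to prove both inequalities first as quadratic-form inequalities on the algebraic tensor product $\mathcal D:=\cH_{0,R}^\infty\odot\cH_S^\infty$ of smooth vectors. By Remark~\ref{rem:quadratic-hamiltonian-closure-convention}, each of $C$, $H_0\otimes\I_{\cH}$ and $\I_{\cH_0}\otimes H$ is the closure of its restriction to the smooth domain of the tensor-product Weyl system. This domain contains $\mathcal D$, which is invariant under all $R_j\otimes\I$ and $\I\otimes S_j$. I will then pass to the closures using essential self-adjointness, Proposition~\ref{prop:quadratic-hamiltonian-esa}.

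Write $X_j:=R_j\otimes\I_{\cH}$ and $Y_j:=\I_{\cH_0}\otimes S_j$. These are symmetric and commute on $\mathcal D$. For $\psi\in\mathcal D$ one has $\langle\psi,C\psi\rangle=\sum_j\|(X_j-Y_j)\psi\|^2$.

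\emph{Upper bound.} By the parallelogram-type estimate $\|a-b\|^2\le 2\|a\|^2+2\|b\|^2$,
\[
\|(X_j-Y_j)\psi\|^2\le 2\|X_j\psi\|^2+2\|Y_j\psi\|^2 .
\]
Summing over $j$ gives $\langle\psi,C\psi\rangle\le 2\langle\psi,(H_0\otimes\I+\I\otimes H)\psi\rangle$ on $\mathcal D$.

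\emph{Lower bound.} Use $\|a\|^2\le(1+\varepsilon)\|a-b\|^2+(1+\varepsilon^{-1})\|b\|^2$ with $a=X_j\psi$ and $b=Y_j\psi$. Choosing $\varepsilon=1/3$ gives
\[
\tfrac34\|X_j\psi\|^2\le\|(X_j-Y_j)\psi\|^2+3\|Y_j\psi\|^2 ,
\]
since $(1+\varepsilon)^{-1}=3/4$ and $(1+\varepsilon^{-1})(1+\varepsilon)^{-1}=3$. Summing over $j$ yields
\[
\tfrac34\langle\psi,H_0\otimes\I\,\psi\rangle\le\langle\psi,(C+3\,\I\otimes H)\psi\rangle
\]
on $\mathcal D$. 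Here I use $\langle\psi,X_j^2\psi\rangle=\|X_j\psi\|^2$ and the identification of $\sum_j X_j^2$ with $H_0\otimes\I$ on $\mathcal D$.

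\emph{Passage to the form order.} This is the main obstacle. The order defined in Subsection~\ref{subsec:op-space} requires a domain inclusion of form domains, $\Dom(A^{1/2})\subset\Dom(B^{1/2})$, and not merely an inequality on a core.

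\begin{itemize}
\item I would first show that $\mathcal D$ is a form core for each operator. It is an operator core for $H_0\otimes\I+\I\otimes H$, because that operator is a quadratic Hamiltonian $\frac12\mathbf R^\top G\mathbf R$ of the tensor Weyl system. It is essentially self-adjoint on the tensor smooth domain, and tensor products of smooth vectors form a core by Nelson/analytic-vector arguments. Operator cores are form cores.
\item For $C+3\,\I\otimes H$, the same argument applies, since it is again a quadratic Hamiltonian of the tensor Weyl system with $G\succeq0$.
\item Given a form core $\mathcal D$ for $A$ and a form inequality $q_B\le q_A$ on $\mathcal D$, closability of $q_B$ transfers the inequality. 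Take $\psi\in\Dom(A^{1/2})$ and $\psi_n\in\mathcal D$ with $\psi_n\to\psi$ in the $A$-form norm. Then $(\psi_n)$ is Cauchy in the $B$-form norm. Closedness of $q_B$ gives $\psi\in\Dom(B^{1/2})$ and $\|B^{1/2}\psi\|\le\|A^{1/2}\psi\|$.
\end{itemize}

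The remaining technical check is that $\mathcal D$ is a core for these closures. If that fails, I would instead work on the full tensor smooth domain $\cH^\infty$ of $W_0\otimes W$, where the computations above hold verbatim. The claim that it is a core then follows directly from Proposition~\ref{prop:quadratic-hamiltonian-esa}.
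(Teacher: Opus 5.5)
Your proposal is correct and follows essentially the same route as the paper: both inequalities are checked on the algebraic tensor product of smooth vectors and then extended to the closures. Your Young-type estimates are exactly the paper's completing-the-square identities, e.g.\ $\|(X_j-Y_j)\psi\|^2+3\|Y_j\psi\|^2-\tfrac34\|X_j\psi\|^2=\|\tfrac12X_j\psi-2Y_j\psi\|^2$, and your closedness-based transfer argument, with the fallback to the full tensor smooth domain where Proposition~\ref{prop:quadratic-hamiltonian-esa} supplies the needed cores, makes explicit the closure step that the paper compresses into one sentence.
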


\begin{proof}
On the algebraic tensor core
$\mathcal H_R^\infty\otimes_{\mathrm{alg}}\mathcal H_S^\infty$,
\[
\begin{aligned}
    &2\sum_{j=1}^{2m}
    \bigl(\|R_j\otimes \I_{\cH}\psi\|^2
    +\|\I_{\cH_0}\otimes S_j\psi\|^2\bigr)\\
    &\quad-
    \sum_{j=1}^{2m}
    \|(R_j\otimes \I_{\cH}-\I_{\cH_0}\otimes S_j)\psi\|^2\\
    &\qquad=
    \sum_{j=1}^{2m}
    \|(R_j\otimes \I_{\cH}+\I_{\cH_0}\otimes S_j)\psi\|^2
    \ge0,
\end{aligned}
\]
and
\begin{equation}
\begin{split}
    &\sum_{j=1}^{2m}\|(R_j\otimes \I_{\cH}-\I_{\cH_0}\otimes S_j)\psi\|^2
    +
    3\sum_{j=1}^{2m}\|\I_{\cH_0}\otimes S_j\psi\|^2
    -
    \frac34\sum_{j=1}^{2m}\|R_j\otimes \I_{\cH}\psi\|^2\\
    =&
    \sum_{j=1}^{2m}
    \left\|
        \frac12R_j\otimes \I_{\cH}\psi-2\I_{\cH_0}\otimes S_j\psi
    \right\|^2
    \ge0.
\end{split}
\end{equation}
These quadratic-form inequalities hold on a common form core for
$C$, $H_0\otimes\I_{\cH}$, and $\I_{\cH_0}\otimes H$; passing to the
closures proves the claim.
\end{proof}
\subsection{Existence of QOT barycenters}\label{subsec:existence_proof}
In this subsection, we prove Theorem~\ref{thm:existenceQOTbary_intro}
by the direct method of the calculus of variations.
\begin{proof}

By assumption~\eqref{eq:exas},
$ \cB_{\mathbf A,\mathbf E}^{\mathbf C,\boldsymbol\alpha}
(\boldsymbol\sigma)<+\infty$.
Choose a minimizing sequence
\[
\left(
(\pi_s^{(n)})_{s\in[N]}
\right)_{n\in\N}
\subset
\Gamma_{\mathrm{joint}}^{\mathbf A,\mathbf E}
(\boldsymbol\sigma)
\]
for the joint formulation~\eqref{eq:joint-primal-setting}, so that
\[
\sum_{s=1}^N
\alpha_s\Tr\!\bigl(C_s\pi_s^{(n)}\bigr)
\rightarrow
\cB_{\mathbf A,\mathbf E}^{\mathbf C,\boldsymbol\alpha}
(\boldsymbol\sigma).
\]
After discarding finitely many terms if necessary, we may assume that
\[
\sum_{s=1}^N
\alpha_s\Tr\!\bigl(C_s\pi_s^{(n)}\bigr)
\le
 \cB_{\mathbf A,\mathbf E}^{\mathbf C,\boldsymbol\alpha}
(\boldsymbol\sigma)+1,
\qquad n\in\N.
\]

Fix $s_0\in[N]$ as in Theorem~\ref{thm:existenceQOTbary_intro}. We consider the following cases.
\medskip

\noindent

\emph{(i) Cost operators with compact resolvent.}
If $C_{s_0}$ is a cost operator with compact resolvent, define the set
\[
\cA_{s_0}:=\Bigl\{\pi\in\cS(\cH_0\otimes\cH_{s_0}):\ 
\begin{aligned}[t]
&\Tr_{\cH_0}\pi=\sigma_{s_0},\\
&\Tr(C_{s_0}\pi)\le
\bigl(\cB_{\mathbf A,\mathbf E}^{\mathbf C,\boldsymbol\alpha}
(\boldsymbol\sigma)+1\bigr)/\alpha_{s_0}
\end{aligned}
\Bigr\}.
\]
By Proposition~\ref{prop:compact1}(i), $\cA_{s_0}$ is compact in the
trace-norm topology, and by construction
$\pi_{s_0}^{(n)}\in \cA_{s_0}$ for every $n\in\N$.

\medskip

\noindent
\emph{(ii) Quadrature-generated cost.}
Assume $C_{s_0}$ is a quadrature-generated cost with associated one-body square-sums
$H^{(s_0)}_0$ on $\cH_0$ and $H^{(s_0)}$ on $\cH_{s_0}$. Define
\[
\cA_{s_0}:=\Bigl\{\pi\in\cS(\cH_0\otimes\cH_{s_0}):\ 
\begin{aligned}[t]
&\Tr_{\cH_0}\pi=\sigma_{s_0},\\
&\Tr\!\big(H^{(s_0)}_0\,\Tr_{\cH_{s_0}}\pi\big)\\
&\qquad\le \tfrac{4}{3}\Bigl(
\tfrac{\cB_{\mathbf A,\mathbf E}^{\mathbf C,\boldsymbol\alpha}
(\boldsymbol\sigma)+1}{\alpha_{s_0}}
+3\Tr(H^{(s_0)}\sigma_{s_0})\Bigr)
\end{aligned}
\Bigr\}.
\]
By Proposition~\ref{prop:quadrature_bounds}, we have
\[
\Tr\!\big(C_{s_0}\pi_{s_0}^{(n)}\big)
\ge \tfrac{3}{4}\Tr\!\big(H^{(s_0)}_0\,\Tr_{\cH_{s_0}}\pi_{s_0}^{(n)}\big)-3\Tr(H^{(s_0)}\sigma_{s_0}),
\]
since $\Tr_{\cH_0}\pi_{s_0}^{(n)}=\sigma_{s_0}$. Hence, by Proposition~\ref{prop:compact1}(ii), $\cA_{s_0}$ is compact in the trace-norm topology, and $\pi_{s_0}^{(n)}\in \cA_{s_0}$ for every $n\in\N$.

\medskip
In either case, after passing to a subsequence, we may assume that $\pi_{s_0}^{(n)}\to\pi_{s_0}^\ast$ in trace norm. Let
$\rho_n:=\Tr_{\cH_{s_0}}\pi_{s_0}^{(n)}$ and $\rho^\ast:=\Tr_{\cH_{s_0}}\pi_{s_0}^\ast$. By trace-norm continuity of the partial trace, $\rho_n\to\rho^\ast$ in trace norm.

For every $s\ne s_0$, one has $\pi_s^{(n)}\in\Gamma(\rho_n,\sigma_s)$. Theorem~\ref{theo:compact} gives a further trace-norm convergent subsequence of $(\pi_s^{(n)})_n$. Since $N$ is finite, successive subsequence extractions yield
$(\pi_s^{(n)})_{s\in[N]}\to(\pi_s^\ast)_{s\in[N]}$ in product trace norm, where $\pi_s^\ast\in\Gamma(\rho^\ast,\sigma_s)$ for every $s\in[N]$.

Moreover, $\rho_n\in\cS_{\mathbf A,\mathbf E}(\cH_0)$ for every $n\in\N$. By Proposition~\ref{prop:Hclosed},
$\Tr(A_\ell\rho^\ast)\leq E_\ell$ for every $\ell\in[k]$. Hence
$(\pi_s^\ast)_{s\in[N]}\in\Gamma_{\mathrm{joint}}^{\mathbf A,\mathbf E}(\boldsymbol\sigma)$.

Finally, for each $s\in[N]$, the map $\pi\in\cS(\cH_0\otimes\cH_s)
\longmapsto
\Tr(C_s\pi)\in[0,+\infty]$ is lower semicontinuous in trace norm. Indeed, by Proposition~\ref{prop:spectral-trunc-positive},
\[
\Tr(C_s\pi)
=
\sup\Bigl\{
\Tr(B\pi):
B\in\cB_S(\cH_0\otimes\cH_s),\
0\le B\le C_s
\Bigr\},
\]
and each map $\pi\mapsto\Tr(B\pi)$ is trace-norm continuous. Therefore,
\[
\sum_{s\in [N]}\alpha_s\,\Tr(C_s\pi_s^\ast)
\le \liminf_{n\to\infty}\sum_{s\in [N]}\alpha_s\,\Tr(C_s\pi_s^{(n)})
=\cB_{\mathbf A,\mathbf E}^{\mathbf C,\boldsymbol\alpha}(\boldsymbol{\sigma})\rev{,}
\]
so $(\pi_s^\ast)_{s\in [N]}$ is optimal. Let $\overline{\rho}:=\Tr_{\cH_s}\pi_s^\ast$ for any $s\in[N]$;
this is independent of $s$ by the common-marginal constraint. Then $\overline{\rho}\in\mathrm{Bar}^{\mathbf C,\boldsymbol{\alpha}}_{\mathbf A,\mathbf E}(\boldsymbol{\sigma})$ is a barycenter and
\[
 \cB_{\mathbf A,\mathbf E}^{\mathbf C,\boldsymbol\alpha}(\boldsymbol{\sigma})
=\min_{\rho\in\cS_{\mathbf A,\mathbf E}(\cH_0)}\ \sum_{s=1}^{N}\alpha_s\,\operatorname{QOT}\!\big(\rho,\sigma_s\mid C_s\big).
\]
\end{proof}

\subsection{Strong duality}\label{subsec:duality_proof}

We next establish the strong dual formulation~\eqref{eq:energy-constrained-duality-barycenter}. 
The proof is based on Fenchel--Rockafellar duality. Bunth--Pitrik--Titkos--Virosztek~\cite{BunthPitrikTitkosVirosztek2026} prove strong Kantorovich duality for a linearized channel-based quantum transport problem with generic costs and analyze optimal couplings on quantum bits for distinguished costs and restricted state classes. Here the primal variable is instead an $N$-tuple of couplings with a common barycentric marginal, the Hilbert spaces may be infinite-dimensional, the costs may be unbounded, and the barycentric marginal may satisfy the energy constraints~\eqref{eq:constrain}.

\begin{proof}[Proof of Theorem~\ref{thm:energy-constrained-barycenter-duality}]
Equip the product space $\prod_{s=1}^N
\cB_S(\cH_0\otimes\cH_s)$ with the maximum product norm $(B_s)_{s\in[N]}\mapsto\max_{s\in[N]}\|B_s\|_{\mathrm{op}}$ and define $f:\prod_{s=1}^N\cB_S(\cH_0\otimes\cH_s)
\rightarrow
(-\infty,+\infty]$ by
\begin{equation}\label{eq:def:f}
f((B_s)_{s\in[N]})
:=
\iota_{\cB_\le(\mathbf C)}((B_s)_{s\in[N]})
=
\begin{cases}
0,
&
B_s\le C_s,\quad s\in[N],
\\[1mm]
+\infty,
&
\text{otherwise},
\end{cases}
\end{equation}
with $\cB_\le(\mathbf C)
    :=
    \left\{
    (B_s)_{s=1}^N\in \prod_{s=1}^N \cB_S(\cH_0\otimes \cH_s):
    B_s\le C_s,\quad s\in[N]
    \right\}$. Define also $g_{\mathbf A,\mathbf E}:
\prod_{s=1}^N
\cB_S(\cH_0\otimes\cH_s)
\rightarrow
(-\infty,+\infty]$ by
\begin{equation}\label{eq:defg}
g_{\mathbf A,\mathbf E}((B_s)_{s\in[N]})
:=
\inf_{\mathcal R_{\mathbf A,\mathbf E}((B_s)_{s\in[N]})}
\left[
-\sum_{s=1}^N\alpha_s\Tr(V_s\sigma_s)
+
\sum_{\ell=1}^k\eta_\ell E_\ell
\right],
\end{equation}
with the convention $\inf\varnothing=+\infty$, where
\[
\begin{aligned}
\mathcal R_{\mathbf A,\mathbf E}((B_s)_{s\in[N]})
:=
\Bigl\{
&
\bigl(
(U_s^{(\ell)})_{s\in[N],\,\ell\in[k]},
(V_s)_{s\in[N]},
(\eta_\ell)_{\ell\in[k]}
\bigr):
\\
&
U_s^{(\ell)}\in\cB_S(\cH_0),\quad
V_s\in\cB_S(\cH_s),\quad
\eta_\ell\ge0,
\\
&
B_s=
\left(
\sum_{\ell=1}^k U_s^{(\ell)}
\right)\oplus V_s,
\qquad s\in[N],
\\
&
-\sum_{s=1}^N
\alpha_s U_s^{(\ell)}
\le
\eta_\ell A_\ell,
\qquad \ell\in[k]
\Bigr\}.
\end{aligned}
\]
The function $f$ is proper and convex, and
Proposition~\ref{prop:gAE-convex-proper} shows that
$g_{\mathbf A,\mathbf E}$ is proper and convex. By Proposition~\ref{prop:FR-left-identification-constrained},
\begin{equation}
\begin{aligned}
&
\inf_{(B_s)_{s\in[N]}
\in
\prod_{s=1}^N
\cB_S(\cH_0\otimes\cH_s)}
\left\{
f((B_s)_{s\in[N]})
+
g_{\mathbf A,\mathbf E}((B_s)_{s\in[N]})
\right\}
\\
&\qquad
=
-
\sup
\Biggl\{
\sum_{s=1}^N
\alpha_s\Tr(V_s\sigma_s)
-
\sum_{\ell=1}^k
\eta_\ell E_\ell
:
\bigl(
(U_s,V_s)_{s\in[N]},
(\eta_\ell)_{\ell\in[k]}
\bigr)
\in\mathsf{Pot}_{\mathbf A,\mathbf E}^{\mathbf C,\boldsymbol\alpha}(\boldsymbol\sigma)\Biggr\}.
\end{aligned}
\label{eq:FR-left-main}
\end{equation}
Thus the left-hand side of the Fenchel--Rockafellar formula is the
negative of the claimed dual value. On the other hand, the conjugate computations in
Propositions~\ref{prop:conjugate-f-cost-constraint}
and~\ref{prop:conjugate-gAE}, together with
Proposition~\ref{prop:FR-right-identification-constrained}, give
\begin{equation}
\begin{aligned}
&
\sup_{(\Lambda_s)_{s\in[N]}
\in
\prod_{s=1}^N
\cB_S(\cH_0\otimes\cH_s)^*}
\left\{
-f^*((-\Lambda_s)_{s\in[N]})
-
g_{\mathbf A,\mathbf E}^*((\Lambda_s)_{s\in[N]})
\right\}
\\
&\qquad
=
-
\inf_{(\pi_s)_{s\in[N]}
\in
\Gamma_{\mathrm{joint}}^{\mathbf A,\mathbf E}
(\boldsymbol\sigma)}
\sum_{s=1}^N
\alpha_s\Tr(C_s\pi_s)
\\
&\qquad
=
-
\cB_{\mathbf A,\mathbf E}^{\mathbf C,\boldsymbol\alpha}
(\boldsymbol\sigma).
\end{aligned}
\label{eq:FR-right-main}
\end{equation}
Thus the right-hand side is the negative of the joint primal value.

It remains to verify the Fenchel--Rockafellar qualification condition. Fix $\delta>0$ and consider $B_s^0:=-\delta\I_{\cH_0\otimes\cH_s}$ for $s\in[N]$. Taking $U_s^{(\ell)}=0$, $V_s=-\delta\I_{\cH_s}$ and $\eta_\ell=0$ shows that $(B_s^0)_{s\in[N]}\in\Dom g_{\mathbf A,\mathbf E}$. Moreover, since $C_s\ge0$, if $\|B_s-B_s^0\|_{\mathrm{op}}<\frac{\delta}{2}$ for all $s\in[N]$, then $B_s
\le
-\frac{\delta}{2}
\I_{\cH_0\otimes\cH_s}
\le
C_s$ for all $s\in[N]$. Thus $(B_s^0)_{s\in[N]}\in
\Dom f\cap\Dom g_{\mathbf A,\mathbf E}$,
and $f$ is continuous at this point.

The qualification condition of
Theorem~\ref{app:FRTheorem} is thus satisfied. Applying the
Fenchel--Rockafellar theorem and combining
\eqref{eq:FR-left-main} and \eqref{eq:FR-right-main} yields
\[
\cB_{\mathbf A,\mathbf E}^{\mathbf C,\boldsymbol\alpha}
(\boldsymbol\sigma)
=
\sup
\Biggl\{
\sum_{s=1}^N
\alpha_s\Tr(V_s\sigma_s)
-
\sum_{\ell=1}^k
\eta_\ell E_\ell
:
\bigl(
(U_s,V_s)_{s\in[N]},
(\eta_\ell)_{\ell\in[k]}
\bigr)
\in
\mathsf{Pot}_{\mathbf A,\mathbf E}^{\mathbf C,\boldsymbol\alpha}(\boldsymbol\sigma)
\Biggr\},
\]
which is precisely
\eqref{eq:energy-constrained-duality-barycenter}.
\end{proof}

\begin{proof}[Proof of Corollary~\ref{cor:unconstrained-barycenter-duality}]
Apply Theorem~\ref{thm:energy-constrained-barycenter-duality}
with $k=1$, $A_1=0$, and $E_1=0$. The primal problem is then
unconstrained, while the dual compatibility condition becomes
$\sum_{s=1}^N\alpha_sU_s\geq0$.

For any admissible family, set
$U:=\sum_{s=1}^N\alpha_sU_s$ and replace $U_s$ by $U_s-U$
for every $s\in[N]$. Since $U\geq0$, the cost inequalities remain
valid, the weighted sum of the first-marginal potentials becomes
zero, and the dual objective is unchanged. Conversely, every
family with weighted sum zero is admissible for the preceding
inequality constraint. Hence the two dual suprema coincide,
which proves \eqref{eq:unconstrained-barycenter-duality}.
\end{proof}
\subsection{\texorpdfstring{A posteriori finite-energy bounds for\\ quadrature-generated barycenters}{A posteriori finite-energy bounds for quadrature-generated barycenters}}
\label{subsec:finite_energy}

We next prove an a posteriori finite-energy estimate for barycenters
associated with quadrature-generated costs. The estimate shows that the
barycenter automatically inherits a finite one-body energy bound from the
input marginals, without imposing this bound as an additional constraint.
This is analogous to the finite-moment inheritance property of classical
Wasserstein barycenters.

We denote by $\Gamma_{\mathrm{joint}}(\boldsymbol{\sigma})$ the joint feasible set encoding the
common barycentric marginal constraint\rev{:}
\begin{equation}\label{eq:joint-feasible}
\Gamma_{\mathrm{joint}}(\boldsymbol{\sigma})
:=
\left\{
(\pi_s)_{s\in[N]}
\;\middle|\;
\begin{aligned}
&\pi_s\in\cS(\cH_0\otimes\cH_s), \, \Tr_{\cH_0}(\pi_s)=\sigma_s,\\
&\text{and } \Tr_{\cH_s}(\pi_s)
 =\Tr_{\cH_{s'}}(\pi_{s'}),\forall s,s'\in[N]
\end{aligned}
\right\}.
\end{equation}

The unconstrained QOT barycenter problem \eqref{eq:unconstrained-barycenter-primal} can be written as
\begin{equation}\label{eq:joint-primal-setting-un}
 \cB_0^{\mathbf C,\boldsymbol{\alpha}}(\boldsymbol{\sigma})
:=
\inf_{\rho\in\cS(\cH_0)}
\sum_{s=1}^N
\alpha_s\,
\operatorname{QOT}\!\big(\rho,\sigma_s\mid C_s\big)=\inf_{(\pi_s)_{s\in[N]}\in\Gamma_{\mathrm{joint}}(\boldsymbol{\sigma})}
\ \sum_{s=1}^N \alpha_s\,\Tr\!\big(\pi_s\,C_s\big).
\end{equation}
\begin{proof}[Proof of Proposition~\ref{prop:finitesecondemomentrho}]
By Proposition~\ref{prop:quadrature_bounds}, for any
$(\pi_s)_{s\in[N]}\in
\Gamma_{\mathrm{joint}}(\boldsymbol\sigma)$
with common barycentric marginal $\rho$,
\[
\Tr(C_s\pi_s)
\le
2\Tr\!\left(
H_0^{(s)}\,\Tr_{\cH_s}\pi_s
\right)
+
2\Tr\!\left(
H^{(s)}\,\Tr_{\cH_0}\pi_s
\right)=
2\Tr(H_0^{(s)}\rho)
+
2\Tr(H^{(s)}\sigma_s).
\]
The last equality holds since $\Tr_{\cH_s}\pi_s=\rho$ and $\Tr_{\cH_0}\pi_s=\sigma_s$. Therefore,
\[
\cB_0^{\mathbf C,\boldsymbol\alpha}
(\boldsymbol\sigma)
\le
2\left(
M+
\sum_{s=1}^N
\alpha_s\Tr(H^{(s)}\sigma_s)
\right),
\]
where $M:=\inf_{\rho\in\cS(\cH_0)}
\sum_{s=1}^N
\alpha_s\Tr(H_0^{(s)}\rho)$.

On the other hand, let $\overline\rho$ be a barycenter and let
$(\pi_s^\ast)_{s\in[N]}$ be optimal couplings with common
barycentric marginal $\overline\rho$. By the lower bound in
Proposition~\ref{prop:quadrature_bounds},
\[
\Tr(C_s\pi_s^\ast)
\ge
\frac34\Tr(H_0^{(s)}\overline\rho)
-
3\Tr(H^{(s)}\sigma_s).
\]
Summing over $s$ gives
\[
\cB_0^{\mathbf C,\boldsymbol\alpha}
(\boldsymbol\sigma)
\ge
\frac34
\sum_{s=1}^N
\alpha_s
\Tr
\left( H_0^{(s)}\overline\rho\right)
-
3
\sum_{s=1}^N
\alpha_s\Tr(H^{(s)}\sigma_s).
\]
Combining the two bounds yields
\begin{equation}
\sum_{s=1}^N
\alpha_s\Tr
\left( H_0^{(s)}\overline\rho
\right)
\le
\frac83 M
+
\frac{20}{3}
\sum_{s=1}^N
\alpha_s\Tr(H^{(s)}\sigma_s)
<+\infty.
\end{equation}
\end{proof}
\subsection{\texorpdfstring{Examples: finite-dimensional barycenters and\\ $2$-quantum Wasserstein-type barycenters}{Examples: finite-dimensional barycenters and 2-quantum Wasserstein-type barycenters}}
\label{subsec:examples-qw2}

The existence and duality results established above require one cost operator from either of two structural classes: cost operators with compact resolvent and quadrature-generated costs.
These assumptions cover a broad range of situations, including both finite-dimensional
quantum systems and the main continuous-variable quadratic transport models considered
in this paper.

We begin with the finite-dimensional case. If $\cH_0$ and $(\cH_s)_{s\in[N]}$ are all
finite-dimensional, then every cost operator
\[
C_s\in \cL_{\geq}(\cH_0\otimes\cH_s)
\]
is automatically bounded and has finite spectrum. In particular, each $C_s$ is a cost operator with compact resolvent.
Hence, from Theorem \ref{thm:existenceQOTbary_intro} and Corollary~\ref{cor:unconstrained-barycenter-duality}, \rev{we obtain the following:}
\begin{proposition}
Let $N\in\mathbb N$, $\boldsymbol{\alpha}:=(\alpha_s)_{s\in[N]}$ be positive weights satisfying $\sum_{s=1}^N \alpha_s = 1$. Let $\cH_0$ and $(\cH_s)_{s\in[N]}$ be finite-dimensional Hilbert spaces, $\boldsymbol{\sigma}:=(\sigma_s)_{s\in[N]}$ with $\sigma_s\in\cS(\cH_s)$,
and $\mathbf C:=(C_s)_{s\in[N]}$ be a family of cost operators.

Then the unconstrained QOT barycenter problem
\eqref{eq:unconstrained-barycenter-primal}
admits a minimizer and satisfies the duality formula
\eqref{eq:unconstrained-barycenter-duality}.
\end{proposition}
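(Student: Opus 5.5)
The plan is to verify the hypotheses of Theorem~\ref{thm:existenceQOTbary_intro} and of Corollary~\ref{cor:unconstrained-barycenter-duality} in the unconstrained setting, and then apply both results directly.

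\emph{Cost assumption.} Since $\dim\cH_0\otimes\cH_s<\infty$, every $C_s\in\cL_{\geq}(\cH_0\otimes\cH_s)$ is automatically everywhere defined and bounded, with finite spectrum. For any $z\notin\mathrm{spec}(C_s)$, the resolvent $R_z(C_s)$ is a bounded operator on a finite-dimensional space. It is therefore of finite rank and hence compact, in the sense of Definition~\ref{def:compact-resolvent}. We may thus take any $s_0\in[N]$, say $s_0=1$, and condition~(i) of Theorem~\ref{thm:existenceQOTbary_intro} holds.

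\emph{Finite-cost feasibility.} Fix any $\rho_0\in\cS(\cH_0)$, for instance the maximally mixed state. For each $s$, the product state $\rho_0\otimes\sigma_s$ belongs to $\Gamma(\rho_0,\sigma_s)$. Moreover,
\[
\Tr\bigl(C_s(\rho_0\otimes\sigma_s)\bigr)\le\|C_s\|_{\mathrm{op}}<+\infty .
\]
Hence $\sum_s\alpha_s\operatorname{QOT}(\rho_0,\sigma_s\mid C_s)\le\max_s\|C_s\|_{\mathrm{op}}<+\infty$, which is exactly the feasibility assumption~\eqref{eq:exas}.

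\emph{Application of the earlier results.} For existence we need the unconstrained problem, which Theorem~\ref{thm:existenceQOTbary_intro} covers as follows. Take $k=1$, $A_1=0$ and $E_1=0$. Then $\cS_{\mathbf A,\mathbf E}(\cH_0)=\cS(\cH_0)$, which is nonempty, and the theorem yields a minimizer of \eqref{eq:unconstrained-barycenter-primal}. The duality formula \eqref{eq:unconstrained-barycenter-duality} then follows directly from Corollary~\ref{cor:unconstrained-barycenter-duality}, whose hypotheses are precisely the two facts verified above.

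There is no genuine obstacle here. The only step that deserves a line of care is the observation that compactness of the resolvent is automatic in finite dimensions.
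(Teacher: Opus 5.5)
Your proposal is correct and follows the paper's argument. In finite dimensions every cost operator is bounded with finite spectrum and hence has compact resolvent, so Theorem~\ref{thm:existenceQOTbary_intro} and Corollary~\ref{cor:unconstrained-barycenter-duality} apply directly. You also make explicit two steps the paper leaves implicit: the finite-cost feasibility via the product coupling $\rho_0\otimes\sigma_s$, and the reduction to the unconstrained problem by taking $k=1$, $A_1=0$, $E_1=0$.
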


Our second main class of examples is given by the continuous-variable quadratic
transport costs arising in quantum optimal transport. These include the
state-state quadratic transport cost for quantum densities introduced by Golse--Mouhot--Paul and further developed by Caglioti--Golse--Paul~\cite{caglioti2021towards,Golse-Mouhot-Paul:2016}
and the channel-based quadratic transport cost introduced by
De Palma and Trevisan~\cite{DPaTre19}, which can be treated in a unified $\tau$-formalism.

Let $m\in\mathbb N$ and $\tau\in\{\pm1\}$. Recall the space $L_2^{(\tau)}(\R^m)$ and $\hat{\mathbf{R}}^{(\tau)}$ defined in \eqref{eq:tau}. We use hats to denote operators acting on these
continuous-variable Hilbert spaces. Consider the bipartite space $L_2(\mathbb{R}^m)\otimes L_2^{(\tau)}(\mathbb{R}^m)$. The unified canonical vector $\hat{\mathbf{R}}_\tau$ is defined in \eqref{eq:bitau}. We define the cost operator
\begin{equation}\label{eq:tau-cost}
\hat C^{(\tau)}
:=
\sum_{j=1}^{2m}
\bigl(\hat R_j\otimes \I_{L_2^{(\tau)}(\mathbb{R}^m)}-\I_{L_2(\mathbb{R}^m)}\otimes \hat R_j^{(\tau)}\bigr)^2.
\end{equation}
For $\tau=+1$ this recovers the Golse--Mouhot--Caglioti--Paul cost, whereas
for $\tau=-1$ it recovers the De Palma--Trevisan cost. They are typically unbounded and fail to have compact resolvent; nevertheless, they belong to the class of
quadrature-generated costs introduced in Section~\ref{intro:statementofresults}. In the quantum-state formulation, the associated
one-body square-sums are both equal to the harmonic oscillator Hamiltonian
\begin{equation}\label{eq:oscillator}
\hat{\mathsf{H}}_{\mathrm{osc}}
=
-\hbar^2\nabla_x^2+\sum_{j=1}^m x_j^2,
\end{equation}
whereas in the quantum-channel formulation they are given by
$(\hat{\mathsf{H}}_{\mathrm{osc}},\hat{\mathsf{H}}_{\mathrm{osc}}^\top)$.

In accordance with the unified $\tau$-formalism, we define
\[
\Gamma_{+1}(\hat\rho,\hat\sigma):=\Gamma(\hat\rho,\hat\sigma),
\qquad
\Gamma_{-1}(\hat\rho,\hat\sigma):=\Gamma(\hat\rho,\hat\sigma^\top),
\]
where $\Gamma(\hat\rho,\hat\sigma)$ is
given by Definition~\ref{def:quantumcouplings}. Accordingly, for $\tau\in\{\pm1\}$ we define
\begin{equation}\label{eq:qw}
\QW_{2,\tau}^2(\hat\rho,\hat\sigma)
:=
\inf_{\hat\pi\in\Gamma_\tau(\hat\rho,\hat\sigma)}
\Tr\!\bigl(\hat C^{(\tau)}\hat\pi\bigr).
\end{equation}
The corresponding barycenter problems are defined in
\eqref{eq:tau-barycenter-intro}.

Applying Theorem~\ref{thm:existenceQOTbary_intro},
Corollary~\ref{cor:unconstrained-barycenter-duality}, and
Proposition~\ref{prop:finitesecondemomentrho}, we obtain the following.
\begin{proposition}
Let $m,N\in\mathbb N$, let $\boldsymbol\alpha=(\alpha_s)_{s\in[N]}$ be
positive weights satisfying $\sum_{s=1}^N\alpha_s=1$, and let
$\hat{\boldsymbol\sigma}=(\hat\sigma_s)_{s\in[N]}$ with
$\hat\sigma_s\in\cS(L_2(\mathbb R^m))$ such that
\[
\Tr(\hat{\mathsf{H}}_{\mathrm{osc}}\hat\sigma_s)<+\infty,
\qquad s\in[N].
\]
Then the $2$-quantum Wasserstein barycenter problems
\eqref{eq:tau-barycenter-intro} admit minimizers and satisfy the duality
formula~\eqref{eq:unconstrained-barycenter-duality}. Moreover, every barycenter
$\hat{\overline\rho}_\tau$ has finite energy with respect to the harmonic
oscillator Hamiltonian:
\[
\Tr(\hat{\mathsf{H}}_{\mathrm{osc}}\hat{\overline\rho}_\tau)<\infty.
\]
\end{proposition}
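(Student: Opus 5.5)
The plan is to deduce the statement from the general results by checking their hypotheses for the quadratic costs $\hat C^{(\tau)}$ of \eqref{eq:tau-cost}, and by realizing $L_2^{(\tau)}(\R^m)$ and the coupling sets $\Gamma_\tau$ as instances of the abstract framework. Set $\cH_0=L_2(\R^m)$ and $\cH_s=L_2^{(\tau)}(\R^m)$ for every $s\in[N]$. Take as second-factor states $\hat\sigma_s$ if $\tau=+1$ and $\hat\sigma_s^\top$ if $\tau=-1$; then $\QW_{2,\tau}^2(\hat\rho,\hat\sigma_s)=\operatorname{QOT}(\hat\rho,\hat\sigma_s^{(\tau)}\mid \hat C^{(\tau)})$. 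The first quadrature family is the standard CCR system $W_{\mathrm{CCR}}$, which is irreducible with $\Theta=\hbar\Omega_m$. The second is $W_{\mathrm{CCR}}$ for $\tau=+1$, or the transpose system $W^\top_{\mathrm{CCR}}$ for $\tau=-1$; the latter is a regular $(-\hbar\Omega_m)$-Weyl system, with nondegenerate commutation matrix, as in Subsection~\ref{subsec:dual-transpose-CCR}. Hence each $\hat C^{(\tau)}$ is a quadrature-generated cost in the closure sense of Remark~\ref{rem:quadratic-hamiltonian-closure-convention}. Its one-body square-sums are $\hat{\mathsf H}_{\mathrm{osc}}$ on the first factor and $\hat{\mathsf H}_{\mathrm{osc}}$ or $\hat{\mathsf H}_{\mathrm{osc}}^\top$ on the second. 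By Corollary~\ref{coro:nelson-square-compact-resolvent}, $\hat{\mathsf H}_{\mathrm{osc}}$ has compact resolvent.

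Next I will verify the energy and finiteness hypotheses. For $\tau=-1$ the key identity is $\Tr(\hat{\mathsf H}_{\mathrm{osc}}^\top\hat\sigma_s^\top)=\Tr(\hat{\mathsf H}_{\mathrm{osc}}\hat\sigma_s)$. One way to see it is to conjugate by $J_m$: by Proposition~\ref{prop:tau-standard-CCR-identification}, $J_m\hat{\mathsf H}_{\mathrm{osc}}^\top J_m^\dagger=\hat{\mathsf H}_{\mathrm{osc}}$, since the sign flip $\hat P_j\mapsto-\hat P_j$ leaves squares unchanged, and $J_m\hat\sigma^\top J_m^\dagger$ is the complex conjugate of $\hat\sigma$ in the Fock basis. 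Alternatively, compare the scalar spectral measures through the spectral projections $E(B)^\top$ and use $\Tr(X^\top Y^\top)=\Tr(YX)$. Either way, condition (ii) of Theorem~\ref{thm:existenceQOTbary_intro} holds for every $s$ by assumption. For the finite-cost hypothesis \eqref{eq:exas}, I will take $\rho_0$ to be the vacuum (or any Gaussian state with finite energy) and use the product coupling $\rho_0\otimes\hat\sigma_s^{(\tau)}$. The upper bound of Proposition~\ref{prop:quadrature_bounds} then gives $\Tr(\hat C^{(\tau)}\pi)\le 2\Tr(\hat{\mathsf H}_{\mathrm{osc}}\rho_0)+2\Tr(\hat{\mathsf H}_{\mathrm{osc}}\hat\sigma_s)<\infty$. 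The computation of $\Tr((\hat{\mathsf H}_{\mathrm{osc}}\otimes\I)\pi)$ via marginals is justified by Proposition~\ref{prop:partialtrace-unbounded}.

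With these checks, Theorem~\ref{thm:existenceQOTbary_intro} (unconstrained, no $A_\ell$) yields existence, and Corollary~\ref{cor:unconstrained-barycenter-duality} yields \eqref{eq:unconstrained-barycenter-duality} with $\hat\sigma_s^{(\tau)}$ in the dual objective. For the energy bound I apply Proposition~\ref{prop:finitesecondemomentrho} with $H_0^{(s)}=\hat{\mathsf H}_{\mathrm{osc}}$. Here $M\le\Tr(\hat{\mathsf H}_{\mathrm{osc}}\proj{0})<\infty$, which gives $\Tr(\hat{\mathsf H}_{\mathrm{osc}}\hat{\overline\rho}_\tau)\le\frac83M+\frac{20}{3}\sum_s\alpha_s\Tr(\hat{\mathsf H}_{\mathrm{osc}}\hat\sigma_s)<\infty$.

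The main obstacle is the $\tau=-1$ bookkeeping. One must confirm that the transposed state $\hat\sigma^\top$ on $L_2(\R^m)^*$ is a genuine density operator, and that the spectral integral of the closed operator $\hat{\mathsf H}_{\mathrm{osc}}^\top$ (defined through the transposed Weyl system) against it equals $\Tr(\hat{\mathsf H}_{\mathrm{osc}}\hat\sigma)$. I would handle this by transporting everything with the unitary $\I\otimes J_m$. This converts the $\tau=-1$ problem into a $\tau=+1$-type problem with standard quadratures and reflected momentum, where the identification of one-body Hamiltonians is immediate.
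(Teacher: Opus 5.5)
Your proposal is correct and follows the same route as the paper. The paper simply invokes Theorem~\ref{thm:existenceQOTbary_intro} in the quadrature-generated case (ii), Corollary~\ref{cor:unconstrained-barycenter-duality}, and Proposition~\ref{prop:finitesecondemomentrho} for the costs $\hat C^{(\tau)}$; the verifications you add are exactly the hypotheses it leaves implicit, namely the $\tau=-1$ identity $\Tr(\hat{\mathsf H}_{\mathrm{osc}}^\top\hat\sigma_s^\top)=\Tr(\hat{\mathsf H}_{\mathrm{osc}}\hat\sigma_s)$ via $J_m$, the finite-cost product coupling, and $M\le\Tr(\hat{\mathsf H}_{\mathrm{osc}}|0\rangle\langle0|)<\infty$.
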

 \section{Gaussian 2-QW barycenters: reduction and uniqueness}\label{sec:gaussian-states}

\subsection{Gaussian states and Gaussian couplings}
\label{subsec:gaussian-states-couplings}

We first recall Gaussian states in the abstract setting of an irreducible
regular Weyl system; see \cite[Sec.~III.A]{HolevoWerner2001BosonicGaussian}
and \cite[Chap.~12]{Holevo2019QSCI}. We then specialize to the standard
CCR system, its transpose representation introduced in
Section~\ref{subsec:dual-transpose-CCR}, and the bipartite
$\tau$-systems used in the $2$-QW problems.

Let $\Theta\in\cM_{2m}(\R)$ be a nondegenerate skew-symmetric matrix,
let $W:\R^{2m}\to\mathcal U(\cH)$ be an irreducible regular
$\Theta$-Weyl system, and let
$\mathbf R=(R_1,\ldots,R_{2m})$ be its canonical generators. We write $\mathsf H_{\mathbf R}:=\sum_{j=1}^{2m}R_j^2$, understood in the closure sense of
Remark~\ref{rem:quadratic-hamiltonian-closure-convention}.

We first specify the class of states for which the canonical first and second moments are well defined.
\begin{deff}[States with finite canonical second moments;
see {\cite[Secs.~2.3.3 and~12.3.1]{Holevo2019QSCI}}]
\label{def:finite-canonical-second-moments}
We define
\[
\cS_2^R(\cH)
:=
\left\{
\rho\in\cS(\cH):
\Tr(\mathsf H_{\mathbf R}\rho)<+\infty
\right\}.
\]
For $\rho\in\cS_2^R(\cH)$, its canonical mean vector
$\mathbf m_\rho=(m_{\rho,1},\ldots,m_{\rho,2m})^\top\in\R^{2m}$
and covariance matrix
$\Sigma_\rho=[(\Sigma_\rho)_{jk}]_{j,k\in[2m]}
\in\cM_{2m,\mathrm{sym}}(\R)$ are defined by
$m_{\rho,j}:=\Tr(\rho R_j)$ and
\[
(\Sigma_\rho)_{jk}
:=
\frac12
\Tr\!\left(
\rho\left[
(R_j-m_{\rho,j}\I_\cH)(R_k-m_{\rho,k}\I_\cH)
+
(R_k-m_{\rho,k}\I_\cH)(R_j-m_{\rho,j}\I_\cH)
\right]
\right).
\]
\end{deff}

For every state, the Weyl characteristic function is well defined; on
the finite-moment class, it describes Gaussianity as follows.
\begin{deff}[Gaussian states]
\label{def:weyl-characteristic-gaussian-state}
For every $\rho\in\cS(\cH)$, its Weyl characteristic function is
\[
\chi_\rho(\boldsymbol\xi)
:=
\Tr\bigl(\rho W(\boldsymbol\xi)\bigr),
\qquad
\boldsymbol\xi\in\R^{2m}.
\]
For $\rho\in\cS_2^R(\cH)$, following
\cite[Sec.~III.A]{HolevoWerner2001BosonicGaussian} and
\cite[Sec.~12.3.2]{Holevo2019QSCI}, the state $\rho$ is called Gaussian if
\[
\chi_\rho(\boldsymbol\xi)
=
\exp\!\left(
i\mathbf m_\rho^\top\boldsymbol\xi
-\frac12\boldsymbol\xi^\top\Sigma_\rho\boldsymbol\xi
\right),
\qquad
\boldsymbol\xi\in\R^{2m}.
\]
We denote the set of Gaussian states on $\cH$ by $\cS^G(\cH)$.
\end{deff}
The possible covariance matrices of states with finite canonical second moments are characterized by the Robertson--Schr\"odinger uncertainty relation.
\begin{theorem}[Robertson--Schr\"odinger uncertainty relation;
see {\cite[Thm.~12.17]{Holevo2019QSCI}}]
\label{thm:rs-uncertainty-canonical}
For every $\rho\in\cS_2^R(\cH)$,
\[
\Sigma_\rho+\frac{i}{2}\Theta\succeq0.
\]
Conversely, if
$\Sigma\in\cM_{2m,\mathrm{sym}}(\R)$ satisfies
$\Sigma+\frac{i}{2}\Theta\succeq0$, then for every
$\mathbf m\in\R^{2m}$ there exists a Gaussian state with mean
$\mathbf m$ and covariance $\Sigma$.
\end{theorem}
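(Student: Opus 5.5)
The proof has two parts. The forward inequality comes from positivity of $X^*X$ for complex linear combinations $X$ of the centered generators. The converse is an explicit construction via the Stone--von Neumann theorem (Theorem~\ref{theo:stone}), Williamson's normal form and thermal states.

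\emph{Necessity.} Let $\rho\in\cS_2^R(\cH)$ have mean $\mathbf m$, and write $\rho=\sum_n p_n\ket{\psi_n}\!\bra{\psi_n}$ in its spectral decomposition. Since $\Tr(\mathsf H_{\mathbf R}\rho)<\infty$, every $\psi_n$ with $p_n>0$ lies in $\Dom(\mathsf H_{\mathbf R}^{1/2})$, and the key analytic input is that $\Dom(\mathsf H_{\mathbf R}^{1/2})\subset\Dom(R_j)$ for every $j$. I would prove this by transporting to the standard CCR representation with Theorem~\ref{theo:stone}. There $\mathsf H_{\mathbf R}\cong\hat{\mathbf R}^\top LL^\top\hat{\mathbf R}$ with $LL^\top\succ0$, which dominates a multiple of the harmonic oscillator, so its form domain is contained in the domains of the $\hat R_j$.

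On this form domain the covariance entries are $(\Sigma_\rho)_{jk}=\sum_n p_n\Real\langle (R_j-m_j)\psi_n,(R_k-m_k)\psi_n\rangle$. The commutation relations also extend from $\cH_R^\infty$ to the form domain:
\[
\langle R_j\psi,R_k\psi\rangle-\langle R_k\psi,R_j\psi\rangle=i\Theta_{jk}\|\psi\|^2 .
\]
This follows because $\cH_R^\infty$ is a core for $\mathsf H_{\mathbf R}^{1/2}$ (Proposition~\ref{prop:quadratic-hamiltonian-esa}), and both sides are continuous in the form norm. Combining the two, $\sum_n p_n\langle (R_j-m_j)\psi_n,(R_k-m_k)\psi_n\rangle=(\Sigma_\rho)_{jk}+\tfrac{i}{2}\Theta_{jk}$.

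Now fix $z\in\C^{2m}$ and set $X:=\sum_k z_k(R_k-m_k)$ on the form domain. Then
\[
0\le\sum_n p_n\|X\psi_n\|^2=\sum_{j,k}\overline{z_j}z_k\Bigl((\Sigma_\rho)_{jk}+\tfrac{i}{2}\Theta_{jk}\Bigr)=z^\dagger\Bigl(\Sigma_\rho+\tfrac{i}{2}\Theta\Bigr)z .
\]
The sum converges absolutely by the second-moment bound. This gives $\Sigma_\rho+\tfrac{i}{2}\Theta\succeq0$.

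\emph{Sufficiency.} Suppose $\Sigma+\tfrac{i}{2}\Theta\succeq0$ and fix $\mathbf m\in\R^{2m}$.
\begin{enumerate}
\item Reduce to the standard system. By Theorem~\ref{theo:stone} and irreducibility, there are a unitary $U$ and a matrix $L$ with $R(\xi)\cong\xi^\top L^\top\hat{\mathbf R}$. The condition becomes $\Sigma'+\tfrac{i\hbar}{2}\Omega_m\succeq0$ for $\Sigma':=L^{-\top}\Sigma L^{-1}$.
\item Show $\Sigma'\succ0$. If $\Sigma'x=0$ for a real $x\ne0$, choose a real $y$ with $x^\top\Omega_m y\ne0$. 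Testing the condition on $z=x+\epsilon e^{i\theta}y$ produces a term linear in $\epsilon$ with nonzero coefficient for a suitable phase $\theta$, which violates positivity as $\epsilon\to0$.
\item Apply Williamson's theorem: $\Sigma'=S^\top\bigl(\oplus_k\nu_k\I_2\bigr)S$ with $S$ symplectic. The uncertainty condition is symplectically invariant and forces $\nu_k\ge\hbar/2$.
\item Take the product of one-mode thermal states with these symplectic eigenvalues, including vacua when $\nu_k=\hbar/2$. Conjugate by the metaplectic Gaussian unitary implementing $S$, then by the Weyl displacement shifting the mean to $\mathbf m$. A direct computation of the characteristic function gives the Gaussian form of Definition~\ref{def:weyl-characteristic-gaussian-state}, and the second moments are finite.
\item Pull the state back by $U$.
\end{enumerate}

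I expect the main obstacle to be the domain bookkeeping in the necessity part, namely justifying the extended commutator identity and the interchange of the sum over $n$ with the quadratic form. Routing everything through Theorem~\ref{theo:stone}, where the oscillator form domain is explicit, should handle it.
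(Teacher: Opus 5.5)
Your proof is correct. The paper gives no proof of its own here (it imports the statement from Holevo), so the comparison is with the cited argument.

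The necessity half is the standard one, positivity of $\sum_n p_n\|X\psi_n\|^2$ on the form domain with the commutator extended from the smooth core, and it is complete. You do not need Theorem~\ref{theo:stone} for the inclusion $\Dom(\mathsf H_{\mathbf R}^{1/2})\subset\Dom(R_j)$. It follows from the core property you already invoke: on $\cH_R^\infty$ one has $\sum_j\|R_j\psi\|^2=\|\mathsf H_{\mathbf R}^{1/2}\psi\|^2$, and each $R_j$ is closed.

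For the converse you take a genuinely different route. The argument behind the citation stays at the level of characteristic functions. By the noncommutative Bochner--Khinchin criterion, it suffices that the twisted kernel $\chi(\xi_k-\xi_j)e^{\frac i2\xi_j^\top\Theta\xi_k}$ be positive definite. For $\chi(\xi)=e^{-\frac12\xi^\top\Sigma\xi}$ this kernel is a rank-one positive factor times $\exp\bigl(\xi_j^\top(\Sigma+\frac i2\Theta)\xi_k\bigr)$. That factor is positive definite by the Schur product theorem whenever $\Sigma+\frac i2\Theta\succeq0$.

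\begin{itemize}
\item The cited route handles every admissible $\Sigma$ at once. It needs neither $\Sigma\succ0$, nor a normal form, nor metaplectic unitaries. But it rests on the Bochner--Khinchin theorem, which the paper never develops.
\item Your construction uses only tools already in the paper: Theorems~\ref{theo:stone} and~\ref{thm:williamson}, Gaussian unitaries and displacements. It is explicit, and it has the same Williamson/Gaussian-unitary/displacement structure that the paper later exploits in Lemma~\ref{lem:Williamson-Gaussian-source}. The price is the extra step $\Sigma'\succ0$ (Williamson is stated there only for positive definite matrices) and the thermal characteristic-function computation.
\end{itemize}

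There are two bookkeeping fixes.
\begin{itemize}
\item In the standard picture you must displace to $L^{-\top}\mathbf m$, not $\mathbf m$. Conjugation by $U$ sends a standard mean $\hat{\mathbf m}$ to $L^\top\hat{\mathbf m}$.
\item With your convention $\Sigma'=S^\top DS$, you should implement $S^\top$, not $S$. Under the paper's convention, $\hat U_S$ maps a covariance $V$ to $SVS^\top$.
\end{itemize}
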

Since the Weyl system is irreducible, the characteristic function
determines the state uniquely. Thus every admissible pair
$(\mathbf m,\Sigma)$ determines a unique Gaussian state, which allows us to introduce the following projection.
\begin{deff}[Gaussian projection]
\label{def:gaussian-projection}
For $\rho\in\cS_2^R(\cH)$, the \emph{Gaussian projection}
$\mathcal G(\rho)$ is the unique Gaussian state with the same mean and
covariance as $\rho$:
\[
\mathbf m_{\mathcal G(\rho)}=\mathbf m_\rho,
\qquad
\Sigma_{\mathcal G(\rho)}=\Sigma_\rho.
\]
\end{deff}

Recall from
Section~\ref{subsec:dual-transpose-CCR} that, for $\tau\in\{\pm1\}$, $L_2^{(\tau)}(\R^m)$ denotes either
$L_2(\R^m)$ or its dual $L_2(\R^m)^*$, with canonical generators
$\hat{\mathbf R}^{(\tau)}=\hat{\mathbf R}$ for $\tau=+1$ and
$\hat{\mathbf R}^{(\tau)}=\hat{\mathbf R}^{\top}$ for $\tau=-1$.
For either choice of $\tau$, the possible covariance matrices of states on $L_2^{(\tau)}(\R^m)$ are precisely the matrices in
$\mathrm{Cov}_m^{\mathrm Q}$ defined in \eqref{eq:covqm}. Indeed, the Robertson--Schr\"odinger condition is
$\Sigma+\frac{i\tau\hbar}{2}\Omega_m\succeq0$, and the two choices of $\tau$ are equivalent by complex conjugation since $\Sigma$ is real symmetric. In particular, the notions of Gaussian state and Gaussian projection apply to both CCR representations. Moreover, transposition preserves Gaussianity and, whenever the Gaussian projection is defined,
$\mathcal G(\hat\sigma^\top)=\mathcal G(\hat\sigma)^\top$.

\subsubsection{Symplectic structure and Gaussian unitaries}
\label{subsubsec:symplectic-coherent}

A matrix $S\in\cM_{2m}(\R)$ is called symplectic if $S\Omega_mS^\top=\Omega_m$. The group of such matrices is denoted by $\Sp(2m,\R)$. The basic normal form for positive covariance matrices is provided by
Williamson's theorem; see \cite{williamson1936algebraic,serafini2023quantum,weedbrook2012gaussian}.

\begin{theorem}[Williamson's theorem]
\label{thm:williamson}
For every
$\Sigma\in\cM_{2m,\mathrm{sym}}(\R)$ with $\Sigma\succ0$, there exist
$S\in\Sp(2m,\R)$ and positive numbers
$\nu_1,\ldots,\nu_m$, unique up to permutation, such that
\[
\Sigma
=
S\left(
\bigoplus_{j=1}^m{}_D\nu_j\I_2
\right)S^\top.
\]
The numbers $\nu_1,\ldots,\nu_m$ are called the symplectic eigenvalues
of $\Sigma$.
\end{theorem}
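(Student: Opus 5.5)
We prove Williamson's theorem by reducing it to the normal form of a real skew-symmetric matrix. Since $\Sigma\succ0$, the square root $\Sigma^{1/2}$ is well defined and invertible. Consider
\[
K:=\Sigma^{-1/2}\Omega_m\Sigma^{-1/2},
\]
which is real, skew-symmetric and nondegenerate. By the real spectral theorem for normal matrices applied to $K$, there is an orthogonal matrix $O\in\cM_{2m}(\R)$ such that
\[
O^\top K O=\bigoplus_{j=1}^m{}_D\,\lambda_j\,\omega,
\qquad \lambda_j>0 .
\]
The eigenvalues of $K$ are $\pm i\lambda_j$, and reordering each $2\times2$ block, conjugating by $\smat{0&1\\1&0}$ if needed, makes the signs positive. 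Set $D:=\bigoplus_{j}{}_D\lambda_j\I_2$ and $\nu_j:=\lambda_j^{-1}$.

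\emph{Construction of $S$.} Define $S:=\Sigma^{1/2}OD^{1/2}$. Since $D$ commutes with $\bigoplus_j\omega$,
\[
S^{-1}\Omega_m S^{-\top}=D^{-1/2}O^\top K O D^{-1/2}=\bigoplus_{j}{}_D\omega=\Omega_m .
\]
Hence $S\in\Sp(2m,\R)$. We also have $S^{-1}\Sigma S^{-\top}=D^{-1/2}O^\top O D^{-1/2}=D^{-1}$. Therefore
\[
\Sigma=S\Bigl(\bigoplus_{j=1}^m{}_D\nu_j\I_2\Bigr)S^\top .
\]
The only point requiring care is the sign normalization of the blocks, which is a routine check.

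\emph{Uniqueness of the $\nu_j$.} Suppose $\Sigma=SD'S^\top$ with $S$ symplectic and $D'=\bigoplus_j\nu_j\I_2$. Then $\Omega_m^{-1}=-\Omega_m$ and $S^{-\top}\Omega_m^{-1}S^{-1}=\Omega_m^{-1}$ give
\[
\Omega_m^{-1}\Sigma=S^{-\top}\,\Omega_m^{-1}D'\,S^\top .
\]
Thus $\Omega_m^{-1}\Sigma$ is similar to $\Omega_m^{-1}D'=\bigoplus_j\nu_j(-\omega)$, whose spectrum is $\{\pm i\nu_j\}$. The multiset $\{\nu_j\}$ is therefore determined by $\Sigma$ up to permutation, which proves uniqueness.
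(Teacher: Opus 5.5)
Your proof is correct. The orthogonal normal form of the real, skew-symmetric, nondegenerate matrix $K=\Sigma^{-1/2}\Omega_m\Sigma^{-1/2}$ (with the block swap making each $\lambda_j>0$) gives $S=\Sigma^{1/2}OD^{1/2}$ with $S\Omega_mS^\top=\Omega_m$ and $\Sigma=SD^{-1}S^\top$. Uniqueness follows because $\Omega_m^{-1}\Sigma=S^{-\top}(\Omega_m^{-1}D')S^\top$ is similar to a matrix with spectrum $\{\pm i\nu_j\}$.

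The paper gives no proof of this theorem; it quotes it from the literature (Williamson; Serafini; Weedbrook et al.). Your argument is the standard textbook proof found in those sources, so there is nothing further to reconcile.
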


For $\Sigma\in\mathrm{Cov}_m^{\mathrm Q}$, all symplectic eigenvalues satisfy $\nu_j\geq\hbar/2$; see, e.g., \cite[Sec.~II]{weedbrook2012gaussian}. Accordingly, a Gaussian state with covariance matrix $\Sigma$ is pure if and only if $\nu_j=\hbar/2$ for every $j\in[m]$, and faithful if and
only if $\nu_j>\hbar/2$ for every $j\in[m]$. We call a mixed Gaussian state \emph{partially pure} if at least one, but not all, of its symplectic eigenvalues equals $\hbar/2$.

For $\mathbf z\in\R^{2m}$, the Weyl operator $W_{\mathrm{CCR}}\!\left(\frac{\Omega_m\mathbf z}{\hbar}\right)$ implements the phase-space displacement $W_{\mathrm{CCR}}\!\left(\frac{\Omega_m\mathbf z}{\hbar}\right)^\dagger
\hat{\mathbf R}
W_{\mathrm{CCR}}\!\left(\frac{\Omega_m\mathbf z}{\hbar}\right)
=
\hat{\mathbf R}+\mathbf z$. Thus Weyl displacements shift the mean by $\mathbf z$ while leaving the covariance matrix unchanged.

Symplectic transformations are unitarily implemented on the standard CCR system. More precisely, for every
$S\in\Sp(2m,\R)$ there exists a unitary $\hat U_S$,
unique up to phase, such that $\hat U_S^\dagger\hat{\mathbf R}\hat U_S=S\hat{\mathbf R}$. Together with Weyl displacements, these unitary implementations give the Gaussian unitaries; see
\cite[Sec.~II.A.2]{weedbrook2012gaussian} and \cite[Secs.~3.2.1--3.2.2]{serafini2023quantum}.
\begin{deff}[Gaussian unitary]
\label{def:gaussian-unitary}
A unitary $\hat U\in\mathcal U(L_2(\R^m))$ is called Gaussian if its action on the canonical variables is affine symplectic, namely $\hat U^\dagger\hat{\mathbf R}\hat U=S\hat{\mathbf R}+\mathbf z$ for some $S\in\Sp(2m,\R)$ and
$\mathbf z\in\R^{2m}$. Consequently,
$$
\mathbf m_{\hat U\rho\hat U^\dagger}
=
S\mathbf m_\rho+\mathbf z,
\qquad
\Sigma_{\hat U\rho\hat U^\dagger}
=
S\Sigma_\rho S^\top.
$$
\end{deff}

\paragraph{Coherent states.}
Let $\ket{\boldsymbol 0}:=|0\rangle^{\otimes m}$ denote the $m$-mode Fock ground state, which is also called the vacuum state. It has zero mean and covariance
$(\hbar/2)\I_{2m}$; coherent states are obtained by displacing the Fock ground state and have the same covariance \cite[Secs.~II.B.1--II.B.2]{weedbrook2012gaussian}. For $\mathbf m\in\R^{2m}$, the coherent state with mean $\mathbf m$ is defined by
\[
\proj{\mathbf m}:=
W_{\mathrm{CCR}}\!\left(\frac{\Omega_m\mathbf m}{\hbar}\right)
\proj{\boldsymbol 0}
W_{\mathrm{CCR}}\!\left(\frac{\Omega_m\mathbf m}{\hbar}\right)^\dagger.
\]
A state $\hat\rho\in\cS_2^R(L_2(\R^m))$ satisfies $\mathbf m_{\hat\rho}=\mathbf m$ and $\Sigma_{\hat\rho}=\frac{\hbar}{2}\I_{2m}$ if and only if
\[
\hat\rho=\proj{\mathbf m}.
\]

\subsubsection{Bipartite Gaussian states and Gaussian couplings}
\label{subsubsec:bipartite-gaussian-couplings}

For $\tau\in\{\pm1\}$, consider the bipartite space
$L_2(\R^m)\otimes L_2^{(\tau)}(\R^m)$ and the tensor-product Weyl
system. We write
\[
W_{\mathrm{CCR}}^{(+1)}:=W_{\mathrm{CCR}},
\qquad
W_{\mathrm{CCR}}^{(-1)}:=W_{\mathrm{CCR}}^\top.
\]
The tensor-product Weyl system is
\[
W_\tau(\boldsymbol\xi,\boldsymbol\eta)
:=
W_{\mathrm{CCR}}(\boldsymbol\xi)
\otimes
W_{\mathrm{CCR}}^{(\tau)}(\boldsymbol\eta),
\qquad
\boldsymbol\xi,\boldsymbol\eta\in\R^{2m}.
\]
Its canonical generators are the components of $\hat{\mathbf R}_\tau$
defined in \eqref{eq:bitau}, and its commutation matrix is
$\hbar\bigl(\Omega_m\oplus_D(\tau\Omega_m)\bigr)$. Thus a state on the
bipartite system is Gaussian precisely in the sense of
Definition~\ref{def:weyl-characteristic-gaussian-state} for this
tensor-product Weyl representation.

Let
$\hat\pi\in
\cS_2^R\bigl(L_2(\R^m)\otimes L_2^{(\tau)}(\R^m)\bigr)$
have mean and covariance
\[
\mathbf m_{\hat\pi}
=
\begin{pmatrix}
\mathbf m_1\\
\mathbf m_2
\end{pmatrix},
\qquad
\Sigma_{\hat\pi}
=
\begin{pmatrix}
\Sigma_{11}&A\\
A^\top&\Sigma_{22}
\end{pmatrix}.
\]

\begin{prop}[Reduction of moments and Gaussianity]
\label{prop:reduction-moments}
The reduced states
$\hat\pi_1:=\Tr_2\hat\pi$ and
$\hat\pi_2:=\Tr_1\hat\pi$ have moments
\[
\mathbf m_{\hat\pi_1}=\mathbf m_1,
\qquad
\Sigma_{\hat\pi_1}=\Sigma_{11},
\qquad
\mathbf m_{\hat\pi_2}=\mathbf m_2,
\qquad
\Sigma_{\hat\pi_2}=\Sigma_{22}.
\]
If $\hat\pi$ is Gaussian, then both reduced states are Gaussian.
\end{prop}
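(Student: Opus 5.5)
The plan is to reduce every claim to Proposition~\ref{prop:partialtrace-unbounded} and to the behaviour of the Weyl characteristic function under partial traces. First, finiteness of moments for the marginals. The one-body square-sum of the bipartite system is
\[
\mathsf H_{\hat{\mathbf R}_\tau}=\mathsf H_{\hat{\mathbf R}}\otimes\I+\I\otimes\mathsf H_{\hat{\mathbf R}^{(\tau)}},
\]
with both summands nonnegative. Hence $\Tr(\hat\pi\,\mathsf H_{\hat{\mathbf R}_\tau})<\infty$ gives $\Tr((\mathsf H_{\hat{\mathbf R}}\otimes\I)\hat\pi)<\infty$. To justify the inequality $\mathsf H_{\hat{\mathbf R}}\otimes\I\le\mathsf H_{\hat{\mathbf R}_\tau}$ in the form sense, I would check it on the algebraic tensor core of smooth vectors and then pass to closures, exactly as in Proposition~\ref{prop:quadrature_bounds}. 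Proposition~\ref{prop:partialtrace-unbounded} then yields $\Tr(\mathsf H_{\hat{\mathbf R}}\hat\pi_1)<\infty$, so $\hat\pi_1\in\cS_2^R$. The argument for $\hat\pi_2$ is symmetric.

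Next, the identification of the moments. For $j\in[2m]$, the operator $\hat R_j\otimes\I$ has finite expectation in $\hat\pi$, since $|\lambda|\le 1+\lambda^2$ spectrally. Applying Proposition~\ref{prop:partialtrace-unbounded} to $A=\hat R_j$ gives $\Tr(\hat R_j\hat\pi_1)=\Tr((\hat R_j\otimes\I)\hat\pi)=(\mathbf m_1)_j$. For the second moments, the symmetrized products $\tfrac12(\hat R_j\hat R_k+\hat R_k\hat R_j)$ are quadratic Hamiltonians in the sense of Proposition~\ref{prop:quadratic-hamiltonian-esa}, so they are self-adjoint. Their closures tensored with $\I$ coincide with the closure of the corresponding quadratic Hamiltonian in the bipartite generators $\hat R_j\otimes\I$; both are essentially self-adjoint on a common core, so they agree. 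Each such operator is form-bounded by $\mathsf H_{\hat{\mathbf R}}$, hence integrable against $\hat\pi_1$. Proposition~\ref{prop:partialtrace-unbounded} then gives equality of the symmetrized second moments. Subtracting the mean terms gives $\Sigma_{\hat\pi_1}=\Sigma_{11}$, and likewise $\Sigma_{\hat\pi_2}=\Sigma_{22}$.

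For Gaussianity I would use characteristic functions. From $W_\tau(\boldsymbol\xi,0)=W_{\mathrm{CCR}}(\boldsymbol\xi)\otimes\I$ and Definition~\ref{def:partialtrace} with the bounded operator $B_1=W_{\mathrm{CCR}}(\boldsymbol\xi)$,
\[
\chi_{\hat\pi_1}(\boldsymbol\xi)=\chi_{\hat\pi}(\boldsymbol\xi,0)=\exp\!\Big(i\mathbf m_1^\top\boldsymbol\xi-\tfrac12\boldsymbol\xi^\top\Sigma_{11}\boldsymbol\xi\Big).
\]
By the previous paragraph this is the Gaussian form with the moments of $\hat\pi_1$, so $\hat\pi_1$ is Gaussian. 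The case of $\hat\pi_2$ uses $W_\tau(0,\boldsymbol\eta)$ and works the same way for either $\tau$.

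The main obstacle is the domain technicality in the second-moment identification. One has to confirm that the closure of $\tfrac12(\hat R_j\hat R_k+\hat R_k\hat R_j)$, tensored with $\I$, is the same self-adjoint operator as the bipartite closure, and that its expectation is well defined. I would resolve this using the essential self-adjointness on smooth vectors from Proposition~\ref{prop:quadratic-hamiltonian-esa}, together with the fact that smooth vectors of the tensor system contain the algebraic tensor product of smooth vectors. Alternatively, one can avoid the issue by reading the moments off the characteristic function: for finite-second-moment states $\chi$ is $C^2$, and its derivatives at $0$ give the moments.
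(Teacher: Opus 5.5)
Your proposal is correct and follows the paper's proof. The paper's argument is just your two ingredients: the moment identities from Proposition~\ref{prop:partialtrace-unbounded}, and Gaussianity from the restrictions $\chi_{\hat\pi_1}(\boldsymbol\xi)=\chi_{\hat\pi}(\boldsymbol\xi,0)$ and $\chi_{\hat\pi_2}(\boldsymbol\eta)=\chi_{\hat\pi}(0,\boldsymbol\eta)$. Your extra bookkeeping (finiteness of the marginal second moments, identification of the closures) makes explicit what the paper leaves implicit.

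One small repair concerns the non-semibounded products $A=\frac12(\hat R_j\hat R_k+\hat R_k\hat R_j)$. A two-sided form bound $\pm A\le\mathsf H_{\hat{\mathbf R}}$ does not by itself make $\Tr(|A|\hat\pi_1)$ finite. Use instead the operator bound $\|A\psi\|\le C\|(\mathsf H_{\hat{\mathbf R}}+\I)\psi\|$ with operator monotonicity of the square root, or your characteristic-function alternative.
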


\begin{proof}
The moment identities follow from
Proposition~\ref{prop:partialtrace-unbounded}. For Gaussianity, the
reduced characteristic functions are obtained by restriction,
$\chi_{\hat\pi_1}(\boldsymbol\xi)
=
\chi_{\hat\pi}(\boldsymbol\xi,0)$ and $\chi_{\hat\pi_2}(\boldsymbol\eta)
=
\chi_{\hat\pi}(0,\boldsymbol\eta)$, and are therefore Gaussian.
\end{proof}

In particular, Gaussian projection commutes with reduction:
\[
\Tr_2\mathcal G(\hat\pi)
=
\mathcal G(\Tr_2\hat\pi),
\qquad
\Tr_1\mathcal G(\hat\pi)
=
\mathcal G(\Tr_1\hat\pi).
\]

For Gaussian states
$\hat\rho^G,\hat\sigma^G\in\cS^G(L_2(\R^m))$, we define the Gaussian
coupling set by
\begin{equation}
\Gamma_\tau^G(\hat\rho^G,\hat\sigma^G)
:=
\Gamma_\tau(\hat\rho^G,\hat\sigma^G)
\cap
\cS^G\bigl(
L_2(\R^m)\otimes L_2^{(\tau)}(\R^m)
\bigr).
\label{eq:gaussian-coupling-set}
\end{equation}
If a Gaussian coupling has marginal covariances
$\Sigma_1,\Sigma_2$ and cross-covariance $A$, its covariance is
$\smat{\Sigma_1&A\\A^\top&\Sigma_2}$, and admissibility is equivalent to $A\in\mathrm{Corr}_\tau(\Sigma_1,\Sigma_2)$ defined in \eqref{eq:corr}.
\paragraph{Quadratic costs and Gaussian projection.}
For any coupling whose marginals have finite canonical second moments, the bipartite state also has finite canonical second moments, since the expectations of the two one-body square-sums are determined by its marginals. The quadratic transport cost can therefore be expressed entirely in terms of these moments.

\begin{prop}[Quadratic costs depend only on moments]
\label{prop:stateTrC}
Let $\tau\in\{\pm1\}$ and let
$\hat\pi\in
\cS_2^R\bigl(
L_2(\R^m)\otimes L_2^{(\tau)}(\R^m)
\bigr)$
have mean and covariance as above. Then
\[
\Tr\bigl(\hat C^{(\tau)}\hat\pi\bigr)
=
\Tr(\Sigma_{11})
+
\Tr(\Sigma_{22})
-
2\Tr(A)
+
\|\mathbf m_1-\mathbf m_2\|^2.
\]
\end{prop}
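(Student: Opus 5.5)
The plan is to expand the quadratic cost operator $\hat C^{(\tau)}$ into canonical generators and evaluate each term in the state $\hat\pi$ using its first and second moments. On the algebraic tensor core of smooth vectors, write $\hat D_j:=\hat R_{\tau,j}-\hat R_{\tau,j+2m}$ for $j\in[2m]$, so that $\hat C^{(\tau)}=\sum_j\hat D_j^2$ in the closure sense of Remark~\ref{rem:quadratic-hamiltonian-closure-convention}. Expanding gives
\[
\hat D_j^2=\hat R_{\tau,j}^2+\hat R_{\tau,j+2m}^2-2\,\hat R_{\tau,j}\hat R_{\tau,j+2m}.
\]
The cross term is symmetric without reordering, because $\hat R_{\tau,j}$ and $\hat R_{\tau,j+2m}$ act on different tensor factors and commute. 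Thus $\hat C^{(\tau)}=\tfrac12\hat{\mathbf R}_\tau^\top G\hat{\mathbf R}_\tau$ with
\[
G=2\begin{pmatrix}\I_{2m}&-\I_{2m}\\-\I_{2m}&\I_{2m}\end{pmatrix},
\]
a quadratic Hamiltonian in the sense of Proposition~\ref{prop:quadratic-hamiltonian-esa}.

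Next I compute the expectations. For a finite-second-moment state,
\[
\Tr(\hat\pi\,\hat R_{\tau,a}\hat R_{\tau,b})=\Re\Tr(\hat\pi\,\hat R_{\tau,a}\hat R_{\tau,b})=(\Sigma_{\hat\pi})_{ab}+m_a m_b
\]
whenever $[\hat R_{\tau,a},\hat R_{\tau,b}]=0$, which holds for all pairs entering $G$: the diagonal squares and the commuting cross pairs $(j,j+2m)$. Summing over $j$ yields
\[
\Tr(\Sigma_{11})+\Tr(\Sigma_{22})-2\Tr(A)+\|\mathbf m_1\|^2+\|\mathbf m_2\|^2-2\,\mathbf m_1^\top\mathbf m_2,
\]
which is the claimed formula. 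Equivalently, one can subtract the means first: the displaced operators $\hat D_j-(m_{1,j}-m_{2,j})$ have expectation zero, so $\Tr(\hat D_j^2\hat\pi)=\operatorname{Var}(\hat D_j)+(m_{1,j}-m_{2,j})^2$, and the variance equals $\Sigma_{jj}+\Sigma_{j+2m,j+2m}-2A_{jj}$.

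\textbf{Main obstacle: justifying the expectations for unbounded operators.} The expectation $\Tr(\hat C^{(\tau)}\hat\pi)$ is defined spectrally and only for the closure, so the algebraic expansion needs justification. I would proceed in three steps.
\begin{enumerate}
\item Since $\hat\pi\in\cS_2^R$, each $\hat R_{\tau,a}$ lies in $L^2$ of $\hat\pi$. Concretely, with the spectral decomposition $\hat\pi=\sum_k p_k|\psi_k\rangle\langle\psi_k|$, every $\psi_k$ with $p_k>0$ lies in the form domain of $\mathsf H_{\mathbf R}$, hence in $\Dom(\hat R_{\tau,a})$.
\item For $\psi$ in the form domain of the two-body square-sum, $\langle\psi,\hat C^{(\tau)}\psi\rangle=\sum_j\|\hat D_j\psi\|^2$. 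This holds on the core $\mathcal H^\infty\otimes_{\mathrm{alg}}\mathcal H^\infty$, and the upper bound of Proposition~\ref{prop:quadrature_bounds} extends it to the form domain of $\mathsf H_{\mathbf R}\otimes\I+\I\otimes\mathsf H_{\mathbf R}$ by closure.
\item Write $\|\hat D_j\psi\|^2=\|\hat R_{\tau,j}\psi\|^2+\|\hat R_{\tau,j+2m}\psi\|^2-2\Re\langle\hat R_{\tau,j}\psi,\hat R_{\tau,j+2m}\psi\rangle$, then sum over $k$ with weights $p_k$. Absolute convergence follows from Cauchy--Schwarz and finiteness of the second moments, and Proposition~\ref{prop:spectral-trunc-positive} gives $\Tr(\hat C^{(\tau)}\hat\pi)=\sum_k p_k\langle\psi_k,\hat C^{(\tau)}\psi_k\rangle$.
\end{enumerate}

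Finally, I identify $\Re\sum_k p_k\langle\hat R_a\psi_k,\hat R_b\psi_k\rangle$ with the symmetrized covariance entry plus $m_am_b$, as in Definition~\ref{def:finite-canonical-second-moments}. For $\tau=-1$ the same argument applies verbatim, since only the commutation of the two factors and the finiteness of moments are used. Alternatively, one can transport the computation to the standard CCR representation via Proposition~\ref{prop:tau-standard-CCR-identification}.
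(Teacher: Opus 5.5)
Your proposal is correct and follows the same route as the paper: expand each $(\hat R_j\otimes\I-\I\otimes\hat R_j^{(\tau)})^2$, evaluate the two squares and the commuting cross term through the first and second moments, and sum over $j$. Your Steps 1--3, which use the form-domain inclusion from Proposition~\ref{prop:quadrature_bounds} and the spectral evaluation from Proposition~\ref{prop:spectral-trunc-positive}, supply the unbounded-operator justification that the paper's two-line proof leaves implicit.
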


\begin{proof}
Expanding the quadratic cost \eqref{eq:tau-cost} gives, for each
$j\in[2m]$,
\[
\Tr\!\left[
\bigl(
\hat R_j\otimes\I
-
\I\otimes\hat R_j^{(\tau)}
\bigr)^2
\hat\pi
\right]
=
(\Sigma_{11})_{jj}
+
(\Sigma_{22})_{jj}
-
2A_{jj}
+
(m_{1,j}-m_{2,j})^2.
\]
Summing over $j$ proves the formula.
\end{proof}

\begin{prop}[Gaussian projection decreases the $2$-QW cost]
\label{prop:gaussian-projection-lower}
Let
$\hat\rho,\hat\sigma\in\cS_2^R(L_2(\R^m))$. Then, for every
$\tau\in\{\pm1\}$,
\[
\QW_{2,\tau}^2
\bigl(
\mathcal G(\hat\rho),
\mathcal G(\hat\sigma)
\bigr)
\leq
\QW_{2,\tau}^2(\hat\rho,\hat\sigma).
\]
\end{prop}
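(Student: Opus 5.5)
The plan is to compare couplings directly. Every coupling of $(\hat\rho,\hat\sigma)$, or of $(\hat\rho,\hat\sigma^\top)$ when $\tau=-1$, can be replaced by its Gaussian projection. That projection is a coupling of the projected marginals and has exactly the same cost.

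\emph{Step 1: reduce to finite-cost couplings.} If $\QW_{2,\tau}^2(\hat\rho,\hat\sigma)=+\infty$ there is nothing to prove, so assume it is finite. Fix any $\hat\pi\in\Gamma_\tau(\hat\rho,\hat\sigma)$. Since $\hat\rho,\hat\sigma\in\cS_2^R(L_2(\R^m))$, the transpose $\hat\sigma^\top$ also has finite canonical second moments; its one-body square-sum is $\hat{\mathsf H}_{\mathrm{osc}}^\top$, whose expectation in $\hat\sigma^\top$ equals that of $\hat{\mathsf H}_{\mathrm{osc}}$ in $\hat\sigma$. The bipartite square-sum is $\mathsf H_{\mathbf R}\otimes\I+\I\otimes\mathsf H_{\mathbf R^{(\tau)}}$. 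By Proposition~\ref{prop:partialtrace-unbounded}, its expectation in $\hat\pi$ is the sum of the marginal expectations, which is finite. Hence $\hat\pi\in\cS_2^R(L_2(\R^m)\otimes L_2^{(\tau)}(\R^m))$ and $\mathcal G(\hat\pi)$ is well defined. This uses Theorem~\ref{thm:rs-uncertainty-canonical} for the bipartite tensor-product Weyl system, whose commutation matrix is $\hbar(\Omega_m\oplus_D\tau\Omega_m)$. That system is irreducible because it is unitarily equivalent to the standard $2m$-mode CCR system by Proposition~\ref{prop:tau-standard-CCR-identification}, so the projection is unique.

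\emph{Step 2: the projection is a coupling of the projected marginals.} By Proposition~\ref{prop:reduction-moments} and the subsequent remark, $\Tr_2\mathcal G(\hat\pi)=\mathcal G(\Tr_2\hat\pi)=\mathcal G(\hat\rho)$. Similarly, $\Tr_1\mathcal G(\hat\pi)=\mathcal G(\hat\sigma^{(\tau)})$, where $\hat\sigma^{(+1)}=\hat\sigma$ and $\hat\sigma^{(-1)}=\hat\sigma^\top$. For $\tau=-1$ we use $\mathcal G(\hat\sigma^\top)=\mathcal G(\hat\sigma)^\top$. Hence $\mathcal G(\hat\pi)\in\Gamma_\tau(\mathcal G(\hat\rho),\mathcal G(\hat\sigma))$.

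\emph{Step 3: the cost is unchanged.} By Proposition~\ref{prop:stateTrC}, $\Tr(\hat C^{(\tau)}\hat\pi)$ depends only on the mean and covariance of $\hat\pi$. Since $\mathcal G(\hat\pi)$ is also in $\cS_2^R$ with the same moments, $\Tr(\hat C^{(\tau)}\mathcal G(\hat\pi))=\Tr(\hat C^{(\tau)}\hat\pi)$. Therefore
\[
\QW_{2,\tau}^2\bigl(\mathcal G(\hat\rho),\mathcal G(\hat\sigma)\bigr)\le\Tr\bigl(\hat C^{(\tau)}\hat\pi\bigr),
\]
and taking the infimum over $\hat\pi$ gives the claim.

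The main obstacle is Step 2 for $\tau=-1$. There one must check that the Gaussian projection on the transposed representation is compatible with transposition of the marginal. This reduces to $\chi_{\hat\sigma^\top}(\boldsymbol\eta)=\Tr(\hat\sigma^\top W_{\mathrm{CCR}}(\boldsymbol\eta)^\top)=\chi_{\hat\sigma}(\boldsymbol\eta)$. It follows that $\hat\sigma^\top$ has the same mean and covariance as $\hat\sigma$ with respect to $\hat{\mathbf R}^\top$, and that Gaussianity is preserved; the paper has already asserted this. A secondary subtlety is justifying the moment identity of Proposition~\ref{prop:stateTrC} for the closure of the cost on $\cS_2^R$. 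That identity is taken as given from the preceding proposition.
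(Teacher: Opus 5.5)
Your proposal is correct and follows the paper's argument: replace each coupling by its Gaussian projection, which couples the projected marginals by Proposition~\ref{prop:reduction-moments} (using $\mathcal G(\hat\sigma^\top)=\mathcal G(\hat\sigma)^\top$ when $\tau=-1$) and has the same cost by Proposition~\ref{prop:stateTrC}, then take the infimum. Your Step~1 and your characteristic-function check for the transpose only spell out the finite-second-moment and $\tau=-1$ details that the paper states in the surrounding remarks.
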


\begin{proof}
For any
$\hat\pi\in\Gamma_\tau(\hat\rho,\hat\sigma)$,
Proposition~\ref{prop:reduction-moments} implies that
$\mathcal G(\hat\pi)$ is a coupling of
$\mathcal G(\hat\rho)$ and $\mathcal G(\hat\sigma)$ in the
$\tau$-formulation. Since Gaussian projection preserves the mean and
covariance,
Proposition~\ref{prop:stateTrC} gives
\[
\Tr\bigl(
\hat C^{(\tau)}\mathcal G(\hat\pi)
\bigr)
=
\Tr\bigl(
\hat C^{(\tau)}\hat\pi
\bigr).
\]
Taking the infimum over $\hat\pi$ proves the claim.
\end{proof}

\begin{theorem}[Gaussian optimal couplings]
\label{thm:Gaussian-opt-plan}
Let
$\hat\rho^G,\hat\sigma^G\in\cS^G(L_2(\R^m))$. Then, for every
$\tau\in\{\pm1\}$, the corresponding $2$-QW transport problem admits a
Gaussian optimal coupling.
\end{theorem}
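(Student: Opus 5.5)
The plan is to reduce the problem to the finite-dimensional covariance problem and then pass back to states through Gaussian projection.

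\emph{Step 1: finiteness and moments.} Gaussian states have finite canonical second moments, so $\hat\rho^G,\hat\sigma^G\in\cS_2^R(L_2(\R^m))$. In the case $\tau=-1$, the transpose $(\hat\sigma^G)^\top$ is Gaussian on $L_2(\R^m)^*$ with finite moments, as recalled in Section~\ref{subsec:gaussian-states-couplings}. The product state $\hat\rho^G\otimes\hat\sigma^{G,(\tau)}$ is a Gaussian coupling. By Proposition~\ref{prop:stateTrC} and Proposition~\ref{prop:quadrature_bounds} it has finite cost, so $\QW_{2,\tau}^2(\hat\rho^G,\hat\sigma^G)<\infty$. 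Every coupling $\hat\pi\in\Gamma_\tau(\hat\rho^G,\hat\sigma^G)$ lies in $\cS_2^R$ of the bipartite system. This is because $\mathsf H_{\hat{\mathbf R}_\tau}$ splits as a sum of the two one-body square-sums, whose expectations are fixed by the marginals via Proposition~\ref{prop:partialtrace-unbounded}.

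\emph{Step 2: existence of an optimal coupling.} The infimum defining $\QW_{2,\tau}^2$ is attained. I would show this in one of two ways. One option is to note, as stated after \eqref{eq:QOT-def-setting}, that Theorem~\ref{theo:compact} gives trace-norm compactness of $\Gamma_\tau(\hat\rho^G,\hat\sigma^G)$, and that $\hat\pi\mapsto\Tr(\hat C^{(\tau)}\hat\pi)$ is lower semicontinuous by Proposition~\ref{prop:spectral-trunc-positive}. The other option is to apply Theorem~\ref{thm:existenceQOTbary_intro}(ii) with $N=1$ and the barycentric constraint replaced by fixing the first marginal. The first route is more direct, so I would use it. Let $\hat\pi^\ast$ be an optimal coupling.

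\emph{Step 3: Gaussian replacement.} Set $\hat\pi^G:=\mathcal G(\hat\pi^\ast)$, which is well defined by Step 1. By Proposition~\ref{prop:reduction-moments} and the commutation of $\mathcal G$ with partial traces, the marginals of $\hat\pi^G$ are $\mathcal G(\hat\rho^G)=\hat\rho^G$ and $\mathcal G(\hat\sigma^{G,(\tau)})=\hat\sigma^{G,(\tau)}$. Hence $\hat\pi^G\in\Gamma_\tau^G(\hat\rho^G,\hat\sigma^G)$. Proposition~\ref{prop:stateTrC} shows the cost depends only on the first and second moments, which are shared by $\hat\pi^\ast$ and $\hat\pi^G$. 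Therefore $\Tr(\hat C^{(\tau)}\hat\pi^G)=\Tr(\hat C^{(\tau)}\hat\pi^\ast)$, so $\hat\pi^G$ is optimal and Gaussian.

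\emph{Main obstacle.} The delicate step is the well-definedness of $\mathcal G$ on the bipartite $\tau$-system. It requires irreducibility of the tensor Weyl system $W_\tau$, which holds because it is a tensor product of irreducible systems, or because of Proposition~\ref{prop:tau-standard-CCR-identification}. It also requires the Robertson--Schr\"odinger condition for $\hbar(\Omega_m\oplus_D\tau\Omega_m)$ through Theorem~\ref{thm:rs-uncertainty-canonical}. The compactness in Step 2 can be bypassed if needed. In that case I would minimize directly over Gaussian couplings, i.e. over the compact convex set $\mathrm{Corr}_\tau(\Sigma_1,\Sigma_2)$, which is bounded by positivity of the block matrix, for the linear objective $-2\Tr(A)$. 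Step 3 then shows this Gaussian minimum is no larger than the cost of any coupling, so it equals the infimum.
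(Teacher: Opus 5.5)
Your proof is correct and follows the paper's argument. You take an optimal coupling, observe that it has finite canonical second moments because its Gaussian marginals do, and replace it by its Gaussian projection, which has the same marginals and, by Proposition~\ref{prop:stateTrC}, the same cost. The only difference is that the paper cites the existence results of Caglioti--Golse--Paul and De Palma--Trevisan for an optimal coupling, whereas you derive existence from Theorem~\ref{theo:compact} and lower semicontinuity, or bypass it by minimizing over the compact set $\mathrm{Corr}_\tau$; both routes are valid.
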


\begin{proof}
Let $\hat\pi_{\mathrm{opt}}$ be an optimal coupling, whose existence
follows from the general existence results for the two $2$-QW
transport problems
\cite{caglioti2021towards,DPaTre19}. Since its Gaussian marginals have finite canonical second moments,
the preceding observation implies that
$\hat\pi_{\mathrm{opt}}$ also has finite canonical second moments.
Hence its Gaussian projection
$\mathcal G(\hat\pi_{\mathrm{opt}})$ is well defined and has the same
Gaussian marginals by
Proposition~\ref{prop:reduction-moments}, and
Proposition~\ref{prop:stateTrC} gives
\[
\Tr\bigl(
\hat C^{(\tau)}
\mathcal G(\hat\pi_{\mathrm{opt}})
\bigr)
=
\Tr\bigl(
\hat C^{(\tau)}
\hat\pi_{\mathrm{opt}}
\bigr).
\]
Hence $\mathcal G(\hat\pi_{\mathrm{opt}})$ is a Gaussian optimal
coupling. 
\end{proof}
 
\subsection{Gaussian reduction of the \texorpdfstring{$2$}{2}-QW barycenter problem}
\label{subsec:gaussian-barycenter-reduction}

Throughout this subsection, let $m,N\in\N$, let
$\alpha_s>0$, $s\in[N]$, satisfy $\sum_{s=1}^N\alpha_s=1$, and let $\hat{\boldsymbol\sigma}^G=(\hat\sigma_s^G)_{s\in[N]}$ be Gaussian states on $L_2(\R^m)$ with mean vectors $\mathbf m_s\in\R^{2m}$ and covariance matrices $\Sigma_s\in\mathrm{Cov}_m^{\mathrm Q}$. By Proposition~\ref{prop:finitesecondemomentrho}, every barycenter has finite canonical second moments. More precisely, the lower bound in Proposition~\ref{prop:quadrature_bounds} shows that every state with finite barycenter objective belongs to $\cS_2^R(L_2(\R^m))$. Consequently, the outer infimum in
\eqref{eq:tau-barycenter-intro} is unchanged if its search space is restricted to $\cS_2^R(L_2(\R^m))$. In particular, the Gaussian projection of every barycenter is well defined.

\paragraph{Gaussian projection of barycenters.}
The projection inequality from
Proposition~\ref{prop:gaussian-projection-lower} immediately yields a
useful property of arbitrary, possibly non-Gaussian, barycenters.

\begin{prop}[Gaussian projection of barycenters and optimal couplings]
\label{prop:gaussian-projection-barycenter}
Let $\tau\in\{\pm1\}$ and let $\hat\rho$ be a $2$-QW barycenter of
$\hat{\boldsymbol\sigma}^G$. Then $\mathcal G(\hat\rho)$ is also a
barycenter and
\[
\QW_{2,\tau}^2
\bigl(\mathcal G(\hat\rho),\hat\sigma_s^G\bigr)
=
\QW_{2,\tau}^2
\bigl(\hat\rho,\hat\sigma_s^G\bigr),
\qquad s\in[N].
\]
Moreover, if
$\hat\pi_s\in\Gamma_\tau(\hat\rho,\hat\sigma_s^G)$ is any optimal
coupling, then $\mathcal G(\hat\pi_s)$ is an optimal Gaussian coupling
between $\mathcal G(\hat\rho)$ and $\hat\sigma_s^G$.
\end{prop}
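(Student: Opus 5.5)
The plan is to compare the barycenter objective at $\hat\rho$ and at $\mathcal G(\hat\rho)$ term by term, using Proposition~\ref{prop:gaussian-projection-lower} for the inequality and minimality of $\hat\rho$ for the reverse inequality.

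\emph{Well-definedness.} As noted at the start of this subsection, the lower bound in Proposition~\ref{prop:quadrature_bounds} together with Proposition~\ref{prop:finitesecondemomentrho} gives $\hat\rho\in\cS_2^R(L_2(\R^m))$. Hence $\mathcal G(\hat\rho)$ is defined. Each Gaussian input satisfies $\mathcal G(\hat\sigma_s^G)=\hat\sigma_s^G$. For $\tau=-1$ we use $\mathcal G(\hat\sigma^\top)=\mathcal G(\hat\sigma)^\top$, so the $\tau$-coupling sets are compatible with projection.

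\emph{Termwise inequality and equality.} Proposition~\ref{prop:gaussian-projection-lower} gives, for each $s$,
\[
\QW_{2,\tau}^2\bigl(\mathcal G(\hat\rho),\hat\sigma_s^G\bigr)\le\QW_{2,\tau}^2\bigl(\hat\rho,\hat\sigma_s^G\bigr).
\]
Multiplying by $\alpha_s>0$ and summing shows that the objective at $\mathcal G(\hat\rho)$ is at most the objective at $\hat\rho$. The latter is the minimum, which is finite by the existence result (the proposition in Subsection~\ref{subsec:examples-qw2}, since Gaussian inputs have finite oscillator energy). Therefore $\mathcal G(\hat\rho)$ is also a barycenter and the two sums coincide. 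A sum of nonnegative differences weighted by positive $\alpha_s$ vanishes only if each difference vanishes. This yields the termwise equality.

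\emph{Optimal couplings.} Let $\hat\pi_s$ be optimal in $\Gamma_\tau(\hat\rho,\hat\sigma_s^G)$. Its marginals have finite second moments, so $\hat\pi_s\in\cS_2^R$ because the one-body square-sum expectations are determined by the marginals. By Proposition~\ref{prop:reduction-moments} and the commutation of $\mathcal G$ with partial traces, $\mathcal G(\hat\pi_s)$ is a Gaussian coupling in $\Gamma_\tau(\mathcal G(\hat\rho),\hat\sigma_s^G)$. By Proposition~\ref{prop:stateTrC} it has the same cost as $\hat\pi_s$, namely $\QW_{2,\tau}^2(\hat\rho,\hat\sigma_s^G)$. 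By the termwise equality this equals $\QW_{2,\tau}^2(\mathcal G(\hat\rho),\hat\sigma_s^G)$, so $\mathcal G(\hat\pi_s)$ is optimal.

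The only delicate point is bookkeeping the transpose in the $\tau=-1$ case: the second marginal of $\mathcal G(\hat\pi_s)$ must be $(\hat\sigma_s^G)^\top$. This follows from $\mathcal G(\hat\sigma^\top)=\mathcal G(\hat\sigma)^\top$ and the fact that $(\hat\sigma_s^G)^\top$ is Gaussian.
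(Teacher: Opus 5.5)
Your proposal is correct and follows essentially the same route as the paper. You compare termwise via Proposition~\ref{prop:gaussian-projection-lower}, use minimality of $\hat\rho$ together with positivity of the weights to get termwise equality, and then apply Propositions~\ref{prop:reduction-moments} and~\ref{prop:stateTrC} to show that $\mathcal G(\hat\pi_s)$ is an optimal coupling; your added remarks on finiteness of the minimum, on membership of $\hat\rho$ and $\hat\pi_s$ in $\cS_2^R$, and on the transpose bookkeeping for $\tau=-1$ spell out points the paper leaves implicit.
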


\begin{proof}
Since $\mathcal G(\hat\sigma_s^G)=\hat\sigma_s^G$,
Proposition~\ref{prop:gaussian-projection-lower} gives
\[
\sum_{s=1}^N \alpha_s\, \QW_{2,\tau}^2(\mathcal G(\hat\rho),\hat\sigma_s^G)
\leq
\sum_{s=1}^N \alpha_s\, \QW_{2,\tau}^2(\hat\rho,\hat\sigma_s^G).
\]
As $\hat\rho$ is a barycenter, equality must hold and
$\mathcal G(\hat\rho)$ is also a barycenter. Since every
$\alpha_s>0$ and the corresponding pairwise inequalities hold
termwise, equality of the weighted sums implies
\[
\QW_{2,\tau}^2
\bigl(\mathcal G(\hat\rho),\hat\sigma_s^G\bigr)
=
\QW_{2,\tau}^2
\bigl(\hat\rho,\hat\sigma_s^G\bigr),
\qquad s\in[N].
\]

Let $\hat\pi_s$ be optimal between $\hat\rho$ and
$\hat\sigma_s^G$. By Proposition~\ref{prop:reduction-moments},
$\mathcal G(\hat\pi_s)$ is a coupling between
$\mathcal G(\hat\rho)$ and $\hat\sigma_s^G$, and
Proposition~\ref{prop:stateTrC} gives
\[
\Tr\!\left(
\hat C^{(\tau)}\mathcal G(\hat\pi_s)
\right)
=
\Tr\!\left(
\hat C^{(\tau)}\hat\pi_s
\right).
\]
The preceding pairwise equality therefore shows that
$\mathcal G(\hat\pi_s)$ is optimal.
\end{proof}

\paragraph{Covariance reduction.}
We next express the transport cost between Gaussian states entirely in
terms of their first and second moments.

\begin{prop}[Gaussian $2$-QW covariance reduction]
\label{prop:gaussian-pairwise-covariance}
Let $\hat\rho^G,\hat\sigma^G\in\cS^G(L_2(\R^m))$ have means
$\mathbf m,\mathbf m_\sigma$ and covariances
$\Sigma,\Sigma_\sigma$, respectively. Then, for
$\tau\in\{\pm1\}$,
\[
\QW_{2,\tau}^2(\hat\rho^G,\hat\sigma^G)
=
\|\mathbf m-\mathbf m_\sigma\|^2
+
\min_{A\in\mathrm{Corr}_\tau(\Sigma,\Sigma_\sigma)}
\left\{
\Tr(\Sigma)+\Tr(\Sigma_\sigma)-2\Tr(A)
\right\}.
\]
\end{prop}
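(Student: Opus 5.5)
We prove the two inequalities separately, reducing the infinite-dimensional transport problem to an optimization over moments via Proposition~\ref{prop:stateTrC} and Theorem~\ref{thm:Gaussian-opt-plan}.

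\emph{Lower bound.} Fix $\tau\in\{\pm1\}$. Gaussian states have finite canonical second moments, so, as observed before Proposition~\ref{prop:stateTrC}, every coupling $\hat\pi\in\Gamma_\tau(\hat\rho^G,\hat\sigma^G)$ lies in $\cS_2^R$. For $\tau=-1$ the second marginal is $(\hat\sigma^G)^\top$. Its mean and covariance with respect to $\hat{\mathbf R}^\top$ coincide with those of $\hat\sigma^G$ with respect to $\hat{\mathbf R}$, because $\Tr(\hat\sigma^\top\hat R_j^\top)=\Tr(\hat\sigma\hat R_j)$, and likewise for the symmetrized second moments. By Proposition~\ref{prop:reduction-moments}, the covariance of $\hat\pi$ therefore has the block form $\smat{\Sigma&A\\A^\top&\Sigma_\sigma}$ with mean $(\mathbf m,\mathbf m_\sigma)$. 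Theorem~\ref{thm:rs-uncertainty-canonical}, applied to the bipartite Weyl system with commutation matrix $\hbar(\Omega_m\oplus_D\tau\Omega_m)$, gives $A\in\mathrm{Corr}_\tau(\Sigma,\Sigma_\sigma)$. Proposition~\ref{prop:stateTrC} then yields
\[
\Tr(\hat C^{(\tau)}\hat\pi)=\|\mathbf m-\mathbf m_\sigma\|^2+\Tr\Sigma+\Tr\Sigma_\sigma-2\Tr A .
\]
Taking the infimum over $\hat\pi$ gives ``$\ge$'' with the infimum over $\mathrm{Corr}_\tau$ on the right.

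\emph{Upper bound and attainment.} Conversely, for every $A\in\mathrm{Corr}_\tau(\Sigma,\Sigma_\sigma)$, the converse part of Theorem~\ref{thm:rs-uncertainty-canonical} produces a bipartite Gaussian state with mean $(\mathbf m,\mathbf m_\sigma)$ and covariance $\smat{\Sigma&A\\A^\top&\Sigma_\sigma}$. By Proposition~\ref{prop:reduction-moments}, its marginals are Gaussian with the prescribed moments. Since the single-system Weyl representations are irreducible, these marginals equal $\hat\rho^G$ and $\hat\sigma^G$ (respectively $(\hat\sigma^G)^\top$, using $\mathcal G(\hat\sigma^\top)=\mathcal G(\hat\sigma)^\top$). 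The state is therefore a coupling in $\Gamma_\tau$, and its cost is given by the same formula, which proves ``$\le$''.

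To replace the infimum by a minimum, note that $\mathrm{Corr}_\tau(\Sigma,\Sigma_\sigma)$ is closed, being defined by a closed semidefinite condition. It is also bounded: positivity of the $2\times2$ principal minors of the real part gives $A_{jk}^2\le\Sigma_{jj}(\Sigma_\sigma)_{kk}$. It is nonempty, since $A=0$ is admissible. A linear function therefore attains its minimum on this compact set. Alternatively, attainment follows from Theorem~\ref{thm:Gaussian-opt-plan}.

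The main obstacle is the bookkeeping in the $\tau=-1$ case. One must check that transposition preserves the mean and covariance conventions, and that the correct Robertson--Schr\"odinger matrix is $\Omega_m\oplus_D(-\Omega_m)$, so that the admissibility condition is exactly $\mathrm{Corr}_{-1}$ as defined in~\eqref{eq:corr}. Proposition~\ref{prop:tau-standard-CCR-identification} can be used to transfer the $\tau=-1$ statement to the standard CCR picture if desired.
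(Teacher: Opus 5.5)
Your proof is correct. It differs from the paper's in where the reduction to moments happens.

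\emph{The paper's route.} It first invokes Theorem~\ref{thm:Gaussian-opt-plan} to choose a Gaussian optimal coupling. That step relies on the external existence theory for optimal couplings in the two $2$-QW problems, followed by Gaussian projection. The paper then observes that Gaussian couplings are parametrized exactly by $A\in\mathrm{Corr}_\tau(\Sigma,\Sigma_\sigma)$, so attainment of the minimum is inherited from the existence of an optimal coupling.

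\emph{Your route.} You instead sandwich the infimum over \emph{all} couplings:
\begin{itemize}
\item the lower bound comes from Theorem~\ref{thm:rs-uncertainty-canonical} applied to an arbitrary, possibly non-Gaussian, coupling, combined with Proposition~\ref{prop:stateTrC};
\item the upper bound comes from the converse part of the same theorem;
\item attainment comes from compactness of $\mathrm{Corr}_\tau(\Sigma,\Sigma_\sigma)$. Your entrywise bound $A_{jk}^2\le\Sigma_{jj}(\Sigma_\sigma)_{kk}$, taken from the real part of the block matrix, is valid, and $A=0$ is admissible.
\end{itemize}

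\emph{What each buys.} Your argument is self-contained: it needs neither the general existence theorem nor Gaussian projection. It also reproves Theorem~\ref{thm:Gaussian-opt-plan}, because the Gaussian coupling built from an optimal $A$ attains the infimum over all couplings. The paper's argument is shorter given what it has already established. It also stays within the Gaussian-projection framework that is reused in Proposition~\ref{prop:gaussian-projection-barycenter}. Finally, your explicit $\tau=-1$ bookkeeping (trace invariance under transposition, and symmetry of the symmetrized second moments) makes precise what the paper only records as $\mathcal G(\hat\sigma^\top)=\mathcal G(\hat\sigma)^\top$.
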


\begin{proof}
By Theorem~\ref{thm:Gaussian-opt-plan}, an optimal coupling may be
chosen Gaussian. By Proposition~\ref{prop:reduction-moments}, its mean
is $(\mathbf m,\mathbf m_\sigma)$ and its covariance has block form
\[
\begin{pmatrix}
\Sigma&A\\
A^\top&\Sigma_\sigma
\end{pmatrix},
\qquad
A\in\mathrm{Corr}_\tau(\Sigma,\Sigma_\sigma).
\]
Proposition~\ref{prop:stateTrC} then gives the stated objective.
Conversely, every
$A\in\mathrm{Corr}_\tau(\Sigma,\Sigma_\sigma)$ determines an admissible
Gaussian bipartite state with these moments, whose marginals are
$\hat\rho^G$ and $\hat\sigma^G$. Hence the optimization over Gaussian
couplings is exactly the stated finite-dimensional problem.
\end{proof}

We can now prove the Gaussian reduction stated in
Theorem~\ref{thm:gaussian-barycenter-intro}.

\begin{proof}[Proof of Theorem~\ref{thm:gaussian-barycenter-intro}]
By the existence result of Section~\ref{sec:qot-bary}, the barycenter
problem admits a minimizer. Proposition~\ref{prop:gaussian-projection-barycenter}
shows that the Gaussian projection of any barycenter is again a
barycenter. Thus a Gaussian barycenter exists.

Let $\hat\rho^G$ be a Gaussian state with mean
$\mathbf m\in\R^{2m}$ and covariance
$\Sigma\in\mathrm{Cov}_m^{\mathrm Q}$. Applying
Proposition~\ref{prop:gaussian-pairwise-covariance} to each input gives
\[
\sum_{s=1}^N
\alpha_s\,
\QW_{2,\tau}^2(\hat\rho^G,\hat\sigma_s^G)
=
\min_{\substack{
A_s\in\mathrm{Corr}_\tau(\Sigma,\Sigma_s),\\
s\in[N]}}
\sum_{s=1}^N
\alpha_s
\left[
\Tr(\Sigma)+\Tr(\Sigma_s)-2\Tr(A_s)
+
\|\mathbf m-\mathbf m_s\|^2
\right].
\]
Therefore the Gaussian barycenter problem is equivalent to
\begin{equation}
\min_{\substack{
\mathbf m\in\R^{2m},\,
\Sigma\in\mathrm{Cov}_m^{\mathrm Q},\\
A_s\in\mathrm{Corr}_\tau(\Sigma,\Sigma_s),\ s\in[N]}}
\left\{
\Tr(\Sigma)
+
\sum_{s=1}^N
\alpha_s
\bigl(\Tr(\Sigma_s)-2\Tr(A_s)\bigr)
+
\sum_{s=1}^N
\alpha_s\|\mathbf m-\mathbf m_s\|^2
\right\}.
\end{equation}

Set $\overline{\mathbf m}:=\sum_{s=1}^N\alpha_s\mathbf m_s$. The identity
\[
\sum_{s=1}^N\alpha_s\|\mathbf m-\mathbf m_s\|^2
=
\|\mathbf m-\overline{\mathbf m}\|^2
+
\sum_{s=1}^N
\alpha_s\|\mathbf m_s-\overline{\mathbf m}\|^2
\]
shows that the mean is uniquely minimized at
$\mathbf m=\overline{\mathbf m}$. After minimizing over the mean variable, the Gaussian barycenter problem reduces to \eqref{eq:qw2cov}. Thus every Gaussian barycenter has mean
$\overline{\mathbf m}$ and covariance equal to the first component of
an optimizer of \eqref{eq:qw2cov}. Conversely,
every optimizer of \eqref{eq:qw2cov}, together with
the mean $\overline{\mathbf m}$, determines a Gaussian barycenter.
This proves the theorem.
\end{proof}
The same argument also gives a useful statement for arbitrary
barycenters, which will be used in the uniqueness analysis below.
\begin{corollary}[Covariance optimality for arbitrary barycenters]
\label{cor:arbitrary-barycenter-covariance}
Fix $\tau\in\{\pm1\}$ and let $\hat\rho$ be any, possibly
non-Gaussian, barycenter of $\hat{\boldsymbol\sigma}^G$. Then $\mathbf m_{\hat\rho}=\overline{\mathbf m}$. Moreover, if for each $s\in[N]$, $\hat\pi_s\in\Gamma_\tau(\hat\rho,\hat\sigma_s^G)$ is any optimal
coupling and $A_s$ denotes its cross-covariance, then
\[
\bigl(
\Sigma_{\hat\rho},A_1,\ldots,A_N
\bigr)
\]
is an optimizer of the covariance problem
\eqref{eq:qw2cov}.
\end{corollary}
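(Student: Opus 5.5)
The plan is to reduce the corollary to the chain of inequalities behind Theorem~\ref{thm:gaussian-barycenter-intro} and then read off equality at every step. By the discussion opening Subsection~\ref{subsec:gaussian-barycenter-reduction}, $\hat\rho$ lies in $\cS_2^R(L_2(\R^m))$, so its mean and covariance are defined. Since the inputs are Gaussian, each optimal coupling $\hat\pi_s$ also has finite canonical second moments. Its cross-covariance $A_s$ is therefore well defined, and $\hat\pi_s$ exists because $\QW_{2,\tau}^2(\hat\rho,\hat\sigma_s^G)<\infty$.

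By Proposition~\ref{prop:reduction-moments}, the covariance of $\hat\pi_s$ has the block form $\smat{\Sigma_{\hat\rho}&A_s\\A_s^\top&\Sigma_s}$, with the second block read in the $\tau$-representation. The Robertson--Schr\"odinger relation (Theorem~\ref{thm:rs-uncertainty-canonical}) for the bipartite Weyl system with commutation matrix $\hbar(\Omega_m\oplus_D\tau\Omega_m)$ then gives $\Sigma_{\hat\rho}\in\mathrm{Cov}_m^{\mathrm Q}$ and $A_s\in\mathrm{Corr}_\tau(\Sigma_{\hat\rho},\Sigma_s)$. Hence $(\Sigma_{\hat\rho},A_1,\dots,A_N)$ is feasible for \eqref{eq:qw2cov}.

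Next I compute the objective. Proposition~\ref{prop:stateTrC}, summed with weights $\alpha_s$, gives
\[
\sum_s\alpha_s\QW_{2,\tau}^2(\hat\rho,\hat\sigma_s^G)=\Tr(\Sigma_{\hat\rho})+\sum_s\alpha_s\bigl(\Tr\Sigma_s-2\Tr A_s\bigr)+\sum_s\alpha_s\|\mathbf m_{\hat\rho}-\mathbf m_s\|^2 .
\]
I then use the variance identity from the proof of Theorem~\ref{thm:gaussian-barycenter-intro}: the last sum equals $\|\mathbf m_{\hat\rho}-\overline{\mathbf m}\|^2+c$, where $c:=\sum_s\alpha_s\|\mathbf m_s-\overline{\mathbf m}\|^2$. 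Together with feasibility, this shows the left-hand side is at least $\mathsf P_{\tau,\mathrm{cov}}+c+\|\mathbf m_{\hat\rho}-\overline{\mathbf m}\|^2$.

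On the other hand, the proof of Theorem~\ref{thm:gaussian-barycenter-intro} identifies the barycenter value as $\mathsf P_{\tau,\mathrm{cov}}+c$. This uses Proposition~\ref{prop:gaussian-projection-barycenter}: the value is attained by Gaussian states, and the Gaussian minimum equals $\mathsf P_{\tau,\mathrm{cov}}+c$. Since $\hat\rho$ is a barycenter, the left-hand side equals exactly $\mathsf P_{\tau,\mathrm{cov}}+c$. This forces $\|\mathbf m_{\hat\rho}-\overline{\mathbf m}\|^2=0$, i.e.\ $\mathbf m_{\hat\rho}=\overline{\mathbf m}$. It also forces the covariance objective at $(\Sigma_{\hat\rho},A_1,\ldots,A_N)$ to equal $\mathsf P_{\tau,\mathrm{cov}}$, which is the claimed optimality.

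The only delicate point is the feasibility step in the $\tau=-1$ case. There one must check that the uncertainty relation applies with the sign-flipped block $\tau\Omega_m$ and that the second marginal's covariance is $\Sigma_s$ (rather than a transformed version) under the transpose convention. I would handle this via the identification $\Gamma_{-1}(\hat\rho,\hat\sigma)=\Gamma(\hat\rho,\hat\sigma^\top)$, together with the fact that the moments of $\hat\sigma^\top$ with respect to $\hat{\mathbf R}^\top$ coincide with those of $\hat\sigma$ with respect to $\hat{\mathbf R}$. This is the same convention already used in Proposition~\ref{prop:gaussian-pairwise-covariance}, so no new argument is needed.
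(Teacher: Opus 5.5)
Your proof is correct and is essentially the paper's argument. The paper runs the same comparison with the value $\mathsf P_{\tau,\mathrm{cov}}+c$ on the Gaussian projections $\mathcal G(\hat\rho)$ and $\mathcal G(\hat\pi_s)$, which by Proposition~\ref{prop:gaussian-projection-barycenter} are a Gaussian barycenter and optimal Gaussian couplings with the same moments, and reads off the mean and covariance optimality from Theorem~\ref{thm:gaussian-barycenter-intro}; you work directly with $\hat\rho$ and $\hat\pi_s$ and obtain feasibility from the Robertson--Schr\"odinger relation, the same fact that underlies the existence of $\mathcal G(\hat\pi_s)$.
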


\begin{proof}
By Proposition~\ref{prop:gaussian-projection-barycenter},
$\mathcal G(\hat\rho)$ is a Gaussian barycenter. Hence
Theorem~\ref{thm:gaussian-barycenter-intro} gives
$\mathbf m_{\hat\rho}
=
\mathbf m_{\mathcal G(\hat\rho)}
=
\overline{\mathbf m}$.

For each $s$, Proposition~\ref{prop:gaussian-projection-barycenter}
also shows that $\mathcal G(\hat\pi_s)$ is an optimal Gaussian coupling
between $\mathcal G(\hat\rho)$ and $\hat\sigma_s^G$. Gaussian
projection preserves the covariance of $\hat\pi_s$, so
$\mathcal G(\hat\pi_s)$ has block covariance
\[
\begin{pmatrix}
\Sigma_{\hat\rho}&A_s\\
A_s^\top&\Sigma_s
\end{pmatrix}.
\]
Since $\mathcal G(\hat\rho)$ is a Gaussian barycenter, the covariance
reduction above implies that
$(\Sigma_{\hat\rho},A_1,\ldots,A_N)$ attains
\eqref{eq:qw2cov}.
\end{proof}
 
\subsection{Uniqueness with one faithful Gaussian input}
\label{subsec:faithful-gaussian-global-uniqueness}
We now strengthen the Gaussian reduction obtained in the previous
subsection to uniqueness among all quantum states. Assume that at least
one prescribed Gaussian state is faithful. Let $k\in[N]$ and, after
relabeling the inputs if necessary, assume that the first $k$ inputs
are faithful, so that
\begin{equation}
\Sigma_r+\frac{i\hbar}{2}\Omega_m\succ0,
\qquad r\in[k].
\label{eq:faithful-inputs}
\end{equation}
Since each $\Sigma_r$ is real symmetric, complex conjugation gives
$\Sigma_r+\frac{i\tau\hbar}{2}\Omega_m\succ0$ for
$\tau\in\{\pm1\}$.

The proof has two layers. We first use covariance duality and
complementary slackness for the faithful inputs to obtain uniqueness of
the barycenter covariance and of the cross-covariances with the
faithful inputs.

We then establish a reconstruction principle that is
independent of faithfulness and of the barycenter problem. At fixed
first moments, every state $\hat\pi$ satisfying a covariance slackness relation with
a positive semidefinite matrix $Z$ can be written as $\hat\pi=\Phi_G(\hat\eta)$, where the fixed Gaussian channel $\Phi_G$ satisfies
$\rev{\Ran(\mathbb X_\Phi)}=\ker(\Real Z)$. Any Gaussian channel whose phase-space transformation matrix is
injective on $\ker(\Real Z)$ therefore reconstructs the state from its
output; in particular, a Gaussian output forces the original state to
be Gaussian. In the faithful-input setting, dual attainment for the
covariance problem and complementary slackness supply a dual optimizer
satisfying this injectivity condition after reduction to the standard
CCR. This proves uniqueness among all quantum states.
\subsubsection{Covariance duality and covariance uniqueness}
\label{subsubsec:faithful-covariance-dual}
We now study the finite-dimensional covariance problem
\eqref{eq:qw2cov} through its dual formulation. Under the faithful-input
assumption~\eqref{eq:faithful-inputs}, the dual problem satisfies strong
duality and its optimum is attained. The additional point, compared with
the general state-level duality established in
Theorem~\ref{thm:energy-constrained-barycenter-duality} and
Corollary~\ref{cor:unconstrained-barycenter-duality}, is precisely this
dual attainment: the earlier results identify the optimal value, but do
not assert the existence of a dual optimizer. Here, one may therefore
fix a single dual optimizer and pair it with every primal optimizer,
yielding a common complementary-slackness relation.

This complementary-slackness relation is the \rev{key ingredient in}
the uniqueness argument. It yields uniqueness of the optimal
barycenter covariance and of the cross-covariances associated with the
faithful inputs.

We begin with the covariance dual problem. Fix $\tau\in\{\pm1\}$ and assume \eqref{eq:faithful-inputs}. Define the set
\[
\ccov^\tau
:=
\left\{
\mathbf V=(V_r)_{r\in[k]}
\in
\prod_{r=1}^k\cM_{4m,\mathrm{sa}}(\C):
V_r+
\begin{pmatrix}
\frac{i\hbar}{2}\Omega_m&0\\
0&\Sigma_r+\frac{i\tau\hbar}{2}\Omega_m
\end{pmatrix}
\succeq0,\quad r\in[k]
\right\},
\]
and the linear operator $\lcc:
\cM_{2m,\mathrm{sym}}(\R)
\times
\prod_{r=1}^k\cM_{2m}(\R)
\rightarrow
\prod_{r=1}^k\cM_{4m,\mathrm{sa}}(\C)$ by
\[
\lcc(\Sigma,A_1,\ldots,A_k)
:=
\left(
\begin{pmatrix}
\Sigma&A_r\\
A_r^\top&0
\end{pmatrix}
\right)_{r\in[k]}.
\]

Define
$\fc^\tau,\gc^\tau:
\prod_{r=1}^k\cM_{4m,\mathrm{sa}}(\C)
\to\R\cup\{+\infty\}$
by
\[
\fc^\tau(\mathbf V):=\iota_{\ccov^\tau}(\mathbf V)
=
\begin{cases}
0,&\mathbf V\in\ccov^\tau,\\
+\infty,&\text{otherwise},
\end{cases}
\]
\[
\begin{aligned}
\gc^\tau(\mathbf V)
:=
\inf_{\substack{
\Sigma\in\mathrm{Cov}_m^{\mathrm Q},\\
A_1,\ldots,A_N\in\cM_{2m}(\R)}}
\quad&
\Tr(\Sigma)-2\sum_{s=1}^N\alpha_s\Tr(A_s)
\\
\text{subject to}\quad&
\lcc(\Sigma,A_1,\ldots,A_k)=\mathbf V,
\\
&
A_s\in\mathrm{Corr}_\tau(\Sigma,\Sigma_s),
\qquad s=k+1,\ldots,N.
\end{aligned}
\]
If $k=N$, we instead set
\[
\gc^\tau(\mathbf V)
:=
\inf_{\substack{
\Sigma\in\cM_{2m,\mathrm{sym}}(\R),\\
A_1,\ldots,A_N\in\cM_{2m}(\R)}}
\left\{
\Tr(\Sigma)-2\sum_{s=1}^N\alpha_s\Tr(A_s):
\lcc(\Sigma,A_1,\ldots,A_N)=\mathbf V
\right\}.
\]
The functions
$\fc^\tau$ and $\gc^\tau$ are obtained by separating the covariance
constraints in \eqref{eq:qw2cov} corresponding to the faithful inputs: $\fc^\tau$ is the indicator of these constraints, while $\gc^\tau$ is the partial-value function containing the objective and the remaining covariance constraints. 
\begin{prop}[Dual formulation of the covariance matrix optimization problem]
\label{prop:covdual}
Fix $\tau\in\{\pm1\}$ and assume \eqref{eq:faithful-inputs}. For $Z_r\in\cM_{4m,\mathrm{sa}}(\C)$, we denote its
$2m\times2m$ blocks by $(Z_r)_{ij}$, $i,j\in\{1,2\}$. Recall that
$\mathsf P_{\tau,\mathrm{cov}}$ denotes the minimum value of
\eqref{eq:qw2cov}. If $k<N$, define
\begin{equation}
\begin{aligned}
\mathsf D_{\tau,\mathrm{cov}}
:=&
\sum_{s=1}^N\alpha_s\Tr(\Sigma_s)
+
\max_{\substack{
Z_r\in\cM_{4m,\geq}(\C),\ r\in[k],\\
\Real (Z_r)_{12}=-\alpha_r\I_{2m},\ r\in[k]}}
\Bigg\{
-\sum_{r=1}^k
\Real\Tr\left[
Z_r^\dagger
\begin{pmatrix}
\frac{i\hbar}{2}\Omega_m&0\\
0&\Sigma_r+\frac{i\tau\hbar}{2}\Omega_m
\end{pmatrix}
\right]
\\
&
-
\sup_{\substack{
\Sigma\in\mathrm{Cov}_m^{\mathrm Q},\\
A_s\in\mathrm{Corr}_\tau(\Sigma,\Sigma_s),\
s=k+1,\ldots,N}}
\left\{
\Tr\left[
\left(
\sum_{r=1}^k\Real (Z_r)_{11}-\I_{2m}
\right)\Sigma
\right]
+
2\sum_{s=k+1}^N\alpha_s\Tr(A_s)
\right\}
\Bigg\}.
\end{aligned}
\label{eq:covariance-dual-faithful}
\end{equation}
If $k=N$, define instead 
\begin{equation} 
\begin{aligned} 
\mathsf D_{\tau,\mathrm{cov}} := \sum_{s=1}^{N}\alpha_s\Tr(\Sigma_s) + \max_{\substack{ Z_s\in\cM_{4m,\geq}(\C),\ s\in[N],\\ \Real (Z_s)_{12}=-\alpha_s\I_{2m},\ s\in[N],\\ \sum_{s=1}^N\Real (Z_s)_{11}=\I_{2m}}} \left\{ -\sum_{s=1}^N \Real\Tr\left[ Z_s^\dagger \begin{pmatrix} \frac{i\hbar}{2}\Omega_m&0\\ 0&\Sigma_s+\frac{i\tau\hbar}{2}\Omega_m 
\end{pmatrix} \right] \right\}. 
\end{aligned} \label{eq:covariance-dual-fully-faithful} \end{equation} 

Strong duality holds:
$\mathsf P_{\tau,\mathrm{cov}}=
\mathsf D_{\tau,\mathrm{cov}}$. Moreover, the corresponding dual maximum is attained.
\end{prop}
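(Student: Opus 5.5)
The proof is a finite-dimensional Fenchel--Rockafellar argument applied to the splitting $\mathsf P_{\tau,\mathrm{cov}}=\sum_s\alpha_s\Tr(\Sigma_s)+\inf\{\fc^\tau(\lcc x)+\gc^\tau(\lcc x)\}$. The correct form is really an infimum over the primal variable $x=(\Sigma,A_1,\ldots,A_k)$, with $\gc^\tau$ regarded as a function of $x$. We will work on the real inner-product space $\prod_{r}\cM_{4m,\mathrm{sa}}(\C)$ with the Frobenius pairing. Writing $\mathsf V=\lcc(\Sigma,A_1,\ldots,A_k)$, the constraint $A_r\in\mathrm{Corr}_\tau(\Sigma,\Sigma_r)$ is exactly $\mathsf V_r+\smat{\frac{i\hbar}{2}\Omega_m&0\\0&\Sigma_r+\frac{i\tau\hbar}{2}\Omega_m}\succeq0$, i.e. $\mathsf V\in\ccov^\tau$.

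First we record that $\Sigma\in\mathrm{Cov}_m^{\mathrm Q}$ is implied by the $(1,1)$ block of this constraint, so no extra condition is needed when $k=N$. Then
\[
\mathsf P_{\tau,\mathrm{cov}}-\sum_s\alpha_s\Tr(\Sigma_s)=\inf_{\mathbf V}\{\fc^\tau(\mathbf V)+\gc^\tau(\mathbf V)\}.
\]
Both functions are convex: $\fc^\tau$ is the indicator of a closed convex set, and $\gc^\tau$ is a partial infimum of a linear objective over a convex set, using convexity of the joint constraint in $(\Sigma,A_s)$. The Fenchel--Rockafellar theorem (Theorem~\ref{app:FRTheorem}, finite-dimensional form) gives
\[
\inf(\fc^\tau+\gc^\tau)=\max_{Z}\{-\fc^{\tau*}(-Z)-\gc^{\tau*}(Z)\},
\]
with the maximum attained, provided a qualification condition holds.

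The qualification condition is where faithfulness enters, and it is the main obstacle. We need a point of $\Dom\gc^\tau$ lying in the interior of $\ccov^\tau$ (or, alternatively, a relative-interior condition). Take $\Sigma=c\I_{2m}$ with $c>\hbar/2$ large, $A_r=0$ for $r\le k$, and any admissible $A_s$ for $s>k$, for instance $A_s=0$. Then $\mathsf V_r+\mathrm{diag}\bigl(\tfrac{i\hbar}{2}\Omega_m,\Sigma_r+\tfrac{i\tau\hbar}{2}\Omega_m\bigr)=\mathrm{diag}\bigl(c\I+\tfrac{i\hbar}{2}\Omega_m,\Sigma_r+\tfrac{i\tau\hbar}{2}\Omega_m\bigr)\succ0$ by \eqref{eq:faithful-inputs}. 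Strict positivity persists under small Hermitian perturbations, so $\fc^\tau$ is continuous there. Moreover $\gc^\tau(\mathbf V)<\infty$ at this point. The subtle issue is that $\gc^\tau$ is $+\infty$ off $\Ran\lcc$, a real subspace: perturbations of $\mathbf V$ include non-real and lower-right blocks. The argument must therefore put continuity on $\fc^\tau$ (which holds in a full neighborhood) and only require $\Dom\gc^\tau\neq\varnothing$ at that point, which is exactly the standard condition. We also need $\inf>-\infty$; this follows because $\Tr(A)$ is bounded on each $\mathrm{Corr}_\tau$ and the objective is bounded below by Proposition~\ref{prop:stateTrC}, being a transport cost.

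Next we compute the conjugates. For $\fc^{\tau*}(-Z)=\sup_{\mathbf V\in\ccov^\tau}\Real\Tr(-Z^\dagger\mathbf V)$, we write $\mathbf V=P-D$ with $P\succeq0$ and $D$ the fixed block-diagonal matrix. The supremum is finite iff each $Z_r\succeq0$, and then it equals $\sum_r\Real\Tr(Z_r^\dagger D_r)$; this produces the first term of \eqref{eq:covariance-dual-faithful}. For $\gc^{\tau*}(Z)=\sup_{\Sigma,A}\{\sum_r\Real\Tr(Z_r^\dagger\lcc(\cdot)_r)-\Tr\Sigma+2\sum_s\alpha_s\Tr A_s\}$, we use $\Real\Tr(Z_r^\dagger\smat{\Sigma&A_r\\A_r^\top&0})=\Tr(\Real(Z_r)_{11}\Sigma)+2\Tr(\Real(Z_r)_{12}^\top A_r)$. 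The $A_r$, $r\le k$, are free real matrices, so finiteness forces $\Real(Z_r)_{12}=-\alpha_r\I_{2m}$. What remains is the displayed inner supremum over $\Sigma$ and $A_s$, $s>k$. When $k=N$, $\Sigma$ is also free, which forces $\sum_s\Real(Z_s)_{11}=\I_{2m}$. Substituting these gives exactly \eqref{eq:covariance-dual-faithful} and \eqref{eq:covariance-dual-fully-faithful}, with attainment inherited from Fenchel--Rockafellar. The sign conventions, $Z$ versus $-Z$, will be chosen so as to match the stated formulas.
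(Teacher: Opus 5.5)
Your proposal is correct and follows essentially the same route as the paper: Fenchel--Rockafellar applied to $\fc^\tau+\gc^\tau$ on $\prod_{r}\cM_{4m,\mathrm{sa}}(\C)$, with qualification point $\lcc(c\I_{2m},0,\ldots,0)$ for $c>\hbar/2$, where faithfulness gives continuity of $\fc^\tau$ and $0\in\mathrm{Corr}_\tau(c\I_{2m},\Sigma_s)$ gives finiteness of $\gc^\tau$, and then the same block computation of the conjugates, which forces $Z_r\succeq0$, $\Real(Z_r)_{12}=-\alpha_r\I_{2m}$, and $\sum_s\Real(Z_s)_{11}=\I_{2m}$ when $k=N$. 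One point to tidy: the theorem does not need finiteness of the overall infimum but does need properness of $\gc^\tau$ (it never takes the value $-\infty$), which follows, as in the paper, from boundedness of $\mathrm{Corr}_\tau(\Sigma,\Sigma_s)$ for fixed $\Sigma$.
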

\begin{proof}
By the definitions of 
$\fc^\tau$ and $\gc^\tau$,
\[
\mathsf P_{\tau,\mathrm{cov}}
=
\sum_{s=1}^N\alpha_s\Tr(\Sigma_s)
+
\inf_{\mathbf V}
\left\{
\fc^\tau(\mathbf V)+\gc^\tau(\mathbf V)
\right\}.
\]

Choose $t>\hbar/2$ and set $\mathbf V_0:=\lcc(t\I_{2m},0,\ldots,0)$. By \eqref{eq:faithful-inputs},
$(\mathbf V_0)_r+
\smat{
\frac{i\hbar}{2}\Omega_m&0\\
0&\Sigma_r+\frac{i\tau\hbar}{2}\Omega_m}
\succ0$ for all $r\in[k]$, so $\mathbf V_0\in\operatorname{int}(\Dom\fc^\tau)$.
Moreover, $\mathbf V_0\in\Dom\gc^\tau$: if $k<N$,
this follows from $0\in\mathrm{Corr}_\tau(t\I_{2m},\Sigma_s)$ for $k+1\leq s\leq N$, while for $k=N$ it follows directly from the definition of $\gc^\tau$. Since $\fc^\tau$ vanishes on a neighborhood of $\mathbf V_0$, it is continuous at $\mathbf V_0$.
Fenchel--Rockafellar Theorem~\ref{app:FRTheorem} therefore yields
\[
\mathsf P_{\tau,\mathrm{cov}}
=
\sum_{s=1}^N\alpha_s\Tr(\Sigma_s)
+
\max_{\mathbf Z}
\left\{
-(\fc^\tau)^*(-\mathbf Z)
-
(\gc^\tau)^*(\mathbf Z)
\right\},
\]
and the maximum is attained. By
Proposition~\ref{prop:dualfunccov},
$(\fc^\tau)^*(-\mathbf Z)<+\infty$ requires
$Z_r\succeq0$ for every $r\in[k]$, while
$(\gc^\tau)^*(\mathbf Z)<+\infty$ requires $\Real (Z_r)_{12}
=-\alpha_r\I_{2m}$ for all $r\in[k]$. 
Substituting the formulas for $(\fc^\tau)^*$ and $(\gc^\tau)^*$
from Proposition~\ref{prop:dualfunccov} gives
\eqref{eq:covariance-dual-faithful} when $k<N$. When $k=N$, the
remaining supremum over
$\Sigma\in\cM_{2m,\mathrm{sym}}(\R)$ is finite if and only if $\sum_{s=1}^N\Real (Z_s)_{11}=\I_{2m}$, in which case it equals zero. This gives
\eqref{eq:covariance-dual-fully-faithful}. Hence, in either case,
$\mathsf P_{\tau,\mathrm{cov}}=\mathsf D_{\tau,\mathrm{cov}}$, and
the corresponding dual maximum is attained.
\end{proof}
Strong duality and dual attainment yield the following complementary-slackness relation.
\begin{corollary}[Complementary slackness for the covariance dual]
\label{cor:faithful-complementary-slackness}
Fix $\tau\in\{\pm1\}$ and assume \eqref{eq:faithful-inputs}. Let $\mathbf Z^\star=(Z_r^\star)_{r\in[k]}$ be any dual optimizer from Proposition~\ref{prop:covdual}. Then, for every optimizer
$(\Sigma^\star,A_1^\star,\ldots,A_N^\star)$ of
\eqref{eq:qw2cov},
\begin{equation}
\begin{pmatrix}
\Sigma^\star+\frac{i\hbar}{2}\Omega_m&A_r^\star\\
(A_r^\star)^\top&
\Sigma_r+\frac{i\tau\hbar}{2}\Omega_m
\end{pmatrix}
Z_r^\star
=
0,
\qquad r\in[k].
\label{eq:faithful-complementary-slackness}
\end{equation}
\end{corollary}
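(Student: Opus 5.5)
The plan is the standard complementary-slackness argument: combine strong duality with zero duality gap, then split the gap into nonnegative pieces. By Proposition~\ref{prop:covdual}, $\mathsf P_{\tau,\mathrm{cov}}=\mathsf D_{\tau,\mathrm{cov}}$ and $\mathbf Z^\star$ attains the dual. Fix a primal optimizer $(\Sigma^\star,A_1^\star,\ldots,A_N^\star)$. For each $r\in[k]$ set
\[
M_r^\star:=\begin{pmatrix}\Sigma^\star+\frac{i\hbar}{2}\Omega_m&A_r^\star\\ (A_r^\star)^\top&\Sigma_r+\frac{i\tau\hbar}{2}\Omega_m\end{pmatrix}
=(\lcc(\Sigma^\star,A^\star_1,\ldots,A^\star_k))_r+\begin{pmatrix}\frac{i\hbar}{2}\Omega_m&0\\0&\Sigma_r+\frac{i\tau\hbar}{2}\Omega_m\end{pmatrix}.
\]
Since $A_r^\star\in\mathrm{Corr}_\tau(\Sigma^\star,\Sigma_r)$, we have $M_r^\star\succeq0$.

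The first step is to expand $\Real\Tr[(Z_r^\star)^\dagger M_r^\star]$ blockwise, using that $\Sigma^\star$ and $A_r^\star$ are real. The $(1,1)$ block contributes $\Tr(\Real(Z_r^\star)_{11}\Sigma^\star)$. The two off-diagonal blocks together contribute $2\Tr(\Real(Z_r^\star)_{12}^\top A_r^\star)=-2\alpha_r\Tr(A_r^\star)$, by the dual constraint $\Real(Z_r^\star)_{12}=-\alpha_r\I_{2m}$. The constant part gives exactly the term $\Real\Tr[(Z^\star_r)^\dagger\,\mathrm{diag}(\cdot)]$ that appears in the dual objective. Summing over $r$ then gives
\[
\sum_{r=1}^k\Real\Tr[(Z_r^\star)^\dagger M_r^\star]
=\Tr\Bigl[\Bigl(\sum_{r}\Real(Z_r^\star)_{11}-\I_{2m}\Bigr)\Sigma^\star\Bigr]+2\sum_{s>k}\alpha_s\Tr(A_s^\star)
-\Bigl(\mathsf P_{\tau,\mathrm{cov}}-\sum_s\alpha_s\Tr\Sigma_s\Bigr)+\sum_r\Real\Tr[(Z_r^\star)^\dagger\,\mathrm{diag}(\cdot)],
\]
where I have added and subtracted $\Tr(\Sigma^\star)-2\sum_{s>k}\alpha_s\Tr(A_s^\star)$ in order to recognise the primal objective.

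The second step uses $\mathsf D=\mathsf P$. Because $(\Sigma^\star,A^\star_{k+1},\ldots,A^\star_N)$ is feasible for the inner supremum in \eqref{eq:covariance-dual-faithful}, the sum of the first two terms is at most that supremum. Substituting $\mathsf P=\mathsf D$ evaluated at $\mathbf Z^\star$ then shows that the right-hand side is $\le 0$. In the case $k=N$, the constraint $\sum_s\Real(Z_s^\star)_{11}=\I_{2m}$ removes the $\Sigma^\star$ term altogether, and the same computation gives exactly $0$. On the other hand, each summand $\Real\Tr[(Z_r^\star)^\dagger M_r^\star]=\Tr(Z_r^\star M_r^\star)$ is $\ge0$, since both matrices are positive semidefinite and Hermitian. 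Hence every summand vanishes.

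The third step concludes. For $P,Q\succeq0$ with $\Tr(PQ)=0$ we have $\Tr\bigl((P^{1/2}Q^{1/2})^\dagger(P^{1/2}Q^{1/2})\bigr)=0$, so $P^{1/2}Q^{1/2}=0$ and therefore $PQ=0$. Applied with $P=M_r^\star$ and $Q=Z_r^\star$, this gives \eqref{eq:faithful-complementary-slackness}.

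The only delicate point is the bookkeeping in the first step. One must get the real Frobenius pairing conventions right, in particular the transpose and conjugation on the off-diagonal blocks, so that the cross term really equals $-2\alpha_r\Tr A_r^\star$. One must also check that the sign conventions of $(\fc^\tau)^*(-\mathbf Z)$ in Proposition~\ref{prop:dualfunccov} match the signs in the dual objective as written. I would verify this directly from the displayed dual \eqref{eq:covariance-dual-faithful} rather than through the conjugates, which avoids the issue.
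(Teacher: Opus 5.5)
Your argument is correct and is essentially the paper's. The paper gets the zero-gap identity as equality in the Fenchel--Young inequality for $\fc^\tau$ at $\lcc(\Sigma^\star,A_1^\star,\ldots,A_k^\star)$ and $-\mathbf Z^\star$, which is exactly what your explicit gap computation from \eqref{eq:covariance-dual-faithful} unpacks, and then uses the same nonnegativity step and $\Tr(PQ)=0\Rightarrow PQ=0$. One bookkeeping fix: in your displayed identity the primal-value term should enter as $+\bigl(\mathsf P_{\tau,\mathrm{cov}}-\sum_s\alpha_s\Tr\Sigma_s\bigr)$, not with a minus sign, and only with that sign does substituting $\mathsf P_{\tau,\mathrm{cov}}=\mathsf D_{\tau,\mathrm{cov}}$ at $\mathbf Z^\star$ give the bound $\le 0$.
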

\begin{proof}
Let $\mathbf V^\star:=\lcc(\Sigma^\star,A_1^\star,\ldots,A_k^\star)$. By strong duality and optimality, equality holds in the Fenchel
inequality for $\fc^\tau$, hence
\[
\fc^\tau(\mathbf V^\star)
+
(\fc^\tau)^*(-\mathbf Z^\star)
=
-\Real\sum_{r=1}^k
\Tr\bigl[(Z_r^\star)^\dagger V_r^\star\bigr].
\]
Since $\fc^\tau(\mathbf V^\star)=0$, the formula for
$(\fc^\tau)^*$ gives
\[
\sum_{r=1}^k
\Real\Tr\left[
(Z_r^\star)^\dagger
\begin{pmatrix}
\Sigma^\star+\frac{i\hbar}{2}\Omega_m&A_r^\star\\
(A_r^\star)^\top&
\Sigma_r+\frac{i\tau\hbar}{2}\Omega_m
\end{pmatrix}
\right]
=
0.
\]
Each summand is nonnegative. Hence every summand vanishes, and
\eqref{eq:faithful-complementary-slackness} follows from the fact that
$P,Q\succeq0$ and $\Tr(PQ)=0$ imply $PQ=0$.
\end{proof}

The complementary-slackness relation above yields the desired
covariance uniqueness.
\begin{prop}[Uniqueness of the barycenter covariance and faithful-input cross-covariances]
\label{prop:faithful-covariance-uniqueness}
Fix $\tau\in\{\pm1\}$ and assume \eqref{eq:faithful-inputs}. Then there
exists a unique $\overline\Sigma_\tau\in\mathrm{Cov}_m^{\mathrm Q}$ such that every optimizer
$(\Sigma^\star,A_1^\star,\ldots,A_N^\star)$ of
\eqref{eq:qw2cov} satisfies $\Sigma^\star=\overline\Sigma_\tau$.
Moreover, for every $r\in[k]$, there exists a unique $\overline A_{r,\tau}\in\cM_{2m}(\R)$ such that every optimizer satisfies $A_r^\star=\overline A_{r,\tau}$.
\end{prop}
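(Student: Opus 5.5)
The plan is to fix, once and for all, a dual optimizer $\mathbf Z^\star=(Z_r^\star)_{r\in[k]}$ given by Proposition~\ref{prop:covdual}. Then, for an arbitrary primal optimizer $(\Sigma^\star,A_1^\star,\ldots,A_N^\star)$ of \eqref{eq:qw2cov}, I will solve the complementary-slackness relation \eqref{eq:faithful-complementary-slackness} of Corollary~\ref{cor:faithful-complementary-slackness} explicitly for $A_r^\star$ and $\Sigma^\star$. The solution will be expressed only through $Z_r^\star$, $\Sigma_r$, $\alpha_r$ and $\hbar$. Since $\mathbf Z^\star$ does not depend on the chosen primal optimizer, this gives uniqueness. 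Existence of an optimizer follows from Theorem~\ref{thm:gaussian-barycenter-intro}, so $\overline\Sigma_\tau$ and $\overline A_{r,\tau}$ are then well defined.

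Fix $r\in[k]$ and abbreviate
\[
K_r:=\Sigma_r+\tfrac{i\tau\hbar}{2}\Omega_m\succ0,\qquad
Z:=Z_r^\star=\begin{pmatrix}Z_{11}&Z_{12}\\ Z_{21}&Z_{22}\end{pmatrix},
\]
with $Z_{21}=Z_{12}^\dagger$ because $Z\succeq0$. The dual constraint gives $\Real Z_{12}=-\alpha_r\I_{2m}$. I will read \eqref{eq:faithful-complementary-slackness} block by block.

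First take the second block row, second block column:
\[
(A_r^\star)^\top Z_{12}+K_rZ_{22}=0 .
\]
Taking the adjoint, and using that $A_r^\star$ is real and $K_r,Z_{22}$ are Hermitian, gives
\[
Z_{12}^\dagger A_r^\star=-Z_{22}K_r .
\]
Now take real parts. Since $A_r^\star$ is real,
\[
\Real(Z_{12}^\dagger A_r^\star)=(\Real Z_{12})^\top A_r^\star=-\alpha_r A_r^\star .
\]
Hence
\[
A_r^\star=\alpha_r^{-1}\Real\bigl(Z_{22}K_r\bigr).
\]
The right-hand side depends only on the fixed data, and $\alpha_r>0$. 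This proves uniqueness of $A_r^\star$, and we set $\overline A_{r,\tau}:=\alpha_r^{-1}\Real(Z_{22}K_r)$.

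Next take the first block row, second block column:
\[
\bigl(\Sigma^\star+\tfrac{i\hbar}{2}\Omega_m\bigr)Z_{12}+A_r^\star Z_{22}=0 .
\]
Take real parts, using that $\Sigma^\star$ is real symmetric and $\Omega_m$ is real:
\[
-\alpha_r\Sigma^\star-\tfrac{\hbar}{2}\Omega_m\Imag Z_{12}+\Real\bigl(\overline A_{r,\tau}Z_{22}\bigr)=0 .
\]
This determines $\Sigma^\star$ uniquely. It suffices to use a single faithful index, say $r=1$, and we set
\[
\overline\Sigma_\tau:=\alpha_1^{-1}\Bigl(\Real\bigl(\overline A_{1,\tau}Z_{22}\bigr)-\tfrac{\hbar}{2}\Omega_m\Imag Z_{12}\Bigr),\qquad Z=Z_1^\star .
\]
Membership $\overline\Sigma_\tau\in\mathrm{Cov}_m^{\mathrm Q}$ is automatic, since $\overline\Sigma_\tau$ coincides with the first component of an existing optimizer.

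The step I expect to be the main obstacle is choosing the right block equation. The natural first attempt uses the off-diagonal relation $Z_{21}=-K_r^{-1}(A_r^\star)^\top Z_{11}$, which involves the possibly singular or uncontrolled block $Z_{11}$ and does not visibly isolate $A_r^\star$. The $(2,2)$ and $(1,2)$ entries work because they contain $Z_{12}$, whose real part is pinned to the nonzero scalar matrix $-\alpha_r\I_{2m}$ by dual feasibility. The cancellation also needs the reality of $A_r^\star$ and $\Sigma^\star$, so the adjoint and real-part manipulations must be checked carefully, including the transpose and conjugation conventions for $\tau=-1$. Faithfulness enters only through the existence and attainment of $\mathbf Z^\star$ in Proposition~\ref{prop:covdual}. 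For the nonfaithful inputs $s>k$ no such dual certificate is available, which is why the statement asserts uniqueness only of the $A_r^\star$ with $r\in[k]$.
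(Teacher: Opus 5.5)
Your proof is correct and uses the same ingredients as the paper: a single fixed dual optimizer, complementary slackness for every primal optimizer (Corollary~\ref{cor:faithful-complementary-slackness}), the $(2,2)$ and $(1,2)$ blocks, the constraint $\Real(Z_r^\star)_{12}=-\alpha_r\I_{2m}$, and the reality of $\Sigma^\star$ and $A_r^\star$. The only difference is presentational: the paper subtracts the relations for two optimizers and shows the differences vanish, while you solve the same block equations directly, which gives closed-form expressions for $\overline A_{r,\tau}$ and $\overline\Sigma_\tau$ in terms of $Z_r^\star$.
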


\begin{proof}
Let
$(\Sigma^{(j)},A_1^{(j)},\ldots,A_N^{(j)})$, $j=1,2$, be two
optimizers of \eqref{eq:qw2cov}, and let
$\mathbf Z^\star=(Z_r^\star)_{r\in[k]}$ be any dual optimizer from
Proposition~\ref{prop:covdual}. Fix $r\in[k]$. By
Corollary~\ref{cor:faithful-complementary-slackness},
\[
\begin{pmatrix}
\Sigma^{(j)}+\frac{i\hbar}{2}\Omega_m&A_r^{(j)}\\
(A_r^{(j)})^\top&
\Sigma_r+\frac{i\tau\hbar}{2}\Omega_m
\end{pmatrix}
Z_r^\star
=
0,
\qquad
j=1,2.
\]
Set $\Delta\Sigma:=\Sigma^{(1)}-\Sigma^{(2)}$ and $\Delta A_r:=A_r^{(1)}-A_r^{(2)}$. Subtracting the two identities gives
\[
\begin{pmatrix}
\Delta\Sigma&\Delta A_r\\
\Delta A_r^\top&0
\end{pmatrix}
Z_r^\star
=
0.
\]
The $(2,2)$-block yields $\Delta A_r^\top(Z_r^\star)_{12}=0$. Since $\Delta A_r$ is real and $\Real(Z_r^\star)_{12}=-\alpha_r\I_{2m}$, we obtain $0
=
\Delta A_r^\top\Real(Z_r^\star)_{12}
=
-\alpha_r\Delta A_r^\top$. Hence $\Delta A_r=0$. The $(1,2)$-block then gives $\Delta\Sigma(Z_r^\star)_{12}=0$, and therefore
$0=\Delta\Sigma\Real(Z_r^\star)_{12}=-\alpha_r\Delta\Sigma$. Thus $\Delta\Sigma=0$.

Since $r\in[k]$ was arbitrary, the same argument gives uniqueness of
the cross-covariance associated with every faithful input.
\end{proof}
Therefore, we obtain uniqueness within the Gaussian class.
\begin{corollary}[Uniqueness among Gaussian barycenters]
\label{cor:faithful-unique-gaussian-minimizer}
Assume \eqref{eq:faithful-inputs}. For each $\tau\in\{\pm1\}$, there exists exactly one Gaussian $2$-QW barycenter of the inputs
$\hat\sigma_1^G,\ldots,\hat\sigma_N^G$. We denote it by
$\hat{\overline\rho}_\tau^G$. Its mean is $\overline{\mathbf m}=\sum_{s=1}^N\alpha_s\mathbf m_s$, and its covariance is the unique matrix $\overline\Sigma_\tau$ from Proposition~\ref{prop:faithful-covariance-uniqueness}.
\end{corollary}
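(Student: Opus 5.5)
The corollary follows from Theorem~\ref{thm:gaussian-barycenter-intro} together with Proposition~\ref{prop:faithful-covariance-uniqueness}. We prove existence first and then uniqueness.

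\emph{Existence.} Theorem~\ref{thm:gaussian-barycenter-intro} gives at least one Gaussian barycenter. It also shows that every Gaussian barycenter has mean $\overline{\mathbf m}=\sum_{s}\alpha_s\mathbf m_s$, because the mean decouples from the covariance variables and the identity $\sum_s\alpha_s\|\mathbf m-\mathbf m_s\|^2=\|\mathbf m-\overline{\mathbf m}\|^2+\mathrm{const}$ shows it is minimized only at $\overline{\mathbf m}$.

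\emph{The covariance is determined.} Let $\hat\rho^G$ be any Gaussian barycenter. By Theorem~\ref{thm:gaussian-barycenter-intro}, its covariance $\Sigma_{\hat\rho^G}$ is the first component of some optimizer $(\Sigma^\star,A_1^\star,\ldots,A_N^\star)$ of \eqref{eq:qw2cov}. Alternatively, Corollary~\ref{cor:arbitrary-barycenter-covariance} applied to optimal couplings gives this directly. Under the faithful-input assumption \eqref{eq:faithful-inputs}, Proposition~\ref{prop:faithful-covariance-uniqueness} says every optimizer has first component $\overline\Sigma_\tau$. Hence $\Sigma_{\hat\rho^G}=\overline\Sigma_\tau$.

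\emph{Uniqueness.} A Gaussian state is determined by its mean and covariance. Its characteristic function in Definition~\ref{def:weyl-characteristic-gaussian-state} is fixed by $(\mathbf m,\Sigma)$, and the Weyl system $W_{\mathrm{CCR}}$ on $L_2(\R^m)$ is irreducible, so the characteristic function determines the state (as used in Definition~\ref{def:gaussian-projection}). Any two Gaussian barycenters share $(\overline{\mathbf m},\overline\Sigma_\tau)$ and therefore coincide. This gives exactly one Gaussian barycenter $\hat{\overline\rho}_\tau^G$, with the stated mean and covariance.

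There is no real obstacle here. The only point to check is that Theorem~\ref{thm:gaussian-barycenter-intro} gives "every Gaussian barycenter's covariance is the first component of an optimizer", which is the direction needed to invoke Proposition~\ref{prop:faithful-covariance-uniqueness}.
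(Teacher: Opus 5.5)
Your proposal is correct and follows the paper's own proof essentially verbatim: Theorem~\ref{thm:gaussian-barycenter-intro} gives existence, fixes the mean at $\overline{\mathbf m}$, and puts every Gaussian barycenter's covariance in the first component of an optimizer of \eqref{eq:qw2cov}. Proposition~\ref{prop:faithful-covariance-uniqueness} then pins that covariance to $\overline\Sigma_\tau$, and uniqueness follows because a Gaussian state is determined by its mean and covariance.
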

\begin{proof}
By Theorem~\ref{thm:gaussian-barycenter-intro}, \rev{every Gaussian barycenter} has mean
$\overline{\mathbf m}=\sum_{s=1}^N\alpha_s\mathbf m_s$, and its
covariance is the first component of an optimizer of
\eqref{eq:qw2cov}. Conversely, every optimizer of
\eqref{eq:qw2cov}, together with the mean
$\overline{\mathbf m}$, determines a Gaussian barycenter.

Proposition~\ref{prop:faithful-covariance-uniqueness} shows that the
covariance component of every optimizer is the same matrix
$\overline\Sigma_\tau$. Hence every Gaussian barycenter has mean
$\overline{\mathbf m}$ and covariance $\overline\Sigma_\tau$.
Since a Gaussian state is uniquely determined by its mean and
covariance, the Gaussian barycenter is unique.
\end{proof}
\subsubsection{Reconstruction under a covariance slackness relation}
\label{subsubsec:saturated-face-reconstruction}
The covariance uniqueness in Proposition~\ref{prop:faithful-covariance-uniqueness} is not, by itself, sufficient
to determine the barycenter state, since first and second moments do not in general determine a quantum state uniquely. This motivates a more general reconstruction question: under what additional covariance-level condition can a state be uniquely recovered once its first and second moments are fixed?

We consider a state with covariance matrix
$\Sigma\in\mathrm{Cov}_{2m}^{\mathrm Q}$ and a matrix
$Z\in\cM_{4m,\geq}(\C)$ satisfying
\[
\left(
\Sigma+\frac{i\hbar}{2}\Omega_{2m}
\right)Z=0.
\]
We refer to this as a covariance slackness relation. At fixed first moments, all states satisfying this relation admit a representation through a common Gaussian channel. Moreover, if another Gaussian channel \rev{$\Psi_G$} has phase-space transformation matrix \rev{$\mathbb X_\Psi$} satisfying $\ker(\Real Z)\cap\ker(\rev{\mathbb X_\Psi})=\{0\}$, Proposition~\ref{prop:saturated-face-reconstruction} shows that its output determines the state uniquely.
\paragraph{Gaussian channels.}
We adopt the characteristic-function description compatible with the
Weyl convention of
Section~\ref{subsec:standard-ccr-representation}; see
\cite{HolevoWerner2001BosonicGaussian,CarusoEtAl2008}.

\begin{deff}[Gaussian channel]
\label{def:gaussian-channel}
Recall that a quantum channel between Hilbert spaces $\cH_1$ and
$\cH_2$ is a completely positive trace-preserving linear map
$\Phi:\mathcal T^{(1)}(\cH_1)\rightarrow
\mathcal T^{(1)}(\cH_2)$.
Let $m_2\in\N$ and $m_1\in\N\cup\{0\}$. A quantum channel
$\Phi_G:\mathcal T^{(1)}(L_2(\R^{m_1}))\rightarrow
\mathcal T^{(1)}(L_2(\R^{m_2}))$
is called Gaussian if there exist
$\mathbb{X}_G\in\R^{2m_2\times2m_1}$,
$\mathbb Y_G\in\cM_{2m_2,\mathrm{sym}}(\R)$, and
$\mathbf m_G\in\R^{2m_2}$ such that, for every
$\hat\rho\in\cS(L_2(\R^{m_1}))$ and every
$\mathbf z\in\R^{2m_2}$,
\begin{equation}
\chi_{\Phi_G(\hat\rho)}(\mathbf z)
=
\chi_{\hat\rho}(\mathbb{X}_G^\top\mathbf z)
\exp\left(
i\mathbf m_G^\top\mathbf z
-\frac12\mathbf z^\top\mathbb Y_G\mathbf z
\right).
\label{eq:gaussian-channel-characteristic-function}
\end{equation}
For parameters of this form, complete positivity is equivalent to
$\mathbb Y_G+\frac{i\hbar}{2}
(\Omega_{m_2}-\mathbb{X}_G\Omega_{m_1}\mathbb{X}_G^\top)\succeq0$.
We call $\mathbb{X}_G$ the phase-space transformation matrix of
$\Phi_G$.
\end{deff}

We allow a zero-mode input system, with $L_2(\R^0)=\C$. Its unique
state is regarded as Gaussian, with characteristic function
identically equal to one; $\Omega_0$ is the empty symplectic matrix,
and the corresponding phase-space matrices have zero input columns.
We also set $\cS_2^R(L_2(\R^0))
:=
\cS(L_2(\R^0))$, which consists of its unique state, with zero-dimensional mean and
covariance.
It follows directly from
\eqref{eq:gaussian-channel-characteristic-function} that, for every
$\hat\rho\in\cS_2^R(L_2(\R^{m_1}))$,
$\mathbf m_{\Phi_G(\hat\rho)}
=\mathbb{X}_G\mathbf m_{\hat\rho}+\mathbf m_G$ and
$\Sigma_{\Phi_G(\hat\rho)}
=\mathbb{X}_G\Sigma_{\hat\rho}\mathbb{X}_G^\top+\mathbb Y_G$.
In particular, Gaussian channels map Gaussian states to Gaussian
states.

\begin{lemma}[State reconstruction via a Gaussian channel with full-column-rank \rev{\mbox{phase-space}} transformation matrix]
\label{lem:faithful-gaussian-deconvolution}
Let $\Phi_G:\mathcal T^{(1)}(L_2(\R^{m_1}))\rightarrow
\mathcal T^{(1)}(L_2(\R^{m_2}))$
be a Gaussian channel with phase-space transformation matrix
$\mathbb{X}_G$. If $\mathbb{X}_G$ has full column rank, then $\Phi_G$
is injective on quantum states. Moreover, if $\Phi_G(\hat\rho)$ is
Gaussian, then $\hat\rho$ is Gaussian.
\end{lemma}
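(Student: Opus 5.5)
The plan is to work entirely at the level of Weyl characteristic functions, using the defining identity \eqref{eq:gaussian-channel-characteristic-function}. The Gaussian factor $\exp(i\mathbf m_G^\top\mathbf z-\frac12\mathbf z^\top\mathbb Y_G\mathbf z)$ never vanishes, so it can be divided out:
\[
\chi_{\hat\rho}(\mathbb X_G^\top\mathbf z)=\chi_{\Phi_G(\hat\rho)}(\mathbf z)\exp\!\left(-i\mathbf m_G^\top\mathbf z+\tfrac12\mathbf z^\top\mathbb Y_G\mathbf z\right),\qquad \mathbf z\in\R^{2m_2}.
\]
Thus the output determines $\chi_{\hat\rho}$ on the range of $\mathbb X_G^\top$.

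Since $\mathbb X_G$ has full column rank, $\mathbb X_G^\top\in\R^{2m_1\times 2m_2}$ has full row rank, so it is surjective onto $\R^{2m_1}$. Hence $\Phi_G(\hat\rho)$ determines $\chi_{\hat\rho}$ on all of $\R^{2m_1}$. The standard CCR representation on $L_2(\R^{m_1})$ is irreducible, so the characteristic function determines the state; this is the uniqueness remark preceding Definition~\ref{def:gaussian-projection}, or equivalently the quantum Fourier--Weyl inversion / noncommutative Parseval identity. Therefore $\Phi_G(\hat\rho_1)=\Phi_G(\hat\rho_2)$ implies $\chi_{\hat\rho_1}=\chi_{\hat\rho_2}$, hence $\hat\rho_1=\hat\rho_2$. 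The zero-mode case $m_1=0$ is trivial, since there is only one input state.

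For the Gaussianity claim, suppose $\chi_{\Phi_G(\hat\rho)}(\mathbf z)=\exp(i\mathbf m^\top\mathbf z-\frac12\mathbf z^\top\Sigma\mathbf z)$. Then
\[
\chi_{\hat\rho}(\mathbb X_G^\top\mathbf z)=\exp\!\left(i(\mathbf m-\mathbf m_G)^\top\mathbf z-\tfrac12\mathbf z^\top(\Sigma-\mathbb Y_G)\mathbf z\right).
\]
Choose a right inverse $L:=\mathbb X_G(\mathbb X_G^\top\mathbb X_G)^{-1}$, so that $\mathbb X_G^\top L=\I_{2m_1}$. Substituting $\mathbf z=L\boldsymbol\xi$ gives
\[
\chi_{\hat\rho}(\boldsymbol\xi)=\exp\!\left(i\mathbf a^\top\boldsymbol\xi-\tfrac12\boldsymbol\xi^\top B\boldsymbol\xi\right),\qquad \mathbf a=L^\top(\mathbf m-\mathbf m_G),\quad B=L^\top(\Sigma-\mathbb Y_G)L.
\]
The exponent is well defined on $\R^{2m_1}$ because $\chi_{\hat\rho}$ depends only on $\mathbb X_G^\top\mathbf z$, so the identity is consistent across the different preimages.

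It remains to show that $\hat\rho$ is Gaussian in the sense of Definition~\ref{def:weyl-characteristic-gaussian-state}, which is phrased for states in $\cS_2^R$. A characteristic function of Gaussian form is $C^\infty$, so $\hat\rho$ has finite canonical second moments. Differentiating at $\boldsymbol\xi=0$ then identifies $\mathbf a=\mathbf m_{\hat\rho}$ and $B=\Sigma_{\hat\rho}$; the symmetrized second moments correspond to the second derivative of $\chi$ in the Weyl convention. Equivalently, one can note that this $\chi_{\hat\rho}$ coincides with the characteristic function of the Gaussian state with parameters $(\mathbf a,B)$, and apply the injectivity just proved.

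The main obstacle is justifying that a state with a Gaussian-form characteristic function automatically lies in $\cS_2^R$, with $B$ real symmetric satisfying the uncertainty relation. That $B$ is real symmetric is immediate from the formula. For moments, I would use $\Tr(\hat\rho\,\hat R(\xi)^2)=-\frac{d^2}{dt^2}\chi_{\hat\rho}(t\xi)|_{t=0}$, valid whenever the derivative exists, via the spectral measure of $\hat R(\xi)$. The measure $\mu^{\hat R(\xi)}_{\hat\rho}$ has Fourier transform $t\mapsto\chi_{\hat\rho}(t\xi)$, and a probability measure whose Fourier transform is $C^2$ at $0$ has a finite second moment. Summing over basis directions gives $\Tr(\mathsf H_{\mathbf R}\hat\rho)<\infty$, after which the identification of moments follows.
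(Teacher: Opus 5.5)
Your proposal is correct and follows the paper's proof: you invert the characteristic-function identity through a right inverse of $\mathbb X_G^\top$ (the explicit $\mathbb X_G(\mathbb X_G^\top\mathbb X_G)^{-1}$ in place of the paper's abstract $\mathbb J$), deduce injectivity from irreducibility, and get finite second moments from the one-dimensional Gaussian restrictions $t\mapsto\chi_{\hat\rho}(t e_j)$ before applying the definition of Gaussianity; your spectral-measure argument simply spells out a step the paper states in one line. One caveat: your ``equivalently'' shortcut is not an independent route, because a Gaussian state with parameters $(\mathbf a,B)$ is only known to exist once $B$ satisfies the uncertainty relation, and you get that from the moment identification $B=\Sigma_{\hat\rho}$ (at which point the definition already gives Gaussianity).
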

\begin{proof}
The zero-mode case is immediate. Otherwise,
$\mathbb{X}_G^\top$ is surjective, so there exists a linear map
$\mathbb J:\R^{2m_1}\rightarrow\R^{2m_2}$ such that
$\mathbb{X}_G^\top\mathbb J=\I_{2m_1}$.
By \eqref{eq:gaussian-channel-characteristic-function}, for every
$\hat\rho$ and $\mathbf u\in\R^{2m_1}$,
\[
\chi_{\hat\rho}(\mathbf u)
=
\chi_{\Phi_G(\hat\rho)}(\mathbb J\mathbf u)
\exp\left(
-i\mathbf m_G^\top\mathbb J\mathbf u
+
\frac12
\mathbf u^\top
\mathbb J^\top\mathbb Y_G\mathbb J
\mathbf u
\right).
\]
Thus the output characteristic function uniquely determines the input
characteristic function, proving injectivity. If
$\Phi_G(\hat\rho)$ is Gaussian, the right-hand side is Gaussian in
$\mathbf u$. In particular, its restrictions
$t\mapsto\chi_{\hat\rho}(t e_j)$ are one-dimensional Gaussian
characteristic functions, so all canonical second moments of
$\hat\rho$ are finite. Hence
$\hat\rho\in\cS_2^R(L_2(\R^{m_1}))$, and
Definition~\ref{def:weyl-characteristic-gaussian-state} implies that
$\hat\rho$ is Gaussian.
\end{proof}

The next result shows that a covariance slackness relation determines
a common Gaussian channel through which every state in the corresponding
slackness class can be represented.
\begin{prop}[Uniform Gaussian-channel representation under covariance slackness]
\label{prop:dual-support-gaussian-source}
Let $\Sigma\in\mathrm{Cov}_{2m}^{\mathrm Q}$ and
$Z\in\cM_{4m,\geq}(\C)$ satisfy the covariance slackness relation
\[
\left(
\Sigma+\frac{i\hbar}{2}\Omega_{2m}
\right)Z=0.
\]
Then $\rank(\Real Z)$ is even. Set $q:=\frac12\rank(\Real Z)$ and fix $\mathbf m\in\R^{4m}$. There exists a fixed Gaussian channel $\Phi_G:
\mathcal T^{(1)}(L_2(\R^{2m-q}))
\rightarrow
\mathcal T^{(1)}(L_2(\R^{2m}))$ with phase-space transformation matrix \rev{$\mathbb X_\Phi$} satisfying $\Ran(\rev{\mathbb X_\Phi})=\ker(\Real Z)$. Moreover,
\[
\begin{aligned}
&
\left\{
\hat\pi\in\cS_2^R(L_2(\R^{2m})) :
\begin{gathered}
\mathbf m_{\hat\pi}=\mathbf m,\\[-2pt]
\left(
\Sigma_{\hat\pi}+\frac{i\hbar}{2}\Omega_{2m}
\right)Z=0
\end{gathered}
\right\}
\\[-2pt]
&\qquad=
\left\{
\Phi_G(\hat\eta) :
\hat\eta\in\cS_2^R(L_2(\R^{2m-q})),
\ \mathbf m_{\hat\eta}=0
\right\}.
\end{aligned}
\]

\end{prop}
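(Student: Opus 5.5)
The plan is to \emph{read the slackness relation as a condition on a complex subspace of phase space}, which does not depend on the particular state. Then one fixed symplectic change of coordinates turns that condition into ``$q$ modes are in the vacuum''.

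\emph{Step 1: linear algebra of the relation.} Set $M:=\Sigma+\frac{i\hbar}{2}\Omega_{2m}\succeq0$. Note that $\Sigma\succ0$ automatically, since $\hbar>0$. The relation $MZ=0$ with $Z\succeq0$ is equivalent to $\Ran Z\subset\ker M$.
\begin{itemize}
\item For $0\neq v\in\ker M$ we have $v^\dagger\Sigma v=-\frac{\hbar}{2}v^\dagger(i\Omega_{2m})v$. Hence the Hermitian form $v\mapsto v^\dagger(i\Omega_{2m})v$ is strictly negative on $\mathcal W:=\Ran Z$.
\item In particular $\mathcal W\cap\overline{\mathcal W}=\{0\}$. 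If $v,\bar v\in\mathcal W$, then $\bar v^\dagger(i\Omega_{2m})\bar v=-v^\dagger(i\Omega_{2m})v$, so both would have to be negative, which is impossible.
\item Writing $Z=\sum_j w_jw_j^\dagger$ with $w_j\in\mathcal W$, we get $\Real Z=\sum_j\bigl(\Real w_j\Real w_j^\top+\Imag w_j\Imag w_j^\top\bigr)$. So $\Ran(\Real Z)=V:=\{\Real w:w\in\mathcal W\oplus\overline{\mathcal W}\}$, a real subspace of dimension $2\dim_\C\mathcal W$. Thus $\rank(\Real Z)$ is even, and $q=\dim_\C\mathcal W$.
\end{itemize}

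\emph{Step 2: a uniform symplectic normal form.} Let $\mathbf a_j$ be the coefficient vectors of the annihilation operators $a_j=(\hat Q_j+i\hat P_j)/\sqrt{2\hbar}$, so that $a_j=\mathbf a_j^\top\hat{\mathbf R}$. These vectors span a maximal subspace on which $i\Omega_{2m}$ is negative definite.
\begin{itemize}
\item By the standard classification of positive and negative subspaces for the symplectic Hermitian form (Witt-type extension of a normalized basis of $\mathcal W$ to a full ``annihilation'' basis), choose $S\in\Sp(4m,\R)$ with $S^\top\mathcal W=\operatorname{span}_\C\{\mathbf a_1,\ldots,\mathbf a_q\}$. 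Vectors in $Z$ transform contragrediently to covariances, which is why $S^\top$ appears.
\item This $S$ depends only on $Z$, not on the state.
\end{itemize}

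\emph{Step 3: from slackness to vacuum modes.} For $\hat\pi$ in the left-hand set, put $\hat\pi':=\hat U_S^\dagger W^\dagger\hat\pi W\hat U_S$, where $W$ is the Weyl displacement by $\mathbf m$. Then $\hat\pi'$ has mean zero and covariance $S^{-1}\Sigma_{\hat\pi}S^{-\top}$, and the slackness relation becomes $\bigl(\Sigma_{\hat\pi'}+\frac{i\hbar}{2}\Omega_{2m}\bigr)\bar{\mathbf a}_j=0$ for $j\le q$. Up to the sign convention this should mean $\Tr(\hat\pi' a_j^\dagger a_j)=0$, so every vector in the support of $\hat\pi'$ is annihilated by $a_1,\ldots,a_q$.
\begin{itemize}
\item Hence the reduced state of $\hat\pi'$ on the first $q$ modes is the pure vacuum.
\item Purity of a marginal forces a product state, so $\hat\pi'=|0\rangle\langle0|^{\otimes q}\otimes\hat\eta$, with $\hat\eta$ centered and of finite second moments.
\item Conversely, every such product satisfies the relation, since the same computation can be reversed.
\end{itemize}

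\emph{Step 4: the channel and its phase-space matrix.} Define $\Phi_G(\hat\eta):=W\hat U_S(|0\rangle\langle0|^{\otimes q}\otimes\hat\eta)\hat U_S^\dagger W^\dagger$. This is a Gaussian channel: append vacuum, then apply a Gaussian unitary. Its phase-space matrix $\mathbb X_\Phi$ is $S$ restricted to the last $2(2m-q)$ coordinate columns. Its range is the Euclidean annihilator of $S^{-\top}$ applied to the first $2q$ coordinates, which by Step 2 is $V^\perp=\ker(\Real Z)$. This gives both the set equality and $\Ran(\mathbb X_\Phi)=\ker(\Real Z)$.

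The main obstacle I expect is Step 2 together with the sign and transpose conventions: constructing $S$ so that $\mathcal W$ goes \emph{exactly} onto the annihilation directions, and checking that the transformed slackness relation really is $\langle a_j^\dagger a_j\rangle=0$ rather than a creation-type condition. I would first settle this for $q=1$ by direct computation and then extend by induction, passing at each step to the symplectic complement.
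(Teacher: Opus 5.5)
Your route differs from the paper's and can be completed, but Step~2 as justified does not suffice. A real symplectic $S$, and hence $S^\top$, preserves the Hermitian form $h(x,y)=x^\dagger(i\Omega_{2m})y$. It also preserves the complex bilinear form $\omega(x,y)=x^\top\Omega_{2m}y$. The target $\operatorname{span}_{\mathbb C}\{\mathbf a_1,\ldots,\mathbf a_q\}$ is $\omega$-isotropic, so $h$-negativity of $\mathcal W$ alone cannot produce $S$.

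Here is a counterexample. In $\mathbb C^4$ put $v_k:=e_{2k-1}+ie_{2k}$. The subspace $\operatorname{span}\{v_1+\varepsilon\overline{v_2},\,v_2\}$ is $h$-negative definite for $|\varepsilon|<1$. However, $\omega(v_1+\varepsilon\overline{v_2},v_2)=2i\varepsilon\neq0$, so no real symplectic map sends it onto $\operatorname{span}\{v_1,v_2\}$. The Witt-type classification you invoke is for $U(2m,2m)$, not for $\Sp(4m,\R)$. Your induction via symplectic complements breaks exactly here. The $h$-orthogonal complement of $u_1$ in $\mathcal W$ lies in the complexified symplectic complement of $\operatorname{span}_\R\{\Real u_1,\Imag u_1\}$ only if $\omega(u_1,\cdot)$ also vanishes on $\mathcal W$.

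The missing fact does hold, and it takes one line. For $w,w'\in\ker M$, since $\Sigma$ is real symmetric and $\Omega_{2m}$ is skew,
$0=w^\top Mw'-w'^\top Mw=i\hbar\,w^\top\Omega_{2m}w'$.
Hence $\ker M\supset\mathcal W$ is $\omega$-isotropic. Now take a basis $u_1,\ldots,u_q$ of $\mathcal W$ with $h(u_j,u_k)=-2\delta_{jk}$. The real vectors $\Real u_j,\Imag u_j$ then satisfy exactly the canonical relations of a partial symplectic basis. Extend it to a full symplectic basis, and let $S^{-\top}$ send the standard basis to it. Steps~3--4 then go through.

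With this $S$, the transformed relation is $(\Sigma_{\hat\pi'}+\frac{i\hbar}{2}\Omega_{2m})\mathbf a_j=0$, with no bar. This is precisely $\Tr(\hat\pi' a_j^\dagger a_j)=\mathbf a_j^\dagger(\Sigma_{\hat\pi'}+\frac{i\hbar}{2}\Omega_{2m})\mathbf a_j=0$. Your Step~1 negativity is what guarantees annihilation rather than creation type, so your sign worry resolves in your favour.

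The paper never works intrinsically with $\Ran Z$. It takes a Williamson decomposition $\Sigma=S_{\mathrm W}DS_{\mathrm W}^\top$ (Lemma~\ref{lem:faithful-dual-Williamson}). In these coordinates $\ker(D+\frac{i\hbar}{2}\Omega_{2m})$ is visibly spanned by the annihilation vectors of the $p$ vacuum Williamson modes. So $S_{\mathrm W}^\top ZS_{\mathrm W}$ reduces to a $p\times p$ positive matrix $M_Z$. This is diagonalized by some $V\in U(p)$, realized as the passive orthosymplectic $O_{\overline V}$, which preserves $D$. The result is the explicit normal form $S^\top ZS=\bigoplus_{j=1}^q{}_D\frac{d_j}{2}\smat{1&-i\\i&1}\oplus_D 0_{2(2m-q)}$, together with $\Sigma=SDS^\top$. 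Evenness of $\rank(\Real Z)$, the bound $q\le p$, and $\ker(\Real Z)=\Ran(S_1)$ are read off from it, and isotropy is built into the coordinates without ever being stated.

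Your version is coordinate-free and makes the state-independence of $S$ manifest. It gets evenness directly from $\mathcal W\cap\overline{\mathcal W}=\{0\}$. The price is the symplectic-basis extension argument, including the isotropy check above. The paper's version gives more than the proposition needs, namely a simultaneous Williamson form of $\Sigma$ in the same coordinates.
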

\begin{proof}
Since $\Sigma\in\mathrm{Cov}_{2m}^{\mathrm Q}$, Theorem~\ref{thm:rs-uncertainty-canonical} yields a Gaussian state
$\hat\pi_0\in\cS_2^R(L_2(\R^{2m}))$ with mean $\mathbf m$ and
covariance $\Sigma$. Apply Lemma~\ref{lem:faithful-dual-Williamson} to $(\hat\pi_0,\Sigma,Z)$. It follows that $\rank(\Real Z)$ is even. Therefore $q:=\frac12\rank(\Real Z)\in\mathbb{N}\cup\{0\}$. Lemma~\ref{lem:faithful-dual-Williamson} also provides a symplectic matrix $S=(S_0\ S_1)\in\Sp(4m,\R)$, where
$S_0\in\R^{4m\times2q}$ and
$S_1\in\R^{4m\times2(2m-q)}$, such that, for some $d_1,\ldots,d_q>0$,
\begin{equation}\label{eq:SZS}
S^\top ZS
=
\bigoplus_{j=1}^q{}_D
\frac{d_j}{2}
\begin{pmatrix}
1&-i\\
i&1
\end{pmatrix}
\oplus_D 0_{2(2m-q)},
\end{equation}
and $\Ran(S_1)=\ker(\Real Z)$. Fix a Gaussian unitary $\hat U_S$ implementing $S$. Define
\[
\Phi_G(\hat\rho)
:=
W_{\mathrm{CCR}}\!\left( \frac{\Omega_{2m}\mathbf m}{\hbar} \right)\hat U_S
\left(
|0\rangle\langle0|^{\otimes q}\otimes\hat\rho
\right)
\hat U_S^\dagger W_{\mathrm{CCR}}\!\left( \frac{\Omega_{2m}\mathbf m}{\hbar} \right)^\dagger.
\]
By Lemma~\ref{lem:fock-ancilla-Gaussian-channel},
$\Phi_G$ is a Gaussian channel whose phase-space transformation
matrix is \rev{$\mathbb X_\Phi=S_1$}. Consequently, $\Ran(\rev{\mathbb X_\Phi})=\ker(\Real Z)$.

We now prove the claimed equality of sets. Let
$\hat\pi\in\cS_2^R(L_2(\R^{2m}))$ satisfy $\mathbf m_{\hat\pi}=\mathbf m$ and $\left(
\Sigma_{\hat\pi}
+\frac{i\hbar}{2}\Omega_{2m}
\right)Z=0$. Define $\hat\pi'
:=
\hat U_S^\dagger
W_{\mathrm{CCR}}\!\left( \frac{\Omega_{2m}\mathbf m}{\hbar} \right)^\dagger
\hat\pi
W_{\mathrm{CCR}}\!\left( \frac{\Omega_{2m}\mathbf m}{\hbar} \right)
\hat U_S$. Then $\hat\pi'$ has zero mean, and its covariance is $V:=S^{-1}\Sigma_{\hat\pi}S^{-\top}$. By symplecticity of $S$, the covariance slackness relation becomes $\left(
V+\frac{i\hbar}{2}\Omega_{2m}
\right)S^\top ZS=0$.

For $j\in[q]$, let $v_j:=e_{2j-1}+ie_{2j}$. The range of the $j$-th nonzero block of $S^\top ZS$ in \eqref{eq:SZS} is spanned by $v_j$. Hence $\left(
V+\frac{i\hbar}{2}\Omega_{2m}
\right)v_j=0$. Since $\Omega_{2m}v_j=iv_j$, this gives $Vv_j=\frac{\hbar}{2}v_j$. Because $V$ is real and symmetric, it follows that $Ve_{2j-1}=\frac{\hbar}{2}e_{2j-1}$ and $Ve_{2j}=\frac{\hbar}{2}e_{2j}$ for $j\in[q]$, and therefore $V
=
\frac{\hbar}{2}\I_{2q}
\oplus_D V_{\mathrm{res}}$. Each of the first $q$ one-mode marginals of $\hat\pi'$ thus has
zero mean and covariance $(\hbar/2)\I_2$, and is consequently the
Fock ground state. Iterating the pure-marginal factorization, we
obtain $\hat\pi'=
|0\rangle\langle0|^{\otimes q}\otimes\hat\eta$ for some $\hat\eta\in\cS_2^R(L_2(\R^{2m-q}))$ with zero mean. It follows directly from the definition of $\Phi_G$ that $\hat\pi=\Phi_G(\hat\eta)$. This proves the inclusion from left to right.

Conversely, let $\hat\eta\in\cS_2^R(L_2(\R^{2m-q}))$ with $\mathbf m_{\hat\eta}=0$ and set $\hat\pi:=\Phi_G(\hat\eta)$. By construction of $\Phi_G$, $\mathbf m_{\hat\pi}=\mathbf m$ and $\Sigma_{\hat\pi}
=
S\left(
\frac{\hbar}{2}\I_{2q}
\oplus_D\Sigma_{\hat\eta}
\right)S^\top$. Moreover, the canonical form \eqref{eq:SZS} of $S^\top ZS$ gives $\left[
\frac{\hbar}{2}\I_{2q}
\oplus_D\Sigma_{\hat\eta}
+
\frac{i\hbar}{2}\Omega_{2m}
\right]
S^\top ZS
=0$. Conjugating this identity by $S$ and $S^{-1}$ yields $\left(
\Sigma_{\hat\pi}
+\frac{i\hbar}{2}\Omega_{2m}
\right)Z=0$. Thus $\hat\pi$ belongs to the set on the left, proving the reverse
inclusion.
\end{proof}
\begin{prop}[Injectivity and Gaussianity under covariance slackness]
\label{prop:saturated-face-reconstruction}
Let $Z\in\cM_{4m,\geq}(\C)$, and let $\Psi_G:
\mathcal T^{(1)}(L_2(\R^{2m}))
\rightarrow
\mathcal T^{(1)}(L_2(\R^\ell))$ be a Gaussian channel with phase-space transformation matrix \rev{$\mathbb X_\Psi\in\R^{2\ell\times4m}$}. Assume $\ker(\Real Z)\cap\ker(\rev{\mathbb X_\Psi})=\{0\}$. If
$\hat\pi_1,\hat\pi_2\in\cS_2^R(L_2(\R^{2m}))$ have the same mean and
satisfy $\left(
\Sigma_{\hat\pi_j}+\frac{i\hbar}{2}\Omega_{2m}
\right)Z=0$ for $j=1,2$ and $\Psi_G(\hat\pi_1)=\Psi_G(\hat\pi_2)$, then $\hat\pi_1=\hat\pi_2$. Moreover, if
$\hat\pi\in\cS_2^R(L_2(\R^{2m}))$ satisfies $\left(
\Sigma_{\hat\pi}+\frac{i\hbar}{2}\Omega_{2m}
\right)Z=0$ and $\Psi_G(\hat\pi)$ is Gaussian, then $\hat\pi$ is Gaussian.
\end{prop}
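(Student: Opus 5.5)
The plan is to reduce both claims to Proposition~\ref{prop:dual-support-gaussian-source} and Lemma~\ref{lem:faithful-gaussian-deconvolution}, applied to the composition $\Psi_G\circ\Phi_G$.

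\emph{Setup.} Let $\mathbf m$ be the common mean of $\hat\pi_1,\hat\pi_2$, and let $\Sigma:=\Sigma_{\hat\pi_1}\in\mathrm{Cov}_{2m}^{\mathrm Q}$, which satisfies the slackness relation with $Z$. Proposition~\ref{prop:dual-support-gaussian-source} then gives a fixed Gaussian channel $\Phi_G$ with $\Ran(\mathbb X_\Phi)=\ker(\Real Z)$. It also gives zero-mean states $\hat\eta_1,\hat\eta_2\in\cS_2^R(L_2(\R^{2m-q}))$ with $\hat\pi_j=\Phi_G(\hat\eta_j)$. The channel depends only on $(Z,\mathbf m)$ and on the auxiliary $\Sigma$ used to build it. I will check from the construction, namely Lemma~\ref{lem:faithful-dual-Williamson} applied to a Gaussian state with covariance $\Sigma$, that the same $\Phi_G$ serves for both states. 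The displayed set equality is formulated for all $\hat\pi$ in the slackness class, so one application suffices.

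\emph{Composite channel.} The map $\Psi_G\circ\Phi_G$ is a Gaussian channel. Composing the characteristic-function rule \eqref{eq:gaussian-channel-characteristic-function} twice, its phase-space matrix is $\mathbb X_\Psi\mathbb X_\Phi$. This matrix has full column rank: if $\mathbb X_\Psi\mathbb X_\Phi u=0$, then $\mathbb X_\Phi u\in\ker(\mathbb X_\Psi)\cap\Ran(\mathbb X_\Phi)=\ker(\mathbb X_\Psi)\cap\ker(\Real Z)=\{0\}$. Moreover $\mathbb X_\Phi=S_1$ is a block of columns of a symplectic, hence invertible, matrix, so it has full column rank and $u=0$.

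\emph{Injectivity.} Lemma~\ref{lem:faithful-gaussian-deconvolution} makes $\Psi_G\circ\Phi_G$ injective. Since $\Psi_G(\hat\pi_1)=\Psi_G(\hat\pi_2)$, this gives $\hat\eta_1=\hat\eta_2$, hence $\hat\pi_1=\hat\pi_2$.

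\emph{Gaussianity.} Take $\hat\pi$ in the slackness class with mean $\mathbf m_{\hat\pi}$, and write $\hat\pi=\Phi_G(\hat\eta)$ as above. If $\Psi_G(\hat\pi)=(\Psi_G\circ\Phi_G)(\hat\eta)$ is Gaussian, the lemma shows $\hat\eta$ is Gaussian. Since Gaussian channels preserve Gaussianity, $\hat\pi=\Phi_G(\hat\eta)$ is Gaussian.

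\emph{Main obstacle.} The main technical point is the composition rule for Gaussian channels in the characteristic-function convention. Applying the rule to $\Psi_G$ and then to $\Phi_G$ gives
\[
\chi_{\Psi_G\Phi_G(\hat\eta)}(\mathbf z)=\chi_{\hat\eta}\bigl(\mathbb X_\Phi^\top\mathbb X_\Psi^\top\mathbf z\bigr)\,e^{(\cdots)},
\]
where the Gaussian factor combines the additive terms of both channels. Complete positivity is automatic because the composite is a composition of channels. The second point is the bookkeeping showing that the representing channel is independent of which state in the class is used to construct it. If that independence fails, I would fall back on fixing $\Sigma$ once and invoking the set equality, which already quantifies over the whole class.
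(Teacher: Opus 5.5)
Your proposal is correct and follows the paper's proof essentially step by step. You build $\Phi_G$ from Proposition~\ref{prop:dual-support-gaussian-source} using $\Sigma_{\hat\pi_1}$ and the common mean, use the set equality to represent both states through that one channel, derive full column rank of $\mathbb X_\Psi\mathbb X_\Phi$ from the kernel-intersection hypothesis, and conclude with Lemma~\ref{lem:faithful-gaussian-deconvolution}. The only cosmetic difference is that you get full column rank of $\mathbb X_\Phi=S_1$ from invertibility of $S$, whereas the paper uses $\dim\ker(\Real Z)=2(2m-q)$; either works.
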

\begin{proof}
Set $q:=\frac12\rank(\Real Z)$. For the first statement, apply
Proposition~\ref{prop:dual-support-gaussian-source} with
$\Sigma=\Sigma_{\hat\pi_1}$ and with the common mean of
$\hat\pi_1$ and $\hat\pi_2$. \rev{There exists a common Gaussian channel}
$\Phi_G$, with phase-space transformation matrix $\mathbb X_\Phi$,
and zero-mean states $\hat\eta_1,\hat\eta_2$ such that $\hat\pi_j=\Phi_G(\hat\eta_j)$ and $\Ran(\mathbb X_\Phi)=\ker(\Real Z)$ for $j=1,2$.
Since $\dim\ker(\Real Z)=4m-2q=2(2m-q)$, $\mathbb X_\Phi$ has full column rank.

The composition $\Psi_G\circ\Phi_G$ is Gaussian with phase-space
transformation matrix $\rev{\mathbb X_\Psi}\mathbb X_\Phi$. If
$\rev{\mathbb X_\Psi}\mathbb X_\Phi\mathbf u=0$, then $\mathbb X_\Phi\mathbf u
\in
\ker(\Real Z)\cap\ker(\rev{\mathbb X_\Psi})$, and hence $\mathbb X_\Phi\mathbf u=0$ by the assumed intersection
condition. Since $\mathbb X_\Phi$ has full
column rank, $\mathbf u=0$. Thus
$\rev{\mathbb X_\Psi}\mathbb X_\Phi$ has full column rank. By
Lemma~\ref{lem:faithful-gaussian-deconvolution},
$\Psi_G\circ\Phi_G$ is injective. Therefore
$\Psi_G(\hat\pi_1)=\Psi_G(\hat\pi_2)$ implies
$\hat\eta_1=\hat\eta_2$, and hence $\hat\pi_1=\hat\pi_2$.

For the second statement, apply
Proposition~\ref{prop:dual-support-gaussian-source} with
$\Sigma=\Sigma_{\hat\pi}$ and $\mathbf m=\mathbf m_{\hat\pi}$. Then
$\hat\pi=\Phi_G(\hat\eta)$ for some zero-mean $\hat\eta$, with
$\Ran(\mathbb X_\Phi)=\ker(\Real Z)$. The same argument shows that
$\rev{\mathbb X_\Psi}\mathbb X_\Phi$ has full column rank. Since $\Psi_G(\hat\pi)
=
(\Psi_G\circ\Phi_G)(\hat\eta)$ is Gaussian, Lemma~\ref{lem:faithful-gaussian-deconvolution} implies
that $\hat\eta$ is Gaussian. Hence
$\hat\pi=\Phi_G(\hat\eta)$ is Gaussian.
\end{proof}
\begin{corollary}[Second-marginal injectivity and Gaussianity under covariance slackness]
\label{cor:saturated-face-marginal-reconstruction}
Let $Z\in\cM_{4m,\geq}(\C)$ satisfy
\begin{equation}
\ker(\Real Z)
\cap
\left(
\R^{2m}\oplus\{0\}^{2m}
\right)
=
\{0\}.
\label{eq:second-marginal-observable-support}
\end{equation}
If $\hat\pi_1,\hat\pi_2\in\cS_2^R(L_2(\R^{2m}))$ have the same mean,
satisfy
$(\Sigma_{\hat\pi_j}+\frac{i\hbar}{2}\Omega_{2m})Z=0$,
$j=1,2$,
and have the same second marginal, then
$\hat\pi_1=\hat\pi_2$.

Moreover, if $\hat\pi\in\cS_2^R(L_2(\R^{2m}))$ satisfies
$(\Sigma_{\hat\pi}+\frac{i\hbar}{2}\Omega_{2m})Z=0$
and its second marginal is Gaussian, then
$\hat\pi$ is Gaussian.
\end{corollary}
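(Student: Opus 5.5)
The plan is to obtain the corollary as a direct specialization of Proposition~\ref{prop:saturated-face-reconstruction}. For $\Psi_G$ I will take the partial trace over the first $m$ modes,
\[
\Psi_G:=\Tr_1:\mathcal T^{(1)}(L_2(\R^{2m}))\to\mathcal T^{(1)}(L_2(\R^{m})),
\]
and check two things: that $\Psi_G$ is a Gaussian channel in the sense of Definition~\ref{def:gaussian-channel}, and that its phase-space transformation matrix $\mathbb X_\Psi$ has kernel $\R^{2m}\oplus\{0\}^{2m}$.

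\emph{Step 1: $\Tr_1$ is Gaussian with the right matrix.} Following the proof of Proposition~\ref{prop:reduction-moments}, for every state $\hat\pi$ on $L_2(\R^{2m})=L_2(\R^m)\otimes L_2(\R^m)$ and every $\mathbf z\in\R^{2m}$ we have
\[
\chi_{\Tr_1\hat\pi}(\mathbf z)=\Tr\bigl(\hat\pi\,(\I\otimes W_{\mathrm{CCR}}(\mathbf z))\bigr)=\chi_{\hat\pi}(0,\mathbf z).
\]
This uses the defining property of the partial trace with the bounded operator $W_{\mathrm{CCR}}(\mathbf z)$, and the factorization $W_{\mathrm{CCR}}(0\oplus\mathbf z)=\I\otimes W_{\mathrm{CCR}}(\mathbf z)$ of the $2m$-mode Weyl system as a tensor product. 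Hence \eqref{eq:gaussian-channel-characteristic-function} holds with
\[
\mathbb X_\Psi=\begin{pmatrix}0_{2m}&\I_{2m}\end{pmatrix}\in\R^{2m\times4m},\qquad \mathbb Y_\Psi=0,\qquad \mathbf m_\Psi=0.
\]
The complete-positivity condition is then $\frac{i\hbar}{2}(\Omega_m-\Omega_m)=0\succeq0$, which holds trivially; in any case the partial trace is a channel. From the block form, $\mathbb X_\Psi^\top\mathbf z=(0,\mathbf z)$ and $\ker\mathbb X_\Psi=\R^{2m}\oplus\{0\}^{2m}$.

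\emph{Step 2: apply the proposition.} Hypothesis \eqref{eq:second-marginal-observable-support} is exactly the condition $\ker(\Real Z)\cap\ker(\mathbb X_\Psi)=\{0\}$. If $\hat\pi_1,\hat\pi_2$ satisfy the assumptions and have equal second marginals, then $\Psi_G(\hat\pi_1)=\Psi_G(\hat\pi_2)$, and the first part of Proposition~\ref{prop:saturated-face-reconstruction} gives $\hat\pi_1=\hat\pi_2$. If instead the second marginal $\Psi_G(\hat\pi)$ is Gaussian, the second part of that proposition gives that $\hat\pi$ is Gaussian.

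\emph{Main point to check.} The only step needing care is Step 1. The second marginal must be the partial trace with respect to the tensor splitting in which the last $2m$ phase-space coordinates of the standard $2m$-mode CCR system belong to the second factor, consistent with the ordering $\hat{\mathbf R}=(\hat Q_1,\hat P_1,\dots)$. With that splitting fixed, Weyl operators factor as tensor products and the identity $\chi_{\Tr_1\hat\pi}(\mathbf z)=\chi_{\hat\pi}(0,\mathbf z)$ follows. No moment assumption is needed here, since characteristic functions are defined for all states.
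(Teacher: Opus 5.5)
Your proposal is correct and is essentially the paper's proof: take $\Psi_G=\Tr_1$, whose phase-space transformation matrix is $\smat{0_{2m}&\I_{2m}}$ with kernel $\R^{2m}\oplus\{0\}^{2m}$, and invoke Proposition~\ref{prop:saturated-face-reconstruction} for both claims. Your check that $\chi_{\Tr_1\hat\pi}(\mathbf z)=\chi_{\hat\pi}(0,\mathbf z)$, with $\mathbb Y_\Psi=0$ and $\mathbf m_\Psi=0$, only fills in the Gaussianity of the partial trace, which the paper states without proof.
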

\begin{proof}
The partial trace $\Psi_G:=\Tr_1$ is a Gaussian channel with
phase-space transformation matrix
$\operatorname{pr}_2:=\smat{0_{2m}&\I_{2m}}$.
Since
$\ker(\operatorname{pr}_2)
=
\R^{2m}\oplus\{0\}^{2m}$,
condition \eqref{eq:second-marginal-observable-support} gives
$\ker(\Real Z)\cap\ker(\operatorname{pr}_2)=\{0\}$.
The result follows from
Proposition~\ref{prop:saturated-face-reconstruction}.
\end{proof}
\subsubsection{From faithful-input covariance uniqueness to global uniqueness} 
\label{subsubsec:faithful-rigidity-global-uniqueness} 
We now return to the barycenter problem. Covariance uniqueness first fixes the first and second moments of every barycenter and of every optimal coupling with a faithful input. This determines the Gaussian projection of every barycenter, but does not yet determine the barycenter itself. 
\begin{corollary}[Covariance uniqueness for arbitrary barycenters and optimal couplings with faithful inputs] 
\label{cor:faithful-barycenter-rigidity} 
Fix $\tau\in\{\pm1\}$ and assume \eqref{eq:faithful-inputs}. Let 
$\hat\rho_\tau$ be any $2$-QW barycenter of 
$\hat{\boldsymbol\sigma}^G$. Then 
\[ 
\mathbf m_{\hat\rho_\tau} 
= 
\overline{\mathbf m}, 
\qquad 
\Sigma_{\hat\rho_\tau} 
= 
\overline\Sigma_\tau, 
\qquad 
\mathcal G(\hat\rho_\tau) 
= 
\hat{\overline\rho}_\tau^G, 
\] 
where $\hat{\overline\rho}_\tau^G$ is the unique Gaussian barycenter 
from Corollary~\ref{cor:faithful-unique-gaussian-minimizer}. 
 
Moreover, fix $r\in[k]$, and let $\hat\pi_{\tau,r}$ be any optimal 
coupling between $\hat\rho_\tau$ and the faithful input 
$\hat\sigma_r^G$. Then the cross-covariance of 
$\hat\pi_{\tau,r}$ is $\overline A_{r,\tau}$ from 
Proposition~\ref{prop:faithful-covariance-uniqueness}, and hence 
\begin{equation} 
\Sigma_{\hat\pi_{\tau,r}} 
= 
\begin{pmatrix} 
\overline\Sigma_\tau&\overline A_{r,\tau}\\ 
\overline A_{r,\tau}^\top&\Sigma_r 
\end{pmatrix}. 
\label{eq:faithful-input-coupling-covariance} 
\end{equation} 
Furthermore, for every dual optimizer 
$\mathbf Z^\star=(Z_r^\star)_{r\in[k]}$ from 
Proposition~\ref{prop:covdual}, 
\begin{equation} 
\begin{pmatrix} 
\overline\Sigma_\tau+\frac{i\hbar}{2}\Omega_m& 
\overline A_{r,\tau}\\ 
\overline A_{r,\tau}^\top& 
\Sigma_r+\frac{i\tau\hbar}{2}\Omega_m 
\end{pmatrix} 
Z_r^\star 
= 
0. 
\label{eq:faithful-input-coupling-complementarity} 
\end{equation} 
\end{corollary}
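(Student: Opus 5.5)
The plan is to combine Corollary~\ref{cor:arbitrary-barycenter-covariance} with the covariance uniqueness of Proposition~\ref{prop:faithful-covariance-uniqueness}, and then apply complementary slackness from Corollary~\ref{cor:faithful-complementary-slackness}.

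First, let $\hat\rho_\tau$ be an arbitrary barycenter. By the discussion at the start of Subsection~\ref{subsec:gaussian-barycenter-reduction}, it lies in $\cS_2^R(L_2(\R^m))$. For each $s\in[N]$ we choose an optimal coupling $\hat\pi_s\in\Gamma_\tau(\hat\rho_\tau,\hat\sigma_s^G)$. Such couplings exist because each pairwise value is finite, so the infimum in \eqref{eq:qw} is attained, as noted after \eqref{eq:QOT-def-setting}. We take $\hat\pi_r=\hat\pi_{\tau,r}$ for the fixed faithful index $r$, which is legitimate since the choice of couplings for the other indices is arbitrary. Corollary~\ref{cor:arbitrary-barycenter-covariance} then gives $\mathbf m_{\hat\rho_\tau}=\overline{\mathbf m}$. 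It also shows that $(\Sigma_{\hat\rho_\tau},A_1,\ldots,A_N)$, with $A_s$ the cross-covariance of $\hat\pi_s$, is an optimizer of \eqref{eq:qw2cov}.

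Second, Proposition~\ref{prop:faithful-covariance-uniqueness} forces $\Sigma_{\hat\rho_\tau}=\overline\Sigma_\tau$ and $A_r=\overline A_{r,\tau}$ for every $r\in[k]$. For the fixed $r$, $A_r$ is the cross-covariance of $\hat\pi_{\tau,r}$. The block form \eqref{eq:faithful-input-coupling-covariance} then follows from Proposition~\ref{prop:reduction-moments}, since the marginals of $\hat\pi_{\tau,r}$ are $\hat\rho_\tau$ and $\hat\sigma_r^G$ (or its transpose when $\tau=-1$), whose covariance is still $\Sigma_r$.

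Third, the Gaussian projection $\mathcal G(\hat\rho_\tau)$ is Gaussian and has mean $\overline{\mathbf m}$ and covariance $\overline\Sigma_\tau$. Since a Gaussian state is determined by its first two moments, Corollary~\ref{cor:faithful-unique-gaussian-minimizer} identifies it with $\hat{\overline\rho}_\tau^G$. Alternatively, Proposition~\ref{prop:gaussian-projection-barycenter} shows it is a Gaussian barycenter, and uniqueness among Gaussian barycenters gives the same conclusion.

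Finally, the relation \eqref{eq:faithful-input-coupling-complementarity} is Corollary~\ref{cor:faithful-complementary-slackness} applied to the optimizer $(\overline\Sigma_\tau,A_1,\ldots,A_N)$ with $A_r=\overline A_{r,\tau}$, for any dual optimizer $\mathbf Z^\star$. Because that corollary holds for every primal optimizer, the relation involves only the fixed matrices $\overline\Sigma_\tau,\overline A_{r,\tau}$, so it is independent of the choice of couplings. The main subtlety is the bookkeeping in the first step: one must ensure that the given arbitrary coupling $\hat\pi_{\tau,r}$ can be embedded into a full optimizer family, which works because optimal couplings exist for all $s$.
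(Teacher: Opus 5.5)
Your proposal is correct and follows the paper's proof essentially step for step. You combine Corollary~\ref{cor:arbitrary-barycenter-covariance} with Proposition~\ref{prop:faithful-covariance-uniqueness} to fix the mean, the covariance, and the cross-covariance of the prescribed coupling after completing it to a full family of optimal couplings, identify $\mathcal G(\hat\rho_\tau)$ through its first two moments, and derive \eqref{eq:faithful-input-coupling-complementarity} from Corollary~\ref{cor:faithful-complementary-slackness} applied to that same covariance optimizer; your explicit check that optimal couplings exist for the remaining inputs usefully spells out what the paper leaves implicit in ``choose arbitrary optimal couplings.''
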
 
 
\begin{proof} 
By Corollary~\ref{cor:arbitrary-barycenter-covariance}, 
$\mathbf m_{\hat\rho_\tau}=\overline{\mathbf m}$ and 
$\Sigma_{\hat\rho_\tau}$ is the covariance component of an optimizer 
of \eqref{eq:qw2cov}. Hence 
Proposition~\ref{prop:faithful-covariance-uniqueness} gives 
$\Sigma_{\hat\rho_\tau}=\overline\Sigma_\tau$. Since Gaussian 
projection preserves first and second moments, 
$\mathcal G(\hat\rho_\tau)$ has mean $\overline{\mathbf m}$ and 
covariance $\overline\Sigma_\tau$, and therefore 
$\mathcal G(\hat\rho_\tau)=\hat{\overline\rho}_\tau^G$. 
 
Fix $r\in[k]$ and the prescribed optimal coupling 
$\hat\pi_{\tau,r}$, and denote its cross-covariance by $A_r$. 
Choose arbitrary optimal couplings with the remaining inputs, thereby 
completing $\hat\pi_{\tau,r}$ to a family of optimal couplings. 
Corollary~\ref{cor:arbitrary-barycenter-covariance} implies that the 
corresponding tuple 
$(\Sigma_{\hat\rho_\tau},A_1,\ldots,A_N)$ is an optimizer of 
\eqref{eq:qw2cov}. Hence 
Proposition~\ref{prop:faithful-covariance-uniqueness} gives 
$A_r=\overline A_{r,\tau}$, proving 
\eqref{eq:faithful-input-coupling-covariance}. 
 
Finally, 
Corollary~\ref{cor:faithful-complementary-slackness}, applied to the 
same covariance optimizer and any dual optimizer 
$\mathbf Z^\star$, gives 
\[ 
\begin{pmatrix} 
\Sigma_{\hat\rho_\tau}+\frac{i\hbar}{2}\Omega_m&A_r\\ 
A_r^\top& 
\Sigma_r+\frac{i\tau\hbar}{2}\Omega_m 
\end{pmatrix} 
Z_r^\star 
= 
0. 
\] 
Substituting 
$\Sigma_{\hat\rho_\tau}=\overline\Sigma_\tau$ and 
$A_r=\overline A_{r,\tau}$ yields 
\eqref{eq:faithful-input-coupling-complementarity}. 
\end{proof} 
 
Thus covariance uniqueness identifies the Gaussian projection of every 
barycenter $\mathcal G(\hat\rho_\tau) = \hat{\overline\rho}_\tau^G$. To obtain uniqueness of the barycenter as a quantum state, it remains to prove that $\hat\rho_\tau=\mathcal G(\hat\rho_\tau)$, or equivalently, that $\hat\rho_\tau$ is Gaussian. We obtain this conclusion by reconstructing an optimal 
coupling from its second marginal under the covariance slackness 
relation. 
 
To establish this equality, we compare an optimal coupling with its 
Gaussian projection. We first reduce the two choices of $\tau$ to the 
standard CCR and verify that the faithful-input covariance-dual optimizer 
satisfies the condition required for reconstruction from the second 
marginal. 
 
\begin{prop}[Standard-CCR reduction of a faithful-input covariance-dual optimizer] 
\label{prop:faithful-input-standard-CCR} 
Fix $\tau\in\{\pm1\}$ and $r\in[k]$. Let $\hat\rho_\tau$ be a $2$-QW 
barycenter of $\hat{\boldsymbol\sigma}^G$, let 
$\hat\pi_{\tau,r}$ be an optimal coupling between $\hat\rho_\tau$ and 
the faithful input $\hat\sigma_r^G$, and let 
$\mathbf Z^\star=(Z_r^\star)_{r\in[k]}$ be a dual optimizer from 
Proposition~\ref{prop:covdual}. Define 
\[ 
\widetilde\pi_{\tau,r} 
:= 
\bigl(\I_{L_2(\R^m)}\otimes J_{\tau,m}\bigr) 
\hat\pi_{\tau,r} 
\bigl(\I_{L_2(\R^m)}\otimes J_{\tau,m}\bigr)^\dagger, 
\qquad 
\widetilde Z_r^\star 
:= 
\bigl(\I_{2m}\oplus_DF_{\tau,m}\bigr) 
Z_r^\star 
\bigl(\I_{2m}\oplus_DF_{\tau,m}\bigr)^\top. 
\] 
Then $\widetilde\pi_{\tau,r}$ is a state on 
$L_2(\R^m)\otimes L_2(\R^m)$ whose first marginal is 
$\hat\rho_\tau$ and whose second marginal is Gaussian. Moreover, 
$\widetilde Z_r^\star\in\cM_{4m,\geq}(\C)$ and 
\begin{equation}
\begin{aligned}
\left(
\Sigma_{\widetilde\pi_{\tau,r}}
+\frac{i\hbar}{2}\Omega_{2m}
\right)
\widetilde Z_r^\star
&=0,\\
\bigl(\Real\widetilde Z_r^\star\bigr)_{12}
&=-\alpha_rF_{\tau,m},\\
\ker\bigl(\Real\widetilde Z_r^\star\bigr)
\cap
\left(
\R^{2m}\oplus\{0\}^{2m}
\right)
&=\{0\}.
\end{aligned}
\label{eq:faithful-standard-certificate} 
\end{equation} 
\end{prop}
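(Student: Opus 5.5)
The plan is to transport every statement through the unitary $\I\otimes J_{\tau,m}$ and the matrix $T:=\I_{2m}\oplus_D F_{\tau,m}$, using Proposition~\ref{prop:tau-standard-CCR-identification}. The three identities in \eqref{eq:faithful-standard-certificate} are then checked one at a time.

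\emph{State-level claims.} Conjugation by $\I\otimes J_{\tau,m}$ does not touch the first factor, so the first marginal of $\widetilde\pi_{\tau,r}$ is still $\hat\rho_\tau$. Its second marginal is $J_{\tau,m}(\hat\sigma_r^G)^{(\tau)}J_{\tau,m}^\dagger$, where $(\hat\sigma_r^G)^{(+1)}=\hat\sigma_r^G$ and $(\hat\sigma_r^G)^{(-1)}=(\hat\sigma_r^G)^\top$. Transposition preserves Gaussianity (Section~\ref{subsec:gaussian-states-couplings}). Moreover, by \eqref{eq:tau-standard-canonical-identification}, $J_m$ intertwines $\hat{\mathbf R}^\top$ with $F_{\tau,m}\hat{\mathbf R}$. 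Conjugation by $J_m$ therefore maps a Gaussian characteristic function to a Gaussian one, with mean and covariance transformed by $F_{\tau,m}$. Hence the second marginal is Gaussian.

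\emph{Moments and slackness.} By \eqref{eq:tau-standard-canonical-identification}, the canonical vector of the standard bipartite system, evaluated on $\widetilde\pi_{\tau,r}$, has the same expectations as $T\hat{\mathbf R}_\tau$ evaluated on $\hat\pi_{\tau,r}$. Thus $\widetilde\pi_{\tau,r}\in\cS_2^R$ and
\[
\Sigma_{\widetilde\pi_{\tau,r}}=T\,\Sigma_{\hat\pi_{\tau,r}}\,T^\top .
\]
Next I use \eqref{eq:tau-standard-symplectic-identification} in the form $\Omega_{2m}=T(\Omega_m\oplus_D\tau\Omega_m)T^\top$. Combined with $T^\top T=\I_{4m}$, since $T$ is a real diagonal $\pm1$ matrix, this gives
\[
\Bigl(\Sigma_{\widetilde\pi}+\tfrac{i\hbar}{2}\Omega_{2m}\Bigr)\widetilde Z_r^\star
=T\Bigl(\Sigma_{\hat\pi_{\tau,r}}+\tfrac{i\hbar}{2}(\Omega_m\oplus_D\tau\Omega_m)\Bigr)Z_r^\star T^\top .
\]
The bracket on the right is exactly the block matrix in \eqref{eq:faithful-input-coupling-complementarity}, because Corollary~\ref{cor:faithful-barycenter-rigidity} identifies $\Sigma_{\hat\pi_{\tau,r}}$ via \eqref{eq:faithful-input-coupling-covariance}. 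The right-hand side therefore vanishes. Positivity $\widetilde Z_r^\star\succeq0$ holds because $Z_r^\star\succeq0$ and $T$ is real.

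\emph{Real part and kernel.} Since $T$ is real, $\Real\widetilde Z_r^\star=T(\Real Z_r^\star)T^\top$. Its $(1,2)$ block is $(\Real Z_r^\star)_{12}F_{\tau,m}^\top=-\alpha_rF_{\tau,m}$, using the dual constraint $\Real(Z_r^\star)_{12}=-\alpha_r\I_{2m}$ from Proposition~\ref{prop:covdual}.

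For the kernel condition, take $u\in\R^{2m}$ with $(u,0)\in\ker\Real\widetilde Z_r^\star$. The second block row of $\Real\widetilde Z_r^\star$ applied to $(u,0)$ is $\bigl((\Real\widetilde Z_r^\star)_{12}\bigr)^\top u$. This equals $-\alpha_rF_{\tau,m}u$, because $\Real$ of a Hermitian matrix is symmetric. It must vanish, and since $\alpha_r>0$ and $F_{\tau,m}$ is invertible, $u=0$.

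\emph{Main obstacle.} The only delicate step is making the moment identity $\Sigma_{\widetilde\pi}=T\Sigma_{\hat\pi}T^\top$ rigorous for unbounded generators. The identities for $\hat Q_j^\top,\hat P_j^\top$ hold on the finite-particle core. I would upgrade them to the level of Weyl operators: $(\I\otimes J_{\tau,m})^\dagger W_{+1}(\xi,\eta)(\I\otimes J_{\tau,m})=W_\tau(\xi,F_{\tau,m}\eta)$. From this, both the characteristic function and, via spectral measures, the second moments transform exactly as claimed.
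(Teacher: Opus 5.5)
Your proposal is correct and follows the paper's proof essentially step by step. Like the paper, you transport the moments through $T=\I_{2m}\oplus_D F_{\tau,m}$, conjugate the slackness identity of Corollary~\ref{cor:faithful-barycenter-rigidity} using $T^\top T=\I_{4m}$ and the symplectic identification, read off the $(1,2)$ block of $\Real\widetilde Z_r^\star$, and use the second block row for the kernel condition. Your added justification of the moment transformation via Weyl operators is a valid rigor detail that the paper leaves implicit in Proposition~\ref{prop:tau-standard-CCR-identification}.
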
 
\begin{proof} 
By Proposition~\ref{prop:tau-standard-CCR-identification}, 
conjugation of the second subsystem by $J_{\tau,m}$ transforms the 
canonical variables according to 
$\I_{2m}\oplus_DF_{\tau,m}$. Hence 
\[ 
\Sigma_{\widetilde\pi_{\tau,r}} 
= 
\bigl(\I_{2m}\oplus_DF_{\tau,m}\bigr) 
\Sigma_{\hat\pi_{\tau,r}} 
\bigl(\I_{2m}\oplus_DF_{\tau,m}\bigr)^\top. 
\] 
Since the conjugation acts only on the second subsystem, the first 
marginal remains $\hat\rho_\tau$, while the second marginal is the 
standard-CCR image of $\hat\sigma_r^G$ and is therefore Gaussian. 
 
By Corollary~\ref{cor:faithful-barycenter-rigidity}, 
\[ 
\left[ 
\Sigma_{\hat\pi_{\tau,r}} 
+ 
\frac{i\hbar}{2} 
\bigl( 
\Omega_m\oplus_D(\tau\Omega_m) 
\bigr) 
\right] 
Z_r^\star 
= 
0. 
\] 
Since $Z_r^\star\in\cM_{4m,\geq}(\C)$, congruence by 
$\I_{2m}\oplus_DF_{\tau,m}$ gives 
$\widetilde Z_r^\star\in\cM_{4m,\geq}(\C)$. Using 
\begin{align*} 
\I_{4m}&=(\I_{2m}\oplus_DF_{\tau,m})^\top 
(\I_{2m}\oplus_DF_{\tau,m}) 
,\\ 
\Omega_{2m}&=\bigl(\I_{2m}\oplus_DF_{\tau,m}\bigr) 
\bigl( 
\Omega_m\oplus_D(\tau\Omega_m) 
\bigr) 
\bigl(\I_{2m}\oplus_DF_{\tau,m}\bigr)^\top 
, 
\end{align*} 
we obtain $\left( 
\Sigma_{\widetilde\pi_{\tau,r}} 
+ 
\frac{i\hbar}{2}\Omega_{2m} 
\right) 
\widetilde Z_r^\star = 0$. Moreover, the upper-right block of $\widetilde Z_r^\star$ is 
$(Z_r^\star)_{12}F_{\tau,m}^\top$. Since $F_{\tau,m}$ is real 
symmetric and 
$\Real(Z_r^\star)_{12}=-\alpha_r\I_{2m}$, $\bigl(\Real\widetilde Z_r^\star\bigr)_{12} 
= 
-\alpha_rF_{\tau,m}$. 
 
Finally, suppose 
$(\mathbf x,\mathbf0)\in 
\ker(\Real\widetilde Z_r^\star)$. Taking the second block row gives $-\alpha_rF_{\tau,m}^\top\mathbf x=0$. Since $\alpha_r>0$ and $F_{\tau,m}$ is invertible, $\mathbf x=0$. Hence  $\ker(\Real\widetilde Z_r^\star) 
\cap 
\left( 
\R^{2m}\oplus\{0\}^{2m} 
\right) 
= 
\{0\}$. 
\end{proof} 

The proof of 
Theorem~\ref{thm:faithful-global-uniqueness-intro} proceeds by showing that every optimal coupling with a faithful input is Gaussian. After reducing to the standard CCR in Proposition~\ref{prop:faithful-input-standard-CCR}, the coupling satisfies the covariance slackness relation and the condition required for reconstruction from the second marginal, while its prescribed second marginal is Gaussian. Corollary~\ref{cor:saturated-face-marginal-reconstruction} therefore implies that the coupling is Gaussian. Taking the first marginal shows that the barycenter is Gaussian, and covariance uniqueness then identifies it with the unique Gaussian barycenter.
\begin{proof}[Proof of 
Theorem~\ref{thm:faithful-global-uniqueness-intro}] 
As above, after relabeling the inputs if necessary, we assume that $\hat\sigma_1^G,\ldots,\hat\sigma_k^G$ are faithful and choose $s_0\in[k]$.

Fix $\tau\in\{\pm1\}$, and let 
$\hat\rho_\tau$ be any $2$-QW barycenter of 
$\hat{\boldsymbol\sigma}^G$. Let 
$\hat\pi_{\tau,s_0}$ be an optimal coupling between 
$\hat\rho_\tau$ and $\hat\sigma_{s_0}^G$, and fix a dual optimizer 
$\mathbf Z^\star=(Z_r^\star)_{r\in[k]}$ from 
Proposition~\ref{prop:covdual}. 
 
Apply Proposition~\ref{prop:faithful-input-standard-CCR} to 
$\hat\pi_{\tau,s_0}$. The resulting state 
$\widetilde\pi_{\tau,s_0}$ satisfies 
\[ 
\left( 
\Sigma_{\widetilde\pi_{\tau,s_0}} 
+ 
\frac{i\hbar}{2}\Omega_{2m} 
\right) 
\widetilde Z_{s_0}^\star 
= 
0 
\] 
and $\ker(\Real\widetilde Z_{s_0}^\star) \cap \left( 
\R^{2m}\oplus\{0\}^{2m} 
\right) 
= 
\{0\}$. 

By Proposition~\ref{prop:faithful-input-standard-CCR}, the second 
marginal of $\widetilde\pi_{\tau,s_0}$ is Gaussian. Therefore 
Corollary~\ref{cor:saturated-face-marginal-reconstruction} implies 
that $\widetilde\pi_{\tau,s_0}$ is Gaussian, and hence so is its first 
marginal $\hat\rho_\tau$. 
 
By Corollary~\ref{cor:faithful-barycenter-rigidity}, $\mathcal G(\hat\rho_\tau) 
= 
\hat{\overline\rho}_\tau^G$. Since $\hat\rho_\tau$ is Gaussian, $\hat\rho_\tau 
= 
\mathcal G(\hat\rho_\tau) 
= 
\hat{\overline\rho}_\tau^G$. Since $\hat\rho_\tau$ was arbitrary, the $2$-QW barycenter is unique 
and equals $\hat{\overline\rho}_\tau^G$. 
\end{proof}
 
\subsection{Semiclassical limit of Gaussian barycenters}
\label{subsec:gaussian-semiclassical}
We next compare Gaussian $2$-QW barycenters with classical quadratic
Wasserstein barycenters in the semiclassical limit. Fix
$\mathbf m_s\in\R^{2m}$ and
$\Sigma_s\in\cM_{2m,>}(\R)$, $s\in[N]$, independently of $\hbar$.

Let
$\mathcal N(\overline{\mathbf m},\overline\Sigma)$ denote the
classical $2$-Wasserstein barycenter of
$\{\mathcal N(\mathbf m_s,\Sigma_s)\}_{s\in[N]}$ with weights
$(\alpha_s)_{s\in[N]}$. The covariance part of the Gaussian
$2$-Wasserstein geometry is the Bures--Wasserstein geometry on positive
definite matrices; see, e.g.,~\cite{BhatiaJainLim2019}. By
\cite[Thm.~6.1]{AguCar} and
\cite[Eq.~(7)]{chen2018optimal}, $\overline{\mathbf m}
=\sum_{s=1}^N\alpha_s\mathbf m_s$, while
$\overline\Sigma$ is the unique optimal covariance of
\begin{equation}
\label{eq:classical-gaussian-barycenter-cov}
\begin{aligned}
\mathsf P_{0,\mathrm{cov}}
:=
&\min_{\substack{
\Sigma\in\cM_{2m,\mathrm{sym}}(\R),\
A_s\in\cM_{2m}(\R),\ s\in[N]}}
\quad
\Tr(\Sigma)
+
\sum_{s=1}^N\alpha_s
\bigl(
\Tr(\Sigma_s)-2\Tr(A_s)
\bigr)\\
\text{subject to}\quad&
\Sigma\succeq0,\qquad
\begin{pmatrix}
\Sigma&A_s\\
A_s^\top&\Sigma_s
\end{pmatrix}
\succeq0,
\quad s\in[N].
\end{aligned}
\end{equation}
Compared with the Bures--Wasserstein covariance problem~\eqref{eq:classical-gaussian-barycenter-cov}, the quantum covariance problem~\eqref{eq:qw2cov} additionally requires
\[
\Sigma+\frac{i\hbar}{2}\Omega_m\succeq0,
\qquad
\begin{pmatrix}\Sigma&A_s\\A_s^\top&\Sigma_s\end{pmatrix}
+\frac{i\hbar}{2}\bigl(\Omega_m\oplus_D(\tau\Omega_m)\bigr)\succeq0,
\quad s\in[N].
\]
The first inequality makes $\Sigma$ a quantum covariance matrix, and the second makes each block matrix the covariance of an admissible Gaussian quantum coupling. Thus these constraints impose quantum realizability, which is absent from the positive-semidefinite-matrix geometry of~\cite{BhatiaJainLim2019}. Their reduction to the block positive-semidefinite constraints in~\eqref{eq:classical-gaussian-barycenter-cov} as $\hbar\to0^+$ is the mechanism underlying the semiclassical convergence proved below.

For sufficiently small $\hbar>0$, let $\hat\sigma_s^\hbar$ be the Gaussian quantum state with mean $\mathbf m_s$ and covariance $\Sigma_s$. Since $\Sigma_s\succ0$, all these states are faithful for sufficiently small $\hbar$. Hence Theorem~\ref{thm:faithful-global-uniqueness-intro} gives, for each $\tau\in\{\pm1\}$, a unique $2$-QW barycenter $\hat{\overline\rho}_\tau^\hbar$. Denote its covariance by $\overline\Sigma_\tau(\hbar)$, and denote by
$\mathsf P_{\tau,\mathrm{cov}}^\hbar$ the minimum value of the
covariance problem \eqref{eq:qw2cov}.

For later use, we recall the Wigner function associated with the Weyl
convention of Section~\ref{subsec:standard-ccr-representation}; see
\cite[Sec.~II.A]{weedbrook2012gaussian}.
\begin{deff}[Wigner function of a Gaussian state]
\label{def:wigner-function}
For $\hat\rho\in\cS^G(L_2(\R^m))$, its Wigner function is
defined by
\[
\mathbb W_{\hat\rho}(\mathbf z)
:=
\frac{1}{(2\pi)^{2m}}
\int_{\R^{2m}}
e^{-i\boldsymbol\xi^\top\mathbf z}
\chi_{\hat\rho}(\boldsymbol\xi)\,
d\boldsymbol\xi.
\]
If $\hat\rho$ has mean $\mathbf m$ and covariance $\Sigma$, then
$\mathbb W_{\hat\rho}$ is the density of
$\mathcal N(\mathbf m,\Sigma)$.
\end{deff}

\begin{proof}[Proof of Theorem~\ref{thm:gaussian-semiclassical-intro}]
Fix $\tau\in\{\pm1\}$. We first compare the quantum and classical covariance problems. If
$(\Sigma,A_1,\ldots,A_N)$ is quantum feasible, then
\[
\begin{pmatrix}
\Sigma&A_s\\
A_s^\top&\Sigma_s
\end{pmatrix}
+
\frac{i\hbar}{2}
\bigl(\Omega_m\oplus_D(\tau\Omega_m)\bigr)
\succeq0,
\qquad s\in[N].
\]
Restricting the corresponding quadratic forms to real vectors gives
$\smat{\Sigma&A_s\\A_s^\top&\Sigma_s}\succeq0$ for every $s\in[N]$.
Thus every quantum-feasible tuple is classically feasible, and hence
$\mathsf P_{\tau,\mathrm{cov}}^\hbar
\geq\mathsf P_{0,\mathrm{cov}}$.

Conversely, let
$(\overline\Sigma,A_1^0,\ldots,A_N^0)$ be an optimizer of
\eqref{eq:classical-gaussian-barycenter-cov}. Since $\Sigma_s\succ0$,
the Schur complement gives
$\overline\Sigma-A_s^0\Sigma_s^{-1}(A_s^0)^\top\succeq0$ for every
$s\in[N]$.

For
$0<\hbar\leq\min_{s\in[N]}\lambda_{\min}(\Sigma_s)$, one has
$\Sigma_s+\frac{i\tau\hbar}{2}\Omega_m\succ0$ and $
\left\|
\left(
\Sigma_s+\frac{i\tau\hbar}{2}\Omega_m
\right)^{-1}
\right\|_{\mathrm{op}}
\leq
\frac{2}{\lambda_{\min}(\Sigma_s)}$. Using the resolvent identity $A^{-1}-B^{-1}=A^{-1}(B-A)B^{-1}$ and $\|\Omega_m\|_{\mathrm{op}}=1$, we therefore obtain
\[
\left\|
\left(
\Sigma_s+\frac{i\tau\hbar}{2}\Omega_m
\right)^{-1}
-
\Sigma_s^{-1}
\right\|_{\mathrm{op}}
\leq
\frac{\hbar}{\lambda_{\min}(\Sigma_s)^2},
\]
uniformly in $\tau\in\{\pm1\}$. Consequently,
\[
\begin{aligned}
&\left\|
\frac{i\hbar}{2}\Omega_m
-
A_s^0
\left[
\left(
\Sigma_s+\frac{i\tau\hbar}{2}\Omega_m
\right)^{-1}
-
\Sigma_s^{-1}
\right]
(A_s^0)^\top
\right\|_{\mathrm{op}}\leq
\hbar
\left(
\frac12+
\frac{\|A_s^0\|_{\mathrm{op}}^2}
{\lambda_{\min}(\Sigma_s)^2}
\right).
\end{aligned}
\]
Choose $c>
\max_{s\in[N]}
\left\{
\frac12+
\frac{\|A_s^0\|_{\mathrm{op}}^2}
{\lambda_{\min}(\Sigma_s)^2}
\right\}$ and set
$\Sigma_\hbar:=\overline\Sigma+c\hbar\I_{2m}$. Then
\[
\begin{aligned}
&\Sigma_\hbar+\frac{i\hbar}{2}\Omega_m
-
A_s^0
\left(
\Sigma_s+\frac{i\tau\hbar}{2}\Omega_m
\right)^{-1}
(A_s^0)^\top
\\
&=
\overline\Sigma
-
A_s^0\Sigma_s^{-1}(A_s^0)^\top
+
c\hbar\I_{2m}
+
\frac{i\hbar}{2}\Omega_m
-
A_s^0
\left[
\left(
\Sigma_s+\frac{i\tau\hbar}{2}\Omega_m
\right)^{-1}
-
\Sigma_s^{-1}
\right]
(A_s^0)^\top
\succeq0.
\end{aligned}
\]
Hence, by the Schur complement,
\[
\begin{pmatrix}
\Sigma_\hbar+\frac{i\hbar}{2}\Omega_m&A_s^0\\
(A_s^0)^\top&
\Sigma_s+\frac{i\tau\hbar}{2}\Omega_m
\end{pmatrix}
\succeq0,
\qquad s\in[N].
\]
Thus $(\Sigma_\hbar,A_1^0,\ldots,A_N^0)$ is quantum feasible for both
values of $\tau$. Evaluating the objective and combining with the
opposite inequality above gives
\[
0
\leq
\mathsf P_{\tau,\mathrm{cov}}^\hbar
-
\mathsf P_{0,\mathrm{cov}}
\leq
2mc\,\hbar,
\]
which proves the convergence of the optimal values.

We next prove convergence of the minimizing covariances. Let
$(\overline\Sigma_\tau(\hbar),
A_{1,\tau}^\hbar,\ldots,A_{N,\tau}^\hbar)$ be an optimal quantum
tuple. It is also classically feasible. The standard factorization of
positive block matrices \cite[Prop.~1.3.2]{Bhatia2007PDM} and the
Cauchy--Schwarz inequality give
\[
\Tr(A_{s,\tau}^\hbar)
\leq
\sqrt{\Tr\overline\Sigma_\tau(\hbar)}
\sqrt{\Tr\Sigma_s},
\qquad s\in[N].
\]
Consequently,
\[
\mathsf P_{\tau,\mathrm{cov}}^\hbar
\geq
\Tr\overline\Sigma_\tau(\hbar)
-
2\sqrt{\Tr\overline\Sigma_\tau(\hbar)}
\sum_{s=1}^N\alpha_s\sqrt{\Tr\Sigma_s}
+
\sum_{s=1}^N\alpha_s\Tr\Sigma_s.
\]
Since the optimal values are uniformly bounded for sufficiently small
$\hbar$, the family
$\{\overline\Sigma_\tau(\hbar)\}$ is bounded. Classical feasibility
and the same block-factorization argument imply that
$\{A_{s,\tau}^\hbar\}$ is bounded for every $s\in[N]$.

Let $\hbar_n\rightarrow0$. Passing to a subsequence if necessary, we may assume $\overline\Sigma_\tau(\hbar_n)\rightarrow\Sigma_*$ and $A_{s,\tau}^{\hbar_n}\rightarrow A_s^*$ for all $s\in[N]$. The limit tuple $(\Sigma_*,A_1^*,\ldots,A_N^*)$ is classically feasible. By continuity
of the objective and the convergence of the optimal values,
\[
\Tr(\Sigma_*)
+
\sum_{s=1}^N
\alpha_s
\bigl(\Tr(\Sigma_s)-2\Tr(A_s^*)\bigr)
=
\mathsf P_{0,\mathrm{cov}}.
\]
Thus $\Sigma_*$ minimizes
\eqref{eq:classical-gaussian-barycenter-cov}. By uniqueness,
$\Sigma_*=\overline\Sigma$. Since every convergent subsequence has the
same limit, $\overline\Sigma_\tau(\hbar)
\rightarrow
\overline\Sigma$.

Finally,
$\hat{\overline\rho}_\tau^\hbar$ is Gaussian with mean
$\overline{\mathbf m}$ and covariance
$\overline\Sigma_\tau(\hbar)$. Hence, by
Definition~\ref{def:wigner-function},
\[
\mathbb{W}_{\hat{\overline\rho}_\tau^\hbar}(\mathbf z)\,d\mathbf z
=
\mathcal N\bigl(
\overline{\mathbf m},
\overline\Sigma_\tau(\hbar)
\bigr)(d\mathbf z).
\]
Since the means are fixed and
$\overline\Sigma_\tau(\hbar)\rightarrow\overline\Sigma$, the
corresponding characteristic functions converge pointwise. Hence the
Wigner distributions converge weakly to
$\mathcal N(\overline{\mathbf m},\overline\Sigma)$.
\end{proof}
\subsection{A case study: thermal states}
\label{subsec:thermal}
We call a Gaussian state on $L_2(\R^m)$ \emph{isotropic} if its
covariance matrix is $w\I_{2m}$ for some $w\geq\hbar/2$.
Equivalently, the uncertainty principle reduces to
$w\geq\hbar/2$ in this isotropic setting. Following standard
terminology in continuous-variable quantum optics
\cite[Sec.~II.B]{weedbrook2012gaussian}, we also refer to a
zero-mean isotropic Gaussian state as a \emph{thermal state}. A
\emph{displaced thermal state} is a Gaussian state with mean
$\mathbf m\in\R^{2m}$ and covariance $w\I_{2m}$.

We first record a general boundary case that will settle uniqueness
when all thermal inputs are pure.

\begin{lemma}[Barycenter of pure inputs]
\label{lem:pure-input-coherent-barycenter}
Let
$\hat\sigma_1,\ldots,\hat\sigma_N\in\cS_2^R(L_2(\R^m))$
be pure states with means
$\mathbf m_1,\ldots,\mathbf m_N$, and set
$\overline{\mathbf m}:=\sum_{s=1}^N\alpha_s\mathbf m_s$.
Then, for every $\tau\in\{\pm1\}$, the corresponding $2$-QW
barycenter is unique and is the coherent state
$\proj{\overline{\mathbf m}}$.
\end{lemma}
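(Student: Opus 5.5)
Since a pure state $\hat\sigma_s$ admits only product couplings, the transport costs are explicit, and the plan is to minimize them directly. If $\hat\pi\in\Gamma_\tau(\hat\rho,\hat\sigma_s)$ has a pure marginal, then $\hat\pi=\hat\rho\otimes\hat\sigma_s^{(\tau)}$, where $\hat\sigma_s^{(\tau)}$ is $\hat\sigma_s$ or $\hat\sigma_s^\top$. This is the standard pure-marginal factorization already used in the proof of Proposition~\ref{prop:dual-support-gaussian-source}. Hence $\Gamma_\tau(\hat\rho,\hat\sigma_s)$ is a singleton for every $\hat\rho$, and whenever $\hat\rho\in\cS_2^R$, Proposition~\ref{prop:stateTrC} with zero cross-covariance $A=0$ gives
\[
\QW_{2,\tau}^2(\hat\rho,\hat\sigma_s)=\Tr(\Sigma_{\hat\rho})+\Tr(\Sigma_s)+\|\mathbf m_{\hat\rho}-\mathbf m_s\|^2 .
\]
Transposition preserves the trace of the covariance and fixes the mean, up to the identification of Proposition~\ref{prop:tau-standard-CCR-identification}, so $\Tr(\Sigma_s)$ is unchanged for $\tau=-1$. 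States outside $\cS_2^R$ have infinite objective by the lower bound in Proposition~\ref{prop:quadrature_bounds}, because $\Tr(\hat{\mathsf H}_{\mathrm{osc}}\hat\sigma_s)<\infty$. They can therefore be excluded from the minimization.

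Summing with the weights $\alpha_s$ and applying the mean decomposition from the proof of Theorem~\ref{thm:gaussian-barycenter-intro}, the objective becomes
\[
\Tr(\Sigma_{\hat\rho})+\|\mathbf m_{\hat\rho}-\overline{\mathbf m}\|^2+\sum_{s=1}^N\alpha_s\bigl(\Tr\Sigma_s+\|\mathbf m_s-\overline{\mathbf m}\|^2\bigr).
\]
The last sum does not depend on $\hat\rho$. It therefore remains to minimize $\Tr(\Sigma_{\hat\rho})+\|\mathbf m_{\hat\rho}-\overline{\mathbf m}\|^2$ over $\hat\rho\in\cS_2^R(L_2(\R^m))$.

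To bound the trace from below, I would invoke Theorem~\ref{thm:rs-uncertainty-canonical} and Williamson's theorem~\ref{thm:williamson}. Every symplectic eigenvalue of an admissible $\Sigma$ is at least $\hbar/2$, and this yields $\Tr\Sigma\ge m\hbar$. One route is to test $\Sigma+\frac{i\hbar}{2}\Omega_m\succeq0$ against the vectors $e_{2j-1}\pm ie_{2j}$, which gives $\Sigma_{2j-1,2j-1}+\Sigma_{2j,2j}\ge\hbar$ for each $j$; summing over $j$ gives the bound. Equality requires $\Tr\Sigma=m\hbar$. The main obstacle is to show that equality forces $\Sigma=\frac{\hbar}{2}\I_{2m}$. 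For this I plan to use the Williamson form $\Sigma=SDS^\top$ and the inequality $\Tr(SDS^\top)\ge\Tr(D)$, with equality only when $S$ is orthogonal. This is the standard fact that $\Tr(SS^\top)\ge 2m$ for symplectic $S$, with equality exactly for orthosymplectic $S$. Equality therefore gives $D=\frac{\hbar}{2}\I_{2m}$ and $SS^\top=\I$, hence $\Sigma=\frac{\hbar}{2}\I_{2m}$. A more elementary alternative is to write $\Sigma+\frac{i\hbar}{2}\Omega_m=M\succeq0$. The equality $\Tr M=m\hbar$ together with the pairwise bounds forces $v_j^\dagger Mv_j=0$ for $v_j=e_{2j-1}+ie_{2j}$, where the sign is chosen so that $v_j^\dagger(\frac{i\hbar}{2}\Omega_m)v_j=-\hbar$. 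Since $M\succeq0$, this gives $Mv_j=0$. Taking real and imaginary parts, as in the proof of Proposition~\ref{prop:dual-support-gaussian-source}, then gives $\Sigma e_{2j-1}=\frac\hbar2e_{2j-1}$ and $\Sigma e_{2j}=\frac\hbar2 e_{2j}$.

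Thus the minimum is attained exactly when $\mathbf m_{\hat\rho}=\overline{\mathbf m}$ and $\Sigma_{\hat\rho}=\frac{\hbar}{2}\I_{2m}$. By the characterization of coherent states recalled in Subsection~\ref{subsubsec:symplectic-coherent}, this holds if and only if $\hat\rho=\proj{\overline{\mathbf m}}$. The coherent state is feasible with finite objective, so it is the unique barycenter for both values of $\tau$.
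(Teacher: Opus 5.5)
Your proof is correct. Its first half coincides with the paper's: the pure-marginal factorization makes $\Gamma_\tau(\hat\rho,\hat\sigma_s)$ a singleton, you then apply Proposition~\ref{prop:stateTrC} with $A=0$, and you use the mean decomposition.

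The two arguments differ only in the final minimization. The paper writes $\Tr(\Sigma_{\hat\rho})+\|\mathbf m_{\hat\rho}-\overline{\mathbf m}\|^2=\sum_{j}\Tr[\hat\rho(\hat R_j-\overline m_j\I)^2]$. This is the expectation of a Weyl-displaced harmonic oscillator, whose ground state $\proj{\overline{\mathbf m}}$ is nondegenerate. Minimality and uniqueness then follow in one step at the level of states, with no covariance analysis.

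You instead stay at the level of moments, in three steps:
\begin{enumerate}[label=(\roman*)]
\item Testing the Robertson--Schr\"odinger relation on $v_j=e_{2j-1}+ie_{2j}$ gives $\Tr\Sigma\ge m\hbar$.
\item In the equality case, $(\Sigma+\frac{i\hbar}{2}\Omega_m)v_j=0$, hence $\Sigma=\frac{\hbar}{2}\I_{2m}$. This is the same kernel argument as in Proposition~\ref{prop:dual-support-gaussian-source}.
\item You then invoke the moment characterization of coherent states.
\end{enumerate}
Your route shows explicitly that a minimizer saturates the uncertainty relation mode by mode, but it is longer. Moreover, the coherent-state characterization you quote is itself normally proved by the paper's argument: a state with mean $\overline{\mathbf m}$ and covariance $\frac{\hbar}{2}\I_{2m}$ has displaced-oscillator energy $m\hbar$, which is the ground energy, so it is the ground state. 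Your elementary equality argument is complete, so the Williamson alternative is unnecessary. Your exclusion of states outside $\cS_2^R$ via the lower bound of Proposition~\ref{prop:quadrature_bounds} is equivalent to the paper's appeal to Proposition~\ref{prop:finitesecondemomentrho}.

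One small presentation point. For $\tau=-1$ you should not invoke the identification of Proposition~\ref{prop:tau-standard-CCR-identification}. One has $\Tr(\hat\sigma_s^\top\hat R_j^\top)=\Tr(\hat\sigma_s\hat R_j)$, and symmetrized products transpose to themselves. Hence the moments of $\hat\sigma_s^\top$ in the $\hat{\mathbf R}^\top$ coordinates used by $\hat C^{(-1)}$ coincide exactly with those of $\hat\sigma_s$. Passing through $J_m$ would instead flip the momentum components of the mean. Your displayed formula is nevertheless correct.
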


\begin{proof}
By Proposition~\ref{prop:finitesecondemomentrho}, every barycenter has
finite canonical second moments. Let
$\hat\rho\in\cS_2^R(L_2(\R^m))$. Since the prescribed marginal
$\hat\sigma_s$ is pure, every coupling between $\hat\rho$ and
$\hat\sigma_s$ factorizes. Hence its cross-covariance vanishes, and
Proposition~\ref{prop:stateTrC} gives
\[
\begin{aligned}
\sum_{s=1}^N
\alpha_s\QW_{2,\tau}^2(\hat\rho,\hat\sigma_s)
={}&
\Tr(\Sigma_{\hat\rho})
+
\|\mathbf m_{\hat\rho}-\overline{\mathbf m}\|^2+
\sum_{s=1}^N\alpha_s
\left(
\Tr(\Sigma_{\hat\sigma_s})
+
\|\mathbf m_s-\overline{\mathbf m}\|^2
\right).
\end{aligned}
\]
The last sum is independent of $\hat\rho$, while
\[
\Tr(\Sigma_{\hat\rho})
+
\|\mathbf m_{\hat\rho}-\overline{\mathbf m}\|^2
=
\sum_{j=1}^{2m}
\Tr\left[
\hat\rho
\left(
\hat R_j-\overline m_j\I
\right)^2
\right].
\]
The operator on the right is a Weyl displacement of the harmonic
oscillator Hamiltonian and has
$\proj{\overline{\mathbf m}}$ as its unique ground state. Hence the
barycenter is uniquely equal to
$\proj{\overline{\mathbf m}}$.
\end{proof}

\begin{prop}[Uniqueness for displaced thermal inputs]
\label{prop:thermal-barycenter-uniqueness}
Let $\hat\sigma_s^{\mathrm{th}}$, $s\in[N]$, be displaced thermal
Gaussian states with means $\mathbf m_s$ and covariances
$w_s\I_{2m}$, where $w_s\geq\hbar/2$. Then, for every
$\tau\in\{\pm1\}$, the corresponding $2$-QW barycenter is unique
among all quantum states.
\end{prop}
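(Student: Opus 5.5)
We split into cases according to whether some input is faithful or all inputs are pure. Each isotropic thermal input has covariance $w_s\I_{2m}$, whose symplectic eigenvalues all equal $w_s$. Hence each input is either faithful ($w_s>\hbar/2$) or pure ($w_s=\hbar/2$), and a partially pure thermal input cannot occur. This dichotomy is the whole point: the nonuniqueness phenomenon of Example~\ref{ex:nonfaithful-boundary} needs partially pure inputs, and these are excluded here.

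\emph{Case 1: some $w_{s_0}>\hbar/2$.} Then $\Sigma_{s_0}+\frac{i\hbar}{2}\Omega_m=w_{s_0}\I_{2m}+\frac{i\hbar}{2}\Omega_m$. The eigenvalues of $i\Omega_m$ are $\pm1$, so its eigenvalues are $w_{s_0}\pm\hbar/2>0$, and the input $\hat\sigma_{s_0}^{\mathrm{th}}$ is faithful. Theorem~\ref{thm:faithful-global-uniqueness-intro} then applies directly: the barycenter is unique among all quantum states and it is Gaussian.

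\emph{Case 2: $w_s=\hbar/2$ for every $s$.} Each $\hat\sigma_s^{\mathrm{th}}$ has covariance $\frac{\hbar}{2}\I_{2m}$, so by the characterization of coherent states in Subsection~\ref{subsubsec:symplectic-coherent} it is the coherent state $\proj{\mathbf m_s}$. In particular it is pure and has finite canonical second moments. Lemma~\ref{lem:pure-input-coherent-barycenter} gives uniqueness, with barycenter $\proj{\overline{\mathbf m}}$. For $\tau=-1$ the lemma is already stated for both formulations. The only thing to check is that the transpose of a pure state is pure, which is immediate.

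The cases are exhaustive, since either some $w_s$ exceeds $\hbar/2$ or all equal it, and this proves the proposition. The only mildly delicate step is the eigenvalue computation certifying faithfulness in Case~1. The later identification of the barycenter as a displaced isotropic thermal state with the explicit $\overline w_\tau$ is not needed for uniqueness. It would follow separately, for instance by isotropic symmetrization of the covariance problem~\eqref{eq:qw2cov} combined with the uniqueness just established.
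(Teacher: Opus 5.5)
Your proof is correct and matches the paper's argument: if some $w_{s_0}>\hbar/2$, that input is faithful and Theorem~\ref{thm:faithful-global-uniqueness-intro} applies; otherwise every input is coherent, hence pure, and Lemma~\ref{lem:pure-input-coherent-barycenter} gives uniqueness. Your eigenvalue check $w_{s_0}\pm\hbar/2>0$ certifying faithfulness is a correct detail that the paper leaves implicit.
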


\begin{proof}
If $w_r>\hbar/2$ for some $r\in[N]$, then
$\hat\sigma_r^{\mathrm{th}}$ is faithful, and uniqueness follows from
Theorem~\ref{thm:faithful-global-uniqueness-intro}. Otherwise
$w_s=\hbar/2$ for every $s\in[N]$, so all prescribed states are
coherent and therefore pure. In this case uniqueness follows from
Lemma~\ref{lem:pure-input-coherent-barycenter}.
\end{proof}

For pairs of displaced thermal Gaussian states, the two $2$-QW
problems admit the following explicit characterization.
The density formula is proved in Appendix~\ref{app:wassersteindistance},
while the channel formula follows from \cite[Thm.~3]{DPaTre19}, after
restoring $\hbar$ and using the cost normalization in
\eqref{eq:tau-cost}.

\Needspace{8\baselineskip}
\begin{prop}[2-QW distances between displaced thermal states]\leavevmode\par
\label{prop:qotcovid1}
Let
$\hat\rho^{\mathrm{th}},\hat\sigma^{\mathrm{th}}
\in\cS(L_2(\R^m))$
be displaced thermal states with means
$\mathbf m_{\hat\rho},\mathbf m_{\hat\sigma}$ and covariances
$w_{\hat\rho}\I_{2m},w_{\hat\sigma}\I_{2m}$, respectively, where
$w_{\hat\rho},w_{\hat\sigma}\geq\hbar/2$. Without loss of generality,
assume $w_{\hat\rho}\geq w_{\hat\sigma}$. Then
\begin{equation}
\QW_{2,+1}^2
(\hat\rho^{\mathrm{th}},\hat\sigma^{\mathrm{th}})
=
2m
\left[
\sqrt{w_{\hat\rho}-\frac{\hbar}{2}}
-
\sqrt{w_{\hat\sigma}-\frac{\hbar}{2}}
\right]^2
+
2m\hbar
+
\|\mathbf m_{\hat\rho}-\mathbf m_{\hat\sigma}\|^2,
\end{equation}
and
\begin{equation}
\QW_{2,-1}^2
(\hat\rho^{\mathrm{th}},\hat\sigma^{\mathrm{th}})
=
2m
\left[
\sqrt{w_{\hat\rho}+\frac{\hbar}{2}}
-
\sqrt{w_{\hat\sigma}-\frac{\hbar}{2}}
\right]^2
+
\|\mathbf m_{\hat\rho}-\mathbf m_{\hat\sigma}\|^2.
\end{equation}
\end{prop}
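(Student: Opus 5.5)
The plan is to reduce both formulas to a finite-dimensional optimization via Proposition~\ref{prop:gaussian-pairwise-covariance}. With $w_1:=w_{\hat\rho}$ and $w_2:=w_{\hat\sigma}$ it gives
\[
\QW_{2,\tau}^2(\hat\rho^{\mathrm{th}},\hat\sigma^{\mathrm{th}})
=\|\mathbf m_{\hat\rho}-\mathbf m_{\hat\sigma}\|^2+2m(w_1+w_2)-2\max_{A\in\mathrm{Corr}_\tau(w_1\I_{2m},w_2\I_{2m})}\Tr(A).
\]
So it suffices to show that
\[
\max\Tr(A)=2m\sqrt{(w_1-\tfrac{\hbar}{2})(w_2-\tfrac{\hbar}{2})}\quad(\tau=+1),
\qquad
\max\Tr(A)=2m\sqrt{(w_1+\tfrac{\hbar}{2})(w_2-\tfrac{\hbar}{2})}\quad(\tau=-1).
\]
Expanding the squares then gives exactly the two stated formulas. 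For $\tau=+1$ the extra $2m\hbar$ comes from $w_1+w_2-2\sqrt{(w_1-\hbar/2)(w_2-\hbar/2)}=\bigl[\sqrt{w_1-\hbar/2}-\sqrt{w_2-\hbar/2}\bigr]^2+\hbar$.

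\emph{Lower bound (feasible ansatz).} Take $A=a\I_{2m}$ with $a\ge0$. Since $i\Omega_m$ is Hermitian with eigenvalues $\lambda\in\{\pm1\}$, the block matrix in \eqref{eq:corr} splits along the joint eigenspaces of $i\Omega_m$ into $2\times2$ blocks
\[
\begin{pmatrix} w_1+\lambda\hbar/2 & a\\ a & w_2+\tau\lambda\hbar/2\end{pmatrix},\qquad \lambda=\pm1 .
\]
Positivity of all these blocks is equivalent to $a^2\le\min_{\lambda}(w_1+\lambda\hbar/2)(w_2+\tau\lambda\hbar/2)$, together with nonnegativity of the diagonal entries, which holds because $w_j\ge\hbar/2$.

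For $\tau=+1$ the minimum is attained at $\lambda=-1$. For $\tau=-1$ the difference of the two products is $\hbar(w_2-w_1)\le0$, so under $w_1\ge w_2$ the minimum is $(w_1+\hbar/2)(w_2-\hbar/2)$. This produces the claimed values of $a$, and hence of $\Tr(A)=2ma$, as lower bounds for the maximum.

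\emph{Upper bound.} I would first symmetrize. The constraint set $\mathrm{Corr}_\tau(w_1\I,w_2\I)$ is convex and $\Tr$ is linear. Because both marginal covariances are isotropic, the set is invariant under $A\mapsto KAK'^{\top}$ for $K$ in the orthogonal-symplectic group $\Sp(2m,\R)\cap O(2m)$ and $K'=F_{\tau,m}KF_{\tau,m}$. This is the group acting on the second factor by Proposition~\ref{prop:tau-standard-CCR-identification}. It also preserves the trace, since $\Tr(KAK^\top)=\Tr A$ for $\tau=+1$, and for $\tau=-1$ one restricts to the subgroup for which $K'=K$.

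Averaging over the Haar measure therefore lets me assume $A$ lies in the commutant, that is $A=a\I+b\Omega_m$ (this is the isotropic symmetrization referred to in the appendices). On each eigenspace of $i\Omega_m$, the operator $A$ then acts as a scalar of modulus $\sqrt{a^2+b^2}\ge|a|$. Repeating the $2\times2$ Schur-complement computation above with this scalar in the off-diagonal entries gives $a^2+b^2\le\min_\lambda(\cdots)$, hence the same bound on $a$, and $\Tr(A)=2ma$.

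\emph{Main obstacle.} The delicate step is the symmetrization for $\tau=-1$. There the second factor transforms by the conjugated action, so I must check that a large enough invariance group survives: the unitary group $U(m)$ acting by $K$ on both factors when $F_{\tau,m}$ commutes appropriately, or else a $U(1)^m$ rotation subgroup in each mode together with mode permutations. I must also check that this group's commutant still forces $A=a\I+b\Omega_m$, or at least forces a block form whose trace is governed by the scalar computation.

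If this step becomes cumbersome, the fallback is to prove the upper bound directly by weak duality. I would exhibit an explicit $Z\succeq0$ with $\Real Z_{12}=-\I_{2m}$, built from the projectors onto the $\lambda$-eigenspaces of $i\Omega_m$ at the extremal $\lambda$. Then $\Real\Tr\bigl[Z^\dagger(\text{block}+\tfrac{i\hbar}{2}(\Omega_m\oplus_D\tau\Omega_m))\bigr]\ge0$ yields $2\Tr(A)\le$ the claimed value, exactly as in Corollary~\ref{cor:faithful-complementary-slackness}. The $\tau=-1$ formula can also be checked against \cite[Thm.~3]{DPaTre19} after rescaling by $\hbar$.
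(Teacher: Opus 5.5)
Your proposal is correct, but it takes a different route from the paper's proof. The paper (Appendix~\ref{app:wassersteindistance}) proves only the $\tau=+1$ formula. It isotropizes the cross-covariance with the finite Pauli twirl of Corollary~\ref{coro:multimode-isotropic-reduction}, which combines conjugation by $P_j\oplus_D P_j$ with a transpose that undoes the sign flip of $\omega$ for $j=1,3$. This lands directly on $c\,\I_{2m}$, and the paper then reads off $c_{\max}$ from the four eigenvalues of the $4\times4$ block $M(c)$. The channel formula is imported from \cite[Thm.~3]{DPaTre19}.

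You instead twirl with the Haar measure of $\Sp(2m,\R)\cap O(2m)\cong U(m)$, whose real commutant is $\{a\I_{2m}+b\Omega_m\}$. You neutralize the leftover $b\Omega_m$ by noting that the constraint sees only $a^2+b^2$, and you replace the eigenvalue computation by splitting along the $\pm1$ eigenspaces of $i\Omega_m$. What this buys is a uniform, self-contained proof for both $\tau$, including the channel case. It also shows which sector is binding: for $\tau=-1$ the two products differ by $\hbar(w_{\hat\sigma}-w_{\hat\rho})$, so the ordering $w_{\hat\rho}\ge w_{\hat\sigma}$ matters only for the channel formula. The paper's route is more elementary, using a finite group and no Haar integration. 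Its lemma also isotropizes non-isotropic diagonal blocks, which the paper reuses in Corollary~\ref{cor:thermal-barycenter-intro}.

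The obstacle you flag for $\tau=-1$ is not real. $\mathrm{Corr}_\tau$ is defined in the native $\tau$-coordinates, and $K\Omega_mK^\top=\Omega_m$ implies $K(-\Omega_m)K^\top=-\Omega_m$. Hence congruence by $K\oplus_D K$ preserves $\Omega_m\oplus_D(\tau\Omega_m)$ for both signs, and the full group acts on $\mathrm{Corr}_{-1}(w_{\hat\rho}\I_{2m},w_{\hat\sigma}\I_{2m})$ by $A\mapsto KAK^\top$, preserving the trace. The twist $K'=F_{\tau,m}KF_{\tau,m}$ appears only in standard-CCR coordinates, where it is again orthosymplectic, so nothing is lost. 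Do not restrict to the subgroup commuting with $F_{-1,m}$: that subgroup is the realified $O(m)$, whose commutant (mode-wise copies of an arbitrary real $2\times2$ matrix) is too large to conclude directly.

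Your duality fallback also works and needs no symmetrization. For $\mu>0$, the matrix $Z=\smat{\mu&-1\\-1&\mu^{-1}}\otimes(\I_{2m}-\tau i\Omega_m)$ is positive semidefinite with $\Real Z_{12}=-\I_{2m}$. Then $\Real\Tr(Z^\dagger M)\ge0$ for the constraint matrix $M$ gives $\Tr A\le m\mu(w_{\hat\rho}-\tau\hbar/2)+m\mu^{-1}(w_{\hat\sigma}-\hbar/2)$. Its infimum over $\mu$ is exactly your claimed bound $2m\sqrt{(w_{\hat\rho}-\tau\hbar/2)(w_{\hat\sigma}-\hbar/2)}$.
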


\begin{proof}[Proof of Corollary~\ref{cor:thermal-barycenter-intro}]
By Theorem~\ref{thm:gaussian-barycenter-intro}, there exists a
Gaussian barycenter with mean
$\overline{\mathbf m}:=\sum_{s=1}^N\alpha_s\mathbf m_s$, whose
covariance is the covariance component of an optimizer of
\eqref{eq:qw2cov}.

Fix $\tau\in\{\pm1\}$ and let
$(\Sigma,A_1,\ldots,A_N)$ be such an optimal covariance tuple. For
each $s\in[N]$, the covariance of the corresponding Gaussian optimal
coupling has the form
\[
\Sigma_{\hat\pi_s^G}
=
\begin{pmatrix}
\Sigma&A_s\\
A_s^\top&w_s\I_{2m}
\end{pmatrix}.
\]
By Corollary~\ref{coro:multimode-isotropic-reduction}, feasibility
implies feasibility of the isotropized block
\[
\begin{pmatrix}
\frac{\Tr(\Sigma)}{2m}\I_{2m}
&
\frac{\Tr(A_s)}{2m}\I_{2m}
\\
\frac{\Tr(A_s)}{2m}\I_{2m}
&
w_s\I_{2m}
\end{pmatrix}
+
\frac{i\hbar}{2}
\bigl(
\Omega_m\oplus_D(\tau\Omega_m)
\bigr)
\succeq0.
\]
Since isotropization preserves
$\Tr(\Sigma)$ and $\Tr(A_s)$,
Proposition~\ref{prop:stateTrC} shows that the corresponding displaced
thermal Gaussian state with mean $\overline{\mathbf m}$ and covariance
$\frac{\Tr(\Sigma)}{2m}\I_{2m}$ is also a barycenter. By
Proposition~\ref{prop:thermal-barycenter-uniqueness}, the barycenter is
unique. Hence it is necessarily a displaced thermal Gaussian state
with mean $\overline{\mathbf m}$ and covariance
$\overline w_\tau\I_{2m}$ for some
$\overline w_\tau\geq\hbar/2$.

For $\tau=+1$, Proposition~\ref{prop:qotcovid1} gives
\[
\begin{aligned}
\cB_{+1}(\hat{\boldsymbol\sigma}^{\mathrm{th}}\mid\boldsymbol{\alpha})
={}&
2m
\min_{w\geq\hbar/2}
\sum_{s=1}^N
\alpha_s
\left[
\sqrt{w-\frac{\hbar}{2}}
-
\sqrt{w_s-\frac{\hbar}{2}}
\right]^2+
2m\hbar
+
\sum_{s=1}^N
\alpha_s
\|\overline{\mathbf m}-\mathbf m_s\|^2.
\end{aligned}
\]
The unique minimizer is $
\overline w_{+1}
=
\left[
\sum_{s=1}^N
\alpha_s
\sqrt{w_s-\frac{\hbar}{2}}
\right]^2
+
\frac{\hbar}{2}$.

For $\tau=-1$, Proposition~\ref{prop:qotcovid1} gives
\begin{equation}
\label{eq:wctr-D2}
\begin{aligned}
\cB_{-1}(\hat{\boldsymbol\sigma}^{\mathrm{th}}\mid\boldsymbol{\alpha})
={}&
2m
\min_{w\geq\hbar/2}
\Bigg\{
\begin{aligned}
&\sum_{w_s<w}
\alpha_s
\left[
\sqrt{w+\frac{\hbar}{2}}
-
\sqrt{w_s-\frac{\hbar}{2}}
\right]^2\\
&\quad+
\sum_{w_s\geq w}
\alpha_s
\left[
\sqrt{w_s+\frac{\hbar}{2}}
-
\sqrt{w-\frac{\hbar}{2}}
\right]^2
\end{aligned}
\Bigg\}
\\
&+
\sum_{s=1}^N
\alpha_s
\|\overline{\mathbf m}-\mathbf m_s\|^2.
\end{aligned}
\end{equation}
By uniqueness of the barycenter,
$\overline w_{-1}$ is the unique minimizer of
\eqref{eq:wctr-D2}.
\end{proof}

\begin{coro}
\label{coro:densitycentervalue}
Let $m,N\in\mathbb N$,
$\boldsymbol\alpha=(\alpha_s)_{s\in[N]}$ be positive weights
satisfying $\sum_{s=1}^N\alpha_s=1$, and let
$\hat{\boldsymbol\sigma}^{\mathrm{th}}:=(\hat\sigma_s^{\mathrm{th}})_{s\in[N]}$ be
displaced thermal Gaussian states with means $\mathbf m_s$ and covariances
$w_s\I_{2m}$. 
Then
\begin{equation}
\begin{aligned}
\cB_{+1}
(\hat{\boldsymbol\sigma}^{\mathrm{th}}\mid\boldsymbol{\alpha})
={}&
\sum_{s=1}^N
\left\{
2m\alpha_s
\left[
\sqrt{\overline w_{+1}-\frac{\hbar}{2}}
-
\sqrt{w_s-\frac{\hbar}{2}}
\right]^2
+
\alpha_s
\|\overline{\mathbf m}-\mathbf m_s\|^2
\right\}
+
2m\hbar
\\
={}&
2m
\left(
\sum_{s=1}^N
\alpha_s
\left(
w_s-\frac{\hbar}{2}
\right)
-
\left[
\sum_{s=1}^N
\alpha_s
\sqrt{w_s-\frac{\hbar}{2}}
\right]^2
\right)
\\
&+
\left(
\sum_{s=1}^N
\alpha_s\|\mathbf m_s\|^2
-
\left\|
\sum_{s=1}^N
\alpha_s\mathbf m_s
\right\|^2
\right)
+
2m\hbar.
\end{aligned}
\end{equation}
\end{coro}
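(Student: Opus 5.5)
The plan is to evaluate the barycenter objective at the barycenter identified in Corollary~\ref{cor:thermal-barycenter-intro}, and then simplify using two variance identities. By that corollary, with $\tau=+1$, the unique barycenter $\hat{\overline\rho}_{+1}$ is the displaced thermal state with mean $\overline{\mathbf m}=\sum_{s=1}^N\alpha_s\mathbf m_s$ and covariance $\overline w_{+1}\I_{2m}$. Hence
\[
\cB_{+1}(\hat{\boldsymbol\sigma}^{\mathrm{th}}\mid\boldsymbol\alpha)
=\sum_{s=1}^N\alpha_s\QW_{2,+1}^2\bigl(\hat{\overline\rho}_{+1},\hat\sigma_s^{\mathrm{th}}\bigr).
\]
Each term is given by Proposition~\ref{prop:qotcovid1}. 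For $\tau=+1$ the bracket $\bigl[\sqrt{w_{\hat\rho}-\hbar/2}-\sqrt{w_{\hat\sigma}-\hbar/2}\bigr]^2$ is symmetric in the two widths. Therefore the ordering assumption $w_{\hat\rho}\ge w_{\hat\sigma}$ in that proposition is harmless, and no case split on whether $\overline w_{+1}\ge w_s$ is needed. Summing the terms with weights $\alpha_s$, and using $\sum_s\alpha_s=1$ to collect the constant $2m\hbar$, gives the first displayed equality directly.

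For the second equality, set $x_s:=\sqrt{w_s-\hbar/2}\ge0$ and $\mu:=\sum_{s=1}^N\alpha_sx_s$. The explicit formula for $\overline w_{+1}$ in Corollary~\ref{cor:thermal-barycenter-intro} gives exactly
\[
\sqrt{\overline w_{+1}-\hbar/2}=\mu,
\]
since $\mu\ge0$. The weighted variance identity then reads
\[
\sum_{s=1}^N\alpha_s(\mu-x_s)^2=\sum_{s=1}^N\alpha_sx_s^2-\mu^2
=\sum_{s=1}^N\alpha_s\Bigl(w_s-\frac{\hbar}{2}\Bigr)-\Bigl[\sum_{s=1}^N\alpha_s\sqrt{w_s-\frac{\hbar}{2}}\Bigr]^2.
\]
Multiplying by $2m$ gives the covariance part.

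For the mean part, the same identity in $\R^{2m}$, which was already used in the proof of Theorem~\ref{thm:gaussian-barycenter-intro}, gives
\[
\sum_{s=1}^N\alpha_s\|\overline{\mathbf m}-\mathbf m_s\|^2=\sum_{s=1}^N\alpha_s\|\mathbf m_s\|^2-\|\overline{\mathbf m}\|^2.
\]
Adding back the constant $2m\hbar$ yields the stated closed form.

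I see no real obstacle here. The one point to check is that the constant $2m\hbar$ appears exactly once after summing, because $\sum_s\alpha_s=1$. The other point is that the minimizer formula for $\overline w_{+1}$ produces the nonnegative root $\mu$, so that $\sqrt{\overline w_{+1}-\hbar/2}=\mu$ holds without a sign ambiguity.
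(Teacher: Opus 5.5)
Your proposal is correct and follows the paper's route. In the proof of Corollary~\ref{cor:thermal-barycenter-intro}, the paper reduces $\cB_{+1}$ via Proposition~\ref{prop:qotcovid1} to $2m\min_{w\ge\hbar/2}\sum_s\alpha_s\bigl[\sqrt{w-\hbar/2}-\sqrt{w_s-\hbar/2}\bigr]^2+2m\hbar+\sum_s\alpha_s\|\overline{\mathbf m}-\mathbf m_s\|^2$, and this statement then follows by substituting the minimizer $\overline w_{+1}$ and applying the same two weighted-variance identities you use.
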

 \subsection{Tensorization and nonuniqueness on the partially pure boundary}
\label{subsec:nonfaithful-boundary}

We first record a tensorization property of $2$-QW barycenters.

\begin{lemma}[Tensorization of barycenters]
\label{lem:tensorization-barycenter}
Fix $\tau\in\{\pm1\}$. Let $r,N\in\N$ and, for each $j\in[r]$,
let $m_j\in\N$. For $j\in[r]$ and $s\in[N]$, let $\hat\sigma_s^{(j)}\in\cS\bigl(L_2(\R^{m_j})\bigr)$ be states with common weights $\alpha_s>0$ satisfying $\sum_{s=1}^N\alpha_s=1$. Suppose that
$\hat\rho_j\in\cS(L_2(\R^{m_j}))$ is a $2$-QW barycenter of
$\{\hat\sigma_s^{(j)}\}_{s\in[N]}$ with these weights. Then $\bigotimes_{j=1}^r\hat\rho_j$ is a $2$-QW barycenter of $\left\{
\bigotimes_{j=1}^r\hat\sigma_s^{(j)}
\right\}_{s\in[N]}$ on $\bigotimes_{j=1}^rL_2(\R^{m_j})
\simeq L_2(\R^{m_1+\cdots+m_r})$.
\end{lemma}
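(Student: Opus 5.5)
The proof is a two-sided comparison. The key input is that the quadratic cost $\hat C^{(\tau)}$ on the product system splits additively over the factors $j\in[r]$. Write $\mathcal H^{(j)}:=L_2(\R^{m_j})$ and $\mathcal H^{(j),\tau}:=L_2^{(\tau)}(\R^{m_j})$. The canonical vector on $\bigotimes_j\mathcal H^{(j)}$ is the concatenation of the canonical vectors of the factors. For $\tau=-1$ I will also use the identification $(\bigotimes_j\mathcal H^{(j)})^*\simeq\bigotimes_j(\mathcal H^{(j)})^*$. Under it, transposition of Weyl operators factorizes, $(\bigotimes_j W_j)^\top=\bigotimes_j W_j^\top$, and hence $(\bigotimes_j\hat\sigma^{(j)}_s)^\top=\bigotimes_j(\hat\sigma_s^{(j)})^\top$. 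With these conventions,
\[
\hat C^{(\tau)}=\sum_{j=1}^r \hat C^{(\tau)}_j\otimes \I_{\mathrm{rest}},
\]
where $\hat C^{(\tau)}_j$ is the $2$-QW cost on $\mathcal H^{(j)}\otimes\mathcal H^{(j),\tau}$, suitably reordered in the tensor factors.

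\emph{Step 1 (trace additivity).} For every state $\hat\pi$ on the full bipartite space I will show
\[
\Tr(\hat C^{(\tau)}\hat\pi)=\sum_{j}\Tr\bigl(\hat C^{(\tau)}_j\hat\pi_j\bigr)\in[0,+\infty],
\]
where $\hat\pi_j$ is the reduced state on $\mathcal H^{(j)}\otimes\mathcal H^{(j),\tau}$. If $\hat\pi$ has finite canonical second moments, this follows from Proposition~\ref{prop:stateTrC} applied on both sides, since the means and covariances of $\hat\pi_j$ are the corresponding blocks of those of $\hat\pi$ (Proposition~\ref{prop:reduction-moments}). In general, I will use quadratic forms. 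The closure $\hat C^{(\tau)}$ is the form sum of the nonnegative operators $\hat C^{(\tau)}_j\otimes\I$, which strongly commute because they act on different tensor factors. The extended expectation of a sum of nonnegative strongly commuting operators is the sum of the extended expectations, via the joint spectral measure and monotone convergence (Proposition~\ref{prop:spectral-trunc-positive}). Each summand then reduces to $\Tr(\hat C^{(\tau)}_j\hat\pi_j)$ by Proposition~\ref{prop:partialtrace-unbounded}. \emph{I expect this step to be the main technical obstacle.} It requires checking that the closure convention of Remark~\ref{rem:quadratic-hamiltonian-closure-convention} agrees with the form sum on the product core.

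\emph{Step 2 (lower bound).} Let $\hat\rho$ be any state on $\bigotimes_j\mathcal H^{(j)}$, with $j$-th marginal $\hat\rho^{(j)}$. Let $\hat\pi_s\in\Gamma_\tau(\hat\rho,\bigotimes_j\hat\sigma^{(j)}_s)$. Then $(\hat\pi_s)_j$ lies in $\Gamma_\tau(\hat\rho^{(j)},\hat\sigma_s^{(j)})$, because partial traces compose and transposition factorizes. By Step 1,
\[
\Tr(\hat C^{(\tau)}\hat\pi_s)\ge\sum_j\QW^2_{2,\tau}(\hat\rho^{(j)},\hat\sigma^{(j)}_s).
\]
Taking the infimum over couplings and summing with the weights $\alpha_s$ gives
\[
\sum_s\alpha_s\QW^2_{2,\tau}\Bigl(\hat\rho,\bigotimes_j\hat\sigma^{(j)}_s\Bigr)\ge\sum_j\sum_s\alpha_s\QW^2_{2,\tau}(\hat\rho^{(j)},\hat\sigma_s^{(j)})\ge\sum_j\mathcal B_j,
\]
where $\mathcal B_j$ is the barycenter value of the $j$-th family.

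\emph{Step 3 (upper bound).} If some $\mathcal B_j=+\infty$, both sides are infinite and every state is trivially a barycenter. Otherwise, since $\mathcal B_j$ is finite, the pairwise optimal couplings $\hat\pi^{(j)}_s$ between $\hat\rho_j$ and $\hat\sigma^{(j)}_s$ exist, because the QOT infimum is attained when finite (as noted after~\eqref{eq:QOT-def-setting}). Their tensor product $\bigotimes_j\hat\pi_s^{(j)}$, after reordering the factors, lies in $\Gamma_\tau(\bigotimes_j\hat\rho_j,\bigotimes_j\hat\sigma_s^{(j)})$. By Step 1 its cost is $\sum_j\QW^2_{2,\tau}(\hat\rho_j,\hat\sigma_s^{(j)})$. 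Therefore the objective at $\bigotimes_j\hat\rho_j$ is at most $\sum_j\mathcal B_j$. Combined with Step 2, this shows that $\bigotimes_j\hat\rho_j$ attains the infimum, which proves the lemma.
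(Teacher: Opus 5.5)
Your proof is correct and follows the paper's argument. The lower bound passes to factor marginals of an arbitrary coupling and invokes the barycenter property of each $\hat\rho_j$; the upper bound tensors factor couplings; and the infinite-value case is handled identically. The differences are minor: the paper uses $\varepsilon$-optimal couplings where you invoke attained minimizers. The paper also takes for granted the additivity of $\Tr(\hat C^{(\tau)}\hat\pi)$ over factors, including the transpose factorization for $\tau=-1$, which your Step~1 spells out.
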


\begin{proof}
Let $\hat\rho$ be an arbitrary state on the product system and let
$\hat\rho^{(j)}$ denote its $j$th marginal. Since the quadratic cost
is the sum of the costs on the individual factors, taking the
corresponding factor marginals of any admissible coupling gives
\[
\QW_{2,\tau}^2
\left(
\hat\rho,
\bigotimes_{j=1}^r\hat\sigma_s^{(j)}
\right)
\geq
\sum_{j=1}^r
\QW_{2,\tau}^2
\left(
\hat\rho^{(j)},\hat\sigma_s^{(j)}
\right).
\]
Therefore
\[
\begin{aligned}
\sum_{s=1}^N\alpha_s
\QW_{2,\tau}^2
\left(
\hat\rho,
\bigotimes_{j=1}^r\hat\sigma_s^{(j)}
\right)
&\geq
\sum_{j=1}^r
\sum_{s=1}^N\alpha_s
\QW_{2,\tau}^2
\left(
\hat\rho^{(j)},\hat\sigma_s^{(j)}
\right)
\\
&\geq
\sum_{j=1}^r
\sum_{s=1}^N\alpha_s
\QW_{2,\tau}^2
\left(
\hat\rho_j,\hat\sigma_s^{(j)}
\right),
\end{aligned}
\]
where the second inequality follows from the barycenter property of
each $\hat\rho_j$.

If one of the factor barycenter values on the right-hand side is
$+\infty$, this lower bound makes the conclusion immediate. We may
therefore assume that all factor barycenter values are finite.

Conversely, fix $\varepsilon>0$. For every $s\in[N]$ and $j\in[r]$,
choose an $\varepsilon$-optimal coupling
$\hat\pi_{s,j}^{\varepsilon}$ between
$\hat\rho_j$ and $\hat\sigma_s^{(j)}$. Then
$\bigotimes_{j=1}^r\hat\pi_{s,j}^{\varepsilon}$ is an admissible
coupling between
$\bigotimes_{j=1}^r\hat\rho_j$ and
$\bigotimes_{j=1}^r\hat\sigma_s^{(j)}$, and additivity of the
quadratic cost gives
\[
\QW_{2,\tau}^2
\left(
\bigotimes_{j=1}^r\hat\rho_j,
\bigotimes_{j=1}^r\hat\sigma_s^{(j)}
\right)
\leq
\sum_{j=1}^r
\QW_{2,\tau}^2
\left(
\hat\rho_j,\hat\sigma_s^{(j)}
\right)
+
r\varepsilon.
\]
After multiplying by $\alpha_s$, summing over $s\in[N]$, and letting
$\varepsilon\to0$, we obtain the reverse inequality. Hence
$\bigotimes_{j=1}^r\hat\rho_j$ is a barycenter.
\end{proof}

Lemma~\ref{lem:pure-input-coherent-barycenter} shows that
nonfaithfulness alone does not imply nonuniqueness: when all inputs are pure, the barycenter is still unique. The faithful-input assumption in Theorem~\ref{thm:faithful-global-uniqueness-intro} cannot, however, be removed from a general uniqueness statement. The following example shows that, on the partially pure boundary, there may exist a continuum of Gaussian barycenters as well as non-Gaussian barycenters. In particular, distinct barycenters may have identical first and second canonical moments.

\begin{example}[Nonuniqueness and non-Gaussianity for partially pure inputs]
\label{ex:nonfaithful-boundary}
For $w\geq\hbar/2$, let $\hat\sigma_w^{\mathrm{th}}\in\cS\bigl(L_2(\R)\bigr)$
denote the centered one-mode thermal state with covariance $w\I_2$.
By Corollary~\ref{cor:thermal-barycenter-intro}, the equal-weight
one-mode $2$-QW barycenter of
$\hat\sigma_{3\hbar/2}^{\mathrm{th}}$ and
$\hat\sigma_{\hbar/2}^{\mathrm{th}}$ is
$\hat\sigma_{3\hbar/4}^{\mathrm{th}}$ for $\tau=+1$ and
$\hat\sigma_{\hbar}^{\mathrm{th}}$ for $\tau=-1$.

Consider the $2$-QW barycenter problem with inputs
\[
\hat\sigma_1
=
\hat\sigma_{3\hbar/2}^{\mathrm{th}}
\otimes
\hat\sigma_{\hbar/2}^{\mathrm{th}},
\qquad
\hat\sigma_2
=
\hat\sigma_{\hbar/2}^{\mathrm{th}}
\otimes
\hat\sigma_{3\hbar/2}^{\mathrm{th}},
\qquad
\alpha_1=\alpha_2=\frac12.
\]
These are centered Gaussian states with covariances
$\Sigma_1=\frac{\hbar}{2}\smat{3\I_2&0\\0&\I_2}$ and
$\Sigma_2=\frac{\hbar}{2}\smat{\I_2&0\\0&3\I_2}$.
Thus both inputs are partially pure and nonfaithful.

\emph{Gaussian nonuniqueness.}
For $t\in[-1,1]$, let $\hat\rho_t^\tau$ be the centered Gaussian state
with covariance
\[
\overline\Sigma_t^{+}
=
\frac{\hbar}{4}
\begin{pmatrix}
3\I_2&t\I_2\\
t\I_2&3\I_2
\end{pmatrix},
\qquad
\overline\Sigma_t^{-}
=
\frac{\hbar}{2}
\begin{pmatrix}
2\I_2&t\I_2\\
t\I_2&2\I_2
\end{pmatrix}.
\]
For $|t|\leq1$, these matrices belong to $\mathrm{Cov}_2^{\mathrm Q}$,
and the one-mode marginals of $\hat\rho_t^\tau$ are
$\hat\sigma_{3\hbar/4}^{\mathrm{th}}$ for $\tau=+1$ and
$\hat\sigma_{\hbar}^{\mathrm{th}}$ for $\tau=-1$.

By Lemma~\ref{lem:nonfaithful-boundary-couplings}, for each
$\tau\in\{\pm1\}$, $t\in[-1,1]$, and $s\in\{1,2\}$, there exists an
optimal coupling between $\hat\rho_t^\tau$ and $\hat\sigma_s$
attaining the corresponding tensorized one-mode cost. Hence
$\hat\rho_t^\tau$ is an equal-weight $2$-QW barycenter of
$\hat\sigma_1$ and $\hat\sigma_2$. Since
$\overline\Sigma_t^\tau$ varies with $t$, this gives a continuum of
distinct Gaussian barycenters. At $t=0$,
$\hat\rho_0^\tau$ is
$(\hat\sigma_{3\hbar/4}^{\mathrm{th}})^{\otimes2}$ for $\tau=+1$
and $(\hat\sigma_{\hbar}^{\mathrm{th}})^{\otimes2}$ for $\tau=-1$,
namely the product barycenter from
Lemma~\ref{lem:tensorization-barycenter}.

\emph{Non-Gaussian barycenters.}
The $2$-QW barycenter objective is convex in the barycentric state:
if $\hat\rho_1$ and $\hat\rho_2$ are $2$-QW barycenters, convex
combinations of their optimal couplings show that
$\lambda\hat\rho_1+(1-\lambda)\hat\rho_2$ is again a $2$-QW
barycenter for every $\lambda\in[0,1]$. Thus, for $t\in(0,1]$,
\[
\hat\omega_t^\tau
:=
\frac12\hat\rho_t^\tau+\frac12\hat\rho_{-t}^\tau
\]
is a barycenter. Since $\overline\Sigma_t^\tau$ is affine in $t$,
$\mathbf m_{\hat\omega_t^\tau}=0$ and
$\Sigma_{\hat\omega_t^\tau}=\overline\Sigma_0^\tau$.
However, $\hat\omega_t^\tau$ is non-Gaussian, being a nontrivial
mixture of two centered Gaussian states with distinct covariance
matrices. Thus $\hat\omega_t^\tau$ and the Gaussian barycenter
$\hat\rho_0^\tau$ are distinct barycenters with identical first and
second canonical moments.
\end{example}

Hence this example exhibits two distinct sources of nonuniqueness: the covariance problem $\mathsf P_{\tau,\mathrm{cov}}$ from \eqref{eq:qw2cov} itself has multiple optimizers, and even after fixing the optimal covariance \(\overline\Sigma_0^\tau\), the first and second moments do not determine the barycenter as a quantum state.

\begin{revision}
\section*{Acknowledgments}

A.G. gratefully acknowledges the National Council for Scientific and
Technological Development (CNPq) for a research productivity fellowship.
Z.L. was supported by the Natural Sciences and Engineering Research Council
of Canada (NSERC) through the Alliance Grants program, under the QIMMIQ
project (ALLRP 592521-23). Part of this research was performed while the
authors were visiting the Institute for Pure and Applied Mathematics (IPAM),
which is supported by the National Science Foundation
(Grant Nos. DMS-1925919 and DMS-2422832).
\end{revision}

\appendix
\section[Compactness and lower semicontinuity of quantum states]{Compactness and lower semicontinuity of\\ quantum states}\label{append:compact}
In this appendix we collect compactness, tightness, and lower
semicontinuity results for quantum states on separable Hilbert spaces. These results are used in the direct method for the barycenter problem.

We begin with a spectral characterization of compact resolvent operators.
\begin{prop}[Purely discrete spectrum from compact resolvent; see {\cite[Prop.~5.12]{Schmudgen2012}}]\label{prop:purely}
Let $\cH$ be a separable Hilbert space, and let $A\in\cL_S(\cH)$ be a self-adjoint operator. Then $A$ has compact resolvent if and only if it has a purely discrete spectrum; that is, $\mathrm{spec}(A)$ consists only of eigenvalues of finite multiplicity with no accumulation point in \(\mathbb R\).
\end{prop}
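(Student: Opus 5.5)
The statement is the standard equivalence for a self-adjoint $A$: compact resolvent $\Longleftrightarrow$ $\mathrm{spec}(A)$ consists only of eigenvalues of finite multiplicity with no accumulation point in $\R$. I would prove it by the spectral theorem together with the spectral mapping relation between $A$ and its resolvent. Fix $z\in\mathrm{res}(A)$; it is convenient to take $z=i$, which lies in the resolvent set of every self-adjoint operator. By functional calculus, $R_z(A)=f(A)$ with $f(\lambda)=(\lambda-z)^{-1}$. This gives two facts.

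\begin{enumerate}[label=(\alph*)]
\item $\mathrm{spec}(R_z(A))\setminus\{0\}=f(\mathrm{spec}(A))$. Since $f$ is bounded and continuous on $\R$ and tends to $0$ at infinity, the closure of $f(\mathrm{spec}(A))$ adds at most $0$.
\item For each $\lambda\in\R$, $\ker(A-\lambda)=\ker\bigl(R_z(A)-f(\lambda)\bigr)$. This holds because $f$ is injective and the spectral projections of $A$ and $R_z(A)$ correspond: $E^{R_z(A)}(\{f(\lambda)\})=E^A(\{\lambda\})$.
\end{enumerate}

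\emph{Compact resolvent implies discrete spectrum.} Here $R_z(A)$ is a compact normal operator. By the spectral theorem for compact normal operators, its nonzero spectrum consists of eigenvalues of finite multiplicity accumulating only at $0$. Transfer this back through (a) and (b). Each $\lambda\in\mathrm{spec}(A)$ corresponds to the nonzero eigenvalue $f(\lambda)$, so $\lambda$ is an eigenvalue of $A$ with the same finite multiplicity. Suppose the spectrum had a finite accumulation point $\lambda_0\in\R$. Then $f(\lambda_n)\to f(\lambda_0)\neq0$, giving an accumulation point of nonzero eigenvalues of $R_z(A)$ away from $0$, which is impossible.

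\emph{Discrete spectrum implies compact resolvent.} Let $(\lambda_n)$ be the eigenvalues, repeated with multiplicity. Since $\mathrm{spec}(A)$ is closed and discrete with no finite accumulation point, the spectral measure is supported on this countable set. Hence $\cH=\overline{\bigoplus_n}\ker(A-\lambda_n)$ and
\[
R_z(A)=\sum_n (\lambda_n-z)^{-1}P_n .
\]
Each $P_n$ has finite rank. If there are infinitely many eigenvalues, then $|\lambda_n|\to\infty$, because every bounded interval contains only finitely many of them. Therefore the coefficients tend to $0$, the partial sums are finite-rank operators converging to $R_z(A)$ in operator norm, and $R_z(A)$ is compact.

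\emph{Independence of $z$.} The definition says ``for some, equivalently every, $z$''. This follows from the first resolvent identity $R_z-R_w=(z-w)R_zR_w$: if $R_w$ is compact, then $R_z=R_w+(z-w)R_zR_w$ is compact, since compact operators form an ideal.

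\emph{Main obstacle.} The only delicate point is justifying (b) and the support claim rigorously through the spectral measure. Since the paper cites \cite[Prop.~5.12]{Schmudgen2012} for exactly this result, I would present the argument above briefly and defer the details to that reference.
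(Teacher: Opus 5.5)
Your proof is correct: the paper gives no argument of its own and simply cites \cite[Prop.~5.12]{Schmudgen2012}. Your route is the standard proof of that result: spectral mapping for $f(\lambda)=(\lambda-z)^{-1}$ plus Riesz--Schauder theory in one direction, norm approximation of $\sum_n(\lambda_n-z)^{-1}P_n$ by finite-rank operators in the other, and the first resolvent identity for independence of $z$. The step (b) you flag as delicate is in fact elementary: $R_z(A)\psi=(\lambda-z)^{-1}\psi$ forces $\psi=(\lambda-z)R_z(A)\psi\in\Dom(A)$, and applying $A-z$ gives $A\psi=\lambda\psi$.
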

\begin{cor}[Nonnegative operators with compact resolvent]
\label{cor:cost-compact-resolvent}
Let \(\cH\) be a separable Hilbert space, 
\(C\in\cL_{\geq}(\cH)\) be a nonnegative operator with compact resolvent and $\mathrm{spec}(C)$ be the spectrum of $C$ counting multiplicity. Then 
\begin{equation}
\label{eq:discrete-confining}
    \#\{\lambda\in\mathrm{spec}(C):\ \lambda<r\}<\infty,
    \qquad \text{for all } r>0 .
\end{equation}
Equivalently, if \(\cH\) is infinite-dimensional, the eigenvalues of \(C\),
counted with multiplicity, can be ordered as $0\le \lambda_1\le \lambda_2\le\cdots$ with $\lambda_j\to+\infty$.
\end{cor}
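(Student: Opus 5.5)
The statement is Corollary~\ref{cor:cost-compact-resolvent}, and I would deduce it directly from Proposition~\ref{prop:purely}. Since \(C\) has compact resolvent, \(\mathrm{spec}(C)\) consists only of eigenvalues of finite multiplicity and has no accumulation point in \(\R\). Because \(C\ge0\), the spectrum lies in \([0,+\infty)\).

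\emph{Step 1: the counting bound.} Fix \(r>0\). The set \(\mathrm{spec}(C)\cap[0,r]\) is compact. Since the spectrum has no accumulation point, this set is also discrete, hence finite. Each of its points is an eigenvalue of finite multiplicity. Counting with multiplicity therefore gives a finite sum of finitely many finite numbers, which proves \eqref{eq:discrete-confining}.

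\emph{Step 2: the ordered sequence when \(\dim\cH=\infty\).} A self-adjoint operator with purely discrete spectrum has an orthonormal basis of eigenvectors. Indeed, the spectral projections \(E^C(\{\lambda\})\) sum strongly to \(\I_\cH\), because the spectral measure is supported on the countable set of eigenvalues. It follows that the total multiplicity is \(\dim\cH=\infty\).

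By Step 1, only finitely many eigenvalues (with multiplicity) lie below any level \(r\). Hence they can be listed in nondecreasing order as \(0\le\lambda_1\le\lambda_2\le\cdots\). This list is infinite, and for each \(r\) only finitely many terms lie below \(r\), so \(\lambda_j\to+\infty\).

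\emph{Step 3: the converse direction of ``equivalently''.} If such an ordering with \(\lambda_j\to\infty\) exists, then only finitely many \(\lambda_j\) lie below \(r\), which is exactly the counting bound. In finite dimensions, the counting bound holds trivially.

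No step is a genuine obstacle. The only point needing care is completeness of the eigenvectors, which follows from the support of the spectral measure, so that the list is indeed infinite.
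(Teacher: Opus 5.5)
Your proof is correct and takes the same route as the paper, which invokes Proposition~\ref{prop:purely} to obtain purely discrete spectrum and then uses $C\ge0$. You also supply the details the paper leaves implicit: compactness of $\mathrm{spec}(C)\cap[0,r]$ together with finite multiplicities gives the counting bound, and completeness of the eigenprojections makes the ordered list infinite when $\dim\cH=\infty$.
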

\begin{proof}
By Proposition~\ref{prop:purely}, \(C\) has purely discrete spectrum. Therefore, the conclusion follows from $C\geq0$.
\end{proof}

Combining \cite[Lem.~11.5]{Holevo2019QSCI} with
Corollary~\ref{cor:cost-compact-resolvent}, we obtain:
\begin{lemma}[See {\cite[Lem.~11.5]{Holevo2019QSCI}}]\label{lemma:boundedHcompact} Let $\cH$ be a separable Hilbert space and $(\rho_j)_{j\in\N}\subset\cS(\cH)$ be a sequence of states. Consider $H\in\cL_\geq(\cH)$ with compact resolvent. If there exists $A\in\R$ such that $0\le \Tr(H\rho_j)\le A$ for all $j\in\N$, then there exists a subsequence $(\rho_{j_k})_{k\in\N}$ converging in trace norm to some $\rho\in\cS(\cH)$, and \[ 0\le \Tr(H\rho)\le A. \] 

Equivalently, the sublevel set $\{\rho\in\cS(\cH):\Tr(H\rho)\le A\}$ is trace-norm compact.
\end{lemma}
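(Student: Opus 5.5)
The plan is to reduce the statement to the classical Rellich-type compactness criterion, using the discreteness of the spectrum of $H$ from Corollary~\ref{cor:cost-compact-resolvent}.

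\emph{Step 1: finite-dimensional reduction.} Let $P_r:=E^H([0,r))$ be the spectral projection of $H$ onto eigenvalues below $r$. By Corollary~\ref{cor:cost-compact-resolvent}, $P_r$ has finite rank.

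\emph{Step 2: uniform tail bound.} The energy bound controls the mass outside $\Ran P_r$ uniformly in $j$:
\[
\Tr\bigl((\I-P_r)\rho_j\bigr)\le \frac1r\Tr(H\rho_j)\le \frac Ar .
\]
The gentle-measurement estimate $\|\rho-P\rho P\|_1\le 2\sqrt{\Tr((\I-P)\rho)}$ then gives $\|\rho_j-P_r\rho_jP_r\|_1\le 2\sqrt{A/r}$, uniformly in $j$.

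\emph{Step 3: total boundedness.} For each $r$, the compressions $P_r\rho_jP_r$ lie in a bounded subset of the finite-dimensional space $\cB(\Ran P_r)$, hence in a compact set. Take a sequence $r_n\to\infty$ and extract subsequences successively, then take the diagonal subsequence. Along it, $P_{r_n}\rho_{j_k}P_{r_n}$ converges for every $n$. Combined with the uniform $2\sqrt{A/r_n}$ approximation from Step~2, this shows $(\rho_{j_k})$ is Cauchy in trace norm. It therefore converges to some trace-class $\rho\ge0$ with $\Tr\rho=\lim_k\Tr\rho_{j_k}=1$, so $\rho\in\cS(\cH)$. The tightness bound of Step~2 is what prevents mass from escaping to infinity, and I expect it to be the main point of the argument.

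\emph{Step 4: the energy bound passes to the limit.} By Proposition~\ref{prop:spectral-trunc-positive},
\[
\Tr(H\rho)=\sup_n\Tr\bigl(H^{(n)}\rho\bigr),
\]
where the truncations $H^{(n)}$ are bounded. Trace-norm convergence gives $\Tr(H^{(n)}\rho)=\lim_k\Tr(H^{(n)}\rho_{j_k})\le A$ for each $n$, hence $0\le\Tr(H\rho)\le A$.

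\emph{Step 5: the equivalent formulation.} The sublevel set $\{\rho\in\cS(\cH):\Tr(H\rho)\le A\}$ is sequentially compact by Steps~1--4, and the trace norm is a metric topology, so it is compact.
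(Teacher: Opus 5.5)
Your proposal is correct and follows essentially the same route as the paper. The paper obtains the lemma from Holevo's Lemma~11.5 together with Corollary~\ref{cor:cost-compact-resolvent}, and that argument has the same three ingredients as yours: finite-rank spectral projections $E^H([0,r))$; the uniform tail bound $\Tr\bigl((\I-E^H([0,r)))\rho_j\bigr)\le A/r$, which gives relative compactness; and lower semicontinuity of $\rho\mapsto\Tr(H\rho)$ to keep the energy bound in the limit. The only difference is that you prove the tightness-implies-compactness criterion directly, via the gentle-measurement estimate and a diagonal Cauchy argument, rather than quoting it.
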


In the general case, if we only assume that $H\in\cL_{\geq}(\cH)$
is a nonnegative self-adjoint operator, then compactness need not hold.
However, one still has trace-norm closedness of the energy-constrained set.
This follows from the lower semicontinuity of the energy functional
$\rho\mapsto \Tr(\rho H)$.

\begin{prop}[Lower semicontinuity; see {\cite[Lem.~11.4]{Holevo2019QSCI}} and \cite{WeisShirokov2021}]
\label{prop:Hclosed}
Let \(\cH\) be a separable Hilbert space and \(H\in\cL_{\geq}(\cH)\) be a nonnegative self-adjoint operator. Then the extended energy functional
\[
    \mathcal E_H:\cS(\cH)\to[0,+\infty],
    \qquad
    \mathcal E_H(\rho):=\Tr(H\rho),
\]
is affine and lower semicontinuous with respect to the trace-norm topology. Equivalently, if \(\rho_j\to\rho\) in trace norm, then
\[
    \Tr(H\rho)
    \le
    \liminf_{j\to\infty}\Tr(H\rho_j).
\]
Consequently, for every \(A\in[0,+\infty)\), the energy sublevel set $\left\{
    \rho\in\cS(\cH):\Tr(H\rho)\le A
    \right\}$ is closed in the trace-norm topology.
\end{prop}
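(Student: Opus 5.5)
The statement to prove is Proposition~\ref{prop:Hclosed}: trace-norm lower semicontinuity and affinity of $\rho\mapsto\Tr(H\rho)$ for $H\in\cL_{\geq}(\cH)$. The plan is to combine the spectral truncation representation of Proposition~\ref{prop:spectral-trunc-positive} with continuity of bounded observables.

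\emph{Step 1 (supremum representation).} Fix $H\in\cL_\geq(\cH)$ and let $H^{(n)}:=\int_{[0,n]}\lambda\,dE^H(\lambda)\in\cB_S(\cH)$. By Proposition~\ref{prop:spectral-trunc-positive}, for every $\rho\in\cS(\cH)$,
\[
\Tr(H\rho)=\sup_{n\in\N}\Tr\bigl(H^{(n)}\rho\bigr)\in[0,+\infty].
\]

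\emph{Step 2 (continuity of each truncation).} Each map $\rho\mapsto\Tr(H^{(n)}\rho)$ is continuous in trace norm, since
\[
\bigl|\Tr(H^{(n)}\rho)-\Tr(H^{(n)}\rho')\bigr|\le n\,\|\rho-\rho'\|_1 .
\]
This follows from $\|H^{(n)}\|_{\mathrm{op}}\le n$ and the duality $(\cT^{(1)}(\cH))^*\cong\cB(\cH)$.

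\emph{Step 3 (lower semicontinuity).} A pointwise supremum of continuous real functions is lower semicontinuous. Concretely, suppose $\rho_j\to\rho$ in trace norm. For each fixed $n$,
\[
\Tr(H^{(n)}\rho)=\lim_{j\to\infty}\Tr(H^{(n)}\rho_j)\le\liminf_{j\to\infty}\Tr(H\rho_j).
\]
Taking the supremum over $n$ gives $\Tr(H\rho)\le\liminf_{j}\Tr(H\rho_j)$.

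\emph{Step 4 (affinity).} Let $\rho_1,\rho_2\in\cS(\cH)$, $\lambda\in[0,1]$ and $\rho_\lambda:=\lambda\rho_1+(1-\lambda)\rho_2$. Each truncation is linear, so
\[
\Tr(H^{(n)}\rho_\lambda)=\lambda\Tr(H^{(n)}\rho_1)+(1-\lambda)\Tr(H^{(n)}\rho_2).
\]
The sequences in $n$ are nondecreasing, so the limit of the sum is the sum of the limits. Hence $\mathcal E_H(\rho_\lambda)=\lambda\mathcal E_H(\rho_1)+(1-\lambda)\mathcal E_H(\rho_2)$ in $[0,+\infty]$, with the convention $0\cdot\infty=0$. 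Equivalently, this can be read off from linearity of $\rho\mapsto\mu^H_\rho$ and monotone convergence.

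\emph{Step 5 (closedness of sublevel sets).} Closedness of $\{\rho:\Tr(H\rho)\le A\}$ is immediate from lower semicontinuity. It is closed within $\cS(\cH)$, and $\cS(\cH)$ is itself trace-norm closed.

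There is no real obstacle. The only point needing care is the extended-valued bookkeeping at $+\infty$ in the affinity and liminf steps, and monotone convergence handles it.
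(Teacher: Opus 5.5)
Your proof is correct. It writes $\Tr(H\rho)=\sup_n\Tr(H^{(n)}\rho)$ as a supremum of trace-norm continuous functionals via Proposition~\ref{prop:spectral-trunc-positive}, and it obtains affinity from linearity of $\rho\mapsto\mu^H_\rho$ plus monotone convergence; this is the standard argument behind the cited result of Holevo, and the paper, which only cites that result here, uses the same supremum-of-continuous-functionals reasoning for the lower semicontinuity of $\pi\mapsto\Tr(C_s\pi)$ in its existence proof.
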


\noindent\textbf{Compactness for couplings under converging marginals.}

\begin{theo}[Existence of a convergent subsequence in \(\Gamma(\rho, \sigma)\); see {\cite[Prop.~8]{DPaTre19}}]\label{theo:compact} Let \(\cH_1\) and \(\cH_2\) be separable Hilbert spaces, and let \(\rho \in \cS(\cH_1)\) and \(\sigma \in \cS(\cH_2)\) be density operators. Consider a sequence \((\pi_n)_{n\geq1} \subset \cS(\cH_1 \otimes \cH_2)\) of quantum couplings such that \(\pi_n \in \Gamma(\rho_n, \sigma_n)\), where \(\rho_n \to \rho\) and \(\sigma_n \to \sigma\) in trace norm. Then there exists a subsequence $(\pi_{n_k})_{k\ge1}$ and a coupling $\pi\in\Gamma(\rho,\sigma)$ such that \[ \pi_{n_k}\ \to\ \pi\quad\text{in trace norm}. \] 
\end{theo}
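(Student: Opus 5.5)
The plan is a tightness-plus-weak-$*$ compactness argument, upgraded to trace-norm convergence through uniform control of the tails.

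\emph{Step 1: uniform tightness of the marginals.} The set $K_1:=\{\rho_n\}_n\cup\{\rho\}$ is trace-norm compact, since it is a convergent sequence together with its limit. Fix $\varepsilon>0$ and choose a finite-rank orthogonal projection $P_\varepsilon$ on $\cH_1$ with $\Tr(\rho(\I-P_\varepsilon))<\varepsilon/2$, for instance a spectral projection of $\rho$. Since $\rho_n\to\rho$ in trace norm, $\Tr(\rho_n(\I-P_\varepsilon))<\varepsilon$ for all large $n$. Enlarging $P_\varepsilon$ by spectral projections of the finitely many remaining $\rho_n$ gives $\Tr(\rho_n(\I-P_\varepsilon))<\varepsilon$ for all $n$. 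In the same way, pick a finite-rank projection $Q_\varepsilon$ on $\cH_2$ with $\Tr(\sigma_n(\I-Q_\varepsilon))<\varepsilon$ for all $n$. Set $R_\varepsilon:=P_\varepsilon\otimes Q_\varepsilon$. The operator inequality
\[
\I-P\otimes Q\le(\I-P)\otimes\I+\I\otimes(\I-Q)
\]
and the defining identities \eqref{eq:partialtrace} of the partial trace give $\Tr(\pi_n(\I-R_\varepsilon))<2\varepsilon$ uniformly in $n$.

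\emph{Step 2: extraction of a limit.} Let $(R_j)_j$ be the projections $R_{1/j}$, made increasing by replacing $P_{1/j}$ and $Q_{1/j}$ with the projections onto the spans of $\Ran P_{1/1},\dots,\Ran P_{1/j}$ and $\Ran Q_{1/1},\dots,\Ran Q_{1/j}$; this only shrinks the tails. For each fixed $j$, the compressions $R_j\pi_nR_j$ lie in a bounded subset of the finite-dimensional space $\cB(\Ran R_j)$. A diagonal extraction therefore yields one subsequence $(\pi_{n_k})_k$ such that $R_j\pi_{n_k}R_j\to T_j$ in trace norm for every $j$. Compatibility $R_jT_{j'}R_j=T_j$ for $j\le j'$ is automatic, and $T_j\ge0$ with $\Tr T_j\le1$.

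\emph{Step 3: upgrade to trace-norm convergence.} This step carries the main estimate. For a positive operator $\omega$ with $\Tr\omega\le1$ and a projection $R$, the gentle-measurement bound
\[
\|\omega-R\omega R\|_1\le 2\sqrt{\Tr(\omega(\I-R))}
\]
follows from a Cauchy--Schwarz argument on $\omega-R\omega R=(\I-R)\omega+R\omega(\I-R)$. Applying it to $\omega=\pi_{n_k}$ and $R=R_j$ gives $\|\pi_{n_k}-R_j\pi_{n_k}R_j\|_1\le2\sqrt{2/j}$ uniformly in $k$. By the triangle inequality, $(\pi_{n_k})_k$ is Cauchy in trace norm: given $\delta>0$, first fix $j$ so that the tail term is below $\delta$, then use convergence of the finite-dimensional compressions. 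Completeness of $\cT^{(1)}(\cH_1\otimes\cH_2)$ gives a limit $\pi$. It is positive and has trace one, because the trace-norm limit of states is a state, so no mass is lost.

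\emph{Step 4: identification of the marginals.} By Lemma~\ref{lem:pt-continuity},
\[
\Tr_{\cH_2}\pi=\lim_k\Tr_{\cH_2}\pi_{n_k}=\lim_k\rho_{n_k}=\rho,
\]
and likewise $\Tr_{\cH_1}\pi=\sigma$. Hence $\pi\in\Gamma(\rho,\sigma)$. The only delicate point is the uniform tightness of Step 1 combined with the gentle-measurement estimate of Step 3; together they rule out mass escaping to infinity, which is exactly what can fail for weak-$*$ limits in infinite dimensions.
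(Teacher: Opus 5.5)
Your proof is correct and complete. The paper does not prove this statement itself; it imports it from De Palma--Trevisan. The standard argument behind such results has the same skeleton as yours:
\begin{itemize}
\item uniform tails of the marginals are transferred to the couplings through $\I-P\otimes Q\le(\I-P)\otimes\I+\I\otimes(\I-Q)$, as in your Step 1;
\item one then appeals to the known criterion that a uniformly tight family of density operators is relatively compact in trace norm;
\item the marginals are identified by trace-norm continuity of the partial traces, as in your Step 4.
\end{itemize}

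The genuine difference is in your Steps 2--3, which prove that compactness criterion directly. You use finite-dimensional compressions, a diagonal extraction, and the gentle-measurement bound $\|\omega-R\omega R\|_1\le 2\sqrt{\Tr(\omega(\I-R))}$ to obtain a Cauchy sequence. This makes the argument self-contained at the cost of some length.

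One minor remark: the monotone arrangement of the $R_j$ and the compatibility relation $R_jT_{j'}R_j=T_j$ are never used in the Cauchy estimate, so they can be dropped.
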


The following proposition shows that imposing a marginal energy bound defines
a trace-norm closed set.

\begin{prop}[Closedness of marginal energy sublevel sets]
\label{prop:marginal-energy-closed}
Let $\cH_1$ and $\cH_2$ be separable Hilbert spaces, and let
$H\in\cL_\geq(\cH_1)$ be a nonnegative self-adjoint operator on $\cH_1$. For $c\in[0,+\infty)$,
define
\[
    \mathcal C_c:=\left\{
    \pi\in\cS(\cH_1\otimes\cH_2):
    \Tr\!\left[
        H\,(\Tr_{\cH_2}\pi)
    \right]
    \leq c
    \right\}.
\]
Then $\mathcal C_c$ is closed with respect to the trace norm.
\end{prop}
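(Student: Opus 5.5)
The claim is that $\mathcal C_c$ is trace-norm closed, and the plan is to reduce it to the one-system statement of Proposition~\ref{prop:Hclosed}. The reduction uses two ingredients: the trace-norm continuity of the partial trace (Lemma~\ref{lem:pt-continuity}) and the lower semicontinuity of the energy functional $\mathcal E_H$.

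First, write $\mathcal C_c$ as a preimage. Consider the map
\[
T:\cS(\cH_1\otimes\cH_2)\to\cS(\cH_1),\qquad T(\pi):=\Tr_{\cH_2}\pi .
\]
This map is well defined, since the partial trace of a state is positive with unit trace. Then
\[
\mathcal C_c=T^{-1}\bigl(\{\rho\in\cS(\cH_1):\Tr(H\rho)\le c\}\bigr).
\]

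Next, we check that $T$ is continuous. Linearity of the partial trace and Lemma~\ref{lem:pt-continuity} give
\[
\|T(\pi)-T(\pi')\|_1=\|\Tr_{\cH_2}(\pi-\pi')\|_1\le\|\pi-\pi'\|_1 ,
\]
so $T$ is $1$-Lipschitz in trace norm.

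By Proposition~\ref{prop:Hclosed}, the sublevel set $\{\rho\in\cS(\cH_1):\Tr(H\rho)\le c\}$ is closed in $\cS(\cH_1)$. The preimage of a closed set under a continuous map is closed, so $\mathcal C_c$ is closed in $\cS(\cH_1\otimes\cH_2)$.

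Finally, $\cS(\cH_1\otimes\cH_2)$ is itself trace-norm closed in $\cT^{(1)}$. Positivity is preserved under trace-norm limits, and so is the unit trace, because the trace is norm continuous. Hence $\mathcal C_c$ is closed in the ambient trace-class space as well.

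Equivalently, one can argue with sequences. Take $\pi_j\to\pi$ in trace norm with $\pi_j\in\mathcal C_c$. Then $\Tr_{\cH_2}\pi_j\to\Tr_{\cH_2}\pi$, and lower semicontinuity gives
\[
\Tr\bigl(H\Tr_{\cH_2}\pi\bigr)\le\liminf_{j\to\infty}\Tr\bigl(H\Tr_{\cH_2}\pi_j\bigr)\le c .
\]

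There is no real obstacle here. The only subtle input, the lower semicontinuity of the extended energy for an unbounded $H$, is already provided by Proposition~\ref{prop:Hclosed}. If one wanted it self-contained, it follows from Proposition~\ref{prop:spectral-trunc-positive}: $\Tr(H\rho)=\sup_n\Tr(H^{(n)}\rho)$ is a supremum of trace-norm continuous functions, and such a supremum is lower semicontinuous.
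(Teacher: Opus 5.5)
Your proof is correct and follows the paper's argument essentially verbatim. The paper also writes $\mathcal C_c$ as the preimage, under the trace-norm continuous partial trace, of the closed energy sublevel set from Proposition~\ref{prop:Hclosed}, intersected with the closed state space $\cS(\cH_1\otimes\cH_2)$.
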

\begin{proof}
Recall from Proposition~\ref{prop:Hclosed} that, for every nonnegative
self-adjoint operator $H$ on $\cH_1$, the energy sublevel set $\mathfrak S_{H,c}(\cH_1)
    :=
    \left\{
    \rho\in\cS(\cH_1):
    \Tr(H\rho)\leq c
    \right\}$ is closed in the trace-norm topology.

The partial trace $\Tr_{\cH_2}:
    \cT^{(1)}_S(\cH_1\otimes\cH_2)
    \to
    \cT^{(1)}_S(\cH_1)$ is trace-norm continuous. Moreover,
\[
    \mathcal C_c
    =
    \left(\Tr_{\cH_2}\right)^{-1}
    \left(\mathfrak S_{H,c}(\cH_1)\right)
    \cap
    \cS(\cH_1\otimes\cH_2).
\]
Since $\mathfrak S_{H,c}(\cH_1)$ is trace-norm closed and
$\Tr_{\cH_2}$ is trace-norm continuous, the inverse image
is trace-norm closed. The state space
$\cS(\cH_1\otimes\cH_2)$ is also trace-norm closed in
$\cT^{(1)}_S(\cH_1\otimes\cH_2)$. Therefore $\mathcal C_c$ is
trace-norm closed.
\end{proof}

\section{Essential self-adjointness of quadratic Hamiltonians}
\label{app:quadratic-hamiltonians}

This appendix proves Proposition~\ref{prop:quadratic-hamiltonian-esa}. The only external
input is the essential self-adjointness criterion of Fulsche--van Luijk~\cite[Thm.~1]{FulscheVanLuijk2025} for Weyl
pseudodifferential operators, used here only in the elementary case of real quadratic
polynomials.

\begin{lemma}[Quadratic Schr\"odinger operators with multiplicity]
\label{lem:quadratic-schrodinger-esa}
Let $\hat{\mathbf{R}}$ be the standard canonical vector on
\(L_2(\mathbb R^m)\) introduced in
Section~\ref{subsec:standard-ccr-representation}. Let
\(M\in\cM_{2m,\mathrm{sym}}(\mathbb R)\), \(\mathbf{b}\in\mathbb R^{2m}\), and
\(c\in\mathbb R\). Define, on \(\mathscr S(\mathbb R^m)\),
\[
\hat h^{(0)}_{M,\mathbf{b},c}
:=
\frac12\,\hat{\mathbf{R}}^\top M\hat{\mathbf{R}}
+
\mathbf{b}^\top\hat{\mathbf{R}}
+
c.
\]
Then \(\hat h^{(0)}_{M,\mathbf{b},c}\) is essentially self-adjoint on
\(\mathscr S(\mathbb R^m)\). Moreover, if \(\cK\) is a Hilbert space, then $\hat H^{(0)}_{M,\mathbf{b},c}
:=
\hat h^{(0)}_{M,\mathbf{b},c}\otimes \I_\cK$
is essentially self-adjoint on
\(\mathscr S(\mathbb R^m)\otimes_{\mathrm{alg}}\cK
\subset L_2(\mathbb R^m)\otimes \cK\), where $\otimes_{\mathrm{alg}}$ is the algebraic tensor product.
\end{lemma}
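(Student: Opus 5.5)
The plan is to reduce both claims to the Fulsche--van Luijk criterion \cite[Thm.~1]{FulscheVanLuijk2025}, and then pass to the multiplicity space by a spectral-decomposition argument.

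\emph{Step 1: the scalar operator.} In the Schr\"odinger representation, $\hat h^{(0)}_{M,\mathbf b,c}$ acts on $\mathscr S(\mathbb R^m)$ as the Weyl quantization of the real polynomial
\[
a(\mathbf z)=\tfrac12\mathbf z^\top M\mathbf z+\mathbf b^\top\mathbf z+c .
\]
This holds because the symmetrized products $\tfrac12(\hat R_j\hat R_k+\hat R_k\hat R_j)$ are exactly the Weyl quantizations of $z_jz_k$. Since $M$ is symmetric, $\hat{\mathbf R}^\top M\hat{\mathbf R}$ already equals its symmetrization; the commutator terms $i\hbar(\Omega_m)_{jk}$ cancel against the antisymmetry of $\Omega_m$.

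The symbol $a$ is real and polynomial of degree at most two. I will check that it satisfies the hypothesis of \cite[Thm.~1]{FulscheVanLuijk2025}, whose typical form is a symbol-class condition such as $\partial^\alpha a$ bounded for $|\alpha|\ge 2$. Here every derivative of order two is constant and every higher derivative vanishes. The theorem then gives essential self-adjointness on $\mathscr S(\mathbb R^m)$. Symmetry on $\mathscr S$ is immediate, since each $\hat R_j$ is symmetric and leaves $\mathscr S$ invariant.

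\emph{Step 2: tensoring with $\I_\cK$.} Let $h$ denote the closure of $\hat h^{(0)}_{M,\mathbf b,c}$. I will use the criterion that $\operatorname{Ran}(T\pm i)$ is dense. Essential self-adjointness gives density of $(h^{(0)}\pm i)\mathscr S(\mathbb R^m)$ in $L_2(\mathbb R^m)$.

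For elementary tensors we have
\[
(H^{(0)}\pm i)(\psi\otimes k)=\bigl((h^{(0)}\pm i)\psi\bigr)\otimes k .
\]
So the range of $H^{(0)}\pm i$ on $\mathscr S\otimes_{\mathrm{alg}}\cK$ contains $\operatorname{Ran}(h^{(0)}\pm i)\otimes_{\mathrm{alg}}\cK$. This set is dense in $L_2(\mathbb R^m)\otimes\cK$, because the linear span of $u\otimes k$ with $u$ ranging over a dense set and $k\in\cK$ is dense.

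Symmetry of $H^{(0)}$ on the algebraic tensor product is again immediate. Hence $H^{(0)}$ is essentially self-adjoint, and its closure is $h\otimes\I_\cK$.

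\emph{Main obstacle.} The main obstacle is Step 1. One must match the normalization of the operator with the exact Weyl-calculus convention, including $\hbar$, and with the symbol-class hypothesis of \cite[Thm.~1]{FulscheVanLuijk2025}.

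If that matching proves awkward, I would try a fallback argument, treating the cases separately. When $M$ is nondegenerate, I would use a metaplectic change of variables together with a displacement that absorbs $\mathbf b$, reducing to the normal forms of quadratic Hamiltonians. When $M$ is degenerate, the normal forms include free and linear potential terms, and I would verify essential self-adjointness for each directly.

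Both routes need a comparison between the Weyl convention and the conventions of Section~\ref{subsec:standard-ccr-representation}. This comparison is elementary but must be done carefully.
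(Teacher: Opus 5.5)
Your proposal is correct and follows the paper's proof. Like the paper, you identify $\hat h^{(0)}_{M,\mathbf b,c}$ as the Weyl quantization of a real quadratic symbol, invoke \cite[Thm.~1]{FulscheVanLuijk2025} using boundedness of all derivatives of order at least two, and then pass to the multiplicity space; your density argument for $\operatorname{Ran}(H^{(0)}\pm i)$ simply makes explicit the paper's one-line assertion that $\mathscr S(\mathbb R^m)\otimes_{\mathrm{alg}}\cK$ is a core for the self-adjoint operator $\hat h\otimes\I_\cK$, so the metaplectic fallback is unnecessary.
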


\begin{proof}
Since \(M\) is real symmetric and \(b,c\) are real, the operator
\(\hat h^{(0)}_{M,\mathbf{b},c}\) is the Weyl quantization of a real quadratic polynomial on
phase space. Hence it is essentially self-adjoint on \(\mathscr S(\mathbb R^m)\) by the
criterion of Fulsche--van Luijk~\cite[Thm.~1]{FulscheVanLuijk2025}. Indeed, a quadratic
polynomial has bounded derivatives of all orders at least two, so the hypotheses of that
criterion are satisfied.

Since \(\hat h^{(0)}_{M,\mathbf{b},c}\) is essentially self-adjoint
on \(\mathscr S(\mathbb R^m)\), its closure \(\hat h_{M,\mathbf{b},c}\) is self-adjoint. Therefore
\(\hat h_{M,\mathbf{b},c}\otimes \I_\cK\) is self-adjoint on \(L_2(\mathbb R^m)\otimes \cK\), and
\(\mathscr S(\mathbb R^m)\otimes_{\mathrm{alg}}\cK\) is a core for this tensor-product
operator. Thus
\(\hat H^{(0)}_{M,\mathbf{b},c}\) is essentially self-adjoint on
\(\mathscr S(\mathbb R^m)\otimes_{\mathrm{alg}}\cK\).
\end{proof}

\begin{proof}[Proof of Proposition~\ref{prop:quadratic-hamiltonian-esa}]
By the Stone--von Neumann theorem with
multiplicity, there exist a Hilbert space \(\cK\), a matrix
\(L\in\cM_{2m}(\mathbb R)\) satisfying $L^\top(\hbar\Omega_m)L=\Theta$, and a unitary $U:L_2(\mathbb R^m)\otimes \cK\to \cH$
such that $U^\dagger\mathbf{R}U
=
(L^\top\hat{\mathbf{R}})\otimes \I_\cK$. Hence, on the domain $\mathscr S(\mathbb R^m)\otimes_{\mathrm{alg}}\cK$, the transformed operator is
\[
U^\dagger H^{(0)}_{G,\mathbf{a},c} U
=
\left(
\frac12\,\hat{\mathbf{R}}^\top LGL^\top\hat{\mathbf{R}}
+
(L\mathbf a)^\top\hat{\mathbf{R}}
+
c\I_{L_2(\mathbb R^m)}
\right)\otimes\I_\cK.
\]
This is the operator of Lemma~\ref{lem:quadratic-schrodinger-esa} with $M=LGL^\top$ and $\mathbf{b}=L\mathbf{a}$.
Since \(G\) is real symmetric, \(M=LGL^\top\) is also real symmetric. Therefore
Lemma~\ref{lem:quadratic-schrodinger-esa} implies that the restriction of
\( U^\dagger H^{(0)}_{G,\mathbf{a},c}U\) to
\(\mathscr S(\mathbb R^m)\otimes_{\mathrm{alg}}\cK\) is essentially self-adjoint.

Finally, under the standard CCR representation with multiplicity,
\[
\mathscr S(\mathbb R^m)\otimes_{\mathrm{alg}}\cK
\subset U^\dagger\mathcal H_R^\infty.
\]
The restriction of \(U^\dagger H^{(0)}_{G,\mathbf{a},c}U\) to
\(U^\dagger\mathcal H_R^\infty\) is therefore a symmetric extension of its
essentially self-adjoint restriction to
\(\mathscr S(\mathbb R^m)\otimes_{\mathrm{alg}}\cK\). Since a symmetric
extension of an essentially self-adjoint operator has the same self-adjoint
closure, this extension is also essentially self-adjoint. By unitary
equivalence, \(H^{(0)}_{G,\mathbf{a},c}\) is essentially self-adjoint on
\(\mathcal H_R^\infty\).
\end{proof}
 \section{\texorpdfstring{Fenchel--Rockafellar}{Fenchel-Rockafellar} theorem}\label{app:FRtheorem}
We recall basic notions and state a standard form of the Fenchel--Rockafellar duality theorem.

\begin{deff}[Effective domain]
Let $E$ be a normed vector space with norm $\|\cdot\|$ and let $\psi:E\to(-\infty,+\infty]$ be an extended real-valued function. Its effective domain is $\Dom \psi:=\{x\in E:\psi(x)<+\infty\}$. We say $\psi$ is \emph{proper} if $\Dom \psi\neq\emptyset$ and $\psi\not\equiv+\infty$.
\end{deff}

\begin{deff}[Legendre--Fenchel conjugate]
Let $E$ be a normed vector space and $E^*$ denote the dual of $E$, with duality pairing $\langle f,x\rangle$ for $(f,x)\in E^*\times E$. The Legendre--Fenchel conjugate of a proper function $F:E\to(-\infty,+\infty]$ is the function $F^*:E^*\to(-\infty,+\infty]$ defined by
\[
F^*(f)\ :=\ \sup_{x\in E}\,\{\langle f,x\rangle - F(x)\}.
\]
\end{deff}

\begin{theo}[Fenchel--Rockafellar duality; see {\cite[Thm.~1.12]{brezis2011functional}}]\label{app:FRTheorem}
Let $E$ be a normed vector space and let $\psi,\phi:E\to(-\infty,+\infty]$ be proper convex functions. Assume there exists $x_0\in \Dom \psi\cap \Dom \phi$ at which $\psi$ is continuous. Then
\[
\inf_{x\in E}\{\psi(x)+\phi(x)\}
\;=\;
\sup_{f\in E^*}\{-\psi^*(-f)-\phi^*(f)\}
\;=\;
-\min_{f\in E^*}\{\psi^*(-f)+\phi^*(f)\}.
\]
Moreover, the supremum on the dual side is attained, i.e., there exists $f^\star\in E^*$ achieving the maximum.
\end{theo}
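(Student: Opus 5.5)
The statement is the Fenchel--Rockafellar duality theorem in the form of Brezis, Thm.~1.12. I would prove it in two halves: the weak-duality inequality, which is elementary, and then equality with dual attainment, obtained from a Hahn--Banach separation in $E\times\R$. Throughout, set $m:=\inf_{x\in E}\{\psi(x)+\phi(x)\}$.

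\emph{Weak duality.} For any $f\in E^*$ and $x\in E$, the Fenchel--Young inequality gives $\psi^*(-f)\ge\langle -f,x\rangle-\psi(x)$ and $\phi^*(f)\ge\langle f,x\rangle-\phi(x)$. Adding these yields $-\psi^*(-f)-\phi^*(f)\le\psi(x)+\phi(x)$. Taking the supremum over $f$ and the infimum over $x$ shows that the dual value is at most $m$. In particular, if $m=-\infty$ then every $f$ has $\psi^*(-f)+\phi^*(f)=+\infty$, and with the usual conventions the claim is trivial. Properness and the point $x_0$ give $m<+\infty$, so we may assume $m\in\R$.

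\emph{Separation.} Consider the sets
\[
C:=\{(x,t)\in E\times\R:\ t>\psi(x)\},\qquad D:=\{(x,t)\in E\times\R:\ t\le m-\phi(x)\}.
\]
Both are convex, and they are disjoint by the definition of $m$. Because $\psi$ is continuous at $x_0\in\Dom\psi$, it is bounded above near $x_0$, so $C$ has nonempty interior, containing for instance a neighbourhood of $(x_0,\psi(x_0)+1)$. The geometric Hahn--Banach theorem then produces a nonzero $(g,\beta)\in E^*\times\R$ and $\gamma\in\R$ with $\langle g,x\rangle+\beta t\ge\gamma$ on $C$ and $\langle g,x\rangle+\beta t\le\gamma$ on $D$.

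\emph{Nonverticality.} This is the main obstacle: I must rule out $\beta=0$ and show $\beta>0$. Since $t$ may be sent to $+\infty$ in $C$, we get $\beta\ge0$. If $\beta=0$, then $g\neq0$ and $\langle g,x\rangle\ge\gamma$ on $\Dom\psi$, which contains a ball around $x_0$. At the same time, $\langle g,x_0\rangle\le\gamma$, because $x_0\in\Dom\phi$ and so $(x_0,m-\phi(x_0))\in D$. Hence $\langle g,\cdot\rangle$ attains its minimum over a ball at the interior point $x_0$, which forces $g=0$, a contradiction. So $\beta>0$, and after normalizing we may take $\beta=1$.

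\emph{Reading off the dual optimizer.} The first inequality, with $t\downarrow\psi(x)$, gives $\psi(x)\ge\gamma-\langle g,x\rangle$ for all $x$. Thus $\psi^*(-g)=\sup_x\{\langle -g,x\rangle-\psi(x)\}\le-\gamma$. The second inequality, with $t=m-\phi(x)$, gives $\langle g,x\rangle-\phi(x)\le\gamma-m$, so $\phi^*(g)\le\gamma-m$. Therefore
\[
-\psi^*(-g)-\phi^*(g)\ge m.
\]
Together with weak duality, this shows that $f^\star:=g$ attains the supremum and that the value equals $m$. The identity $\sup\{-\psi^*(-f)-\phi^*(f)\}=-\min\{\psi^*(-f)+\phi^*(f)\}$ is then immediate.
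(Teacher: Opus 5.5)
Your proposal is correct and follows the proof of Brezis's Theorem~1.12, which the paper cites rather than reproduces: weak duality by Fenchel--Young, then Hahn--Banach separation of the epigraph of $\psi$ from $\{(x,t): t\le m-\phi(x)\}$, with continuity of $\psi$ at $x_0$ giving both a nonempty interior and $\beta>0$. The one small difference is that Brezis separates the open set $\operatorname{int}(\operatorname{epi}\psi)$, whereas your strict epigraph $C$ need not be open, so you are implicitly using the separation theorem for a convex set with nonempty interior (separate $\operatorname{int} C$ from $D$, then extend using $C\subseteq\overline{\operatorname{int} C}$); this is standard, and with it your argument is complete.
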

\subsection{\texorpdfstring{Operators in the dual of the bounded self-adjoint space $\cB_S(\cH)^*$}{Operators in the dual of the bounded self-adjoint operator space}}
\label{append:operator}

Let $\cH$ be a separable Hilbert space. The next lemmas adapt \cite{caglioti2021towards} from the complex spaces $\cB(\cH)$ and $\cB(\cH)^*$ to the real self-adjoint subspaces $\cB_S(\cH)$ and $\cB_S(\cH)^*$. The arguments rely only on the dualities $\cK_S(\cH)^*=\cT^{(1)}_S(\cH)$ and $\big(\cT^{(1)}_S(\cH)\big)^*=\cB_S(\cH)$, with pairing $\langle T,B\rangle=\Tr(TB)$.

\begin{lemma}[See {\cite[Lem.~3.1]{caglioti2021towards}}]\label{lemma:idmnorm}
Let $\Lambda\in\cB_S(\cH)^*$ be positive semidefinite, i.e.\ $\Lambda(B)\ge0$ for all $B\in\cB_S(\cH)$ with $B\ge0$. Then $\|\Lambda\|_*=\Lambda(\I_{\cH})$.
\end{lemma}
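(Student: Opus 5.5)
The plan is to follow the argument of Caglioti--Golse--Paul~\cite[Lem.~3.1]{caglioti2021towards}, now in the real space $\cB_S(\cH)$. First, $\Lambda(\I_\cH)\le\|\Lambda\|_*$ holds trivially because $\|\I_\cH\|_{\mathrm{op}}=1$. So everything comes down to the reverse inequality $|\Lambda(B)|\le\Lambda(\I_\cH)$ for every $B\in\cB_S(\cH)$ with $\|B\|_{\mathrm{op}}\le1$. The supremum in the definition of $\|\cdot\|_*$ runs over $\|B\|_{\mathrm{op}}=1$, but the bound over the closed unit ball gives it at once by homogeneity.

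Take $B\in\cB_S(\cH)$ with $\|B\|_{\mathrm{op}}\le1$. By the spectral theorem for bounded self-adjoint operators, $-\I_\cH\le B\le\I_\cH$ in the Loewner order. Hence both $\I_\cH-B$ and $\I_\cH+B$ are positive semidefinite elements of $\cB_S(\cH)$. Positivity and linearity of $\Lambda$ then give
\[
\Lambda(\I_\cH)-\Lambda(B)=\Lambda(\I_\cH-B)\ge0,
\qquad
\Lambda(\I_\cH)+\Lambda(B)=\Lambda(\I_\cH+B)\ge0,
\]
so $|\Lambda(B)|\le\Lambda(\I_\cH)$. This step needs $\Lambda$ to be real-valued, which holds because $\Lambda$ is a real-linear functional on the real Banach space $\cB_S(\cH)$.

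Taking the supremum over $B$ gives $\|\Lambda\|_*\le\Lambda(\I_\cH)$, and together with the trivial bound this yields $\|\Lambda\|_*=\Lambda(\I_\cH)$.

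No step is a real obstacle. The only point needing care is that, unlike in the complex setting of \cite{caglioti2021towards}, we never use a polarization or phase-rotation argument, since restricting to self-adjoint $B$ already makes $\Lambda(B)$ real. The operator inequality $-\I_\cH\le B\le\I_\cH$ for self-adjoint contractions is standard.
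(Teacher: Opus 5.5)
Your proof is correct. The reverse bound is immediate from $\|\I_\cH\|_{\mathrm{op}}=1$, and for self-adjoint $B$ with $\|B\|_{\mathrm{op}}\le1$, the order sandwich $-\I_\cH\le B\le\I_\cH$ together with positivity of $\Lambda$ gives $|\Lambda(B)|\le\Lambda(\I_\cH)$.

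The paper gives no separate proof. It cites \cite[Lem.~3.1]{caglioti2021towards} and says that lemma adapts to the real self-adjoint spaces $\cB_S(\cH)$. Your argument is exactly that adaptation. As you observe, restricting to $\cB_S(\cH)$ lets this elementary order-unit argument replace the Cauchy--Schwarz-type estimate for positive functionals that is usually used when the norm ranges over all of $\cB(\cH)$.
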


\begin{lemma}[See {\cite[Lem.~3.2]{caglioti2021towards}}]\label{lemma:b*decomposition}
Let $\Lambda\in\cB_S(\cH)^*$. There exist a unique $T\in\cT^{(1)}_S(\cH)$ and a unique $L\in\cB_S(\cH)^*$ such that
\[
\Lambda(B)=\Tr(TB)+L(B)\quad(\forall B\in\cB_S(\cH)),\qquad L|_{\cK_S(\cH)}=0.
\]
If moreover $\Lambda\ge0$, then $T\ge0$, $L\ge0$, $\|T\|_1\le \|\Lambda\|_*$, and $\|L\|_*\le \|\Lambda\|_*$.
In finite dimension one has $L\equiv0$.
\end{lemma}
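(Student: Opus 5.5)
The plan is to obtain $T$ by restricting $\Lambda$ to the compact self-adjoint operators and invoking the duality $\cK_S(\cH)^*=\cT^{(1)}_S(\cH)$, and then to define $L$ as the remainder. The restriction $\Lambda|_{\cK_S(\cH)}$ is a bounded real linear functional with norm at most $\|\Lambda\|_*$. Hence there is a unique $T\in\cT^{(1)}_S(\cH)$ such that $\Lambda(K)=\Tr(TK)$ for all $K\in\cK_S(\cH)$, and moreover $\|T\|_1=\|\Lambda|_{\cK_S(\cH)}\|\le\|\Lambda\|_*$. Since $|\Tr(TB)|\le\|T\|_1\|B\|_{\mathrm{op}}$, the map $B\mapsto\Tr(TB)$ is an element of $\cB_S(\cH)^*$. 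We then set $L(B):=\Lambda(B)-\Tr(TB)$. By construction $L\in\cB_S(\cH)^*$ and $L|_{\cK_S(\cH)}=0$.

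Uniqueness follows by restricting to compacts. Suppose $\Tr(TB)+L(B)=\Tr(T'B)+L'(B)$ for all $B\in\cB_S(\cH)$, where both $L$ and $L'$ vanish on compacts. Then $\Tr((T-T')K)=0$ for all $K\in\cK_S(\cH)$. The duality (already on rank-one projections) gives $T=T'$, and then $L=L'$. In finite dimension every operator is compact, so $L\equiv0$.

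For the positive case, $T\ge0$ is immediate. For any unit vector $\psi$, the projection $\proj{\psi}$ is compact and positive, so $\langle\psi,T\psi\rangle=\Lambda(\proj{\psi})\ge0$. The step needing care, and the main obstacle, is $L\ge0$, since $L$ is invisible on compacts. Fix $B\in\cB_S(\cH)$ with $B\ge0$, and take finite-rank projections $P_n\uparrow\I_{\cH}$ strongly; these exist by separability. The operators $K_n:=B^{1/2}P_nB^{1/2}$ are compact and satisfy $0\le K_n\le B$. Positivity of $\Lambda$ then gives
\[
\Lambda(B)\ge\Lambda(K_n)=\Tr(TB^{1/2}P_nB^{1/2})=\Tr(P_nB^{1/2}TB^{1/2}).
\]
Because $B^{1/2}TB^{1/2}$ is a positive trace-class operator, the right-hand side converges to $\Tr(B^{1/2}TB^{1/2})=\Tr(TB)$. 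Therefore $L(B)=\Lambda(B)-\Tr(TB)\ge0$.

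Finally, Lemma~\ref{lemma:idmnorm} applied to the positive functionals $L$ and $\Lambda$ yields
\[
\|L\|_*=L(\I_{\cH})=\Lambda(\I_{\cH})-\Tr T\le\Lambda(\I_{\cH})=\|\Lambda\|_*,
\]
using $\Tr T\ge0$. This, together with the earlier bound $\|T\|_1\le\|\Lambda\|_*$, completes the norm estimates.
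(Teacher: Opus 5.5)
Your proof is correct and follows essentially the argument the paper defers to from Caglioti--Golse--Paul. You obtain $T$ by restricting $\Lambda$ to compacts and using the isometric duality $\cK_S(\cH)^*=\cT^{(1)}_S(\cH)$, set $L:=\Lambda-\Tr(T\,\cdot)$, and derive positivity and the norm bounds from compact test operators and Lemma~\ref{lemma:idmnorm}. Testing $L\ge0$ with $B^{1/2}P_nB^{1/2}\le B$, rather than $P_nBP_n$ (which need not lie below $B$), is the right choice for that step.
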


\begin{lemma}[See {\cite[Lem.~3.3]{caglioti2021towards}}]\label{lemma:equalB*traceclass}
Let $\cH_1,\cH_2$ be separable Hilbert spaces and $\Lambda\in\cB_S(\cH_1\otimes\cH_2)^*$ be positive semidefinite. Assume there exist $\rho\in\cS(\cH_1)$ and $\sigma\in\cS(\cH_2)$ such that
\[
\Lambda(B_1\oplus B_2)=\Tr(B_1\rho)+\Tr(B_2\sigma)\quad\text{for all }B_1\in\cB_S(\cH_1),\ B_2\in\cB_S(\cH_2).
\]
Then there exists a coupling $\pi\in\Gamma(\rho,\sigma)$ such that
\[
\Lambda(B)=\Tr(B\,\pi)\quad\text{for all }B\in\cB_S(\cH_1\otimes\cH_2).
\]
\end{lemma}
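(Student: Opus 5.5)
The plan is to split $\Lambda$ into a normal and a singular part using Lemma~\ref{lemma:b*decomposition}, and then show that the marginal conditions force the singular part to vanish. Write
\[
\Lambda(B)=\Tr(TB)+L(B),\qquad B\in\cB_S(\cH_1\otimes\cH_2),
\]
where $T\in\cT^{(1)}_S(\cH_1\otimes\cH_2)$ and $L|_{\cK_S}=0$. Since $\Lambda\ge0$, that lemma gives $T\ge0$ and $L\ge0$. Moreover, for every $B\ge0$ we have $L(B)=\Lambda(B)-\Tr(TB)\le\Lambda(B)$. Taking $B_1=\I_{\cH_1}$ and $B_2=0$ in the hypothesis gives $\Lambda(\I_{\cH_1\otimes\cH_2})=\Tr\rho=1$.

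The key step, and the only real obstacle, is the tightness estimate showing $L(\I)=0$. Fix $\varepsilon>0$. Choose finite-rank orthogonal projections $P$ on $\cH_1$ and $Q$ on $\cH_2$, for instance spectral projections of $\rho$ and $\sigma$, such that
\[
\Tr\bigl(\rho(\I-P)\bigr)<\varepsilon,\qquad \Tr\bigl(\sigma(\I-Q)\bigr)<\varepsilon .
\]
The operator identity
\[
(\I-P)\oplus(\I-Q)-(\I-P\otimes Q)=(\I-P)\otimes(\I-Q)\ge0
\]
gives $\I-P\otimes Q\le (\I-P)\oplus(\I-Q)$. Positivity of $\Lambda$ and the marginal hypothesis then yield
\[
\Lambda(\I-P\otimes Q)\le \Tr\bigl(\rho(\I-P)\bigr)+\Tr\bigl(\sigma(\I-Q)\bigr)<2\varepsilon .
\]
Since $P\otimes Q$ has finite rank, $L(P\otimes Q)=0$. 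Hence
\[
L(\I)=L(\I-P\otimes Q)\le\Lambda(\I-P\otimes Q)<2\varepsilon .
\]
Letting $\varepsilon\to0$ gives $L(\I)=0$. Lemma~\ref{lemma:idmnorm} applied to the positive functional $L$ then gives $\|L\|_*=L(\I)=0$, so $L\equiv0$.

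It remains to identify $\pi:=T$ as a coupling. We already have $T\ge0$, and $\Tr T=\Lambda(\I)=1$, so $\pi\in\cS(\cH_1\otimes\cH_2)$. For $B_1\in\cB_S(\cH_1)$,
\[
\Tr\bigl((B_1\otimes\I)\pi\bigr)=\Lambda(B_1\oplus0)=\Tr(B_1\rho),
\]
and analogously $\Tr\bigl((\I\otimes B_2)\pi\bigr)=\Tr(B_2\sigma)$ for $B_2\in\cB_S(\cH_2)$. Extending by complex linearity to all bounded $B_1,B_2$ and using Definition~\ref{def:partialtrace}, we get $\Tr_{\cH_2}\pi=\rho$ and $\Tr_{\cH_1}\pi=\sigma$. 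Thus $\pi\in\Gamma(\rho,\sigma)$ and $\Lambda(B)=\Tr(B\pi)$ for all $B\in\cB_S(\cH_1\otimes\cH_2)$.
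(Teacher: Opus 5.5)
Your proof is correct and takes essentially the same approach as the paper. The paper defers this lemma to \cite{caglioti2021towards}, but its own normality observation in the proof of Proposition~\ref{prop:FR-right-identification-constrained} is your argument: the decomposition of Lemma~\ref{lemma:b*decomposition}, the bound $\I-P\otimes Q\le(\I-P)\otimes\I+\I\otimes(\I-Q)$ with finite-rank projections carrying almost all the marginal mass, and $L(P\otimes Q)=0$ to force $L(\I)=0$, after which $T$ is identified as a coupling exactly as you do.
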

For a positive functional
$\Lambda\in\cB_S(\cH)^*$ and
$A\in\cL_{\geq}(\cH)$, we define the extended evaluation of $\Lambda$ at $A$ by
\begin{equation}
\label{eq:ext-eval}
\Lambda(A)
:=
\sup
\left\{
\Lambda(B):
B\in\cB_S(\cH),\ B\leq A
\right\}
\in[0,+\infty].
\end{equation}
This extension is monotone and positively homogeneous. In particular, evaluation at the operators $A_\ell$ is well defined even when they are unbounded. If $\Lambda(B)=\Tr(B\rho)$ for all $B\in\cB_S(\cH)$, then Proposition~\ref{prop:spectral-trunc-positive} gives $\Lambda(A)=\Tr(A\rho)$.

\begin{prop}[Confinement for the costs with compact resolvent]
\label{prop:B*confine}
Let $C$ be a nonnegative self-adjoint operator on $\cH$ with compact
resolvent, and let $\Lambda\in\cB_S(\cH)^*$ be positive. By
Lemma~\ref{lemma:b*decomposition}, write
\[
\Lambda(B)=\Tr(\pi\,B)+L(B),\quad \pi\in\cT^{(1)}_{S}(\cH),\ L\in\cB_S(\cH)^*,\ L\ge0,\ L|_{\cK_S(\cH)}=0.
\]
Then
\[
\Lambda(C):=\sup\Big\{\Lambda(B):\ B\in\cB_S(\cH),\ B\le C\Big\}
=
\begin{cases}
\Tr(C\,\pi), & \text{if }L\equiv0,\\[2mm]
+\infty, & \text{if }L\not\equiv0.
\end{cases}
\]
\end{prop}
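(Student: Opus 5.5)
Since $L$ is a positive functional vanishing on the compact operators, the plan is to split $\Lambda(C)$ into a trace-class part and a singular part and treat them separately.

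\emph{Upper bound when $L\equiv0$.} Here $\Lambda(B)=\Tr(\pi B)$ with $\pi\ge0$ by Lemma~\ref{lemma:b*decomposition}. For every bounded $B\le C$ one has $\Tr(\pi B)\le\Tr(C\pi)$ by monotonicity of the spectral integral; this is Proposition~\ref{prop:spectral-trunc-positive} after normalizing $\pi$, or trivially if $\pi=0$. The truncations $C^{(n)}$ are admissible and give $\Tr(C^{(n)}\pi)\uparrow\Tr(C\pi)$. Hence $\Lambda(C)=\Tr(C\pi)$.

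\emph{Divergence when $L\not\equiv0$.} I would use the spectral decomposition of $C$ (Corollary~\ref{cor:cost-compact-resolvent}). Write $P_n:=E^C([0,n])$. By Proposition~\ref{prop:purely} this is a finite-rank projection, so it is compact and $L(P_n)=0$.
\begin{itemize}
\item Since $L\ge0$, Lemma~\ref{lemma:idmnorm} gives $\|L\|_*=L(\I)>0$.
\item Therefore $L(\I-P_n)=L(\I)>0$ for every $n$.
\item On $\Ran(\I-P_n)$ we have $C\ge n$, so $B_n:=n(\I-P_n)$ satisfies $B_n\le C$ in the form sense used in the paper. The form domain of $C$ is contained in $\cH$, and $\|B_n^{1/2}x\|^2\le\|C^{1/2}x\|^2$ follows from the spectral theorem.
\end{itemize}
Then
\[
\Lambda(B_n)=n\Tr(\pi(\I-P_n))+n\,L(\I)\ge n\,L(\I)\to+\infty,
\]
using $\pi\ge0$. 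Hence $\Lambda(C)=+\infty$.

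The only step needing care is the order relation $B\le C$ between a bounded operator and the unbounded $C$ under the paper's form-domain convention. I would verify it directly through the spectral measure: $\langle x,Bx\rangle=\int\lambda\,d\mu_x$ restricted to $(n,\infty)$, times $n/\lambda\le1$. The key structural input is that finite spectral projections of a compact-resolvent operator are compact, so the singular part $L$ lives entirely on the high-energy region where $C$ is large.
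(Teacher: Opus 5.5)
Your proposal is correct and follows essentially the same route as the paper. In the case $L\equiv0$ you use monotone convergence along the spectral truncations of $C$. In the case $L\not\equiv0$ you test $\Lambda$ on $n(\I-P_n)$ with $P_n=E^C([0,n])$ a finite-rank spectral projection, so that $L(P_n)=0$ and $\Lambda\bigl(n(\I-P_n)\bigr)\ge nL(\I)\to+\infty$. The paper uses the eigenvector projections $P_{n(m)}$ instead, which is a cosmetic difference. Your explicit check of $n(\I-P_n)\le C$ in the form sense fills in a step the paper only asserts.
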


\begin{proof}
If $L\equiv0$ the supremum equals $\sup_{B\le C}\Tr(\pi B)=\Tr(\pi C)$ by monotone convergence along the spectral truncations of $C$.

Assume $L\not\equiv0$. Then $\cH$ must be infinite-dimensional.
Because $C$ has compact resolvent, it has purely discrete spectrum
$\{\lambda_j\}_{j\in\N}$ with $\lambda_j\to+\infty$ by
Corollary~\ref{cor:cost-compact-resolvent}; let
$\{e_j\}_{j\in\N}$ be corresponding eigenvectors. By
Lemma~\ref{lemma:idmnorm}, $L(\I_{\cH})=\|L\|_*>0$. Let $P_n:=\sum_{i=1}^n |e_i\rangle\langle e_i|$; then $P_n\in\cK_S(\cH)$ and $L(P_n)=0$, so
$L(\I_{\cH}-P_n)=L(\I_{\cH})=\|L\|_*$ for all $n$.
Since $\lambda_i\uparrow\infty$, for any $m>0$ there exists $n(m)$ with
$C\ge m(\I_{\cH}-P_{n(m)})$. Hence
\[
\sup_{B\le C}\Lambda(B)\ \ge\ \Lambda\big(m(\I_{\cH}-P_{n(m)})\big)
= m\,\big[\Tr\big(\pi(\I_{\cH}-P_{n(m)})\big)+\|L\|_*\big]\ \geq m\|L\|_*\xrightarrow[m\to\infty]{}\ +\infty.
\]
\end{proof}

\begin{prop}[Dichotomy for quadrature-generated costs]
\label{prop:type2costoperator}
Let $\cH_1,\cH_2$ be separable Hilbert spaces and let
$C\in\cL_{\geq}(\cH_1\otimes\cH_2)$ be a quadrature-generated
cost, with canonical generators $R_j$ on $\cH_1$ and $S_j$ on
$\cH_2$, $j\in[2m]$. Assume that the one-body square-sum
$H_1:=\sum_{j=1}^{2m}R_j^2$ has compact resolvent.
Let $\sigma\in\cS(\cH_2)$, and let
$\Lambda\in\cB_S(\cH_1\otimes\cH_2)^*$ be positive and satisfy
$\Lambda(\I_{\cH_1}\otimes B)=\Tr(B\sigma)$ for every
$B\in\cB_S(\cH_2)$. Then
\[
\Lambda(C)=
\begin{cases}
\Tr(C\pi),&
\text{if }\Lambda(\cdot)=\Tr(\pi\,\cdot)
\text{ for some }\pi\in\cS(\cH_1\otimes\cH_2),\\
+\infty,&\text{otherwise}.
\end{cases}
\]
\end{prop}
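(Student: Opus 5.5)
We use the lower bound of Proposition~\ref{prop:quadrature_bounds} to turn finiteness of $\Lambda(C)$ into finiteness of the first-marginal energy. Confinement of $H_1$ then forces the first marginal of $\Lambda$ to be normal, and Lemma~\ref{lemma:equalB*traceclass} reconstructs a genuine coupling. We take the extended evaluation \eqref{eq:ext-eval} as the meaning of $\Lambda(C)$.

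\emph{The normal case.} If $\Lambda(\cdot)=\Tr(\pi\,\cdot)$ for a state $\pi$, the remark after \eqref{eq:ext-eval} (via Proposition~\ref{prop:spectral-trunc-positive}) gives $\Lambda(C)=\Tr(C\pi)$. It therefore suffices to show that $\Lambda(C)<+\infty$ forces $\Lambda$ to be normal.

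\emph{Superadditivity and the energy bound.} First, the extended evaluation is superadditive on nonnegative operators: if $B_1\le A_1$ and $B_2\le A_2$ are bounded, then $B_1+B_2\le A_1+A_2$, so $\Lambda(A_1+A_2)\ge\Lambda(A_1)+\Lambda(A_2)$. Applying this to Proposition~\ref{prop:quadrature_bounds} with monotonicity gives
\[
\Lambda(C)+3\Lambda(\I_{\cH_1}\otimes H)\ \ge\ \Lambda(C+3\,\I_{\cH_1}\otimes H)\ \ge\ \tfrac34\,\Lambda(H_1\otimes\I_{\cH_2}).
\]
Next, define the positive functionals $\mu(B_1):=\Lambda(B_1\otimes\I_{\cH_2})$ on $\cB_S(\cH_1)$ and $\nu(B_2):=\Lambda(\I_{\cH_1}\otimes B_2)=\Tr(B_2\sigma)$. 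The spectral truncations $H_1^{(n)}\otimes\I\le H_1\otimes\I$ are bounded minorants, so $\Lambda(H_1\otimes\I)\ge\sup_n\mu(H_1^{(n)})$. Since $\Lambda(\I_{\cH_1}\otimes H)$ is either $+\infty$ or finite, the argument splits into two cases.

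\emph{The case $\Lambda(\I_{\cH_1}\otimes H)<\infty$.} Suppose $\Lambda(C)<\infty$. Then $\sup_n\mu(H_1^{(n)})<\infty$. Decompose $\mu=\Tr(\rho\,\cdot)+L$ with $L$ singular by Lemma~\ref{lemma:b*decomposition}, and rerun the argument of Proposition~\ref{prop:B*confine} with the bounded minorants $m(\I-P_{n(m)})\le H_1^{(n)}$ for large $n$. If $L\not\equiv0$ this gives $\sup_n\mu(H_1^{(n)})\ge m\|L\|_*$ for every $m$, a contradiction. Hence $\mu=\Tr(\rho\,\cdot)$ with $\rho\in\cS(\cH_1)$, because $\mu(\I)=\Lambda(\I)=\Tr\sigma=1$. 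Then $\Lambda(B_1\oplus B_2)=\Tr(B_1\rho)+\Tr(B_2\sigma)$, and Lemma~\ref{lemma:equalB*traceclass} produces $\pi\in\Gamma(\rho,\sigma)$ with $\Lambda=\Tr(\pi\,\cdot)$. Contrapositively, if $\Lambda$ is not normal, then $\Lambda(C)=+\infty$.

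\emph{Main obstacle: the case $\Lambda(\I_{\cH_1}\otimes H)=+\infty$.} The delicate point is controlling $\Lambda(\I_{\cH_1}\otimes H)$. Bounded minorants of $\I\otimes H$ need not be of product form, so the identity $\Lambda(\I\otimes H)=\Tr(H\sigma)$ does not follow directly from the marginal hypothesis. I would first try to prove $\Lambda(\I\otimes H)\le\Tr(H\sigma)$ when the latter is finite, by approximating a minorant $B\le\I\otimes H$ through $(\I\otimes Q_K)B(\I\otimes Q_K)$, with $Q_K$ the spectral projections of $H$, and controlling the off-diagonal remainders by the normal second marginal $\nu$. If $\Tr(H\sigma)=+\infty$, the inequality above is useless. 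There I would instead avoid $H$ and use a localized lower bound directly, namely $C\ge\tfrac34H_1\otimes\I-3\,\I\otimes H^{(K)}-E_K$, with an error $E_K$ supported on $\I\otimes(\I-Q_K)$, whose $\Lambda$-mass equals $\Tr((\I-Q_K)\sigma)\to0$ by normality of $\nu$. From this bound I would then derive the same singular-part contradiction, by working first with $Q_K$-compressions and letting $K\to\infty$.
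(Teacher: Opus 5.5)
There is a genuine gap in the step that carries the argument. Your displayed inequality $\Lambda(C)+3\Lambda(\I_{\cH_1}\otimes H)\ge\Lambda(C+3\,\I_{\cH_1}\otimes H)$ is \emph{sub}additivity of the extended evaluation. What you actually prove is superadditivity, which gives the opposite inequality $\Lambda(C+3\,\I\otimes H)\ge\Lambda(C)+3\Lambda(\I\otimes H)$ and so no upper bound on the first-marginal energy in terms of $\Lambda(C)$. Subadditivity would require splitting every bounded minorant of $C+3\,\I\otimes H$ into bounded minorants of $C$ and of $3\,\I\otimes H$. For functionals with a singular part it is false in general. On $\bigoplus_n\C^2$, take unit vectors $x_n,y_n\in\R^2$ at angle $1/n$, set $A_1=\bigoplus_n n^2|x_n\rangle\langle x_n|$ and $A_2=\bigoplus_n n^2|y_n\rangle\langle y_n|$, and let $L$ be a free-ultrafilter limit of the vector states of $\psi_n\propto x_n-y_n$ placed in the $n$th block. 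Any bounded $B\le A_1$ has a nonpositive $(x_n^\perp,x_n^\perp)$ entry in its $n$th diagonal block, so $\langle\psi_n,B\psi_n\rangle\le 3\|B\|_{\mathrm{op}}\sin\frac{1}{2n}\to0$. Hence $L(A_1)=L(A_2)=0$, while $\bigoplus_n n^2(1-\cos\frac1n)\I_2\le A_1+A_2$ is a bounded minorant with $L$-value $\frac12$.

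So your first case does not go through, and the second case has the same defect. To have $C\ge\frac34H_1\otimes\I-3\,\I\otimes H^{(K)}-E_K$, the error $E_K$ must dominate the unbounded operator $3\,\I\otimes H(\I-Q_K)$. Its extended evaluation is therefore not $\Tr((\I-Q_K)\sigma)$; it is $+\infty$ for every $K$ when $\Tr(H\sigma)=+\infty$. Using the bound would again need subadditivity between the unbounded operators $C$ and $E_K$. Since the proposition assumes nothing about $\Tr(H\sigma)$, controlling $C$ through $\I\otimes H$ cannot cover the general statement without further localization.

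The missing idea is to build a bounded minorant of $C$ directly, by localizing the \emph{second} factor. This needs neither energy of $\sigma$ nor any additivity of extended evaluations; it is the paper's argument.
\begin{itemize}
\item Write $\Lambda=\Tr(T\,\cdot)+L$ and suppose $\delta:=L(\I)>0$.
\item Choose a finite-rank projection $P$ on $\cH_2$ with range in $\bigcap_j\Dom(S_j)$ and $\Tr(P\sigma)>1-\delta/2$, and set $\kappa_P:=\sum_j\|S_jP\|_{\mathrm{op}}^2<\infty$.
\item On the smooth core, $R_j\otimes P=(\I\otimes P)(R_j\otimes\I-\I\otimes S_j)+\I\otimes PS_j$. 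Squaring and summing gives $H_1\otimes P\le 2C+2\kappa_P\I$.
\item With the finite-rank projection $E_a:=\mathbf 1_{[0,a)}(H_1)$, one has $H_1\ge a(\I-E_a)$, which yields the bounded minorant $B_a:=\frac a2(\I-E_a)\otimes P-\kappa_P\I\le C$.
\item $L$ vanishes on the finite-rank operator $E_a\otimes P$ and $\Tr T=1-\delta$, so $\Lambda((\I-E_a)\otimes P)\ge\Tr(P\sigma)-\Tr T>\delta/2$.
\item Hence $\Lambda(C)\ge\Lambda(B_a)>a\delta/4-\kappa_P\to+\infty$.
\end{itemize}
Your normal case is fine. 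Once $L=0$ is established, $\Lambda=\Tr(T\,\cdot)$ with $\Tr T=\Lambda(\I)=1$ directly, so Lemma~\ref{lemma:equalB*traceclass} is not even needed.
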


\begin{proof}
The marginal condition gives $\Lambda(\I_{\cH_1\otimes\cH_2})=1$.
By Lemma~\ref{lemma:b*decomposition}, write
$\Lambda(B)=\Tr(TB)+L(B)$, where $T\geq0$ is trace-class,
$L\geq0$, and $L$ vanishes on compact operators.
If $L=0$, then $\Tr(T)=1$ and the claim follows from the
normal-state evaluation of $\Lambda(C)$.

Suppose that $\delta:=L(\I_{\cH_1\otimes\cH_2})>0$.
Since the common smooth-vector space of the $S_j$ is dense,
there exists a finite-rank orthogonal projection $P$ on $\cH_2$
whose range is contained in $\bigcap_{j=1}^{2m}\Dom(S_j)$ and such
that $\Tr(P\sigma)>1-\delta/2$.
Set $\kappa_P:=\sum_{j=1}^{2m}\|S_jP\|_{\mathrm{op}}^2<\infty$.
On the common smooth core,
\[
(R_j\otimes P)\psi
=
(\I_{\cH_1}\otimes P)
(R_j\otimes\I_{\cH_2}-\I_{\cH_1}\otimes S_j)\psi
+
(\I_{\cH_1}\otimes PS_j)\psi.
\]
Here $PS_j$ extends to the bounded adjoint of $S_jP$.
Squaring, summing, and passing to the closed forms gives
\[
H_1\otimes P
\leq
2C+2\kappa_P\I_{\cH_1\otimes\cH_2}.
\]
For $a>0$, let $E_a:=\mathbf1_{[0,a)}(H_1)$.
Compact resolvent makes $E_a$ finite rank, and
$H_1\geq a(\I_{\cH_1}-E_a)$. Hence the bounded self-adjoint operator
\[
B_a:=\frac a2(\I_{\cH_1}-E_a)\otimes P
-\kappa_P\I_{\cH_1\otimes\cH_2}
\]
satisfies $B_a\leq C$.
Since $E_a\otimes P$ is finite rank and
$\Tr(T)=1-\delta$, we obtain
\[
\begin{aligned}
\Lambda((\I_{\cH_1}-E_a)\otimes P)
&=\Tr(P\sigma)-\Tr(T(E_a\otimes P))\\
&\geq\Tr(P\sigma)-\Tr(T)>\delta/2.
\end{aligned}
\]
The definition of the extended evaluation therefore yields
$\Lambda(C)\geq\Lambda(B_a)>a\delta/4-\kappa_P$.
Letting $a\rightarrow+\infty$ proves $\Lambda(C)=+\infty$.
\end{proof}

\subsection{\texorpdfstring{Dual relations via the Fenchel--Rockafellar theorem}{Dual relations via the Fenchel-Rockafellar theorem}}\label{subsec:fg}
\begin{prop}[Convexity and properness of the constrained additive functional]
\label{prop:gAE-convex-proper}
Assume that the constrained barycentric state set defined in \eqref{eq:constrain} is nonempty, i.e.
\[
    \cS_{\mathbf A,\mathbf E}(\cH_0)
    :=
    \left\{
    \rho\in\cS(\cH_0):
    \Tr(A_\ell\rho)\le E_\ell,\quad \ell\in[k]
    \right\}
    \neq\varnothing.
\]
Then the functional \(g_{\mathbf A,\mathbf E}:\prod_{s=1}^N \cB_S(\cH_0\otimes\cH_s)\to(-\infty,+\infty]\) defined in \eqref{eq:defg} is convex and proper. Moreover, $g_{\mathbf A,\mathbf E}(\boldsymbol 0)=0$.
\end{prop}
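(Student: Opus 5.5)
To prove Proposition~\ref{prop:gAE-convex-proper}, I would verify the three claims directly from the definition \eqref{eq:defg}, working with the feasible set $\mathcal R_{\mathbf A,\mathbf E}((B_s)_{s\in[N]})$.

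\emph{Convexity.} Take two points $\mathbf B=(B_s)_{s\in[N]}$ and $\mathbf B'=(B_s')_{s\in[N]}$ in $\Dom g_{\mathbf A,\mathbf E}$, and fix $\lambda\in[0,1]$. (If either point lies outside the domain, the convexity inequality is trivial.) Choose feasible tuples $((U_s^{(\ell)}),(V_s),(\eta_\ell))\in\mathcal R_{\mathbf A,\mathbf E}(\mathbf B)$ and $((U_s'^{(\ell)}),(V_s'),(\eta_\ell'))\in\mathcal R_{\mathbf A,\mathbf E}(\mathbf B')$ whose objective values are within $\varepsilon$ of the respective infima. Their convex combination is feasible for $\lambda\mathbf B+(1-\lambda)\mathbf B'$, for three reasons. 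The Kronecker sum is linear, so $\bigl(\sum_\ell U_s^{(\ell)}\bigr)\oplus V_s$ combines linearly. The constraint $-\sum_s\alpha_sU_s^{(\ell)}\le\eta_\ell A_\ell$ is preserved under convex combinations. This holds even for unbounded $A_\ell$, because the form order of Subsection~\ref{subsec:op-space} applied to a bounded left-hand side is a convex-cone condition on the quadratic form of $A_\ell$ on $\Dom(A_\ell^{1/2})$. Finally, $\eta_\ell\ge0$ is preserved. Since the objective is linear, $g_{\mathbf A,\mathbf E}$ at the combination is at most $\lambda g(\mathbf B)+(1-\lambda)g(\mathbf B')+\varepsilon$, and letting $\varepsilon\to0$ gives convexity.

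\emph{The value $g_{\mathbf A,\mathbf E}(\boldsymbol 0)=0$, and properness.} The upper bound $g_{\mathbf A,\mathbf E}(\boldsymbol 0)\le0$ comes from the zero tuple. The lower bound is the key step, and it is where the nonemptiness of $\cS_{\mathbf A,\mathbf E}(\cH_0)$ enters. Fix $\rho\in\cS_{\mathbf A,\mathbf E}(\cH_0)$ and take a feasible tuple for $\mathbf B$. Pair $B_s$ with the product state $\rho\otimes\sigma_s$:
\[
\Tr\bigl(B_s(\rho\otimes\sigma_s)\bigr)=\sum_{\ell}\Tr(U_s^{(\ell)}\rho)+\Tr(V_s\sigma_s).
\]
Multiply by $\alpha_s$ and sum over $s$. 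For each $\ell$, the constraint gives $-\sum_s\alpha_s\Tr(U_s^{(\ell)}\rho)\le\eta_\ell\Tr(A_\ell\rho)\le\eta_\ell E_\ell$. Here the middle inequality uses the form inequality tested against the spectral decomposition of $\rho$, together with Definition~\ref{def:spectral-integral-traceclass} and the fact that $\Tr(A_\ell\rho)<\infty$. Consequently
\[
-\sum_s\alpha_s\Tr(V_s\sigma_s)+\sum_\ell\eta_\ell E_\ell\ \ge\ -\sum_s\alpha_s\Tr\bigl(B_s(\rho\otimes\sigma_s)\bigr)\ \ge\ -\max_s\|B_s\|_{\mathrm{op}}.
\]
This shows $g_{\mathbf A,\mathbf E}>-\infty$ everywhere, and at $\mathbf B=\boldsymbol 0$ it gives $g_{\mathbf A,\mathbf E}(\boldsymbol 0)\ge0$. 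Hence $g_{\mathbf A,\mathbf E}(\boldsymbol 0)=0$, so the domain is nonempty and $g_{\mathbf A,\mathbf E}$ is proper.

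\emph{Main obstacle.} The only delicate point is justifying $\Tr(U\rho)\ge-\eta\Tr(A\rho)$ from $-U\le\eta A$ when $A$ is unbounded. I would handle it by writing $\rho=\sum_i p_i|\psi_i\rangle\langle\psi_i|$. Finiteness of $\Tr(A\rho)$ forces every $\psi_i$ with $p_i>0$ into $\Dom(A^{1/2})$, and summing the form inequality $\langle\psi_i,-U\psi_i\rangle\le\eta\|A^{1/2}\psi_i\|^2$ over $i$ gives the claim.
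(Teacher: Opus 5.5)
Your proposal is correct and follows the paper's proof essentially step by step. Convexity comes from convex combinations of admissible representations in $\mathcal R_{\mathbf A,\mathbf E}$. The lower bound comes from pairing $B_s$ with $\rho_0\otimes\sigma_s$ for some $\rho_0\in\cS_{\mathbf A,\mathbf E}(\cH_0)$; this gives $g_{\mathbf A,\mathbf E}>-\infty$ and, together with the zero representation, $g_{\mathbf A,\mathbf E}(\boldsymbol 0)=0$.

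Your handling of unbounded $A_\ell$ through the form domain and the spectral decomposition of $\rho_0$ spells out a step the paper leaves implicit. One small fix: your $\varepsilon$-optimal choice in the convexity step assumes the infima are finite, so present the lower bound first.
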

\begin{proof}
Convexity follows from the fact that \(g_{\mathbf A,\mathbf E}\) is an
infimal projection of an affine functional over a convex set of
representations. Explicitly, if \((B_s^{(0)})_{s\in[N]}\) and \((B_s^{(1)})_{s\in[N]}\) admit
representations $(B_s^{(j)})_{s\in[N]}
=
\left((\sum_{\ell=1}^k U_{s,j}^{(\ell)})\oplus V_{s,j}\right)_{s\in[N]}$ for $j=0,1$ with $-\sum_{s=1}^N\alpha_s U_{s,j}^{(\ell)}
\le
\eta_{\ell,j}A_\ell$, then, for \(t\in[0,1]\), the convex combinations
\[
\widetilde U_s^{(\ell)}
=
tU_{s,0}^{(\ell)}+(1-t)U_{s,1}^{(\ell)},\quad
\widetilde V_s
=
tV_{s,0}+(1-t)V_{s,1},\quad
\widetilde\eta_\ell
=
t\eta_{\ell,0}+(1-t)\eta_{\ell,1}
\]
give an admissible representation of
\((tB^{(0)}_s+(1-t)B^{(1)}_s)_{s\in[N]}\). The corresponding objective value is the same
convex combination of the two objective values. 

It remains to show that \(g_{\mathbf A,\mathbf E}\) does not take the value
\(-\infty\). Choose $\rho_0\in\cS_{\mathbf A,\mathbf E}(\cH_0)$. For any admissible representation $\bigl((U_s^{(\ell)})_{s\in[N],\,\ell\in[k]},
(V_s)_{s\in[N]},(\eta_\ell)_{\ell\in[k]}\bigr)$ of \(B\), we have
\[
\begin{aligned}
\sum_{s=1}^N
\alpha_s
\Tr\!\left(B_s(\rho_0\otimes\sigma_s)\right)
&=
\sum_{\ell=1}^k
\Tr\!\left[
\left(
\sum_{s=1}^N\alpha_s U_s^{(\ell)}
\right)\rho_0
\right]
+
\sum_{s=1}^N
\alpha_s\Tr(V_s\sigma_s).
\end{aligned}
\]
Since $-\sum_{s=1}^N\alpha_s U_s^{(\ell)}
\le
\eta_\ell A_\ell$ and $\Tr(A_\ell\rho_0)\le E_\ell$, we obtain $-\Tr\!\left[
\left(
\sum_{s=1}^N\alpha_s U_s^{(\ell)}
\right)\rho_0
\right]
\le
\eta_\ell E_\ell$.

Therefore
\begin{equation}\label{eq:lb}
-\sum_{s=1}^N\alpha_s\Tr(V_s\sigma_s)
+
\sum_{\ell=1}^k\eta_\ell E_\ell
\ge
-
\sum_{s=1}^N
\alpha_s
\Tr\!\left(B_s(\rho_0\otimes\sigma_s)\right).
\end{equation}
The right-hand side is finite because \(B_s\) is bounded and
\(\rho_0\otimes\sigma_s\) is trace-class. Hence the infimum defining
\(g_{\mathbf A,\mathbf E}(B)\) is bounded below for every
\(B\in\Dom(g_{\mathbf A,\mathbf E})\). Thus
\(g_{\mathbf A,\mathbf E}\) never takes the value \(-\infty\).

Finally, setting $\bigl((U_s^{(\ell)})_{s\in[N],\,\ell\in[k]},
(V_s)_{s\in[N]},(\eta_\ell)_{\ell\in[k]}\bigr)$ to be all zero shows that \(g_{\mathbf A,\mathbf E}(0)\le0\). Applying the preceding
lower bound \eqref{eq:lb} with \((B_s)_{s\in[N]}=\boldsymbol{0}\) gives \(g_{\mathbf A,\mathbf E}(0)\ge0\). Hence $g_{\mathbf A,\mathbf E}(\boldsymbol 0)=0$ and \(g_{\mathbf A,\mathbf E}\) is proper.
\end{proof}

\begin{prop}[Identification of the Fenchel--Rockafellar left-hand side]
\label{prop:FR-left-identification-constrained}
Let \(f,g_{\mathbf A,\mathbf E}:\prod_{s=1}^N \cB_S(\cH_0\otimes\cH_s)\to(-\infty,+\infty]\) be defined in \eqref{eq:def:f} and \eqref{eq:defg}, respectively.

Then
\begin{equation}
\label{eq:FR-left-energy-constrained}
\begin{aligned}
&\inf_{(B_s)_{s\in[N]}\in
\prod_{s=1}^N \cB_S(\cH_0\otimes\cH_s)}
\left\{
f((B_s)_{s\in[N]})
+
g_{\mathbf A,\mathbf E}((B_s)_{s\in[N]})
\right\}
\\
&\qquad
=
-
\sup
\left\{
\sum_{s=1}^N \alpha_s\Tr(V_s\sigma_s)
-
\sum_{\ell\in[k]}\eta_\ell E_\ell
:
((U_s,V_s)_{s\in[N]},(\eta_\ell)_{\ell\in[k]})
\in
\mathsf{Pot}_{\mathbf A,\mathbf E}^{\mathbf C,\boldsymbol\alpha}
(\boldsymbol\sigma)
\right\},
\end{aligned}
\end{equation}
where $\mathsf{Pot}_{\mathbf A,\mathbf E}^{\mathbf C,\boldsymbol\alpha}
(\boldsymbol\sigma)$ is defined in \eqref{eq:pot}.
\end{prop}
\begin{proof}
By definition, \(f((B_s)_{s\in[N]})<+\infty\) if and only if
$B_s\le C_s$ for all $s\in[N]$, and in this case \(f((B_s)_{s\in[N]})=0\). For such \((B_s)_{s\in[N]}\), the definition of
\(g_{\mathbf A,\mathbf E}\) gives the infimum of
$-\sum_{s=1}^N\alpha_s\Tr(V_s\sigma_s)+\sum_{\ell=1}^k\eta_\ell E_\ell$
over all representations $B_s=
\left(\sum_{\ell=1}^k U_s^{(\ell)}\right)\oplus V_s$ with
$\eta_\ell\ge0$ and $-\sum_{s=1}^N\alpha_s U_s^{(\ell)}
\le
\eta_\ell A_\ell$ for all $\ell\in[k]$.

Since \(B_s\le C_s\), eliminating the auxiliary variables \(B_s\) is
equivalent to imposing directly $
\left(\sum_{\ell=1}^k U_s^{(\ell)}\right)\oplus V_s
\le C_s$ together with the same energy-potential constraints. This is precisely the
condition that $((U_s,V_s)_{s\in[N]},(\eta_\ell)_{\ell\in[k]})
\in
\mathsf{Pot}_{\mathbf A,\mathbf E}^{\mathbf C,\boldsymbol\alpha}
(\boldsymbol\sigma)$, where \(U_s=\sum_{\ell=1}^k U_s^{(\ell)}\).

Therefore
\[
\begin{aligned}
&\inf_{(B_s)_{s\in[N]}}
\left\{
f((B_s)_{s\in[N]})
+
g_{\mathbf A,\mathbf E}((B_s)_{s\in[N]})
\right\}
\\
&\quad =
\inf
\left\{
-\sum_{s=1}^N\alpha_s\Tr(V_s\sigma_s)
+
\sum_{\ell=1}^k\eta_\ell E_\ell
:
((U_s,V_s)_{s\in[N]},(\eta_\ell)_{\ell\in[k]})
\in
\mathsf{Pot}_{\mathbf A,\mathbf E}^{\mathbf C,\boldsymbol\alpha}
(\boldsymbol\sigma)
\right\}
\\
&\quad =
-
\sup
\left\{
\sum_{s=1}^N\alpha_s\Tr(V_s\sigma_s)
-
\sum_{\ell=1}^k\eta_\ell E_\ell
:
((U_s,V_s)_{s\in[N]},(\eta_\ell)_{\ell\in[k]})
\in
\mathsf{Pot}_{\mathbf A,\mathbf E}^{\mathbf C,\boldsymbol\alpha}
(\boldsymbol\sigma)
\right\}.
\end{aligned}
\]
This proves \eqref{eq:FR-left-energy-constrained}.
\end{proof}

\begin{prop}[Conjugate of the cost-constraint indicator]
\label{prop:conjugate-f-cost-constraint}
Let \(f:\prod_{s=1}^N \cB_S(\cH_0\otimes\cH_s)\to(-\infty,+\infty]\) be defined in \eqref{eq:def:f}. Then, for every $\Lambda=(\Lambda_s)_{s\in[N]}\in \prod_{s=1}^N \cB_S(\cH_0\otimes\cH_s)^*$, one has the Fenchel conjugate $f^*:\prod_{s=1}^N \cB_S(\cH_0\otimes\cH_s)^*\to(-\infty,+\infty]$:
\[
f^*(\Lambda)
=
\begin{cases}
\displaystyle
\sum_{s=1}^N \Lambda_s(C_s),
&
\Lambda_s\ge0\ \text{for all }s\in[N],
\\[2mm]
+\infty,
&
\text{otherwise}
\end{cases}\rev{.}
\]
Here \(\Lambda_s\ge0\) means that $\Lambda_s(B)\ge0$ for all $B\in\cB_\geq(\cH_0\otimes\cH_s)$, and for a positive functional \(\Lambda_s\), its extended evaluation on the possibly unbounded cost \(C_s\) is understood as in \eqref{eq:ext-eval}.
\end{prop}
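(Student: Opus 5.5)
The plan is to compute $f^*(\Lambda)=\sup_{(B_s)}\{\sum_s\Lambda_s(B_s)-f((B_s))\}=\sup\{\sum_s\Lambda_s(B_s): B_s\in\cB_S(\cH_0\otimes\cH_s),\ B_s\le C_s\}$ directly. The pairing is componentwise and the constraints $B_s\le C_s$ decouple across $s$. So the supremum splits as $\sum_s\sup\{\Lambda_s(B_s):B_s\le C_s\}$, with the convention that the sum is $+\infty$ as soon as one term is. This reduces everything to a single index $s$: we must show that $\sup\{\Lambda_s(B):B\in\cB_S,\ B\le C_s\}$ equals $\Lambda_s(C_s)$ when $\Lambda_s\ge0$, and equals $+\infty$ otherwise.

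\emph{Positive case.} If $\Lambda_s\ge0$, the supremum is by definition the extended evaluation \eqref{eq:ext-eval}. Nothing remains to prove, except noting that it lies in $[0,+\infty]$: the choice $B=0\le C_s$ is admissible, and $C_s\ge0$.

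\emph{Non-positive case.} Suppose there is $B_0\in\cB_\geq(\cH_0\otimes\cH_s)$ with $\Lambda_s(B_0)<0$. For $t>0$, the bounded operator $-tB_0$ satisfies $-tB_0\le0\le C_s$. The Loewner order between a bounded operator and the unbounded $C_s$ is understood via the lower-semibounded order of Section~\ref{subsec:op-space}. Since $-tB_0$ is bounded, its form domain is the whole space, so the inclusion of form domains holds: $\Dom(\widetilde C_s^{1/2})\subset\cH$. The form inequality follows from $C_s\ge0\ge -tB_0$ after shifting by a common lower bound $c=-t\|B_0\|_{\mathrm{op}}$. Then $\Lambda_s(-tB_0)=-t\Lambda_s(B_0)\to+\infty$ as $t\to\infty$, so the supremum is $+\infty$. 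Since the other components contribute suprema that are at least $\Lambda_{s'}(0)=0>-\infty$, we get $f^*(\Lambda)=+\infty$.

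The only mildly delicate point, and the step I expect to need care, is this verification that bounded operators $B\le0$ satisfy $B\le C_s$ in the extended order, together with the use of $0$ as a feasible point. Both are immediate from the definition of the order. No compactness or structure of $C_s$ is needed, because the result is stated with the abstract extended evaluation. Its identification with $\Tr(C_s\pi)$ is deferred to Propositions~\ref{prop:B*confine} and~\ref{prop:type2costoperator}.
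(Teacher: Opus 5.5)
Your proposal is correct and follows essentially the same argument as the paper: decouple the supremum across $s$, recognize the extended evaluation \eqref{eq:ext-eval} in the positive case, and test with $-tB_0\le 0\le C_s$ in the non-positive case. The only difference is cosmetic: you decouple before splitting into cases, and you spell out the form-domain check for the extended order, which the paper treats as immediate from $C_s\ge0$.
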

\begin{proof}
Suppose that \(\Lambda_{s_0}\) is not positive for some
\(s_0\in[N]\). Then there exists $D\in\cB_\geq(\cH_0\otimes\cH_{s_0})$ such that \(\Lambda_{s_0}(D)<0\). For every \(t>0\), set $B_{s_0}:=-tD$ and $B_s:=0$ for \(s\neq s_0\). Since \(C_s\ge0\), this choice satisfies \(B_s\le C_s\) for all \(s\).
Hence
\[
f^*(\Lambda)
\ge
\sum_{s=1}^N\Lambda_s(B_s)
=
-t\Lambda_{s_0}(D)\to+\infty
\quad \text{as }t\to+\infty.
\]
Thus \(f^*(\Lambda)=+\infty\).

It remains to consider the case in which every $\Lambda_s$ is
positive. By the definition of the Fenchel conjugate and the
independence of the constraints in $s$,
\[
f^*(\Lambda)
=
\sum_{s=1}^N
\sup_{\substack{B_s\in\cB_S(\cH_0\otimes\cH_s)\\ B_s\le C_s}}
\Lambda_s(B_s)
=
\sum_{s=1}^N\Lambda_s(C_s),
\]
where the last equality is precisely the extended evaluation from
\eqref{eq:ext-eval}.
\end{proof}
\begin{prop}[Conjugate of the constrained additive-potential functional]
\label{prop:conjugate-gAE}
Assume that \(\alpha_s>0\) for all \(s\in[N]\). Let $g_{\mathbf A,\mathbf E}$ be defined as in \eqref{eq:defg}. For every $\Lambda=(\Lambda_s)_{s\in[N]}
\in
\prod_{s=1}^N
\cB_S(\cH_0\otimes\cH_s)^*$, we have the Fenchel conjugate $g_{\mathbf A,\mathbf E}^*:\prod_{s=1}^N \cB_S(\cH_0\otimes\cH_s)^*\to(-\infty,+\infty]$:
\[
g_{\mathbf A,\mathbf E}^*(\Lambda)
=
\iota_{\mathcal C_{\mathrm{bar}}^{\mathbf A,\mathbf E}}(\Lambda)
=
\begin{cases}
0,
&
\Lambda\in\mathcal C_{\mathrm{bar}}^{\mathbf A,\mathbf E},
\\[1mm]
+\infty,
&
\rev{\text{otherwise}}
\end{cases},
\]
where \(\mathcal C_{\mathrm{bar}}^{\mathbf A,\mathbf E}\) is defined by:
\[
\begin{aligned}
\mathcal C_{\mathrm{bar}}^{\mathbf A,\mathbf E}
:=
\Bigl\{
&
\Lambda=(\Lambda_s)_{s\in[N]}
\in
\prod_{s=1}^N
\cB_S(\cH_0\otimes\cH_s)^*
:
\\
&
\exists\,
\Lambda_{\mathrm{bar}}\in\cB_S(\cH_0)^*, \Lambda_{\mathrm{bar}}\geq0,
\ \text{such that}
\\
&
\Lambda_{\mathrm{bar}}(A_\ell)\le E_\ell,
\qquad \ell\in[k],
\\
&
\Lambda_s(U\otimes \I_{\cH_s})
=
-\alpha_s\Lambda_{\mathrm{bar}}(U),
\qquad
U\in\cB_S(\cH_0),\ s\in[N],
\\
&
\Lambda_s(\I_{\cH_0}\otimes V)
=
-\alpha_s\Tr(V\sigma_s),
\qquad
V\in\cB_S(\cH_s),\ s\in[N]
\Bigr\}.
\end{aligned}
\]
\end{prop}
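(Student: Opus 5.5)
The plan is to compute $g^*_{\mathbf A,\mathbf E}(\Lambda)=\sup_{B}\{\sum_s\Lambda_s(B_s)-g_{\mathbf A,\mathbf E}(B)\}$ directly from the representation structure. Since $g_{\mathbf A,\mathbf E}$ is an infimum over representations, the conjugate becomes a supremum over all admissible representations:
\[
g^*_{\mathbf A,\mathbf E}(\Lambda)=\sup\Bigl\{\sum_{s}\Lambda_s\bigl(U_s\otimes\I_{\cH_s}\bigr)+\sum_s\Lambda_s\bigl(\I_{\cH_0}\otimes V_s\bigr)+\sum_s\alpha_s\Tr(V_s\sigma_s)-\sum_\ell\eta_\ell E_\ell\Bigr\}.
\]
Here $U_s=\sum_\ell U_s^{(\ell)}$, and the supremum runs over $U_s^{(\ell)},V_s$ free, $\eta_\ell\ge0$, and $-\sum_s\alpha_sU_s^{(\ell)}\le\eta_\ell A_\ell$. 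The expression is positively homogeneous in the representation variables, so $g^*$ takes only the values $0$ or $+\infty$. It therefore suffices to show that the supremum is $0$ exactly on $\mathcal C_{\mathrm{bar}}^{\mathbf A,\mathbf E}$.

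\emph{The $V$-part.} Since $V_s$ ranges over all of $\cB_S(\cH_s)$ with no constraint, finiteness forces $\Lambda_s(\I_{\cH_0}\otimes V)=-\alpha_s\Tr(V\sigma_s)$ for all $V$. Otherwise a scaling of $V_s$ sends the supremum to $+\infty$.

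\emph{The $U$-part: existence of $\Lambda_{\mathrm{bar}}$.} First take $\eta=0$ and $U_s^{(\ell)}$ with $\sum_s\alpha_sU_s^{(\ell)}=0$, which are admissible and can be scaled by $\pm t$. Define $\mu_s(U):=-\Lambda_s(U\otimes\I)/\alpha_s\in\cB_S(\cH_0)^*$. Vanishing of $\sum_s\Lambda_s(U_s\otimes\I)$ on the subspace $\{\sum_s\alpha_sU_s=0\}$ means $\sum_s\alpha_s\mu_s(U_s)=0$ whenever $\sum_s\alpha_sU_s=0$. Choosing $U_s=U/\alpha_s$, $U_{s'}=-U/\alpha_{s'}$ and all others zero gives $\mu_s=\mu_{s'}=:\Lambda_{\mathrm{bar}}$; this uses $\alpha_s>0$. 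The objective then reduces to $-\sum_\ell\bigl[\Lambda_{\mathrm{bar}}(W_\ell)+\eta_\ell E_\ell\bigr]$ with $W_\ell:=\sum_s\alpha_sU_s^{(\ell)}$. The $W_\ell$ range over all bounded operators with $-W_\ell\le\eta_\ell A_\ell$, since any $W$ is realized by $U_s^{(\ell)}=W$. Positivity of $\Lambda_{\mathrm{bar}}$ follows from $\eta=0$ and $W_\ell=-tD$ with $D\ge0$, giving objective $t\Lambda_{\mathrm{bar}}(D)$, which must be $\le0$ for all $t$. Next take $\eta_\ell=1$ (the others zero) and $-W_\ell=B\le A_\ell$ bounded. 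The objective becomes $\Lambda_{\mathrm{bar}}(B)-E_\ell$, and its supremum over such $B$ is $\Lambda_{\mathrm{bar}}(A_\ell)-E_\ell$ by the extended evaluation \eqref{eq:ext-eval}. Finiteness therefore forces $\Lambda_{\mathrm{bar}}(A_\ell)\le E_\ell$. Together these show that $g^*<\infty$ implies $\Lambda\in\mathcal C_{\mathrm{bar}}^{\mathbf A,\mathbf E}$.

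\emph{The converse.} If $\Lambda\in\mathcal C_{\mathrm{bar}}^{\mathbf A,\mathbf E}$, the $V$-terms cancel exactly. The remaining objective is $\sum_\ell\bigl[\Lambda_{\mathrm{bar}}(-W_\ell)-\eta_\ell E_\ell\bigr]$. For $\eta_\ell>0$, positive homogeneity and monotonicity of the extended evaluation give $\Lambda_{\mathrm{bar}}(-W_\ell)\le\eta_\ell\Lambda_{\mathrm{bar}}(A_\ell)\le\eta_\ell E_\ell$. For $\eta_\ell=0$, the constraint $-W_\ell\le0$ and positivity of $\Lambda_{\mathrm{bar}}$ give $\Lambda_{\mathrm{bar}}(-W_\ell)\le0$. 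Hence the supremum is $\le0$, and it equals $0$ at the zero representation.

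The main obstacle is the correct handling of the unbounded $A_\ell$. One must check that the constraint $-W\le\eta A_\ell$, understood in the form sense, is compatible with the definition \eqref{eq:ext-eval}. For $\eta>0$, the sup over bounded $B\le\eta A_\ell$ equals $\eta\Lambda_{\mathrm{bar}}(A_\ell)$; I would verify this by the rescaling $B\mapsto B/\eta$ and monotonicity. The identification of a single $\Lambda_{\mathrm{bar}}$ for all $s$ via the pairwise test operators must also be done carefully.
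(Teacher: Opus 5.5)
Your proposal follows the same route as the paper's proof. You rewrite $g_{\mathbf A,\mathbf E}^*$ as a supremum of a linear functional over the cone of admissible representations and force the second-marginal identity from the free $V_s$. You then obtain a single $\Lambda_{\mathrm{bar}}$ from vanishing on $\{\sum_s\alpha_sU_s=0\}$; your pairwise test operators make this step more explicit than the paper does. Finally you derive positivity and $\Lambda_{\mathrm{bar}}(A_\ell)\le E_\ell$ from homogeneity in $\eta_\ell$ and the extended evaluation, and prove the converse.

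One step is wrong as written, though it is easily repaired. In the positivity argument, $\eta_\ell=0$ turns the constraint into $-W_\ell\le 0$, i.e.\ $W_\ell\ge 0$. So $W_\ell=-tD$ with $D\ge0$ and $t>0$ is not admissible. Moreover, the bound $t\Lambda_{\mathrm{bar}}(D)\le 0$ you draw from it would give $\Lambda_{\mathrm{bar}}\le 0$, the opposite of what you want. Take instead $W_\ell=tD$: it is admissible and contributes $-t\Lambda_{\mathrm{bar}}(D)$, so finiteness forces $\Lambda_{\mathrm{bar}}(D)\ge0$, exactly as in the paper.

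This fix matters, since positivity of $\Lambda_{\mathrm{bar}}$ is what lets you invoke \eqref{eq:ext-eval} for $\Lambda_{\mathrm{bar}}(A_\ell)$. With the sign corrected, the rest of your proof, including the converse, is correct.
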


\begin{proof}
By the definition of \(g_{\mathbf A,\mathbf E}\), its Fenchel conjugate is
the supremum:
\[
\begin{aligned}
g_{\mathbf A,\mathbf E}^*(\Lambda)
=
\sup
\Bigg\{
&
\sum_{\ell=1}^k\sum_{s=1}^N
\Lambda_s(U_s^{(\ell)}\otimes\I_{\cH_s})
+
\sum_{s=1}^N
\left[
\Lambda_s(\I_{\cH_0}\otimes V_s)
+
\alpha_s\Tr(V_s\sigma_s)
\right]
-
\sum_{\ell=1}^k\eta_\ell E_\ell
\Bigg\},
\end{aligned}
\]
over all $\eta_\ell\ge0$, $V_s\in\cB_S(\cH_s)$, $U_s^{(\ell)}\in\cB_S(\cH_0)$ such that $-\sum_{s=1}^N\alpha_s U_s^{(\ell)}
\le
\eta_\ell A_\ell$ for all $\ell\in[k]$.

The variables \(V_s\) are unconstrained. Therefore the supremum over
\(V_s\) is finite if and only if $\Lambda_s(\I_{\cH_0}\otimes V)=-\alpha_s\Tr(V\sigma_s)$ for every $V\in\cB_S(\cH_s)$ and $s\in[N]$.

Assume from now on that the above condition holds. Then the \(V_s\)-part
contributes \(0\). For the \(U_s^{(\ell)}\)-variables, fix
\(\ell\in[k]\). Taking \(\eta_\ell=0\) and restricting to variations with
$\sum_{s=1}^N\alpha_s U_s^{(\ell)}=0$ shows that finiteness forces the map
\[
(U_s^{(\ell)})_{s\in[N]}
\longmapsto
\sum_{s=1}^N\Lambda_s(U_s^{(\ell)}\otimes \I_{\cH_s})
\]
to vanish on this kernel. Hence there exists
$\Lambda_{\mathrm{bar}}\in\cB_S(\cH_0)^*$ such that
\[
\Lambda_s(U\otimes \I_{\cH_s})
=
-\alpha_s\Lambda_{\mathrm{bar}}(U),
\qquad
U\in\cB_S(\cH_0),\ s\in[N],
\]
and the constraint permits
$-\Lambda_{\mathrm{bar}}(\sum_{s=1}^N\alpha_s U_s^{(\ell)})\leq0$ for all $\sum_{s=1}^N\alpha_s U_s^{(\ell)}\ge0$, which requires $\Lambda_{\mathrm{bar}}\geq0$.

Under these conditions, set
$X_\ell:=-\sum_{s=1}^N\alpha_s U_s^{(\ell)}\in\cB_S(\cH_0)$. Because every $\alpha_s$ is positive, every bounded self-adjoint $X_\ell$ can be realized by a choice of the $U_s^{(\ell)}$. For $\eta_\ell=0$, the supremum over $X_\ell\leq0$ is zero. For $\eta_\ell>0$, positive homogeneity of the extended evaluation in~\eqref{eq:ext-eval} gives the following expression without any $0\cdot(+\infty)$ ambiguity:
\[
\begin{aligned}
&\sup_{\substack{\eta_\ell\ge0,\ X_\ell\in\cB_S(\cH_0)\\
X_\ell\le\eta_\ell A_\ell}}
\bigl\{\Lambda_{\mathrm{bar}}(X_\ell)-\eta_\ell E_\ell\bigr\}
=
\max\left\{0,
\sup_{\eta_\ell>0}
\eta_\ell\bigl(\Lambda_{\mathrm{bar}}(A_\ell)-E_\ell\bigr)
\right\}
\\
&\qquad=
\begin{cases}
0,&\Lambda_{\mathrm{bar}}(A_\ell)\le E_\ell,\\
+\infty,&\Lambda_{\mathrm{bar}}(A_\ell)>E_\ell.
\end{cases}
\end{aligned}
\]
Here the second case includes $\Lambda_{\mathrm{bar}}(A_\ell)=+\infty$; thus the argument applies without assuming that $A_\ell$ is bounded.

Therefore
$g_{\mathbf A,\mathbf E}^*(\Lambda)=0$ on \(\mathcal C_{\mathrm{bar}}^{\mathbf A,\mathbf E}\), and it is
\(+\infty\) outside this set.
\end{proof}

\begin{prop}[Identification of the Fenchel--Rockafellar right-hand side]
\label{prop:FR-right-identification-constrained}
Let \(f,g_{\mathbf A,\mathbf E}:\prod_{s=1}^N \cB_S(\cH_0\otimes\cH_s)\to(-\infty,+\infty]\) be defined in \eqref{eq:def:f} and \eqref{eq:defg}.

Assume that there exists $s_0\in[N]$ such that either $C_{s_0}$ has compact resolvent or $C_{s_0}$ is quadrature-generated and its associated first one-body square-sum has compact resolvent. Then
\[
\begin{aligned}
&\sup_{\Lambda\in
\prod_{s=1}^N\cB_S(\cH_0\otimes\cH_s)^*}
\left\{
-f^*(-\Lambda)-g_{\mathbf A,\mathbf E}^*(\Lambda)
\right\} =
-
\inf_{(\pi_s)_s\in
\Gamma_{\mathrm{joint}}^{\mathbf A,\mathbf E}(\boldsymbol\sigma)}
\sum_{s=1}^N
\alpha_s\Tr(\pi_s C_s).
\end{aligned}
\]
\end{prop}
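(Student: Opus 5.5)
We must show that the dual value $\sup_\Lambda\{-f^*(-\Lambda)-g^*_{\mathbf A,\mathbf E}(\Lambda)\}$ equals minus the joint primal infimum. The plan is to substitute the conjugates from Propositions~\ref{prop:conjugate-f-cost-constraint} and~\ref{prop:conjugate-gAE} and then identify the admissible functionals with families of couplings.

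\emph{Reduction to a problem over functionals.} A tuple $\Lambda$ contributes a finite value only if $\Lambda\in\mathcal C_{\mathrm{bar}}^{\mathbf A,\mathbf E}$ and $-\Lambda_s\ge0$ for every $s$. Writing $M_s:=-\Lambda_s/\alpha_s\ge0$, the supremum becomes
\[
-\inf\Bigl\{\sum_{s=1}^N\alpha_s M_s(C_s)\Bigr\},
\]
where the infimum runs over positive $M_s\in\cB_S(\cH_0\otimes\cH_s)^*$ and a positive $\Lambda_{\mathrm{bar}}$ subject to three constraints: $M_s(U\otimes\I)=\Lambda_{\mathrm{bar}}(U)$, $M_s(\I\otimes V)=\Tr(V\sigma_s)$, and $\Lambda_{\mathrm{bar}}(A_\ell)\le E_\ell$. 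The extended evaluations are understood in the sense of~\eqref{eq:ext-eval}. It therefore suffices to show that this infimum equals $\inf_{\Gamma_{\mathrm{joint}}^{\mathbf A,\mathbf E}}\sum_s\alpha_s\Tr(C_s\pi_s)$.

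\emph{Easy inequality.} Any $(\pi_s)\in\Gamma^{\mathbf A,\mathbf E}_{\mathrm{joint}}(\boldsymbol\sigma)$ with common marginal $\rho$ yields admissible data $M_s=\Tr(\pi_s\,\cdot)$ and $\Lambda_{\mathrm{bar}}=\Tr(\rho\,\cdot)$. Proposition~\ref{prop:spectral-trunc-positive} identifies $M_s(C_s)=\Tr(C_s\pi_s)$ and $\Lambda_{\mathrm{bar}}(A_\ell)=\Tr(A_\ell\rho)$. Hence the functional infimum is at most the primal value.

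\emph{Hard inequality: normality of the optimal functionals.} Take admissible data with finite objective; otherwise there is nothing to prove. Then $M_{s_0}(C_{s_0})<\infty$, and we must show that $M_{s_0}$ is normal.
\begin{itemize}
\item If $C_{s_0}$ has compact resolvent, Proposition~\ref{prop:B*confine} gives this directly.
\item If $C_{s_0}$ is quadrature-generated, Proposition~\ref{prop:type2costoperator} gives it, since $M_{s_0}(\I\otimes V)=\Tr(V\sigma_{s_0})$ is exactly its marginal hypothesis.
\end{itemize}
So $M_{s_0}=\Tr(\pi_{s_0}\,\cdot)$ with $\pi_{s_0}\in\cS(\cH_0\otimes\cH_{s_0})$. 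Consequently $\Lambda_{\mathrm{bar}}(U)=\Tr(\rho U)$ with $\rho:=\Tr_{\cH_{s_0}}\pi_{s_0}$, so $\Lambda_{\mathrm{bar}}$ is normal with density $\rho$. The energy constraint then reads $\Tr(A_\ell\rho)\le E_\ell$.

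\emph{Hard inequality: the remaining indices.} For $s\neq s_0$, the functional $M_s$ is positive and satisfies $M_s(B_1\oplus B_2)=\Tr(B_1\rho)+\Tr(B_2\sigma_s)$. Lemma~\ref{lemma:equalB*traceclass} then provides a coupling $\pi_s\in\Gamma(\rho,\sigma_s)$ with $M_s=\Tr(\pi_s\,\cdot)$. For this step I expect one must check that the lemma's hypothesis on Kronecker sums follows from the two separate marginal identities; it does, by linearity. Hence $M_s(C_s)=\Tr(C_s\pi_s)$, and $(\pi_s)_s\in\Gamma_{\mathrm{joint}}^{\mathbf A,\mathbf E}(\boldsymbol\sigma)$ has the same objective value. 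This proves the reverse inequality.

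\emph{Main obstacle.} The crux is excluding singular parts (functionals vanishing on compact operators) through the single confining index $s_0$. That exclusion is the content of Propositions~\ref{prop:B*confine} and~\ref{prop:type2costoperator}. Once normality propagates to the barycentric marginal, Lemma~\ref{lemma:equalB*traceclass} upgrades every other $M_s$ without any condition on $C_s$.
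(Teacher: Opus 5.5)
Your proof is correct and follows the paper's argument: you use the same reduction via the conjugates from Propositions~\ref{prop:conjugate-f-cost-constraint} and~\ref{prop:conjugate-gAE}, the same exclusion of the singular part at the index $s_0$ through Propositions~\ref{prop:B*confine} and~\ref{prop:type2costoperator}, and the same converse construction. The only difference is at the remaining indices, where you cite Lemma~\ref{lemma:equalB*traceclass} directly. The paper instead re-proves that fact inline as a ``normality observation'': if both marginals of a positive functional are normal, the functional is normal, which it shows via finite-rank projections $P_n\otimes Q_n$ and Lemma~\ref{lemma:b*decomposition}.
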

\begin{proof}
By Propositions~\ref{prop:conjugate-f-cost-constraint} and
\ref{prop:conjugate-gAE}, the expression $-f^*(-\Lambda)-g_{\mathbf A,\mathbf E}^*(\Lambda)$ is finite only if $-\Lambda_s\ge0$ for all $s\in[N]$ and
$\Lambda\in\mathcal C_{\mathrm{bar}}^{\mathbf A,\mathbf E}$. Then there exists
\(\Lambda_{\mathrm{bar}}\in\cB_S(\cH_0)^*,\ \Lambda_{\mathrm{bar}}\geq0\) such that $\Lambda_{\mathrm{bar}}(A_\ell)\le E_\ell$ for all $\ell\in[k]$, and
\[
-\Lambda_s(U\otimes \I_{\cH_s})
=
\alpha_s\Lambda_{\mathrm{bar}}(U),
\qquad
-\Lambda_s(\I_{\cH_0}\otimes V)
=
\alpha_s\Tr(V\sigma_s).
\]
Moreover, on this feasible set,
Proposition~\ref{prop:conjugate-f-cost-constraint} gives
\[
-f^*(-\Lambda)-g_{\mathbf A,\mathbf E}^*(\Lambda)
=-\sum_{s=1}^N(-\Lambda_s)(C_s),
\]
where every extended evaluation is applied to the positive functional
$-\Lambda_s$.

If the above value is finite, then
\((-\Lambda_s)(C_s)<+\infty\) for every \(s\). Set $\Xi_s:=-\alpha_s^{-1}\Lambda_s$.
Then $\Xi_s\ge0$ and, by the second marginal identity,
$\Xi_s(\I_{\cH_0}\otimes V)=\Tr(V\sigma_s)$ for every
$V\in\cB_S(\cH_s)$.

We use the following normality observation. Let $\Xi\in\cB_S(\cH\otimes\cK)^*$ be positive. If both marginal functionals $U\mapsto\Xi(U\otimes\I_{\cK})$ and $V\mapsto\Xi(\I_{\cH}\otimes V)$ are normal, then $\Xi$ is normal. Indeed, choose increasing finite-rank projections $P_n\uparrow\I_{\cH}$ and $Q_n\uparrow\I_{\cK}$ such that both normal marginal tails converge to zero. Since
\[
\I_{\cH\otimes\cK}-P_n\otimes Q_n
\leq
(\I_{\cH}-P_n)\otimes\I_{\cK}
+
\I_{\cH}\otimes(\I_{\cK}-Q_n),
\]
one has $\Xi(\I_{\cH\otimes\cK}-P_n\otimes Q_n)\to0$. Write $\Xi(B)=\Tr(TB)+L(B)$ as in Lemma~\ref{lemma:b*decomposition}. Because $P_n\otimes Q_n$ is finite rank, $L(P_n\otimes Q_n)=0$, and hence
\[
0\leq L(\I_{\cH\otimes\cK})
=L(\I_{\cH\otimes\cK}-P_n\otimes Q_n)
\leq\Xi(\I_{\cH\otimes\cK}-P_n\otimes Q_n)\longrightarrow0.
\]
Thus $L=0$, proving the observation.

Since $\Xi_{s_0}(C_{s_0})<+\infty$, Proposition~\ref{prop:B*confine} in the compact-resolvent case and Proposition~\ref{prop:type2costoperator} in the quadrature-generated case show that $\Xi_{s_0}$ is normal. Its first marginal is $\Lambda_{\mathrm{bar}}$, so there exists $\rho\in\cS(\cH_0)$ such that $\Lambda_{\mathrm{bar}}(U)=\Tr(U\rho)$ for every $U\in\cB_S(\cH_0)$. For every $s\in[N]$, the first marginal of $\Xi_s$ is therefore the normal state induced by $\rho$, while its second marginal is the normal state induced by $\sigma_s$. The normality observation gives a state $\pi_s\in\cS(\cH_0\otimes\cH_s)$ such that $-\Lambda_s(\cdot)=\alpha_s\Tr(\pi_s\,\cdot)$.
The marginal identities give $\Tr_{\cH_s}\pi_s=\rho$ and $\Tr_{\cH_0}\pi_s=\sigma_s$. By~\eqref{eq:ext-eval} and Proposition~\ref{prop:spectral-trunc-positive}, the bound
\(\Lambda_{\mathrm{bar}}(A_\ell)\le E_\ell\) is precisely
$\Tr(A_\ell\rho)\le E_\ell$ for all $\ell\in[k]$, also when $A_\ell$ is unbounded.
The same extended-evaluation identity gives $\Xi_s(C_s)=\Tr(\pi_sC_s)$ for every $s\in[N]$.

Thus $(\pi_s)_s\in
\Gamma_{\mathrm{joint}}^{\mathbf A,\mathbf E}(\boldsymbol\sigma)$, and $-\sum_{s=1}^N(-\Lambda_s)(C_s)=-\sum_{s=1}^N\alpha_s\Tr(\pi_sC_s)$.

Conversely, if $(\pi_s)_s\in
\Gamma_{\mathrm{joint}}^{\mathbf A,\mathbf E}(\boldsymbol\sigma)$ has common barycentric marginal \(\rho\), define $-\Lambda_s(\cdot):=\alpha_s\Tr(\pi_s\,\cdot)$ and set $\Lambda_{\mathrm{bar}}(\cdot):=\Tr(\rho\,\cdot)$. Then $\Lambda_{\mathrm{bar}}\geq0$, \(-\Lambda_s\ge0\), \(\Lambda\in
\mathcal C_{\mathrm{bar}}^{\mathbf A,\mathbf E}\), and $-f^*(-\Lambda)-g_{\mathbf A,\mathbf E}^*(\Lambda)=-\sum_{s=1}^N\alpha_s\Tr(\pi_sC_s)$.

Therefore the Fenchel--Rockafellar right-hand side is exactly the supremum
of the negatives of the constrained joint primal costs:
\[
\begin{aligned}
&\sup_{\Lambda\in
\prod_{s=1}^N\cB_S(\cH_0\otimes\cH_s)^*}
\left\{
-f^*(-\Lambda)-g_{\mathbf A,\mathbf E}^*(\Lambda)
\right\}=
-
\inf_{(\pi_s)_s\in
\Gamma_{\mathrm{joint}}^{\mathbf A,\mathbf E}(\boldsymbol\sigma)}
\sum_{s=1}^N
\alpha_s\Tr(\pi_s C_s).
\end{aligned}
\]
\end{proof}
 
\section{Technical arguments for Gaussian rigidity}
\label{subsec:faithful-gaussian-global-uniqueness-app}
We fix $m,N\in\N$, weights $\alpha_s>0$, $s\in[N]$, satisfying $\sum_{s\in[N]}\alpha_s=1$, and covariances $\{\Sigma_s\}_{s\in[N]}\subset\mathrm{Cov}_m^{\mathrm Q}$.
\subsection{Fenchel conjugates for the covariance dual}
\label{app:faithful-covariance-duality}
\begin{prop}
\label{prop:dualfunccov}
Recall $\lcc$, $\fc^\tau$ and $\gc^\tau$ defined in Section~\ref{subsubsec:faithful-covariance-dual}. Then $\fc^\tau$ and $\gc^\tau$ are proper convex functions. Moreover,
for $\mathbf Z=(Z_r)_{r\in[k]}$ with
$Z_r\in\cM_{4m,\mathrm{sa}}(\C)$, let $(Z_r)_{ij}$ denote its
$(i,j)$-th $2m\times2m$ block. Then
\[
\lcc^*(\mathbf Z)
=
\left(
\sum_{r=1}^k\Real (Z_r)_{11},\,
2\Real (Z_1)_{12},\ldots,2\Real (Z_k)_{12}
\right),
\]
and
\[
(\fc^\tau)^*(\mathbf Z)
=
\begin{cases}
\displaystyle
-\sum_{r=1}^k
\Real\Tr\left[
Z_r^\dagger
\begin{pmatrix}
\frac{i\hbar}{2}\Omega_m&0\\
0&\Sigma_r+\frac{i\tau\hbar}{2}\Omega_m
\end{pmatrix}
\right],
&
Z_r\preceq0,\quad r\in[k],
\\
+\infty,&\text{otherwise}.
\end{cases}
\]

If $\Real (Z_r)_{12}\neq-\alpha_r\I_{2m}$ for some $r\in[k]$, then $(\gc^\tau)^*(\mathbf Z)=+\infty$. Otherwise, when $k<N$,
\[
\begin{aligned}
(\gc^\tau)^*(\mathbf Z)
=
\sup_{\substack{
\Sigma\in\mathrm{Cov}_m^{\mathrm Q},\\
A_s\in\mathrm{Corr}_\tau(\Sigma,\Sigma_s),\
s=k+1,\ldots,N}}
\left\{
\Tr\left[
\left(
\sum_{r=1}^k\Real (Z_r)_{11}-\I_{2m}
\right)\Sigma
\right]
+
2\sum_{s=k+1}^N\alpha_s\Tr(A_s)
\right\}.
\end{aligned}
\]
If $k=N$, then
\[
(\gc^\tau)^*(\mathbf Z)
=
\sup_{\Sigma\in\cM_{2m,\mathrm{sym}}(\R)}
\Tr\left[
\left(
\sum_{r=1}^N\Real (Z_r)_{11}-\I_{2m}
\right)\Sigma
\right].
\]
\end{prop}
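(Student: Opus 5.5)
The computation is organised around the real Frobenius pairing $\langle X,Y\rangle_F=\Real\Tr(X^\dagger Y)$ on each factor $\cM_{4m,\mathrm{sa}}(\C)$ and on $\cM_{2m,\mathrm{sym}}(\R)\times\prod_r\cM_{2m}(\R)$. There are four steps: the adjoint $\lcc^*$, properness and convexity, the conjugate of $\fc^\tau$, and the conjugate of $\gc^\tau$ via $\lcc^*$.

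\emph{Adjoint $\lcc^*$.} For $(\Sigma,A_1,\ldots,A_k)$ real, expand
\[
\sum_r\Real\Tr\!\left[Z_r^\dagger\begin{pmatrix}\Sigma&A_r\\A_r^\top&0\end{pmatrix}\right]
=\sum_r\Bigl(\langle\Real(Z_r)_{11},\Sigma\rangle_F+\langle\Real(Z_r)_{12},A_r\rangle_F+\langle\Real(Z_r)_{21},A_r^\top\rangle_F\Bigr).
\]
Hermiticity gives $(Z_r)_{21}=(Z_r)_{12}^\dagger$, so $\Real(Z_r)_{21}=\Real(Z_r)_{12}^\top$. The last two terms therefore combine to $2\langle\Real(Z_r)_{12},A_r\rangle_F$. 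Since $\Sigma$ is symmetric and $\Real(Z_r)_{11}$ is symmetric (being the real part of a Hermitian block), the first term pairs correctly in $\cM_{2m,\mathrm{sym}}(\R)$. This yields the stated formula for $\lcc^*$.

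\emph{Properness and convexity.} $\ccov^\tau$ is a nonempty convex set (it contains $\mathbf V_0$), so $\fc^\tau$ is proper and convex. The function $\gc^\tau$ is the image of the linear objective under the linear map $\lcc$, restricted to the convex feasible set $\{\Sigma\in\mathrm{Cov}_m^{\mathrm Q},\ A_s\in\mathrm{Corr}_\tau\}$; such infimal projections are convex. It is finite at $\mathbf V_0$. The main risk is the value $-\infty$. For $k<N$ I would rule it out by the Cauchy–Schwarz bound $\Tr(A_s)\le\sqrt{\Tr\Sigma\,\Tr\Sigma_s}$, exactly as in the semiclassical proof: once $\mathbf V$ fixes $\Sigma$, the objective is bounded below. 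For $k=N$ there are no free variables, since $\mathbf V$ determines $(\Sigma,A_r)$, so the value is finite or $+\infty$.

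\emph{Conjugate of $\fc^\tau$.} Write $K_r:=\smat{\frac{i\hbar}{2}\Omega_m&0\\0&\Sigma_r+\frac{i\tau\hbar}{2}\Omega_m}$. Then
\[
(\fc^\tau)^*(\mathbf Z)=\sup_{V_r+K_r\succeq0}\sum_r\langle Z_r,V_r\rangle_F .
\]
Substitute $W_r=V_r+K_r$, which ranges over all of $\cM_{4m,\geq}(\C)$. The sum becomes $\sum_r\langle Z_r,W_r\rangle_F-\sum_r\langle Z_r,K_r\rangle_F$. The supremum of $\langle Z_r,W\rangle_F$ over $W\succeq0$ is $0$ if $Z_r\preceq0$ and $+\infty$ otherwise, by self-duality of the PSD cone. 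This gives the stated formula.

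\emph{Conjugate of $\gc^\tau$.} Using $\mathbf V=\lcc(\Sigma,A_1,\ldots,A_k)$,
\[
(\gc^\tau)^*(\mathbf Z)=\sup\Bigl\{\langle\lcc^*\mathbf Z,(\Sigma,A_{1..k})\rangle-\Tr\Sigma+2\sum_{s=1}^N\alpha_s\Tr A_s\Bigr\},
\]
with the supremum over the feasible variables. The variables $A_1,\ldots,A_k$ are unconstrained real matrices, since faithful constraints were moved into $\fc^\tau$. Their coefficient is $2\Real(Z_r)_{12}+2\alpha_r\I_{2m}$, so the supremum is $+\infty$ unless $\Real(Z_r)_{12}=-\alpha_r\I_{2m}$. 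When these equalities hold, the $A_r$ terms drop out and what remains is exactly the displayed supremum over $\Sigma\in\mathrm{Cov}_m^{\mathrm Q}$ and $A_s\in\mathrm{Corr}_\tau$ for $s>k$. For $k=N$, the same computation gives a supremum over $\Sigma\in\cM_{2m,\mathrm{sym}}(\R)$ only.

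The only delicate point is the $-\infty$ exclusion for $\gc^\tau$; the rest is bookkeeping.
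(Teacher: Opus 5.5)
Your proposal is correct and follows the paper's proof essentially step for step. It uses the same block computation for $\lcc^*$, the same shift $W_r=V_r+K_r$ with self-duality of the positive semidefinite cone for $(\fc^\tau)^*$, and the same observation that the unconstrained $A_1,\ldots,A_k$ force $\Real(Z_r)_{12}=-\alpha_r\I_{2m}$ in $(\gc^\tau)^*$.

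The only minor difference is how you exclude the value $-\infty$ for $\gc^\tau$. You use the trace bound $\Tr(A_s)\le\sqrt{\Tr\Sigma\,\Tr\Sigma_s}$. The paper instead uses the contraction factorization to show that each $\mathrm{Corr}_\tau(\Sigma,\Sigma_s)$ is compact, which also gives attainment of the infimum on every nonempty fiber.
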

\begin{proof}
We first verify that $\fc^\tau$ and $\gc^\tau$ are proper and convex.
For any $t>\hbar/2$, set $\mathbf V_0:=\lcc(t\I_{2m},0,\ldots,0)$. By \eqref{eq:faithful-inputs},
\[
(\mathbf V_0)_r+
\begin{pmatrix}
\frac{i\hbar}{2}\Omega_m&0\\
0&\Sigma_r+\frac{i\tau\hbar}{2}\Omega_m
\end{pmatrix}
\succ0,
\quad r\in[k],
\]
so $\fc^\tau$ is proper and is convex by the convexity of $\ccov^\tau$.

The domain of $\gc^\tau$ is nonempty, since for $t>\hbar/2$, $0\in\mathrm{Corr}_\tau(t\I_{2m},\Sigma_s)$ for all $s\in[N]$. For fixed $\mathbf V$, the relation
$\lcc(\Sigma,A_1,\ldots,A_k)=\mathbf V$ fixes
$\Sigma,A_1,\ldots,A_k$. Moreover, for each $s>k$,
$\mathrm{Corr}_\tau(\Sigma,\Sigma_s)$ is bounded: by the standard factorization of a positive block matrix \cite[Prop.~1.3.2]{Bhatia2007PDM}, $A_s\in\mathrm{Corr}_\tau(\Sigma,\Sigma_s)$ if and only if there exists $K_s\in\cM_{2m}(\C)$ such that $\|K_s\|_{\mathrm{op}}\leq1$ and 
\[ 
A_s=(\Sigma+\frac{i\hbar}{2}\Omega_m)^{1/2}K_s(\Sigma_s+\frac{i\tau\hbar}{2}\Omega_m)^{1/2}. 
\]
Therefore $\mathrm{Corr}_\tau(\Sigma,\Sigma_s)$ is compact since it is closed by the defining linear matrix inequality. Thus every nonempty fiber of $\gc^\tau$ is compact, and the linear objective attains a finite minimum. Convexity follows directly from
the convexity of the sets
$\mathrm{Corr}_\tau(\Sigma,\Sigma_s)$ in the joint variables
$(\Sigma,A_s)$ and the linearity of the objective.

For $\mathbf Z=(Z_r)_{r\in[k]}$, a direct block computation gives
\[
\Real\sum_{r=1}^k
\Tr\left[
Z_r^\dagger
\begin{pmatrix}
\Sigma&A_r\\
A_r^\top&0
\end{pmatrix}
\right]
=
\Tr\left[
\left(\sum_{r=1}^k\Real (Z_r)_{11}\right)\Sigma
\right]
+
2\sum_{r=1}^k
\Tr\left[(\Real (Z_r)_{12})^\top A_r\right],
\]
which proves the formula for $\lcc^*$.

By definition of $\fc^\tau$,
\[
\begin{aligned}
(\fc^\tau)^*(\mathbf Z)
&=
-\sum_{r=1}^k
\Real\Tr\left[
Z_r^\dagger
\begin{pmatrix}
\frac{i\hbar}{2}\Omega_m&0\\
0&\Sigma_r+\frac{i\tau\hbar}{2}\Omega_m
\end{pmatrix}
\right]
+
\sum_{r=1}^k
\sup_{X_r\succeq0}\Real\Tr(Z_r^\dagger X_r).
\end{aligned}
\]
The last supremum is zero if $Z_r\preceq0$ for every $r\in[k]$,
since then $\Real\Tr(Z_r^\dagger X_r)\leq0$ for all $X_r\succeq0$,
with equality at $X_r=0$. If $Z_r\not\preceq0$ for some $r$, choosing
$X_r=tuu^\dagger$ along a direction $u$ with $u^\dagger Z_ru>0$ and
letting $t\to\infty$ shows that the supremum is $+\infty$. This proves
the stated formula for $(\fc^\tau)^*$.

Finally,
\[
\begin{aligned}
(\gc^\tau)^*(\mathbf Z)
=
\sup\quad&
\Tr\left[
\left(
\sum_{r=1}^k\Real (Z_r)_{11}-\I_{2m}
\right)\Sigma
\right]
+
2\sum_{r=1}^k
\Tr\left[
\bigl(\Real (Z_r)_{12}+\alpha_r\I_{2m}\bigr)^\top A_r
\right]
\\
&+
2\sum_{s=k+1}^N\alpha_s\Tr(A_s).
\end{aligned}
\]
When $k<N$, the supremum is taken over
$\Sigma\in\mathrm{Cov}_m^{\mathrm Q}$ and
$A_s\in\mathrm{Corr}_\tau(\Sigma,\Sigma_s)$ for $s>k$, while
$A_1,\ldots,A_k$ are unrestricted. When $k=N$, it is taken over
$\Sigma\in\cM_{2m,\mathrm{sym}}(\R)$ and unrestricted
$A_1,\ldots,A_N$. 

Hence finiteness requires $\Real (Z_r)_{12}=-\alpha_r\I_{2m}$ for $r\in[k]$. Under these conditions, the terms involving $A_1,\ldots,A_k$ vanish. For $k<N$, this gives
\[
(\gc^\tau)^*(\mathbf Z)
=
\sup_{\substack{
\Sigma\in\mathrm{Cov}_m^{\mathrm Q},\\
A_s\in\mathrm{Corr}_\tau(\Sigma,\Sigma_s),\
s=k+1,\ldots,N}}
\left\{
\Tr\left[
\left(
\sum_{r=1}^k\Real (Z_r)_{11}-\I_{2m}
\right)\Sigma
\right]
+
2\sum_{s=k+1}^N\alpha_s\Tr(A_s)
\right\}.
\]
If $k=N$, it reduces to
\[
(\gc^\tau)^*(\mathbf Z)
=
\sup_{\Sigma\in\cM_{2m,\mathrm{sym}}(\R)}
\Tr\left[
\left(
\sum_{r=1}^N\Real (Z_r)_{11}-\I_{2m}
\right)\Sigma
\right].
\]
This proves the stated formulas.
\end{proof}

\subsection{Symplectic coordinates for the covariance slackness relation}
\label{app:faithful-gaussian-channel-representation}

We denote by $U(m)$ and $O(m)$ the unitary and orthogonal matrix
groups, respectively. The following proposition is adapted from
\cite[Prop.~2.12]{deGosson2006}, with the shuffle permutation
$P_m^{\mathrm{shuf}}$ accounting for our interleaved quadrature
ordering.

\begin{prop}[Unitary--orthosymplectic identification]
\label{prop:unitary-realification}
Let $P^{\mathrm{shuf}}_m$ be the permutation matrix such that $(P^{\mathrm{shuf}}_m)^\top
(x_1,y_1,\ldots,x_m,y_m)^\top
=
(x_1,\ldots,x_m,y_1,\ldots,y_m)^\top$.

For $V\in U(m)$, define
\[
O_V
:=
P^{\mathrm{shuf}}_m
\begin{pmatrix}
\Real V&-\Imag V\\
\Imag V&\Real V
\end{pmatrix}
(P^{\mathrm{shuf}}_m)^\top.
\]
Then
$O_V\in\Sp(2m,\R)\cap O(2m)$, and
$V\mapsto O_V$ is a group isomorphism
\[
U(m)
\cong
\Sp(2m,\R)\cap O(2m).
\]
\end{prop}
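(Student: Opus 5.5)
The plan is to check the four claims in order: $O_V$ is orthogonal, $O_V$ is symplectic, $V\mapsto O_V$ is a homomorphism, and the map is bijective onto $\Sp(2m,\R)\cap O(2m)$. Since conjugation by the permutation $P^{\mathrm{shuf}}_m$ preserves both orthogonality and group structure, I will first work in block ("unshuffled") coordinates. There the realification is
\[
\mathcal R(V):=\begin{pmatrix}\Real V&-\Imag V\\ \Imag V&\Real V\end{pmatrix},
\qquad
J:=\begin{pmatrix}0&\I_m\\-\I_m&0\end{pmatrix}.
\]
The key observation is that $(P^{\mathrm{shuf}}_m)^\top\Omega_m P^{\mathrm{shuf}}_m=J$. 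Consequently $O_V\in\Sp(2m,\R)$ if and only if $\mathcal R(V)J\mathcal R(V)^\top=J$, and $O_V\in O(2m)$ if and only if $\mathcal R(V)\in O(2m)$.

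For the homomorphism property, identify $\R^{2m}\cong\C^m$ via $(x,y)\mapsto x+iy$. Under this identification $\mathcal R(V)$ is the real matrix of multiplication by $V$. Hence $\mathcal R(VW)=\mathcal R(V)\mathcal R(W)$ and $\mathcal R(V^\dagger)=\mathcal R(V)^\top$, both of which are checked by expanding real and imaginary parts. Orthogonality then follows from $\mathcal R(V)^\top\mathcal R(V)=\mathcal R(V^\dagger V)=\I_{2m}$. For symplecticity, note that $J=-\mathcal R(i\I_m)$ and that $\mathcal R(V)$ commutes with $\mathcal R(i\I_m)$. Combining this with orthogonality gives $\mathcal R(V)J\mathcal R(V)^\top=J$. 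Injectivity is immediate, since $\mathcal R(V)$ determines $\Real V$ and $\Imag V$.

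The only step with real content is surjectivity. Let $O$ be both orthogonal and symplectic in block coordinates, so $OJO^\top=J$ and $O^\top=O^{-1}$. Together these give $OJ=JO$. Writing $O=\begin{pmatrix}A&B\\C&D\end{pmatrix}$, the relation $OJ=JO$ forces $D=A$ and $C=-B$. Therefore $O=\mathcal R(V)$ with $V:=A-iB$. Then $\mathcal R(V^\dagger V)=O^\top O=\I_{2m}$, and since $\mathcal R$ is injective and linear, $V^\dagger V=\I_m$, so $V\in U(m)$. Conjugating back by $P^{\mathrm{shuf}}_m$ yields the isomorphism.

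The main obstacle is bookkeeping rather than substance. One must verify that the shuffle conjugation carries $\Omega_m$ to $J$ (and not to $-J$) with the stated convention for $P^{\mathrm{shuf}}_m$. One must also fix the sign of $\Imag V$ so that the commuting matrix is $J$, with $\mathcal R(i\I_m)=-J$. Each of these is a one-line check on basis vectors.
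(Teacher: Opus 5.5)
Your proposal is correct and takes essentially the paper's route. The paper gives no argument of its own beyond citing de Gosson's Prop.~2.12, whose proof is the same characterization you use: orthosymplectic matrices are exactly those commuting with $J$, hence of the form $\mathcal R(V)$ with $V$ unitary. The shuffle permutation enters, as in your plan, only to pass between the interleaved ordering of $\Omega_m$ and the block ordering of $J$, and your sign checks $(P^{\mathrm{shuf}}_m)^\top\Omega_m P^{\mathrm{shuf}}_m=J$, $\mathcal R(i\I_m)=-J$, and $V=A-iB$ are all correct.
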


\begin{remark}
\label{rem:unitary-spectral-coordinates}
Let $U_0=
\frac1{\sqrt2}
\smat{
1&1\\
i&-i}$ and $\mathbb U_m:=\bigoplus_{j=1}^m{}_D U_0$. With the same permutation matrix
$P^{\mathrm{shuf}}_m$ as in
Proposition~\ref{prop:unitary-realification}, one has
\[
O_{\overline V}
=
\mathbb U_mP^{\mathrm{shuf}}_m
\bigl(V\oplus_D\overline V\bigr)
(P^{\mathrm{shuf}}_m)^\top
\mathbb U_m^\dagger.
\]
\end{remark}
\subsection{Gaussian-channel representation under covariance slackness}

We first recall a source representation obtained from the pure
Williamson modes of a covariance matrix.

\begin{lemma}[Vacuum-mode representation from a Williamson decomposition]
\label{lem:Williamson-Gaussian-source}
Let $\hat\pi\in\cS_2^R(L_2(\R^{2m}))$ have mean vector
\rev{$\mathbf m_{\hat\pi}\in\R^{4m}$} and covariance matrix
$\Sigma\in\mathrm{Cov}_{2m}^{\mathrm Q}$. Let $p$ be the number of
symplectic eigenvalues of $\Sigma$ equal to $\hbar/2$.

For every integer $q$ with $0\leq q\leq p$, there exist
$S\in\Sp(4m,\R)$, a Gaussian unitary $\hat U_S$
implementing $S$, and a state
$\hat\eta\in\cS_2^R(L_2(\R^{2m-q}))$ with zero mean such that the map
$\Phi_G:\mathcal T^{(1)}(L_2(\R^{2m-q}))
\rightarrow\mathcal T^{(1)}(L_2(\R^{2m}))$ defined by
\begin{equation}
\begin{aligned}
\Phi_G(\hat\rho)
:=
&W_{\mathrm{CCR}}\!\left(
\frac{\Omega_{2m}\rev{\mathbf m_{\hat\pi}}}{\hbar}
\right)
\hat U_S
\left(
|0\rangle\langle0|^{\otimes q}
\otimes\hat\rho
\right)
\hat U_S^\dagger
W_{\mathrm{CCR}}\!\left(
\frac{\Omega_{2m}\rev{\mathbf m_{\hat\pi}}}{\hbar}
\right)^\dagger
\end{aligned}
\label{eq:faithful-Williamson-Gaussian-channel}
\end{equation}
is a quantum channel satisfying $\hat\pi=\Phi_G(\hat\eta)$.
\end{lemma}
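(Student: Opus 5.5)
The plan is to build $\Phi_G$ from a Williamson decomposition of $\Sigma$ and then to peel off $q$ vacuum modes from the normalized state using the pure-marginal factorization.

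\emph{Step 1 (normal form).} Since $\Sigma\in\mathrm{Cov}_{2m}^{\mathrm Q}$ and $\Sigma+\frac{i\hbar}{2}\Omega_{2m}\succeq0$, every symplectic eigenvalue is $\ge\hbar/2$, so $\Sigma\succ0$. Williamson's theorem (Theorem~\ref{thm:williamson}) then gives $S\in\Sp(4m,\R)$ with $\Sigma=S D S^\top$, where $D=\bigoplus_j{}_D\nu_j\I_2$. Permuting modes is itself symplectic, so we may order the eigenvalues so that the first $p$ satisfy $\nu_j=\hbar/2$. Fix a Gaussian unitary $\hat U_S$ with $\hat U_S^\dagger\hat{\mathbf R}\hat U_S=S\hat{\mathbf R}$, and let $W:=W_{\mathrm{CCR}}(\Omega_{2m}\mathbf m_{\hat\pi}/\hbar)$ be the displacement by the mean.

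\emph{Step 2 (normalized state).} Set $\hat\pi':=\hat U_S^\dagger W^\dagger\hat\pi W\hat U_S$. By Definition~\ref{def:gaussian-unitary} and the displacement rule, $\hat\pi'$ has zero mean and covariance $S^{-1}\Sigma S^{-\top}=D$. The inverse unitaries are Gaussian with the inverse affine map, so they send $\cS_2^R$ to itself and the moment formulas apply. In particular the $j$-th one-mode marginal of $\hat\pi'$ has zero mean and covariance $(\hbar/2)\I_2$ for every $j\le p$.

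\emph{Step 3 (factorization).} By the coherent-state characterization, a one-mode state with zero mean and covariance $(\hbar/2)\I_2$ equals $|0\rangle\langle0|$. So each of the first $q\le p$ one-mode marginals is pure. Next we use the standard fact that if a marginal $\Tr_B\omega$ is pure, say $|\psi\rangle\langle\psi|$, then $\omega=|\psi\rangle\langle\psi|\otimes\Tr_A\omega$. To prove it, purify $\omega$ or write it as a mixture of vector states: each component then has the same pure marginal, hence each component factorizes. Apply this first to mode $1$. The residual state on modes $2,\dots,2m$ still has modes $2,\dots,q$ in vacuum, because its marginals are marginals of $\hat\pi'$. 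Iterating gives $\hat\pi'=|0\rangle\langle0|^{\otimes q}\otimes\hat\eta$, where $\hat\eta$ is the residual state. Its moments are read off from the corresponding diagonal block of $D$ via Proposition~\ref{prop:reduction-moments}, so $\hat\eta\in\cS_2^R(L_2(\R^{2m-q}))$ with zero mean.

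\emph{Step 4 (conclusion).} Define $\Phi_G$ by \eqref{eq:faithful-Williamson-Gaussian-channel}. It is the composition of appending a fixed state, which is completely positive and trace preserving, with unitary conjugations, so it is a quantum channel. Undoing the conjugations in Step 2 gives $\hat\pi=\Phi_G(\hat\eta)$.

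The main obstacle I expect is the iterated pure-marginal factorization in infinite dimensions, together with checking that finite second moments pass to $\hat\eta$. Both follow from the convex-decomposition argument above and from Proposition~\ref{prop:partialtrace-unbounded}.
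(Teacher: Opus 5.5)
Your proof is correct and follows the paper's argument: Williamson normalization by displacement and a Gaussian unitary, identification of the pure Williamson modes as vacua, and pure-marginal factorization. The only difference is that you split off the $q$ vacuum modes directly, whereas the paper splits off all $p$ and then puts $p-q$ of them back into $\hat\eta$, which is immaterial. One small point: derive $\Sigma\succ0$ directly rather than from the symplectic eigenvalues, which are only defined once $\Sigma\succ0$ is known; a real $x\in\ker\Sigma$ gives $x^\dagger(\Sigma+\frac{i\hbar}{2}\Omega_{2m})x=0$, hence $\Omega_{2m}x=0$ and so $x=0$.
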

\begin{proof}
Choose a Williamson decomposition $\Sigma=SDS^\top$ with
$D=\frac{\hbar}{2}\I_{2p}\oplus_D
\bigoplus_{j=p+1}^{2m}{}_D\nu_j\I_2$ and $\nu_j>\hbar/2$ for $j>p$.
Let $\hat U_S$ be a Gaussian unitary implementing $S$, and define
$\hat\pi_0:=\hat U_S^\dagger
W_{\mathrm{CCR}}(\frac{\Omega_{2m}\rev{\mathbf m_{\hat\pi}}}{\hbar})^\dagger
\hat\pi
W_{\mathrm{CCR}}(\frac{\Omega_{2m}\rev{\mathbf m_{\hat\pi}}}{\hbar})
\hat U_S$.
Then $\hat\pi_0$ has zero mean and covariance $D$.

Each of its first $p$ one-mode marginals has zero mean and covariance
$(\hbar/2)\I_2$, and hence is the one-mode Fock ground state
$|0\rangle\langle0|$. Since a state with a pure marginal factorizes,
there exists a state $\hat\eta_0$ on $2m-p$ modes such that
$\hat\pi_0=|0\rangle\langle0|^{\otimes p}\otimes\hat\eta_0$.

For any $q\leq p$, set
$\hat\eta:=|0\rangle\langle0|^{\otimes(p-q)}\otimes\hat\eta_0$.
Then $\hat\eta$ is a state on $2m-q$ modes with zero first moments,
and
\[
\hat\pi
=
W_{\mathrm{CCR}}\!\left(
\frac{\Omega_{2m}\rev{\mathbf m_{\hat\pi}}}{\hbar}
\right)
\hat U_S
\left(
|0\rangle\langle0|^{\otimes q}
\otimes\hat\eta
\right)
\hat U_S^\dagger
W_{\mathrm{CCR}}\!\left(
\frac{\Omega_{2m}\rev{\mathbf m_{\hat\pi}}}{\hbar}
\right)^\dagger.
\]
\end{proof}

\begin{lemma}[Gaussian channel induced by a vacuum ancilla]
\label{lem:fock-ancilla-Gaussian-channel}
In the setting of
Lemma~\ref{lem:Williamson-Gaussian-source}, the channel $\Phi_G$ is
Gaussian. More precisely, write the column partition
$S=(S_0\ S_1)$, where
$S_0\in\R^{4m\times2q}$ and
$S_1\in\R^{4m\times2(2m-q)}$ according to the Fock-ground-state and
input modes. Then
\begin{equation}
\chi_{\Phi_G(\hat\rho)}(\boldsymbol\xi)
=
\chi_{\hat\rho}(S_1^\top\boldsymbol\xi)
\exp\left(
i\rev{\mathbf m_{\hat\pi}}^\top\boldsymbol\xi
-
\frac12
\boldsymbol\xi^\top
\left(
\frac{\hbar}{2}S_0S_0^\top
\right)
\boldsymbol\xi
\right).
\label{eq:fock-ancilla-Gaussian-channel}
\end{equation}
Thus, in the notation of Definition~\ref{def:gaussian-channel},
$\mathbb X_G=S_1$,
$\mathbb Y_G=\frac{\hbar}{2}S_0S_0^\top$, and
$\mathbf m_G=\rev{\mathbf m_{\hat\pi}}$.
\end{lemma}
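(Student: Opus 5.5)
The plan is to compute the characteristic function of $\Phi_G(\hat\rho)$ directly, handling the three operations in \eqref{eq:faithful-Williamson-Gaussian-channel} one at a time: tensoring with the vacuum ancilla, conjugating by $\hat U_S$, and conjugating by the Weyl displacement. Fix $\boldsymbol\xi\in\R^{4m}$. The outermost step is the displacement. By the covariance relation $W_{\mathrm{CCR}}(\Omega_{2m}\mathbf m/\hbar)^\dagger\hat{\mathbf R}\,W_{\mathrm{CCR}}(\Omega_{2m}\mathbf m/\hbar)=\hat{\mathbf R}+\mathbf m$, recalled in Section~\ref{subsubsec:symplectic-coherent}, we get
\[
W_{\mathrm{CCR}}(\Omega_{2m}\mathbf m/\hbar)^\dagger\, W_{\mathrm{CCR}}(\boldsymbol\xi)\, W_{\mathrm{CCR}}(\Omega_{2m}\mathbf m/\hbar)=e^{i\mathbf m^\top\boldsymbol\xi}\,W_{\mathrm{CCR}}(\boldsymbol\xi).
\]
This holds because conjugating $\exp(i\boldsymbol\xi^\top\hat{\mathbf R})$ by a unitary gives $\exp(i\boldsymbol\xi^\top(\hat{\mathbf R}+\mathbf m))$ by functional calculus. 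Cyclicity of the trace then produces the factor $e^{i\mathbf m_{\hat\pi}^\top\boldsymbol\xi}$.

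Next comes the symplectic unitary. Since $\hat U_S^\dagger\hat{\mathbf R}\hat U_S=S\hat{\mathbf R}$, the same functional-calculus argument gives $\hat U_S^\dagger W_{\mathrm{CCR}}(\boldsymbol\xi)\hat U_S=W_{\mathrm{CCR}}(S^\top\boldsymbol\xi)$. The trace against $|0\rangle\langle0|^{\otimes q}\otimes\hat\rho$ therefore becomes the characteristic function of that product state evaluated at $S^\top\boldsymbol\xi$. Partitioning the columns as $S=(S_0\ S_1)$ gives $S^\top\boldsymbol\xi=(S_0^\top\boldsymbol\xi,S_1^\top\boldsymbol\xi)$, ancilla modes first and input modes second.

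For the product state, the Weyl operator factorizes as a tensor product, $W_{\mathrm{CCR}}(\mathbf u,\mathbf v)=W_{\mathrm{CCR}}(\mathbf u)\otimes W_{\mathrm{CCR}}(\mathbf v)$. The phases in the Weyl relations do not interfere here, because $\Omega_{2m}$ is block diagonal over modes. Hence the characteristic function factorizes as $\chi_{|0\rangle\langle0|^{\otimes q}}(S_0^\top\boldsymbol\xi)\,\chi_{\hat\rho}(S_1^\top\boldsymbol\xi)$. The vacuum is Gaussian with zero mean and covariance $\frac\hbar2\I_{2q}$, so its factor equals $\exp\bigl(-\frac12\boldsymbol\xi^\top\frac\hbar2S_0S_0^\top\boldsymbol\xi\bigr)$. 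Collecting the three factors gives \eqref{eq:fock-ancilla-Gaussian-channel}, and the identification $\mathbb X_G=S_1$, $\mathbb Y_G=\frac\hbar2S_0S_0^\top$, $\mathbf m_G=\mathbf m_{\hat\pi}$ can then be read off from Definition~\ref{def:gaussian-channel}.

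Complete positivity and trace preservation are already built in, since $\Phi_G$ is a composition of an ancilla-appending map and a unitary conjugation. As a consistency check, symplecticity of $S$ gives $S_0\Omega_qS_0^\top+S_1\Omega_{2m-q}S_1^\top=\Omega_{2m}$. The CP condition then becomes $\frac\hbar2S_0(\I+i\Omega_q)S_0^\top\succeq0$, which holds. The only mildly delicate step is the functional-calculus transfer of the affine action from the unbounded generators to the Weyl operators. I would justify it by noting that both sides are strongly continuous one-parameter unitary groups in $t$ whose generators agree, $t\mapsto e^{it\boldsymbol\xi^\top S\hat{\mathbf R}}$ versus the conjugated group. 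Their generators agree on the common core of smooth vectors, and essential self-adjointness (Proposition~\ref{prop:quadratic-hamiltonian-esa}) then identifies the groups. Since the maps in question are in fact equalities of operators, no domain issue arises in the trace computation.
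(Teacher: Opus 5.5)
Your proposal is correct and matches the paper's proof step for step. The displacement rule produces the phase $e^{i\mathbf m_{\hat\pi}^\top\boldsymbol\xi}$, the symplectic rule gives $\hat U_S^\dagger W_{\mathrm{CCR}}(\boldsymbol\xi)\hat U_S=W_{\mathrm{CCR}}(S^\top\boldsymbol\xi)$, and the tensor factorization against the vacuum characteristic function yields \eqref{eq:fock-ancilla-Gaussian-channel}. Your complete-positivity consistency check and your core argument for passing the affine action from generators to Weyl operators are sound additions to what the paper takes for granted.
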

\begin{proof}
The Weyl transformation rule gives
\[
\rev{W_{\mathrm{CCR}}(\Omega_{2m}\mathbf m_{\hat\pi}/\hbar)^\dagger
W_{\mathrm{CCR}}(\boldsymbol\xi)
W_{\mathrm{CCR}}(\Omega_{2m}\mathbf m_{\hat\pi}/\hbar)
=e^{i\mathbf m_{\hat\pi}^\top\boldsymbol\xi}W_{\mathrm{CCR}}(\boldsymbol\xi)}.
\]
The Gaussian-unitary transformation rule gives
$\hat U_S^\dagger W_{\mathrm{CCR}}(\boldsymbol\xi)\hat U_S
=W_{\mathrm{CCR}}(S^\top\boldsymbol\xi)$.
Since
$S^\top\boldsymbol\xi
=(S_0^\top\boldsymbol\xi,S_1^\top\boldsymbol\xi)$,
the characteristic function factorizes as
\[
\chi_{\Phi_G(\hat\rho)}(\boldsymbol\xi)
=e^{i\rev{\mathbf m_{\hat\pi}}^\top\boldsymbol\xi}
\chi_{|0\rangle\langle0|^{\otimes q}}(S_0^\top\boldsymbol\xi)
\chi_{\hat\rho}(S_1^\top\boldsymbol\xi).
\]
The $q$-mode Fock ground state has covariance
$(\hbar/2)\I_{2q}$. Substituting its characteristic function in the
preceding identity gives \eqref{eq:fock-ancilla-Gaussian-channel}.
\end{proof}

\begin{lemma}[Symplectic normal form under covariance slackness]
\label{lem:faithful-dual-Williamson}
Let $\Sigma$ and $p$ be as in
Lemma~\ref{lem:Williamson-Gaussian-source}. Suppose that
$Z\in\cM_{4m,\geq}(\C)$ satisfies
$(\Sigma+\frac{i\hbar}{2}\Omega_{2m})Z=0$.
Then $\Real Z\succeq0$ and
$\rank(\Real Z)$ is even. Writing
$q:=\frac12\rank(\Real Z)$, one has $q\leq p$.

The parameters in
Lemmas~\ref{lem:Williamson-Gaussian-source}
and~\ref{lem:fock-ancilla-Gaussian-channel}
can be chosen with this value of $q$ and with a symplectic matrix
$S\in\Sp(4m,\R)$ such that for some $d_1,\ldots,d_q>0$ and $\rev{\nu_{p+1},\ldots,\nu_{2m}}>\frac{\hbar}{2}\rev{,}$
\[
\Sigma=SDS^\top,\quad
D=\frac{\hbar}{2}\I_{2p}\oplus_D
\bigoplus_{j=p+1}^{2m}{}_D\nu_j\I_2,\quad
S^\top ZS
=
\bigoplus_{j=1}^q{}_D
\frac{d_j}{2}
\smat{1&-i\\i&1}
\oplus_D0_{2(2m-q)},
\]
and $\ker(\Real Z)=S\left(\{0\}^{2q}\oplus\R^{2(2m-q)}\right)$.

Writing the column partition
$S=(S_0\ S_1)$, where
$S_0\in\R^{4m\times2q}$ and
$S_1\in\R^{4m\times2(2m-q)}$, one has
$\Ran(S_1)=\ker(\Real Z)$. In particular, $S_1$ has
full column rank.
\end{lemma}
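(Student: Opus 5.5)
The idea is to move everything into Williamson coordinates of $\Sigma$. There the slackness relation confines the range of $Z$ to the span of explicit complex vectors attached to the pure modes. A unitary rotation among those modes, realized by an orthosymplectic matrix, then diagonalizes the resulting $p\times p$ block.

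First, $\Sigma+\frac{i\hbar}{2}\Omega_{2m}\succeq0$ with $\hbar>0$ forces $\Sigma\succ0$. Restricting to real vectors gives $\Sigma\succeq0$. If $\Sigma x=0$ for some real $x\neq0$, then the positivity applied to $x+ity$ with $y:=\Omega_{2m}x$ and $t$ small of the appropriate sign makes the quadratic form negative, since the cross term is $\pm t\hbar\,y^\top y$ while the $t^2$ term is $t^2\,y^\top\Sigma y$. So Theorem~\ref{thm:williamson} applies. Choose $\Sigma=S_WDS_W^\top$ with the $p$ pure modes listed first. Because $S_W\Omega_{2m}S_W^\top=\Omega_{2m}$, we get
\[
M:=\Sigma+\tfrac{i\hbar}{2}\Omega_{2m}=S_W\bigl(D+\tfrac{i\hbar}{2}\Omega_{2m}\bigr)S_W^\top .
\]
On each two-dimensional block, $\nu\I_2+\frac{i\hbar}{2}\omega$ has determinant $\nu^2-\hbar^2/4$. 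It is therefore invertible for $\nu>\hbar/2$, while for $\nu=\hbar/2$ its kernel is spanned by $v=e_1+ie_2$, using $\omega v=iv$. Hence
\[
\ker M=S_W^{-\top}\operatorname{span}_\C\{v_1,\dots,v_p\},
\]
where $v_j$ denotes $v$ placed in the $j$-th mode.

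Next I rewrite $Z$ in these coordinates. The relation $MZ=0$ means $\Ran Z\subset\ker M$, so $Z':=S_W^\top ZS_W\succeq0$ has range in $\operatorname{span}\{v_j\}_{j\le p}$. Writing $\mathsf V=[v_1\ \cdots\ v_p]$, this gives $Z'=\frac12\mathsf VK\mathsf V^\dagger$ for some $K\in\cM_{p,\geq}(\C)$. Diagonalize $K=W\operatorname{diag}(d_1,\dots,d_q,0,\dots,0)W^\dagger$ with $d_j>0$ and $q=\rank K\le p$. By Proposition~\ref{prop:unitary-realification} and Remark~\ref{rem:unitary-spectral-coordinates}, there is an $O\in\Sp(2p,\R)\cap O(2p)$, namely $O_W$ or $O_{\overline W}$ depending on the conjugation convention, such that $O^\top\mathsf V=\mathsf V W^{\ast}$ up to the conventions of the remark. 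The point is that $O$ acts on the $v_j$'s through the complex-linear structure in which $v_j$ corresponds to the $j$-th coordinate vector.

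Set $S:=S_W(O\oplus_D\I)$. Since $O$ is orthogonal and symplectic, $O(\frac{\hbar}{2}\I_{2p})O^\top=\frac{\hbar}{2}\I_{2p}$, so $\Sigma=SDS^\top$ still holds with the same $D$. Moreover
\[
S^\top ZS=\bigoplus_{j=1}^q{}_D\frac{d_j}{2}v_jv_j^\dagger\oplus_D0,
\qquad v_jv_j^\dagger=\smat{1&-i\\i&1}.
\]
Taking real parts gives $\Real(S^\top ZS)=\bigoplus_{j\le q}\frac{d_j}{2}\I_2\oplus0$. Since $S$ is real, $\Real Z=S^{-\top}\Real(S^\top ZS)S^{-1}$. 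Therefore $\Real Z\succeq0$ and $\rank\Real Z=2q$ is even, with $q\le p$. Its kernel is $S(\{0\}^{2q}\oplus\R^{2(2m-q)})$, so $\Ran S_1=\ker(\Real Z)$ and $S_1$ has full column rank. Because the first $q\le p$ modes of $D$ are pure, this $S$ and $q$ are admissible parameters in Lemmas~\ref{lem:Williamson-Gaussian-source} and~\ref{lem:fock-ancilla-Gaussian-channel}. The Gaussian unitary $\hat U_S$ is then chosen implementing the new $S$.

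The expected main obstacle is the bookkeeping in the middle step. One must check that the orthosymplectic realification of the unitary $W$ acts on the vectors $v_j=e_{2j-1}+ie_{2j}$ exactly as $W$ (or $\overline W$) acts on coordinates, given the interleaved ordering and $P^{\mathrm{shuf}}_m$. That is precisely the content of Remark~\ref{rem:unitary-spectral-coordinates}, and I would verify it first on a single mode and on a $2\times2$ rotation between two modes.
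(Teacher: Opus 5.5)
Your proposal is correct and follows the paper's proof essentially step for step. Your representation $S_W^\top Z S_W=\frac12\mathsf V K\mathsf V^\dagger$ is exactly the paper's $B=M_Z\oplus_D 0_{4m-p}$ read back through $\mathbb U$ and the shuffle permutation (so $K=M_Z$), and your diagonalizing rotation is the paper's $O_{\overline V}$.

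To settle your hedge, the right choice is $O_{\overline W}$. With the convention of Proposition~\ref{prop:unitary-realification} one has $O_U^\top\mathsf V=\mathsf V U^\top$, hence $O_{\overline W}^\top\mathsf V=\mathsf V W^\dagger$, which is what your computation needs. Your extra check that $\Sigma\succ0$, which the paper leaves implicit, is also correct.
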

\begin{proof}
Since $Z\succeq0$ is Hermitian, $\Real Z\succeq0$. Let $\Sigma=S_{\mathrm W}DS_{\mathrm W}^\top$ be a Williamson
decomposition of $\Sigma$ as in
Theorem~\ref{thm:williamson}, where
$D=\frac{\hbar}{2}\I_{2p}\oplus_D
\bigoplus_{j=p+1}^{2m}{}_D\nu_j\I_2$ and $\nu_j>\hbar/2$.
Since $S_{\mathrm W}\in\Sp(4m,\R)$, we have
$\Sigma+\frac{i\hbar}{2}\Omega_{2m}
=S_{\mathrm W}(D+\frac{i\hbar}{2}\Omega_{2m})S_{\mathrm W}^\top$,
so the assumed relation is equivalent to
\[
\left(
D+\frac{i\hbar}{2}\Omega_{2m}
\right)
S_{\mathrm W}^\top ZS_{\mathrm W}
=
0.
\]

Recall
$U_0=\frac1{\sqrt2}\smat{1&1\\i&-i}$ and
$\mathbb U=\bigoplus_{j=1}^{2m}{}_D U_0$ from
Remark~\ref{rem:unitary-spectral-coordinates}. Since $U_0^\dagger
\left(
c\I_2+\frac{i\hbar}{2}\Omega_1
\right)
U_0
=
\smat{c-\hbar/2&0\\0&c+\hbar/2}$, we have
\[
\mathbb U^\dagger
\left(
D+\frac{i\hbar}{2}\Omega_{2m}
\right)
\mathbb U
=
\bigoplus_{j=1}^p{}_D
\smat{0&0\\0&\hbar}
\oplus_D
\bigoplus_{j=p+1}^{2m}{}_D
\smat{\nu_j-\hbar/2&0\\0&\nu_j+\hbar/2}.
\]

Let $P_p^{\mathrm{shuf}}$ be the permutation matrix from
Proposition~\ref{prop:unitary-realification}. We have
\[
(P_p^{\mathrm{shuf}}\oplus_D\I_{2(2m-p)})^\top[
\bigoplus_{j=1}^p{}_D
\smat{0&0\\0&\hbar}
\oplus_D
\bigoplus_{j=p+1}^{2m}{}_D
\smat{\nu_j-\hbar/2&0\\0&\nu_j+\hbar/2}](P_p^{\mathrm{shuf}}\oplus_D\I_{2(2m-p)})=0_p\oplus_D\Xi,
\]
where $\Xi
:=
\hbar\I_p
\oplus_D
\bigoplus_{j=p+1}^{2m}{}_D
\smat{\nu_j-\hbar/2&0\\0&\nu_j+\hbar/2}
\succ0$.

The covariance slackness condition $(\Sigma+\frac{i\hbar}{2}\Omega_{2m})Z=0$ finally gives
\[
(0_p\oplus_D\Xi)B=0,
\]
where $B:=(P_p^{\mathrm{shuf}}\oplus_D\I_{2(2m-p)})^\top
\mathbb U^\dagger
S_{\mathrm W}^\top ZS_{\mathrm W}
\mathbb U
(P_p^{\mathrm{shuf}}\oplus_D\I_{2(2m-p)})\succeq0$. Consequently,
$\Ran B\subseteq\ker(0_p\oplus_D\Xi)
=\C^p\oplus\{0\}^{4m-p}$. Since $B$ is Hermitian, there exists
$M_Z\in\cM_{p,\geq}(\C)$ such that
$B=M_Z\oplus_D0_{4m-p}$.

Let $r=\rank M_Z\leq p$, and choose $V\in U(p)$ such that
\[
V^\dagger M_ZV
=
\operatorname{diag}(d_1,\ldots,d_r,0,\ldots,0),
\quad
d_1,\ldots,d_r>0.
\]
By Remark~\ref{rem:unitary-spectral-coordinates}, let $O_{\overline{V}}$ be the orthogonal symplectic matrix corresponding to $\overline{V}$. $O_{\overline V}\oplus_D\I_{2(2m-p)}
\in\Sp(4m,\R)\cap O(4m)$ and preserves $D$. Hence
$S:=S_{\mathrm W}(O_{\overline V}\oplus_D\I_{2(2m-p)})$
gives $\Sigma=SDS^\top$ and $\Sigma+\frac{i\hbar}{2}\Omega_{2m}
=S(D+\frac{i\hbar}{2}\Omega_{2m})S^\top$, and
\[
S^\top ZS
=
\bigoplus_{j=1}^r{}_D
U_0
\smat{d_j&0\\0&0}
U_0^\dagger
\oplus_D0_{2(2m-r)}
=
\bigoplus_{j=1}^r{}_D
\frac{d_j}{2}
\smat{1&-i\\i&1}
\oplus_D0_{2(2m-r)}.
\]
The last equality is obtained by $U_0\smat{d&0\\0&0}U_0^\dagger
=\frac d2\smat{1&-i\\i&1}$. Moreover,
\[
S^\top(\Real Z)S
=
\frac12
\operatorname{diag}
\left(
d_1\I_2,\ldots,d_r\I_2,
0_{2(2m-r)}
\right).
\]
Therefore $r=\frac12\rank(\Real Z)=q$, and
$\ker(\Real Z)=S\left(\{0\}^{2q}\oplus\R^{2(2m-q)}\right)$ follows.
Since $\Sigma=SDS^\top$ with the same $D$ as above and $q\leq p$,
Lemma~\ref{lem:Williamson-Gaussian-source} applies with this $q$
and this $S$. Writing $S=(S_0\ S_1)$ gives
$\Ran(S_1)=\ker(\Real Z)$. Since $S$ is invertible, $S_1$ has full
column rank.
\end{proof}

\section{Isotropic symmetrization of the covariance matrix}
The proof below is based on an averaging argument over the Pauli matrices, a technique sometimes referred to in the literature as Pauli twirling.
\begin{prop}[Reduction to the isotropic one-mode case]
\label{prop:covidm}
Let $\tau\in\{\pm1\}$, $\Sigma_{\rho},\Sigma_{\sigma}
\in\cM_{2,\mathrm{sym}}(\mathbb{R})$ and $X\in\cM_2(\mathbb{R})$. \rev{If}
\[
M_\tau:=
\begin{pmatrix}
\Sigma_{\rho} & X\\
X^{\top} & \Sigma_{\sigma}
\end{pmatrix}
+\frac{i\hbar}{2}\bigl(\omega\oplus_D(\tau\omega)\bigr)
\succeq0,
\]
then
\[
\begin{pmatrix}
\frac{\Tr\Sigma_{\rho}}{2}\I_2
&
\frac{\Tr X}{2}\I_2
\\[1mm]
\frac{\Tr X}{2}\I_2
&
\frac{\Tr\Sigma_{\sigma}}{2}\I_2
\end{pmatrix}
+\frac{i\hbar}{2}\bigl(\omega\oplus_D(\tau\omega)\bigr)
\succeq0.
\]
\end{prop}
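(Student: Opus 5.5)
The idea is to average $M_\tau$ over a finite group of $2\times2$ real matrices acting diagonally on both modes. Each group element should preserve the symplectic term $\frac{i\hbar}{2}(\omega\oplus_D(\tau\omega))$ and map positive semidefinite matrices to positive semidefinite matrices. Averaging should then replace $\Sigma_\rho,\Sigma_\sigma,X$ by their "isotropic parts" $\frac{\Tr}{2}\I_2$. Positive semidefiniteness of the average follows from convexity of the PSD cone.

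Concretely, I take the group $G=\{\pm\I_2,\pm\omega,\pm Z,\pm XZ\}$ generated by the rotation $\omega$ and the reflection $Z=\mathrm{diag}(1,-1)$. For $\tau=+1$ I use only the rotations $R\in\{\I_2,\omega,-\I_2,-\omega\}=SO(2)\cap\Sp(2,\R)$ together with a second averaging step. For $g=R\oplus_D R$ with $R\in SO(2)$, one has $R\omega R^\top=\omega$, so
\[
gM_\tau g^\top=\begin{pmatrix}R\Sigma_\rho R^\top&RXR^\top\\ RX^\top R^\top&R\Sigma_\sigma R^\top\end{pmatrix}+\frac{i\hbar}{2}\bigl(\omega\oplus_D(\tau\omega)\bigr)\succeq0 .
\]
Averaging over $R$ in the four-element rotation group (or over all of $SO(2)$) sends a symmetric $\Sigma$ to $\frac{\Tr\Sigma}{2}\I_2$. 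It sends a general $X$ to $\frac{\Tr X}{2}\I_2+\frac{\Tr(\omega^\top X)}{2}\omega$, since the commutant of $SO(2)$ is spanned by $\I_2$ and $\omega$. This gives a feasible block in which the off-diagonal is $a\I_2+b\omega$.

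To kill the $\omega$-component of $X$, I use a reflection-type map that flips the sign of $b$ but keeps the PSD property. Conjugation by $Z\oplus_D Z$ sends $\omega\mapsto-\omega$, which breaks the symplectic term. To repair this, I combine it with complex conjugation: if $M\succeq0$ then $\overline M\succeq0$, and $\overline M$ has symplectic term $-\frac{i\hbar}{2}(\omega\oplus_D\tau\omega)$. Then
\[
(Z\oplus_D Z)\,\overline{M_\tau}\,(Z\oplus_D Z)
\]
has the same symplectic term as $M_\tau$, because $Z\omega Z=-\omega$ and the two sign flips cancel. On the real part, it maps $\frac{\Tr\Sigma}{2}\I_2$ to itself and $a\I_2+b\omega$ to $a\I_2-b\omega$. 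Averaging the rotation-averaged matrix with this image yields exactly the claimed matrix, which is PSD as a convex combination of PSD matrices. The same argument works verbatim for $\tau=-1$, since every step respects the block-diagonal structure $\omega\oplus_D(\tau\omega)$ factorwise.

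The main point to check carefully is that conjugation by $R\oplus_D R$ preserves $\omega\oplus_D(\tau\omega)$ for both signs of $\tau$. It does, because $R\omega R^\top=\omega$ for each $R\in SO(2)$, so the sign $\tau$ is irrelevant. The other point to check is that complex conjugation composed with $Z\oplus_D Z$ preserves the symplectic term. The remaining verification is routine: compute the averages of $R\Sigma R^\top$ and $RXR^\top$ over the rotation group, e.g. by writing $X$ in the basis $\{\I_2,\omega,Z,\omega Z\}$ and noting that the last two components average to zero.
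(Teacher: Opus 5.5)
Your argument is correct and is essentially the paper's Pauli-twirling proof. Your rotation average over $\{\pm\I_2,\pm\omega\}$, followed by averaging with the reflection $Z=\mathrm{diag}(1,-1)$ composed with complex conjugation, amounts to averaging over $\I_2=P_4$, $\omega=iP_2$, $Z=P_3$ and $Z\omega=P_1$, where the reflections $P_1,P_3$ are composed with transposition (which equals complex conjugation on Hermitian matrices), exactly as in the paper. (Two cosmetic slips: the four-element rotation group is only a finite subgroup of $SO(2)=SO(2)\cap\Sp(2,\R)$, and writing $\pm XZ$ for a group element clashes with your use of $X$ for the off-diagonal block.)
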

\begin{proof}
Consider the Pauli matrices
\[
P_1=
\begin{pmatrix}
0&1\\
1&0
\end{pmatrix},
\qquad
P_2=
\begin{pmatrix}
0&-i\\
i&0
\end{pmatrix},
\qquad
P_3=
\begin{pmatrix}
1&0\\
0&-1
\end{pmatrix},\qquad 
P_4=\I_2.
\]
Then $P_j\omega P_j
=(-1)^j\omega$ and $
\frac14\sum_{j=1}^4P_j AP_j
=
\frac{\Tr A}{2}\I_2$ for all $A\in\cM_2(\mathbb R)$. Therefore, for $j\in[4]$,
\[
(P_j\oplus_D P_j) M_{\tau}(P_j\oplus_D P_j)
=
\begin{pmatrix}
P_j\Sigma_{\rho}P_j
&
P_j XP_j
\\
(P_j XP_j)^{\top}
&
P_j\Sigma_{\sigma}P_j
\end{pmatrix}
+
\frac{i(-1)^j\hbar}{2}\bigl(\omega\oplus_D(\tau\omega)\bigr).
\]
Since each $P_j\oplus_D P_j$ is unitary and Hermitian, and since both unitary congruence and transposition preserve positive semidefiniteness, we have
\begin{equation}
\begin{split}
&\frac14\left[\left(\sum_{j=1,3}(P_j\oplus_D P_j)^\dagger M_{\tau}(P_j\oplus_D P_j)\right)^\top+\sum_{j=2,4}(P_j\oplus_D P_j)^\dagger M_{\tau}(P_j\oplus_D P_j)\right]\\
=&\frac14\left[\left(\sum_{j=1,3}(P_j\oplus_D P_j)M_{\tau}(P_j\oplus_D P_j)\right)^\top+\sum_{j=2,4}(P_j\oplus_D P_j)M_{\tau}(P_j\oplus_D P_j)\right]\\
=&\begin{pmatrix}
\frac14\sum_{j=1}^4P_j\Sigma_{\rho}P_j
&
\frac14\sum_{j=1}^4P_j XP_j
\\
\frac14\sum_{j=1}^4(P_j XP_j)^{\top}
&
\frac14\sum_{j=1}^4P_j\Sigma_{\sigma}P_j
\end{pmatrix}+
\frac{i\hbar}{2}\bigl(\omega\oplus_D(\tau\omega)\bigr)\\
=&\begin{pmatrix}
\frac{\Tr\Sigma_{\rho}}{2}\I_2
&
\frac{\Tr X}{2}\I_2
\\[1mm]
\frac{\Tr X}{2}\I_2
&
\frac{\Tr\Sigma_{\sigma}}{2}\I_2
\end{pmatrix}
+
\frac{i\hbar}{2}\bigl(\omega\oplus_D(\tau\omega)\bigr)
\succeq0.
\end{split}
\end{equation}
The resulting matrix is positive semidefinite, since it is an average of positive semidefinite matrices and their transposes.
\end{proof}
\begin{coro}[Multimode extension]
\label{coro:multimode-isotropic-reduction}
Let $\tau\in\{\pm1\}$, $\Sigma_\rho,\Sigma_\sigma
\in\cM_{2m,\mathrm{sym}}(\mathbb R)$,
and $X\in\cM_{2m}(\mathbb R)$. \rev{If}
\[
\begin{pmatrix}
\Sigma_\rho & X\\
X^{\top} & \Sigma_\sigma
\end{pmatrix}
+
\frac{i\hbar}{2}
\bigl(\Omega_m\oplus_D(\tau\Omega_m)\bigr)
\succeq0,
\]
then
\[
\begin{pmatrix}
\frac{\Tr\Sigma_\rho}{2m}\I_{2m}
&
\frac{\Tr X}{2m}\I_{2m}
\\[1mm]
\frac{\Tr X}{2m}\I_{2m}
&
\frac{\Tr\Sigma_\sigma}{2m}\I_{2m}
\end{pmatrix}
+
\frac{i\hbar}{2}
\bigl(\Omega_m\oplus_D(\tau\Omega_m)\bigr)
\succeq0.
\]
\end{coro}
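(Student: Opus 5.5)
The plan is to reduce Corollary~\ref{coro:multimode-isotropic-reduction} to the one-mode Proposition~\ref{prop:covidm} in two averaging steps: first make the matrix block-diagonal in the mode index, then make each one-mode block isotropic.

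\textbf{Step 1 (mode decoupling).} For each sign vector $\epsilon\in\{\pm1\}^m$, set $D_\epsilon:=\bigoplus_{j=1}^m{}_D\epsilon_j\I_2$ and $T_\epsilon:=D_\epsilon\oplus_D D_\epsilon$. This matrix is real orthogonal and commutes with $\Omega_m\oplus_D(\tau\Omega_m)$. Hence congruence by $T_\epsilon$ keeps the hypothesis matrix positive semidefinite and leaves the symplectic part unchanged. Averaging over all $2^m$ sign vectors kills every off-diagonal mode block of $\Sigma_\rho$, $X$ and $\Sigma_\sigma$ while keeping the diagonal $2\times2$ mode blocks. The averaged matrix is therefore still positive semidefinite and has the form
\[
\begin{pmatrix}
\bigoplus_j{}_D\Sigma_\rho^{(j)}&\bigoplus_j{}_D X^{(j)}\\
\bigoplus_j{}_D X^{(j)\top}&\bigoplus_j{}_D\Sigma_\sigma^{(j)}
\end{pmatrix}
+\frac{i\hbar}{2}\bigl(\Omega_m\oplus_D(\tau\Omega_m)\bigr).
\]
After a permutation that collects the two modes of index $j$ together, this is a direct sum over $j\in[m]$ of one-mode matrices $M_\tau^{(j)}$ of exactly the form in Proposition~\ref{prop:covidm}. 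Each $M_\tau^{(j)}$ is a principal submatrix of a positive semidefinite matrix, so each is positive semidefinite.

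\textbf{Step 2 (one-mode isotropization).} Apply Proposition~\ref{prop:covidm} to each $M_\tau^{(j)}$. This gives positive semidefiniteness of the one-mode matrix with blocks $w_\rho^{(j)}\I_2$, $x^{(j)}\I_2$, $w_\sigma^{(j)}\I_2$, where $w^{(j)}=\Tr(\cdot^{(j)})/2$ and $x^{(j)}=\Tr X^{(j)}/2$.

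\textbf{Step 3 (averaging across modes).} Permutations of the mode index are orthogonal symplectic maps. Applied simultaneously to both subsystems, they commute with $\Omega_m\oplus_D(\tau\Omega_m)$. Averaging the direct sum from Step~2 over all cyclic shifts of the mode index replaces each $w^{(j)}$ and $x^{(j)}$ by its mean over $j$. These means are $\Tr\Sigma_\rho/(2m)$, $\Tr X/(2m)$ and $\Tr\Sigma_\sigma/(2m)$, because the diagonal mode blocks carry the full traces. Convexity of the positive semidefinite cone then gives the claim.

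The only point that needs care is checking that every congruence used commutes with the symplectic part. For the sign matrices $T_\epsilon$ this is immediate. For the mode permutations it holds because the same permutation acts on both subsystems, and $\tau$ only multiplies the second block by a scalar.
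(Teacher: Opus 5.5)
Your proof is correct and follows essentially the same route as the paper: take the $4\times4$ mode-$j$ principal compressions, apply the one-mode Pauli-twirl Proposition~\ref{prop:covidm} to each, and average over modes. Your cyclic-shift average yields exactly the paper's $\I_m\otimes\overline M$ with $\overline M=\frac1m\sum_j M_j$, and your Step~1 sign-averaging is harmless but unnecessary, because each $M_\tau^{(j)}$ is already a principal submatrix of the original positive semidefinite matrix.
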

\begin{proof}
For \(j\in[m]\), let $\Sigma_{\rho}^{(j)}$, $X^{(j)}$, and $
\Sigma_{\sigma}^{(j)}$ denote the $2\times2$ diagonal blocks corresponding to rows and columns $2j-1$ and $2j$ of $\Sigma_{\rho}$, $X$, and $
\Sigma_{\sigma}$, respectively. Since principal compressions preserve positive semidefiniteness and $\Omega_m=\bigoplus_{j=1}^m{}_D\omega$, we obtain
\[
\begin{pmatrix}
\Sigma_{\rho}^{(j)} & X^{(j)}\\
(X^{(j)})^{\top} & \Sigma^{(j)}_{\sigma}
\end{pmatrix}
+
\frac{i\hbar}{2}
\bigl(\omega\oplus_D(\tau\omega)\bigr)
\succeq0.
\]
Applying the one-mode result, Proposition~\ref{prop:covidm}, gives
\[
M_j:=
\begin{pmatrix}
\frac{\Tr\Sigma_{\rho}^{(j)}}2\I_2
&
\frac{\Tr X^{(j)}}2\I_2
\\[1mm]
\frac{\Tr X^{(j)}}2\I_2
&
\frac{\Tr\Sigma_{\sigma}^{(j)}}2\I_2
\end{pmatrix}
+
\frac{i\hbar}{2}
\bigl(\omega\oplus_D(\tau\omega)\bigr)
\succeq0.
\]
Averaging over all modes yields
\[
\overline M
:=
\frac1m\sum_{j=1}^m M_j
=
\begin{pmatrix}
\frac{\Tr\Sigma_\rho}{2m}\I_2
&
\frac{\Tr X}{2m}\I_2
\\[1mm]
\frac{\Tr X}{2m}\I_2
&
\frac{\Tr\Sigma_\sigma}{2m}\I_2
\end{pmatrix}
+
\frac{i\hbar}{2}
\bigl(\omega\oplus_D(\tau\omega)\bigr)
\succeq0,
\]
where we used $\sum_{j=1}^m\Tr\Sigma_{\rho}^{(j)}=\Tr\Sigma_\rho$, $\sum_{j=1}^m\Tr X^{(j)}=\Tr X$ and $
\sum_{j=1}^m\Tr\Sigma_{\sigma}^{(j)}=\Tr\Sigma_\sigma$. Therefore, the positive semidefiniteness is preserved for the Kronecker product $\I_m\otimes\overline M\succeq0$. After permuting the coordinates from the mode-wise ordering to the subsystem-wise ordering, we obtain
\[
\begin{pmatrix}
\frac{\Tr\Sigma_\rho}{2m}\I_{2m}
&
\frac{\Tr X}{2m}\I_{2m}
\\[1mm]
\frac{\Tr X}{2m}\I_{2m}
&
\frac{\Tr\Sigma_\sigma}{2m}\I_{2m}
\end{pmatrix}
+
\frac{i\hbar}{2}
\bigl(\Omega_m\oplus_D(\tau\Omega_m)\bigr)
\succeq0.
\]
\end{proof}
\section{Proof of Proposition \ref{prop:qotcovid1}}\label{app:wassersteindistance}
\begin{proof}
We prove the formula for the quantum-state formulation. The formula for the quantum-channel formulation
follows from \cite[Thm.~3]{DPaTre19}, after restoring $\hbar$ and using
the cost normalization in \eqref{eq:tau-cost}.

By Theorem~\ref{thm:Gaussian-opt-plan}, we may restrict to Gaussian optimal couplings.
For a Gaussian optimal plan $\hat\pi_{\mathrm{opt}}^{G}\mapsto(\hat\rho^{\rm th},\hat\sigma^{\rm th})$, its mean is $(\mathbf{m}_{\hat{\rho}},\mathbf{m}_{\hat{\sigma}})$ by Proposition \ref{prop:reduction-moments}. We write its covariance as
\[
\Sigma_{\hat\pi_{\mathrm{opt}}^{G}}
=
\begin{pmatrix}
w_{\hat\rho}\,\I_{2m} & A\\
A^\top & w_{\hat\sigma}\,\I_{2m}
\end{pmatrix},
\qquad
\Sigma_{\hat\pi_{\mathrm{opt}}^{G}}+\frac{i\hbar}{2}\,(\Omega_m\oplus_D\Omega_m)\succeq0,
\]
with $A\in\cM_{2m}(\R)$. By Corollary~\ref{coro:multimode-isotropic-reduction}, feasibility of the above implies feasibility of
\[
\begin{pmatrix}
w_{\hat\rho}\,\I_{2m} & \frac{\Tr(A)}{2m}\,\I_{2m}\\[2pt]
\frac{\Tr(A)}{2m}\,\I_{2m} & w_{\hat\sigma}\,\I_{2m}
\end{pmatrix}
+\frac{i\hbar}{2}\,(\Omega_m\oplus_D\Omega_m)\succeq0.
\]
Let $\hat\pi_{\mathrm{iso}}^{G}$ be the Gaussian state with the same bipartite mean and this block
covariance. Since the cost depends only on first and second moments by Proposition~\ref{prop:stateTrC}, we have $\Tr(\hat C^{(+1)}\,\hat\pi_{\mathrm{iso}}^{G})
=\Tr(\hat C^{(+1)}\,\hat\pi_{\mathrm{opt}}^{G})$. Hence we may restrict to the one-parameter family
\[
\left\{
\begin{pmatrix}
w_{\hat\rho}\,\I_{2m} & c\,\I_{2m}\\[2pt]
c\,\I_{2m} & w_{\hat\sigma}\,\I_{2m}
\end{pmatrix}
\ :\ c\in\R,\ 
\begin{pmatrix}
w_{\hat\rho}\,\I_{2m} & c\,\I_{2m}\\
c\,\I_{2m} & w_{\hat\sigma}\,\I_{2m}
\end{pmatrix}
+\frac{i\hbar}{2}\,(\Omega_m\oplus_D\Omega_m)\succeq0
\right\}.
\]

Using the decomposition and
$\Omega_m=\I_m\otimes\omega$,
the RS constraint splits into $m$ identical $4\times4$ blocks; positivity is equivalent to
\[
M(c):=
\begin{pmatrix}
w_{\hat\rho} & \tfrac{i\hbar}{2} & c & 0\\
-\tfrac{i\hbar}{2} & w_{\hat\rho} & 0 & c\\
c & 0 & w_{\hat\sigma} & \tfrac{i\hbar}{2}\\
0 & c & -\tfrac{i\hbar}{2} & w_{\hat\sigma}
\end{pmatrix}\succeq0.
\]
The eigenvalues of $M(c)$ are
\[
\lambda_{1,2}=-\frac{\hbar}{2}+\frac{w_{\hat\rho}+w_{\hat\sigma}}{2}
\pm \frac{1}{2}\sqrt{(w_{\hat\rho}-w_{\hat\sigma})^{2}+4c^{2}},
\quad
\lambda_{3,4}=\frac{\hbar}{2}+\frac{w_{\hat\rho}+w_{\hat\sigma}}{2}
\pm \frac{1}{2}\sqrt{(w_{\hat\rho}-w_{\hat\sigma})^{2}+4c^{2}}.
\]
Thus $M(c)\succeq0$ iff $\lambda_{\min}(M(c))\ge0$, i.e.
\[
\sqrt{(w_{\hat\rho}-w_{\hat\sigma})^{2}+4c^{2}}
\le w_{\hat\rho}+w_{\hat\sigma}-\hbar,
\]
which yields the maximal feasible $c_{\max}
=\sqrt{\Big(w_{\hat\rho}-\frac{\hbar}{2}\Big)\Big(w_{\hat\sigma}-\frac{\hbar}{2}\Big)}$.

By Proposition~\ref{prop:stateTrC}, we have
\[
\begin{aligned}
\QW_{2,+1}^2(\hat\rho^{\rm th},\hat\sigma^{\rm th})
={}&\|\mathbf{m}_{\hat\rho}-\mathbf{m}_{\hat\sigma}\|^2+\min_{c\ \text{feasible}}\ 2m\,(w_{\hat\rho}+w_{\hat\sigma}-2c)
\\
={}&\|\mathbf{m}_{\hat\rho}-\mathbf{m}_{\hat\sigma}\|^2+2m\Big(w_{\hat\rho}+w_{\hat\sigma}-2c_{\max}\Big).
\end{aligned}
\]
Substituting $c_{\max}$ and completing the square gives
\[
\QW_{2,+1}^2(\hat\rho^{\rm th},\hat\sigma^{\rm th})
=2m\Big(\sqrt{w_{\hat\rho}-\tfrac{\hbar}{2}}
-\sqrt{w_{\hat\sigma}-\tfrac{\hbar}{2}}\Big)^{2}+\|\mathbf{m}_{\hat\rho}-\mathbf{m}_{\hat\sigma}\|^2+2m\hbar,
\]
as claimed.
\end{proof}
 
\section{Optimal couplings for the partially pure boundary example}
\label{app:nonfaithful-boundary-couplings}
We verify the couplings used in
Example~\ref{ex:nonfaithful-boundary} directly at the covariance
level.
\begin{lemma}[Boundary family of barycenters]
\label{lem:nonfaithful-boundary-couplings}
For the states in Example~\ref{ex:nonfaithful-boundary}, every
$\hat\rho_t^\tau$, with $t\in[-1,1]$ and
$\tau\in\{\pm1\}$, is an equal-weight $2$-QW barycenter of
$\hat\sigma_1$ and $\hat\sigma_2$.
\end{lemma}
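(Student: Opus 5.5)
The proof will compare a tensorized lower bound with explicit Gaussian couplings. Fix $\tau$ and write $w:=3\hbar/4$ for $\tau=+1$ and $w:=\hbar$ for $\tau=-1$, so that by Corollary~\ref{cor:thermal-barycenter-intro} the one-mode equal-weight barycenter of $\hat\sigma_{3\hbar/2}^{\mathrm{th}}$ and $\hat\sigma_{\hbar/2}^{\mathrm{th}}$ is $\hat\sigma_w^{\mathrm{th}}$. Let $b_1$ denote that one-mode barycenter value.

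\emph{Lower bound.} Take any state $\hat\rho$ on two modes. The first inequality in the proof of Lemma~\ref{lem:tensorization-barycenter} bounds the two-mode objective from below by the sum over the two modes of the one-mode objectives evaluated at the marginals $\hat\rho^{(1)},\hat\rho^{(2)}$. In each mode the inputs are $\{\hat\sigma_{3\hbar/2}^{\mathrm{th}},\hat\sigma_{\hbar/2}^{\mathrm{th}}\}$ in some order, with equal weights. Hence the objective is at least $2b_1$, and $b_1$ is explicitly computable from Proposition~\ref{prop:qotcovid1}.

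\emph{Upper bound.} It suffices to exhibit, for each $s\in\{1,2\}$ and $t\in[-1,1]$, a Gaussian coupling of $\hat\rho_t^\tau$ and $\hat\sigma_s$ whose cost equals the sum of the two one-mode pairwise optimal costs. Indeed, summing over $s$ then gives $2b_1$. By the symmetry swapping the two modes, it is enough to treat $s=1$. Order the phase-space coordinates as $(x_1,x_2\mid y_1,y_2)$, where $y_2$ is the pure (vacuum) mode of $\hat\sigma_1$. Every coupling factorizes on $y_2$, so the cross-covariance has the form
\[
A=\begin{pmatrix}A_{11}&0\\A_{21}&0\end{pmatrix}.
\]
By Proposition~\ref{prop:stateTrC}, the cost is $\Tr\overline\Sigma_t^\tau+\Tr\Sigma_1-2\Tr A_{11}$. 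It depends on $A_{11}$ only, and not on $A_{21}$.

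I would take $A_{11}=a\I_2$, with $a$ the optimal one-mode cross-covariance between $w$ and $3\hbar/2$. For $\tau=+1$ this is $a=c_{\max}=\hbar/2$. For $\tau=-1$, $a$ is the corresponding value extracted from \cite[Thm.~3]{DPaTre19}. With this choice the cost matches the tensorized one-mode optimum. Since the $y_2$ block decouples, the remaining requirement is positive semidefiniteness of a $6\times6$ matrix on $(x_1,x_2,y_1)$. The $y_1$ block $\frac{3\hbar}{2}\I_2+\frac{i\tau\hbar}{2}\omega$ is invertible, so I will take a Schur complement with respect to it. All blocks are combinations of $\I_2$ and $\omega$, so I will diagonalize simultaneously in the eigenbasis of $i\omega$. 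The $6\times6$ condition then reduces to two scalar $2\times2$ inequalities, one for each eigenvalue $\lambda=\pm\hbar/2$ of $\frac{i\hbar}{2}\omega$, now involving $t$, $a$ and the free scalar parameter $\beta$ in $A_{21}=\beta\I_2$ (or $\beta\omega$).

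\emph{Main obstacle.} The difficulty is exactly this feasibility check. With $A_{21}=0$ it fails: for $\tau=+1$ and $\lambda=-\hbar/2$, the $x_1$-entry of the Schur complement is
\[
w-\tfrac{\hbar}{2}-\frac{a^2}{\frac{3\hbar}{2}-\frac{\hbar}{2}}=0,
\]
which is optimality saturation. A nonzero correlation $t\hbar/4$ between $x_1$ and $x_2$ then destroys positivity unless $t=0$. The costless cross term $A_{21}$ is the extra freedom. I would choose $\beta$ proportional to $t$, so that $y_1$ also absorbs the $x_2$-fluctuation correlated with $x_1$. Concretely, I would aim for a Schur complement of rank-one form, in which the $2\times2$ block at $\lambda=-\hbar/2$ becomes
\[
\begin{pmatrix}0&0\\0&\ast\end{pmatrix}\quad\text{with }\ast\ge0 \text{ for } |t|\le1.
\]
The natural first guess is $\beta=ta$, i.e. $A_{21}=tA_{11}$: then $y_1$ carries the joint correlation pattern of $x_1$ and $x_2$ in $\overline\Sigma_t^\tau$. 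I expect the endpoint $|t|=1$ to be exactly where the $\lambda=-\hbar/2$ block becomes singular. The case $\tau=-1$ is handled the same way after the standard-CCR identification of Proposition~\ref{prop:tau-standard-CCR-identification}, with $F_{\tau,m}$ flipping the sign of $\omega$ in the $y$ blocks.
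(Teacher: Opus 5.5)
Your proposal is correct and is essentially the paper's argument: the lower bound is the tensorization bound (the paper phrases it as $\hat\rho_0^\tau$ being the tensor-product barycenter of Lemma~\ref{lem:tensorization-barycenter}), and your ansatz $A_{21}=tA_{11}$, with $a=\hbar/2$ for $\tau=+1$ and $a=\hbar$ for $\tau=-1$, is exactly the paper's $A_{1,t}^\tau$ (with $A_{2,t}^\tau$ obtained by the mode swap); its trace does not depend on $t$, so the cost matches the tensorized optimum. The feasibility check you anticipate does go through: for $\tau=+1$ the Schur complement at $\lambda=-\hbar/2$ is $\smat{0&0\\0&\hbar(1-t^2)/4}$, and the remaining blocks, including both blocks for $\tau=-1$, are positive semidefinite for $|t|\le1$.
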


\begin{proof}
By Lemma~\ref{lem:tensorization-barycenter},
$\hat\rho_0^{+1}=\left(\hat\sigma_{3\hbar/4}^{\mathrm{th}}\right)^{\otimes2}$ and
$\hat\rho_0^{-1}=\left(\hat\sigma_{\hbar}^{\mathrm{th}}\right)^{\otimes2}$ are barycenters
for $\tau=+1$ and $\tau=-1$, respectively, and therefore determine the
corresponding minimum barycenter values.

For $\tau=+1$, set
$A_{1,t}^{+}=\frac{\hbar}{2}\smat{\I_2&0\\ t\I_2&0}$ and
$A_{2,t}^{+}=\frac{\hbar}{2}\smat{0&t\I_2\\ 0&\I_2}$.
For $\tau=-1$, set
$A_{1,t}^{-}=\hbar\smat{\I_2&0\\ t\I_2&0}$ and
$A_{2,t}^{-}=\hbar\smat{0&t\I_2\\ 0&\I_2}$. For every $s\in\{1,2\}$ and $|t|\leq1$, direct substitution gives $\smat{\overline\Sigma_t^\tau&A_{s,t}^\tau\\
(A_{s,t}^\tau)^\top&\Sigma_s}
+\frac{i\hbar}{2}\bigl(\Omega_2\oplus_D(\tau\Omega_2)\bigr)\succeq0$. Thus $A_{s,t}^\tau\in
\mathrm{Corr}_\tau(\overline\Sigma_t^\tau,\Sigma_s)$. Let
$\hat\pi_{s,t}^\tau\in\Gamma_\tau(\hat\rho_t^\tau,\hat\sigma_s)$
be the centered Gaussian coupling with covariance
$\smat{\overline\Sigma_t^\tau&A_{s,t}^\tau\\
(A_{s,t}^\tau)^\top&\Sigma_s}$.

Since $\Tr(\overline\Sigma_t^\tau)$ and $\Tr(A_{s,t}^\tau)$ are
independent of $t$, Proposition~\ref{prop:stateTrC} gives $\Tr\!\left(\hat C^{(\tau)}\hat\pi_{s,t}^\tau\right)=\Tr\!\left(\hat C^{(\tau)}\hat\pi_{s,0}^\tau\right)$. At $t=0$, $\hat\rho_0^\tau$ is the tensor-product barycenter above,
and $\hat\pi_{s,0}^\tau$ is the tensor product of the corresponding one-mode optimal couplings. Hence $\frac12\sum_{s=1}^2
\Tr\!\left(\hat C^{(\tau)}\hat\pi_{s,t}^\tau\right)
=
B_\tau(\hat{\boldsymbol{\sigma}}\mid\boldsymbol{\alpha})$. Therefore $\hat\rho_t^\tau$ attains the minimum and is an equal-weight
$2$-QW barycenter of $\hat\sigma_1$ and $\hat\sigma_2$.
\end{proof}

 \bibliographystyle{plain}
\bibliography{references}

\setcounter{section}{0}

\renewcommand{\thesection}{\Alph{section}}

\end{document}